\documentclass[aps,pra,twocolumn,amssymb,amsmath,amsfonts,showkeys,nofootinbib,floatfix,superscriptaddress,longbibliography]{revtex4-2}

\usepackage[english]{babel}
\usepackage{amsthm}
\usepackage[toc,page]{appendix}
\usepackage[colorlinks=true,citecolor=blue,linkcolor=magenta]{hyperref}
\usepackage{bbm}
\usepackage{graphicx}
\usepackage{stackengine,xcolor}
\usepackage{tcolorbox}
\usepackage{nicematrix}
\usepackage{cleveref}
\usepackage{mathtools}
\usepackage{quantikz}
\usepackage{adjustbox}
\usepackage{mathrsfs}  
\usepackage{ytableau}
\newcommand{\Tr}{{\rm Tr}}
\newcommand{\Span}{{\rm span}}

\newcommand{\End}{\mathrm{End}}

\newcommand{\Alg}{\mathrm{Alg}}
\newcommand{\GT}{\mathrm{GT}}
\newcommand{\Cat}{\mathrm{Cat}}

\newcommand{\g}{\mathfrak{g}}
\newcommand{\su}{\mathfrak{su}}
\newcommand{\so}{\mathfrak{so}}

\newcommand{\SO}{\mathbb{SO}}
\newcommand{\SU}{\mathbb{SU}}
\newcommand{\U}{\mathbb{U}}
\newcommand{\Spin}{\mathbb{SPIN}}

\newcommand{\s}{\vec{\sigma}}

\renewcommand{\vec}[1]{\boldsymbol{#1}}

\newcommand{\id}{\openone}
\newcommand{\ad}{^{\dagger}}
\newcommand{\ketbra}[2]{|#1\rangle\!\langle #2|}

\newcommand{\CC}{\mathcal{C}}

\newcommand{\FC}{\mathcal{F}}

\newcommand{\HC}{\mathcal{H}}

\newcommand{\OC}{\mathcal{O}}
\newcommand{\PC}{\mathcal{P}}

\newcommand{\CBB}{\mathbb{C}}
\newcommand{\EBB}{\mathbb{E}}

\newcommand{\UBB}{\mathbb{U}}

\newcommand{\SPBB}{\mathbb{SP}}

\newcommand{\sydiagram}[1]{%
  {\ytableausetup{boxsize=0.35em}%
   \ydiagram{#1}}%
}

\newtheorem{theorem}{Theorem}
\newtheorem{remark}{Remark}

\newtheorem{lemma}{Lemma}

\newtheorem{supproposition}{Supplemental Proposition}
\newtheorem{supdefinition}{Supplemental Definition}

\begin{document}

\title{The commutant of fermionic Gaussian unitaries}

\author{Paolo Braccia}
\affiliation{Theoretical Division, Los Alamos National Laboratory, Los Alamos, New Mexico 87545, USA}

\author{N. L. Diaz}
\affiliation{Information Sciences, Los Alamos National Laboratory, Los Alamos, New Mexico 87545, USA}
\affiliation{Center for Non-Linear Studies, Los Alamos National Laboratory, Los Alamos, NM 87545, USA}

\author{Martin Larocca}
\affiliation{Theoretical Division, Los Alamos National Laboratory, Los Alamos, New Mexico 87545, USA}

\author{M. Cerezo}
\thanks{cerezo@lanl.gov}
\affiliation{Information Sciences, Los Alamos National Laboratory, Los Alamos, New Mexico 87545, USA}

\author{Diego Garc\'ia-Mart\'in}
\affiliation{Department for Quantum Information and Computation at Kepler (QUICK),\\ Johannes Kepler University, Linz, Austria }

\begin{abstract}

In this work, we characterize the $t$-th order commutants of fermionic Gaussian unitaries and of their particle-preserving subgroup acting on $n$ fermionic modes. These commutants govern Haar averages over the corresponding groups and therefore play a central role in fermionic randomized protocols, invariant theory, and resource quantification. Using Howe dualities, we show that the particle-preserving commutant is generated by generalized copy-hopping operators, while that for general Gaussian commutant is generated by generalized quadratic Majorana bilinears together with parity. We then derive closed formulas for the dimensions of both commutants as functions of $t$ and $n$, and develop constructive Gelfand--Tsetlin procedures to obtain explicit orthonormal bases, with detailed low-$t$ examples. Our framework also clarifies the structure of replicated fermionic states and connects naturally to measures of fermionic correlations, generalized Pl\"ucker-type constraints, and the stabilizer entropy of fermionic Gaussian states. These results provide a unified algebraic description of higher-order invariants for fermionic Gaussian dynamics.
\end{abstract}

\maketitle

\section{Introduction}

The theory of quantum information and computation is intimately related to the mathematical disciplines of group and representation theory~\cite{hayashi2017group, nielsen2000quantum}. This follows from the simple observation that the evolution of closed quantum systems whose dynamics are governed by the Schr\"odinger equation is described by unitary matrices acting on a Hilbert space $\HC$. Since the set of all such unitaries (equipped with the matrix product) is a group, the aforementioned connection is readily established, allowing one to import powerful tools to study quantum systems. Moreover,  whenever there exist symmetries that constrain the dynamics, the admissible evolutions respecting said symmetries also form a group. Specifically, a subgroup of the full unitary group. 

 This realization becomes particularly relevant whenever one is interested in computing average properties of a quantum system over a group of unitaries. In such cases, one needs to compute an integral over the uniform Haar measure associated to the group, which effectively is a projection onto the group's $t$-th order commutant. That is, the algebra of matrices $M$ such that $\left[M,U^{\otimes t}\right]=0$ for all unitaries $U$ in the group. Therefore, it is clear that characterizing a group's commutant is paramount to extracting average properties over the group, which explains the ubiquitousness of the commutant in quantum information sciences~\cite{mele2023introduction}.

 For the case of the unitary group, its $t$-th order commutant is well understood, and it is spanned by the permutation matrices that permute the different copies of $\HC$ in $\HC^{\otimes t}$, as implied by the famous Schur-Weyl duality~\cite{fulton1991representation}. This result has found a wealth of applications in quantum information and quantum computation, ranging from classical shadows~\cite{huang2020predicting}, randomized benchmarking~\cite{elben2022randomized}, the construction of ensembles of unitaries that reproduce the first $t$ moments of the Haar measure (i.e., $t$-designs)~\cite{schuster2024random}, quantum error correction~\cite{bennett1996mixed,calderbank1997quantum}, the resource theory of non-stabilizerness (or magic)~\cite{veitch2014resource,howard2017application}, the theory of barren plateaus~\cite{mcclean2018barren,fontana2023theadjoint,ragone2023unified,diaz2023showcasing,larocca2024review} in quantum machine learning~\cite{cerezo2020variationalreview,chang2025primer}, etc. In essence, any setting involving randomness or averages over unitaries is subject to the pervasive nature of the commutant. 

 In addition, the $t$-th order commutants of some subgroups of the unitary group are also known, for instance those of the classical Lie groups, i.e., the orthogonal~\cite{hashagen2018real,garcia2023deep,west2024real} and unitary symplectic groups~\cite{garcia2024architectures,west2024random}, whose commutants are representations of Brauer algebras~\cite{collins2006integration}. Moreover, the commutant of the Clifford group has also been studied, and a complete characterization of it has recently been found~\cite{gross2021schur,bittel2025complete}. This group is so important in quantum computation that the characterization of its commutant has led to discoveries in quantum property testing, the resource theory of magic, or quantum state $t$-designs, to name just a few.

In this work, we focus on the group of fermionic Gaussian unitaries, including its particle-preserving (PP) subgroup. In particular, we study their $t$-th order commutants, that we respectively denote $\CC_{t,n}$ and $\CC^{\rm PP}_{t,n}$. These unitaries naturally find applications in the simulation of fermionic systems, and also in the verification and benchmarking of quantum computers. Nevertheless, despite its widespread use, a complete characterization of their commutants in terms of operators acting on $\HC^{\otimes t}$ is still missing. Explicit bases for the cases $t=1,2,3$ are known~\cite{diaz2023showcasing,wan2022matchgate} for the non-particle-preserving case, but to the best of our knowledge higher values of $t$ remain unexplored. The PP subgroup has received even less attention, and its higher-order commutants remain largely uncharacterized.
Nonetheless, mathematicians have found answers~\cite{hasegawa1989spin,wenzl2020dualities,aboumrad2022skew} that we here translate to the language and concrete representation provided by fermionic operators. Specifically, we first find generating sets for $\CC^{\rm PP}_{t,n}$ and $\CC_{t,n}$.

The manuscript is organized as follows. In Sec.~\ref{sec:background} we review the necessary background on fermionic Gaussian unitaries, their particle-preserving subgroup, $t$-th order commutants, Howe dualities, and the Gelfand-Tsetlin method. In Sec.~\ref{sec:PP} we characterize the commutant of particle-preserving fermionic Gaussian unitaries, presenting its generators, a closed formula for its dimension, and explicit basis constructions via Gelfand--Tsetlin techniques, with detailed examples for low $t$. In Sec.~\ref{sec:general} we turn to general fermionic Gaussian unitaries and derive the analogous results for their commutant, including generators, dimension formulas, and explicit constructions for the cases $t=2,3$. In Sec.~\ref{sec:applications} we discuss applications, connecting our results to   quantifiers of fermionic correlations, generalized Pl\"ucker-type conditions, and the stabilizer entropy of fermionic Gaussian states. We provide an outlook in  Sec.~\ref{sec:outlook}, while technical proofs and additional constructions are included in the Appendices.

\section{ Background}\label{sec:background}

\subsection{Fermionic Gaussian unitaries}

In the second quantization picture, the Fock space associated to a quantum system of $n$ fermionic modes is given by
\begin{equation}
    \FC_n \cong \Lambda(\mathbb{C}^n)= \bigoplus_{r=0}^n \Lambda^r\left( \mathbb{C}^n \right)\,,
\end{equation}
where $\Lambda^r$ denotes the exterior or wedge product that provides the necessary anti-symmetrization to reproduce fermionic statistics. Under the Jordan--Wigner transformation, one can map $n$ fermionic modes to $n$ qubits, which allows us to identify the Fock space with the  usual $n$-qubit Hilbert space $\HC = \left(\CBB^2\right)^{\otimes n}$, i.e.,
\begin{equation}
    \FC_n \cong \HC\,.
\end{equation}
In particular, occupation-basis or Fock states are identified with computational-basis states.

In this context, fermionic Gaussian unitaries are defined as those preserving the vector space spanned by Majorana operators under conjugation~\cite{jozsa2008matchgates,wan2022matchgate,de2013power,oszmaniec2022fermion,cherrat2023quantum, matos2022characterization}.  We recall that the Majorana operators can be considered the fermionic analogues of position and momenta in bosonic systems, as they satisfy the canonical anticommutation relations. Under the Jordan--Wigner transformation, Majoranas are mapped to Pauli strings as
\begin{equation}
\begin{split}
    c_1&=X\id\dots \id,\; c_3= ZX\id\dots \id, \;\;\dots, \; c_{2n-1}=Z\dots Z X\,, \nonumber\\
        c_2&=Y\id\dots \id,\; c_4= ZY\id\dots \id,\;\;\dots, \; \;\; c_{2n}\;\;\;=Z\dots Z Y\,,
\end{split}
\end{equation}
where $\id,X,Y,Z$ are the usual single-qubit Pauli matrices. One can readily verify that these operators indeed anticommute, i.e., $\left\{c_\mu,c_\nu\right\}=2\delta_{\mu\nu}$, meaning that they have the structure of a Clifford algebra. More specifically, we can characterize fermionic Gaussian unitary as being generated by Hamiltonians that are quadratic in the Majorana operators. That is, any fermionic Gaussian unitaries can be expressed as  $R(U)= e^{-it H}$ for some Hermitian operator $H=\frac{i}{2} \sum_{\mu,\nu=1}^{2n}  h_{\mu\nu} c_\mu c_\nu$, with $h$ a real anti-symmetric matrix. In particular, in the Jordan--Wigner representation, these unitaries can be identified with matchgate circuits~\cite{casas2025matchgate}.

From an abstract point of view, fermionic Gaussian unitaries are a representation $R$ of the Lie group $\mathbb{SPIN}(2n)$, which is the double cover of the special orthogonal group $\SO(2n)$. This implies that there exists a map $2\to 1$ from  $\mathbb{SPIN}(2n)\to\SO(2n)$. More precisely, it holds that the Majorana operators transform under conjugation by matchgate unitaries as
\begin{equation} \label{eq:Majorana-action}
    R(U) c_\mu R^\dagger(U) = \sum_{\nu=1}^{2n} U_{\mu\nu} c_\nu\,,
\end{equation}
 where $U\in\SO(2n)$. 

 Having identified the group underpinning fermionic Gaussian unitaries, we can study how its action decomposes $\HC$ into irreducible representations (irreps) of $\mathbb{SPIN}(2n)$. Here, one finds that the Hilbert space $\HC$ splits into even and odd parity irreps as 
 \begin{equation}
     \HC\cong\HC_+ \oplus \HC_-\,,
 \end{equation}
 where the fermionic parity operator is $\Gamma =(-i)^n c_1\cdots c_{2n}=Z^{\otimes n}$~\cite{zimboras2014dynamic,kazi2022landscape}. In particular, $\HC_\pm$ is spanned by all occupation-number basis states with an even/odd number of occupied modes.

\subsection{PP fermionic Gaussian unitaries}

Particle-preserving (PP) Gaussian unitaries are a subgroup of fermionic unitaries, which in addition to parity preserve the total particle number $N=\frac{1}{2} \sum_{i=1}^n (\id - Z_i)$. They correspond, in Eq.~\eqref{eq:Majorana-action}, to matrices that are both orthogonal and symplectic, i.e., that belong to a representation $R$\footnote{Note that in what follows we will use $R$ to denote the representation of general, and also of PP Gaussian unitaries,  with context indicating which one we are referring to.} of the group $\SO(2n)\cap \SPBB(2n, \mathbb{R})$, which is isomorphic to the unitary group $\UBB(n)$ (see e.g.,~\cite{mele2024efficient,braccia2025optimal}). 
In the PP setting, it is most convenient to introduce the creation and annihilation operators, which can be defined in terms of the Majoranas as
\begin{equation}
    a_p=\frac{1}{2}\left(c_{2p-1}+ic_{2p}\right)\,, \quad 
    a_p^\dagger=\frac{1}{2}\left(c_{2p-1}-ic_{2p}\right)\,.
\end{equation}
Then, any PP Gaussian unitary can be expressed as $R(U)=e^{-iHt}$ for $H=\sum_{p,q=1}^n h_{pq} a^\dagger_p a_q$, with $h$ a Hermitian matrix. 
Under conjugation by a PP fermionic Gaussian unitary, the operators $a_p$ and $a_p^\dagger$ transform as
\begin{align} 
R(U) a^\dagger_p R^\dagger(U) &= \sum_{q=1}^n U_{qp} a^\dagger_q \,,\nonumber \\
R(U) a_p R^\dagger(U) &= \sum_{q=1}^n U^*_{qp} a_q \,,\label{eq:creation-PPaction}
\end{align}
where $U \in \mathbb{U}(n)$ is unitary.

Under the action of PP fermionic Gaussian unitaries, the Hilbert space $\HC$ decomposes into fixed-particle-number irreps as
\begin{equation}
    \HC\cong\bigoplus_{r=0}^n \HC_r\,,
\end{equation}
where $\HC_r$ denotes the subspace with exactly $r$ fermions. In particular, under the
Jordan--Wigner transformation, $\HC_r$ is equivalently spanned by all
computational-basis states of Hamming weight $r$. In particular,
\begin{equation}
   \HC_r \cong \Lambda^r(\mathbb{C}^n)\,,\qquad  \dim(\HC_r)=\binom{n}{r} \,,
\end{equation}
so the restriction of the PP action to $\HC_r$ is the $r$-th exterior-power
representation of $\U(n)$. 

\subsection{$t$-th order commutants}\label{sec:tcommutatbackground}

The central objects of our study are the $t$-th order commutants for general and PP fermionic Gaussian unitaries, respectively denoted as  $\CC_{t,n}$ and $\CC^{\rm PP}_{t,n}$. For convenience, we recall that given $t$ copies of the Hilbert space, $\HC^{\otimes t}$, with associated space of linear operators $\End(\HC^{\otimes t})$, the $t$-th order commutant of a group's representation acting on $\HC$ is defined as the associative algebra of matrices $M$ that commute with the $t$-th fold tensor product of the representation. For instance, $\CC_{t,n}\subseteq\End(\HC^{\otimes t})$ and
\begin{equation}
    \CC_{t,n}\!=\!\left\{M\,|\, \left[M, R(U)^{\otimes t}\right]=0\,,\quad \forall R(U)\in\mathbb{SPIN}(2n)\right\}\,.
\end{equation}
Importantly, $\CC_{t,n}$ has the structure of an associative algebra, as any linear combination or product of matrices from the commutant also belongs to the commutant. 

While a general characterization of $\CC_{t,n}$ and $\CC^{\rm PP}_{t,n}$ has not been reached, for small values of $t$ some results are known. For instance, for $t=1$
one can readily find
\begin{equation}
    \CC_{1,n}={\rm span}_{\mathbb{C}}\{\id,\Gamma\}\,.
\end{equation}
This is the reason why the irreps $\HC_\pm$ are preserved by the representation of $\Spin(2n)$. For $t=2$, the commutant is of dimension $2(2n+1)$, and
\begin{equation}
    \CC_{2,n}={\rm span}_{\mathbb{C}}\left\{Q_k^0,Q_k^1\right\}\quad \text{with }\, k=0,1,\ldots,2n \,.
\end{equation}
Above, the basis of the commutant is given by~\cite{diaz2023showcasing}
\begin{align}\label{eq:Qt2}
    &Q_k^0 = \sum_{\mu_1<\cdots<\mu_k} c_{\mu_1}\cdots c_{\mu_k} \otimes  c_{\mu_1}\cdots c_{\mu_k}\,,\nonumber \\ &Q_k^1 =   \sum_{\mu_1<\cdots<\mu_k} \Gamma c_{\mu_1}\cdots c_{\mu_k}\otimes  \, c_{\mu_1}\cdots c_{\mu_k}\,,
\end{align}
where $Q_0^{\{0,1\}}\equiv \id$~\footnote{We will use the same symbol, $\id$, to denote the identity matrix of any size whenever the size of the matrix is clear from the context.}. 
Furthermore, the third-order commutant $\CC_{3,n}$, can be obtained from the characterization in~\cite{wan2022matchgate}, and is of dimension $2\binom{2n+3}{3}$. The situation is even bleaker for PP fermionic Gaussian unitaries. Here, one can show that  $\CC^{\rm PP}_{1,n}$ is of dimension $n+1$ and spanned by~\cite{oszmaniec2022fermion}
\begin{equation}\label{eq:comm-powers-N}
    \CC^{\rm PP}_{1,n}={\rm span}_{\mathbb{C}}\{\id,N,N^2,\ldots N^n\}\,.
\end{equation}
However,  larger values of $t$ remain largely unexplored to the best of our knowledge.

The main goal of this work is to obtain a rigorous characterization of $\CC_{t,n}$ and $\CC^{\rm PP}_{t,n}$. In both cases, we will present a set of generators for the $t$-th order commutant, a closed formula for its dimension, and provide constructive recipes for obtaining an orthonormal basis based on the Gelfand-Tsetlin method.

\subsection{Howe dualities}

When analyzing the commutant of a representation $R$ of a Lie group $G$ over a vector space $\HC$, one can leverage a standard result from representation theory known as Howe dualities~\cite{hasegawa1989spin,wenzl2020dualities,aboumrad2022skew}, which constitute a form of the famous Schur-Weyl duality, and  whose key ideas we now recall. First, we note that the isotypical decomposition of $\HC$ under the action of $R(G)$ is
\begin{equation}
\label{eq:isotypical_decomp_general}
    \HC=\bigoplus_{\lambda} V_\lambda^G\otimes\CBB^{m_\lambda}\,,
\end{equation}
where the subspaces $V_\lambda^G$ are irreps of $G$, indexed by some label $\lambda$, and $m_\lambda$ is their multiplicity. In a nutshell, Howe dualities  describe the case when the representation space $\HC$ carries the representation of two commuting group actions $G$ and $G'$, under which $\HC$ decomposes as
\begin{equation}
    \mathcal H =  \bigoplus_{\lambda} V_\lambda^{G}\otimes W_\lambda^{G'}\,,
    \label{eq:howe-general-decomposition}
\end{equation}
where $V_\lambda^{G}$ and $W_\lambda^{G'}$ are irreducible representations of $G$ and $G'$, respectively, paired in a multiplicity-free way~\cite{hasegawa1989spin}.
In other words, the existence of a dual commuting group action $G'$, in addition to the original $G$, makes it so that the latter's multiplicity spaces organize as irreps of the former, and vice versa.

This decomposition has an immediate consequence for the commutants. Indeed, once $\mathcal H$ is written as in Eq.~\eqref{eq:howe-general-decomposition}, the $t$-th order commutant $\CC_{t}$ is
\begin{equation}
    \CC_{t}  = \bigoplus_{\lambda} \End\left(W_\lambda^{G'}\right)\,.
    \label{eq:howe-general-commutant}
\end{equation}
Hence, by specifying the dual group action $G'$ associated to the action over $\HC$ of the group $G$, a Howe duality allows one to completely characterize the commutant of $G$ by the irreps of $G'$.

Since statements of this kind appear in several equivalent forms in the literature, it is useful to keep track of the level at which the duality is being used. One may then present alternative formulations at the group level, as a pair of commuting group actions $(G,G')$; at the Lie-algebra level, as the commuting pair $(\mathfrak g,\mathfrak g')$, for $G=e^{\mathfrak g}$ and $G'=e^{\mathfrak g'}$ which specify the dual infinitesimal actions; or at the associative-algebra level, where the commutant of one action is realized as the image of the universal enveloping algebra of the other~\cite{wenzl2020dualities,aboumrad2022skew}.

The group-level formulation is the most convenient one for the representation-theoretic decomposition of $\HC$ into irreps. On the other hand, when one wishes to generate the commutant by explicit operators, it is more natural to pass to the infinitesimal description and work with the dual Lie algebra. In most cases indeed, the resulting associative algebra is exactly the commutant, i.e.,
\begin{equation}
    \CC_{t}
    =
    \mathrm{Im}\,\mathscr U(\mathfrak g')\,,
    \label{eq:howe-general-envelope}
\end{equation}
where $\mathscr{U}$ is the universal enveloping algebra of $\g'$, spanned by products of the elements $\g'$ subject to the relations imposed by the Lie bracket  (see Appendix~\ref{ap:PP-generators} for a precise definition).
In other words, Eq.~\eqref{eq:howe-general-envelope} is simply the operator-algebraic counterpart of the block decomposition in Eq.~\eqref{eq:howe-general-commutant}. In the next sections, will choose between these viewpoints according to convenience.

\subsection{Gelfand-Tsetlin method}\label{sec:GT}

Knowing how a commutant decomposes as a direct sum of blocks carrying irreps of the group pair $(G,G')$ is not enough to fully characterize it. Indeed, for practical purposes, one would like to have access to an explicit basis of the commutant, which enables, among other things, the use of Weingarten calculus~\cite{mele2023introduction} to compute group averages. 

The Gelfand-Tsetlin (GT) method~\cite{collins2006integration,zhelobenko1973compact,molev2006gelfand} is a standard representation-theoretic procedure for constructing canonical bases of irreps. In its classical form, it applies to nested families of groups, or Lie algebras, for which successive restrictions are multiplicity-free~\cite{goodman2009symmetry}. The basic idea behind the GT construction is to fully resolve a symmetry-adapted basis inside an irrep by restricting along a canonical chain of smaller subgroups~\cite{molev2006gelfand}. Starting from a unique state which is completely specified by the irrep itself, known as the highest-weight, each reduction onto a subgroup refines it until we end up with a particular vector, and one finds that the set of vectors corresponding to all the possible ways of carrying out the restrictions form an orthonormal basis of the irrep. 
Here we present the key steps of this procedure.

Let us consider the case where the group of interest is $G'$, the reason will be evident in brief. Furthermore, let us write it as $G'_d$, for $d$ its size, so that we can think of $G'_d$ as a group family over $d$.
Then take an irrep $W_\lambda^{G'_d}$ for some fixed irrep label $\lambda$. For the GT method one considers a multiplicity-free~\cite{molev2006gelfand} nested subgroup chain
\begin{equation}
    G'_1\subset G'_2\subset \cdots \subset G'_d\,,
\end{equation}
or, equivalently, the restriction chain
\begin{equation}
    G'_d \downarrow G'_{d-1} \downarrow \cdots \downarrow G'_1\,.
\end{equation}
Under this assumption, when one restricts the irrep $W_\lambda^{G'_d}$ into the irreps of $G'_{d-1}$, it decomposes as a direct sum of pairwise non-isomorphic irreps,
\begin{equation}
    W_\lambda^{G'_d}\!\downarrow_{G'_{d-1}}
    \cong
    \bigoplus_{\lambda^{(d-1)}} W^{G'_{d-1}}_{\lambda^{(d-1)}}\,.
\end{equation}
The same procedure is then iterated through each successive step of the chain.

After iterating all the way and reaching $G'_1$, we get a sequence of intermediate irrep labels
\begin{equation}
    \lambda^{(d)}=\lambda,\,
    \lambda^{(d-1)},\,
    \ldots,\,
    \lambda^{(1)}\,,
\end{equation}
which records which irrep is selected at each restriction step. This full sequence is what is called a GT pattern, and we denote it by $T$. In turn, each admissible pattern $T$ labels one vector in a canonical basis of $W_\lambda$~\cite{molev2006gelfand},
\begin{equation}
    \bigl\{\ket{\lambda,T}\,:\, T\in \GT(\lambda)\bigr\},
\end{equation}
where $\GT(\lambda)$ denotes the set of GT patterns with initial irrep label $\lambda$.

Equivalently, using $\ket{\lambda,{\rm hw}}\in W_\lambda$, a highest-weight vector of the starting irrep, as the initial reference state, then each pattern $T$ determines a canonical ordered product $Z_T$ of lowering operators, such that
\begin{equation}
    \ket{\lambda,T}\propto Z_T\ket{\lambda,{\rm hw}}\,.
\end{equation}
After normalization, these vectors form an orthonormal basis of $W_\lambda$. In this way, the GT method refines the abstract highest-weight label $\lambda$, which only specifies the representation, into a complete set of basis vectors indexed by the finer GT path data $T$.

We can now readily see how this machinery is relevant to the commutant problem. Indeed, assume a Howe duality for the representation over some vector space $\HC$ of the group of interest $G$. Then, by the ensuing decomposition of Eq.~\eqref{eq:howe-general-decomposition} and the corresponding decomposition of the commutant of $G$ in Eq.~\eqref{eq:howe-general-commutant}, we immediately see that if $G'$ is such that the GT method can be applied to its irreps we can use it to construct a basis of $\End(W_\lambda^{G'})$ and hence of the commutant.
Indeed, once an orthonormal basis of each irreducible block $W^{G'}_\lambda$ has been constructed, an orthonormal basis of the full matrix algebra $\End(W^{G'}_\lambda)$ is obtained immediately from the projectors
\begin{equation}
    X^{(\lambda)}_{T,T'}
    =
    \ketbra{\lambda,T}{\lambda,T'}\,.
\end{equation}
Equivalently, if $P_\lambda^{\mathrm{hw}}$ denotes the projector onto the highest-weight of the irrep $\lambda$, one can write
\begin{equation}
\label{eq:X-ZPZ}
    X^{(\lambda)}_{T,T'}
    =
    Z_T\,P_\lambda^{\mathrm{hw}}\,Z_{T'}^\dagger\,,
\end{equation}
where the normalization is assumed.

While here we outlined a high-level description of the GT method, in Appendices~\ref{ap:PP_GT_method} and~\ref{ap:gauss_GT_method} we present more details about how the method is applied to the group of fermionic Gaussian unitaries and its particle-preserving subgroup.

\section{Commutant of PP fermionic Gaussian unitaries}\label{sec:PP}

\subsection{Generators of the commutant}

As our first contribution, we present a generating set for $\CC_{t,n}^{\rm PP}$. To this end, we begin by denoting the parity operator acting on the $j$-th copy of the Hilbert space within $\HC^{\otimes t}$ as 
\begin{equation}
    \Gamma_j = (-i)^n c_1^{(j)}\cdots c_{2n}^{(j)} = \left(Z^{\otimes n}\right)^{(j)}\,.
\end{equation}
Next, we define the ``dressed'' creation and annihilation operators
\begin{align}
\widetilde a^{(j)}_p= \Gamma_1 \otimes \Gamma_2 \otimes \cdots \otimes \Gamma_{j-1}
\otimes a^{(j)}_p \otimes \mathbbm{1} \otimes \mathbbm{1} \otimes \cdots\, ,\nonumber \\
\widetilde a^{\dagger (j)}_p = \Gamma_1 \otimes \Gamma_2 \otimes \cdots \otimes \Gamma_{j-1} \otimes a^{\dagger (j)}_p \otimes \mathbbm{1} \otimes \mathbbm{1} \otimes \cdots\,. \nonumber
\end{align}
These satisfy the canonical anticommutation relations
\begin{align} 
&\left\{\widetilde a^{(j)}_p,\, \widetilde a^{\dagger (k)}_q\right\} = \delta_{pq}\delta_{jk}\,, \nonumber \\  &\left\{\widetilde a^{(j)}_p,\,\widetilde a^{(k)}_q\right\} = \left\{\widetilde a^{\dagger (j)}_p,\,\widetilde a^{\dagger (k)}_q\right\} = 0\,.
\end{align}
That is, they are genuine fermionic operators both within each copy of $\HC$ and also across copies.

From the previous, we introduce the operators
\begin{equation} \label{eq:PP-gen}
\widetilde{\Omega}_{j,k} = \sum_{p=1}^n  \widetilde{a}^\dagger_p\,\!^{(j)}\, \widetilde{a}^{\,(k)}_p\,.
\end{equation}
Here we note that when $j=k$, then $\widetilde{\Omega}_{j,j}=\sum_{p=1}^n\widetilde{a}_p^{\dagger{(j)}} \widetilde{a}^{\,(j)}_p$ is simply the number operator associated to the $j$-th copy. Then, for $j\neq k$ we can use the fact that each term $\widetilde{a}^\dagger_p\,\!^{(j)} \widetilde{a}_p^{\,(k)}$ annihilates a fermion on mode $p$ at position $k$ and creates another on the same mode at position $j$, to see that $\widetilde{\Omega}_{j,k}$ simply applies this action over all modes. Thus, $\widetilde{\Omega}_{j,k}$ is a hopping operator for fermions on every mode from copy $j$ to copy $k$. Importantly, we can prove that the following two lemmas hold (see Appendix~\ref{ap:PP-generators}):
\begin{lemma}\label{lem:Omega-in-C}
    Let $\widetilde{\Omega}_{j,k}$ be an operator as defined in Eq.~\eqref{eq:PP-gen}. Then, 
    \begin{equation}
        \widetilde{\Omega}_{j,k}\in\CC^{\rm PP}_{t,n}\,,\quad \forall j,k\,.
    \end{equation}
\end{lemma}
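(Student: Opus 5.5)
We need to show that $\widetilde{\Omega}_{j,k}$ commutes with $R(U)^{\otimes t}$ for every PP Gaussian unitary. The idea is to conjugate $\widetilde{\Omega}_{j,k}$ by $R(U)^{\otimes t}$ and check that it is left unchanged, using Eq.~\eqref{eq:creation-PPaction} copy by copy. The only issue is the parity strings $\Gamma_1\otimes\cdots\otimes\Gamma_{j-1}$ that dress the operators.

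\textbf{Step 1: conjugating a dressed operator.} Every PP Gaussian unitary preserves particle number, so it commutes with $\Gamma=(-1)^N$. Hence $R(U)\Gamma R^\dagger(U)=\Gamma$ on each copy. Conjugating a tensor product factorizes copy by copy. The identity factors are unchanged, the $\Gamma$ factors in copies $1,\dots,j-1$ are unchanged, and copy $j$ transforms by Eq.~\eqref{eq:creation-PPaction}. Therefore
\begin{align}
R(U)^{\otimes t}\,\widetilde a^{\dagger(j)}_p\,R^\dagger(U)^{\otimes t}&=\sum_{q}U_{qp}\,\widetilde a^{\dagger(j)}_q\,,\nonumber\\
R(U)^{\otimes t}\,\widetilde a^{(k)}_p\,R^\dagger(U)^{\otimes t}&=\sum_{q}U^*_{qp}\,\widetilde a^{(k)}_q\,.\nonumber
\end{align}
In words, the dressed operators transform exactly like the undressed ones, in the defining representation of $\U(n)$ and its dual.

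\textbf{Step 2: invariance of the sum.} Conjugation is an algebra homomorphism, so
\begin{equation}
R(U)^{\otimes t}\,\widetilde\Omega_{j,k}\,R^\dagger(U)^{\otimes t}=\sum_{q,q'}\Big(\sum_p U_{qp}U^*_{q'p}\Big)\widetilde a^{\dagger(j)}_q\widetilde a^{(k)}_{q'}\,.\nonumber
\end{equation}
Unitarity gives $\sum_p U_{qp}U^*_{q'p}=(UU^\dagger)_{qq'}=\delta_{qq'}$, so the right-hand side equals $\widetilde\Omega_{j,k}$. This holds for every $U\in\U(n)$, so $\widetilde\Omega_{j,k}\in\CC^{\rm PP}_{t,n}$. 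The diagonal case $j=k$ is covered by the same computation; there the operator is just the number operator of copy $j$, which can also be checked directly.

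\textbf{Main obstacle.} The only delicate point is Step 1, namely justifying that the parity dressing is invariant. I would argue it in one of two ways. Either $R(U)=e^{-iHt}$ with $H=\sum_{p,q} h_{pq}a^\dagger_p a_q$ quadratic, hence even, so $[H,\Gamma]=0$. Or, more abstractly, $\Gamma$ lies in $\CC^{\rm PP}_{1,n}$ from Eq.~\eqref{eq:comm-powers-N}, since $\Gamma=(-1)^N$ is a polynomial in $N$ on a finite spectrum. Once this is settled, the rest is the index contraction in Step 2.
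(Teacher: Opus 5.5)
Your proof is correct and takes essentially the same approach as the paper: both rest on the transformation law in Eq.~\eqref{eq:creation-PPaction}, the unitarity contraction $\sum_p U_{qp}U^*_{q'p}=\delta_{qq'}$, and the invariance of the parity operators under PP Gaussian unitaries. The only difference is the order of the steps. The paper first proves invariance of the undressed $\Omega_{j,k}=\sum_p a^{\dagger(j)}_p a^{(k)}_p$ and then notes that $\widetilde\Omega_{j,k}$ differs from it by multiplication with invariant parity operators, whereas you show directly that the dressed ladder operators transform covariantly, which avoids having to check that rewriting explicitly.
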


\begin{lemma}\label{lem:Omega-rep-gl}
The operators $\widetilde\Omega_{j,k}$ satisfy the Lie algebra commutation relations
\begin{equation} \label{eq:Omega_algebra}
\left[\widetilde{\Omega}_{j,k}, \widetilde{\Omega}_{j',k'}\right] = \delta_{kj'}\widetilde\Omega_{j,k'} - \delta_{jk'}\widetilde\Omega_{j',k}\,,
\end{equation}
and therefore generate a representation of the $\mathfrak{u}(t)$ algebra.
\end{lemma}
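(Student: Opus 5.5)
The plan is to derive the commutation relations purely from the canonical anticommutation relations (CAR) satisfied by the dressed operators $\widetilde a^{(j)}_p$, $\widetilde a^{\dagger(j)}_p$, which are stated just before Eq.~\eqref{eq:PP-gen}. Nothing about the Jordan--Wigner form or the parity strings $\Gamma_1\otimes\cdots\otimes\Gamma_{j-1}$ will be needed once the CAR are in hand. To organize the computation, introduce a combined index $\alpha=(p,j)$ running over $nt$ values. Write $b_\alpha=\widetilde a^{(j)}_p$, so that $\{b_\alpha,b^\dagger_\beta\}=\delta_{\alpha\beta}$ and $\{b_\alpha,b_\beta\}=0$. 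In this notation $\widetilde\Omega_{j,k}=\sum_p b^\dagger_{(p,j)}b_{(p,k)}$ is a sum of fermion bilinears.

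The key step is the standard identity for bilinears of CAR operators,
\begin{equation}
[b^\dagger_\alpha b_\beta,\, b^\dagger_\gamma b_\delta]=\delta_{\beta\gamma}\, b^\dagger_\alpha b_\delta-\delta_{\alpha\delta}\, b^\dagger_\gamma b_\beta\,.
\end{equation}
I would verify it by expanding with $[AB,C]=A\{B,C\}-\{A,C\}B$, taking $A=b^\dagger_\alpha$, $B=b_\beta$ and $C=b^\dagger_\gamma b_\delta$. The required anticommutators then follow from the CAR. First, $\{b_\beta,b^\dagger_\gamma b_\delta\}=\delta_{\beta\gamma}b_\delta$ plus a term $b^\dagger_\gamma\{b_\beta,b_\delta\}=0$, after one reordering. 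Second, $\{b^\dagger_\alpha,b^\dagger_\gamma b_\delta\}=\delta_{\alpha\delta}b^\dagger_\gamma$. The only real care point in the whole argument is tracking the signs in these reorderings, i.e., using $[AB,C]$ in the mixed commutator/anticommutator form rather than a naive expansion.

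Next I would substitute $\alpha=(p,j)$, $\beta=(p,k)$, $\gamma=(q,j')$, $\delta=(q,k')$ and sum over $p,q$. The Kronecker deltas factor as $\delta_{\beta\gamma}=\delta_{pq}\delta_{kj'}$ and $\delta_{\alpha\delta}=\delta_{pq}\delta_{jk'}$. Summing over $q$ collapses each double sum into a single sum over $p$. This gives $\delta_{kj'}\sum_p b^\dagger_{(p,j)}b_{(p,k')}-\delta_{jk'}\sum_p b^\dagger_{(p,j')}b_{(p,k)}$, which is exactly Eq.~\eqref{eq:Omega_algebra}.

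Finally, Eq.~\eqref{eq:Omega_algebra} is precisely the bracket of the matrix units $E_{jk}$ of $\mathfrak{gl}(t)$. Hence $E_{jk}\mapsto\widetilde\Omega_{j,k}$ extends linearly to a Lie algebra representation of $\mathfrak{gl}(t,\mathbb{C})$. To pass to $\mathfrak u(t)$, note that $\widetilde\Omega_{j,k}^\dagger=\widetilde\Omega_{k,j}$. Consequently the anti-Hermitian combinations $i\widetilde\Omega_{j,j}$, $\widetilde\Omega_{j,k}-\widetilde\Omega_{k,j}$ and $i(\widetilde\Omega_{j,k}+\widetilde\Omega_{k,j})$ are the images of the corresponding basis of $\mathfrak u(t)$. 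Since $\mathfrak{gl}(t,\mathbb{C})$ is the complexification of $\mathfrak u(t)$, restricting the representation to this real form gives a (skew-Hermitian) representation of $\mathfrak u(t)$, as claimed.
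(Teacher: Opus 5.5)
Your plan is the paper's own proof. You use the CAR of the dressed operators to evaluate the commutator of two bilinears, sum over modes, and match the result to the matrix units $E_{jk}$ of $\mathfrak{gl}(t)$.

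The bilinear identity you state is correct, and so is everything after it. However, the verification you sketch does not hold as written, and this is exactly the step you flag as the delicate one. With $C=b^\dagger_\gamma b_\delta$ \emph{even}, the anticommutators in $[AB,C]=A\{B,C\}-\{A,C\}B$ do not collapse. From the CAR one gets
$\{b_\beta,C\}=\delta_{\beta\gamma}b_\delta-2\,b^\dagger_\gamma b_\beta b_\delta$ and $\{b^\dagger_\alpha,C\}=\delta_{\alpha\delta}b^\dagger_\gamma+2\,b^\dagger_\alpha b^\dagger_\gamma b_\delta$.
For example, $\{b_1,b^\dagger_2 b_2\}=2\,b^\dagger_2 b_2 b_1\neq 0$, whereas your claim gives $\delta_{12}b_2=0$.

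The expression you actually evaluate after ``one reordering'' is $\{b_\beta,b^\dagger_\gamma\}b_\delta-b^\dagger_\gamma\{b_\beta,b_\delta\}$, which is the \emph{commutator} $[b_\beta,C]$. Likewise, your value $\delta_{\alpha\delta}b^\dagger_\gamma$ is $-[b^\dagger_\alpha,C]$, not $\{b^\dagger_\alpha,C\}$. These two mislabelings cancel each other, and that is the only reason you land on the right formula.

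The repair is one line. Since $C$ is even, use the ordinary Leibniz rule $[AB,C]=A[B,C]+[A,C]B$ with $[b_\beta,C]=\delta_{\beta\gamma}b_\delta$ and $[b^\dagger_\alpha,C]=-\delta_{\alpha\delta}b^\dagger_\gamma$, both immediate from the CAR. Alternatively, do as the paper does and expand with all four factors odd:
$[AB,CD]=A\{B,C\}D-AC\{B,D\}+\{A,C\}DB-C\{A,D\}B$.
Here $\{B,D\}=\{A,C\}=0$ remove the middle terms. If you keep your mixed form, you must carry the quartic remainders, which combine to $-2\,b^\dagger_\alpha b^\dagger_\gamma\{b_\beta,b_\delta\}$ and vanish only after the two pieces are added.

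Your final step is correct. You restrict the $\mathfrak{gl}(t,\mathbb{C})$ representation to its real form $\mathfrak{u}(t)$ using $\widetilde\Omega_{j,k}^\dagger=\widetilde\Omega_{k,j}$ and the anti-Hermitian combinations. This is more careful than the paper, which simply calls the $E_{jk}$ a basis of $\mathfrak{u}(t)$.
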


 Hence, having realized that  the operators $\widetilde{\Omega}_{j,k}$ in $\CC^{\rm PP}_{t,n}$ realize a representation of  $\mathfrak{u}(t)$, we can leverage the skew Howe duality~\cite{aboumrad2022skew} to further characterize $\CC^{\rm PP}_{t,n}$. Specifically, under the identification 
\begin{equation}\label{eq:hilbert-fock-PP}
     \HC^{\otimes t}\cong \FC_n^{\otimes t}\cong\Lambda(\mathbb{C}^n\otimes \mathbb{C}^t)\,,
\end{equation}
the duality states that over $\FC_n^{\otimes t}$, the pair $(\mathbb{U}(n),\mathbb{U}(t))$ admits representations which are mutually commuting. Specifically, $\mathbb{U}(n)$ acts non-trivially on $\mathbb{C}^n$ through the PP fermionic Gaussian representation, whereas $\mathbb{U}(t)$ acts on $\mathbb{C}^t$  with the representation induced by the infinitesimal generators $\widetilde\Omega_{j,k}$. This result is central in proving the following theorem. 

\begin{theorem}\label{th:PP-generators}
    The commutant $\CC^{\rm PP}_{t,n}$, is generated as
    \begin{equation}\nonumber
\CC^{\rm PP}_{t,n} = {\rm span}_{\mathbb C} \Big\langle \widetilde\Omega_{j,j+1}, \widetilde\Omega_{j+1,j}, \widetilde\Omega_{1,1} \Big\rangle \,,\;\text{with } 1\leq j\leq t-1\,.
\end{equation}
\end{theorem}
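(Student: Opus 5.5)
The proof has two parts. First, the associative algebra generated by $\widetilde\Omega_{j,j+1}$, $\widetilde\Omega_{j+1,j}$ and $\widetilde\Omega_{1,1}$ equals the image $\mathrm{Im}\,\mathscr U(\mathfrak{gl}(t))$ of the full family $\{\widetilde\Omega_{j,k}\}$. Second, this image is exactly $\CC^{\rm PP}_{t,n}$, which follows from skew Howe duality.

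\emph{Generation.} By Lemma~\ref{lem:Omega-in-C} every $\widetilde\Omega_{j,k}$ lies in $\CC^{\rm PP}_{t,n}$, so the algebra generated by the listed operators is contained in the commutant. Lemma~\ref{lem:Omega-rep-gl} gives
\begin{equation}\nonumber
[\widetilde\Omega_{j,j+1},\widetilde\Omega_{j+1,j+2}]=\widetilde\Omega_{j,j+2}.
\end{equation}
Iterating this bracket produces every $\widetilde\Omega_{j,k}$ with $j<k$, and the lowering operators $\widetilde\Omega_{j+1,j}$ give every $\widetilde\Omega_{j,k}$ with $j>k$ in the same way. Lemma~\ref{lem:Omega-rep-gl} also gives
\begin{equation}\nonumber
[\widetilde\Omega_{j,j+1},\widetilde\Omega_{j+1,j}]=\widetilde\Omega_{j,j}-\widetilde\Omega_{j+1,j+1}.
\end{equation}
Starting from $\widetilde\Omega_{1,1}$, these differences recover all diagonal $\widetilde\Omega_{j,j}$ recursively. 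Commutators of elements of an associative algebra remain in that algebra, so the generated algebra contains the span of all $\widetilde\Omega_{j,k}$, i.e.\ the full image of $\mathfrak{gl}(t)=\mathfrak u(t)_{\mathbb C}$. It is therefore exactly $\mathrm{Im}\,\mathscr U(\mathfrak{gl}(t))$, provided we interpret the span to include the identity, which is the $\mathscr U$-image of $1$.

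\emph{Equality with the commutant.} We use the identification in Eq.~\eqref{eq:hilbert-fock-PP}. The key check is that the dressed operators $\widetilde a^{(j)}_p$ realize the Jordan--Wigner-type isomorphism $\HC^{\otimes t}\cong\Lambda(\mathbb C^n\otimes\mathbb C^t)$ with $nt$ modes labelled by $(p,j)$. This follows from their canonical anticommutation relations and from the fact that the vacuum $\ket{0}^{\otimes t}$ is cyclic, since it is annihilated by all $\widetilde a^{(j)}_p$ and the Fock space is irreducible under the CAR algebra.

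Under this isomorphism, $R(U)^{\otimes t}$ acts by $U\otimes\id_t$ on $\mathbb C^n\otimes\mathbb C^t$. To see this, note that $\Gamma_i$ commutes with $R(U)$ because PP unitaries preserve parity, so Eq.~\eqref{eq:creation-PPaction} transfers to the dressed operators:
\begin{equation}\nonumber
R(U)^{\otimes t}\,\widetilde a^{\dagger(j)}_p\,R(U)^{\dagger\otimes t}=\sum_q U_{qp}\,\widetilde a^{\dagger(j)}_q .
\end{equation}
In addition, $R(U)^{\otimes t}$ fixes the vacuum up to a phase; the phase is trivial since $R(U)=e^{-iH}$ with $H$ number-conserving and annihilating the vacuum. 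Meanwhile $\widetilde\Omega_{j,k}$ is exactly the second quantization of $\id_n\otimes E_{jk}$.

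Skew $(\mathrm{GL}_n,\mathrm{GL}_t)$ Howe duality then gives the multiplicity-free decomposition
\begin{equation}\nonumber
\Lambda(\mathbb C^n\otimes\mathbb C^t)\cong\bigoplus_\lambda V^{\U(n)}_\lambda\otimes W^{\U(t)}_{\lambda'} ,
\end{equation}
with $\lambda'$ the conjugate partition. Combined with Eq.~\eqref{eq:howe-general-commutant} and the double commutant theorem, this yields $\CC^{\rm PP}_{t,n}=\bigoplus_\lambda\End(W_{\lambda'})=\mathrm{Im}\,\mathscr U(\mathfrak{gl}(t))$. Here we also use that the associative envelope of a representation of the connected group $\U(t)$ coincides with the span of its group image. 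Finally, $\U(n)$ is Zariski-dense in $\mathrm{GL}_n$, so commuting with $R(\U(n))^{\otimes t}$ is the same as commuting with the full $\mathrm{GL}_n$ action.

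\emph{Main obstacle.} The main difficulty is making the Fock-space identification rigorous: the Jordan--Wigner strings $\Gamma_1\cdots\Gamma_{j-1}$ must not spoil the intertwining of the $\U(n)$ action. The argument rests on parity invariance of PP Gaussian unitaries and on checking that the vacuum is fixed. Once that is settled, the rest is the cited skew Howe duality.
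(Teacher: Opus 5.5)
Your proposal is correct and follows the paper's route: skew Howe duality identifies $\CC^{\rm PP}_{t,n}$ with the image of $\mathscr U(\mathfrak{gl}(t))$ under the representation generated by the $\widetilde\Omega_{j,k}$, and the commutators $[\widetilde\Omega_{j,j+1},\widetilde\Omega_{j+1,j}]=\widetilde\Omega_{j,j}-\widetilde\Omega_{j+1,j+1}$ and $[\widetilde\Omega_{j,k},\widetilde\Omega_{k,\ell}]=\widetilde\Omega_{j,\ell}$ then reduce the generating set to the nearest-neighbour operators plus $\widetilde\Omega_{1,1}$. Your additional details are points the paper takes for granted, and they check out: the dressed operators realize $\HC^{\otimes t}\cong\Lambda(\mathbb C^n\otimes\mathbb C^t)$ with $R(U)^{\otimes t}$ acting as $U\otimes\id_t$, the identity is included in the generated algebra, and commuting with $\U(n)$ is equivalent to commuting with $\mathrm{GL}_n$.
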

The derivation of this theorem can be found in Appendix~\ref{ap:PP-generators}. Theorem~\ref{th:PP-generators} reveals the structure of the commutant as an algebra spanned by polynomials on the $\widetilde{\Omega}_{j,k}$, and generated by $\OC(t)$  operators. Of course, Theorem~\ref{th:PP-generators} does not provide an explicit orthonormal basis for $\CC_{t,n}^{\rm PP}$, but we will address this problem below.

\begin{figure}[t]
    \centering
    \includegraphics[width=\linewidth]{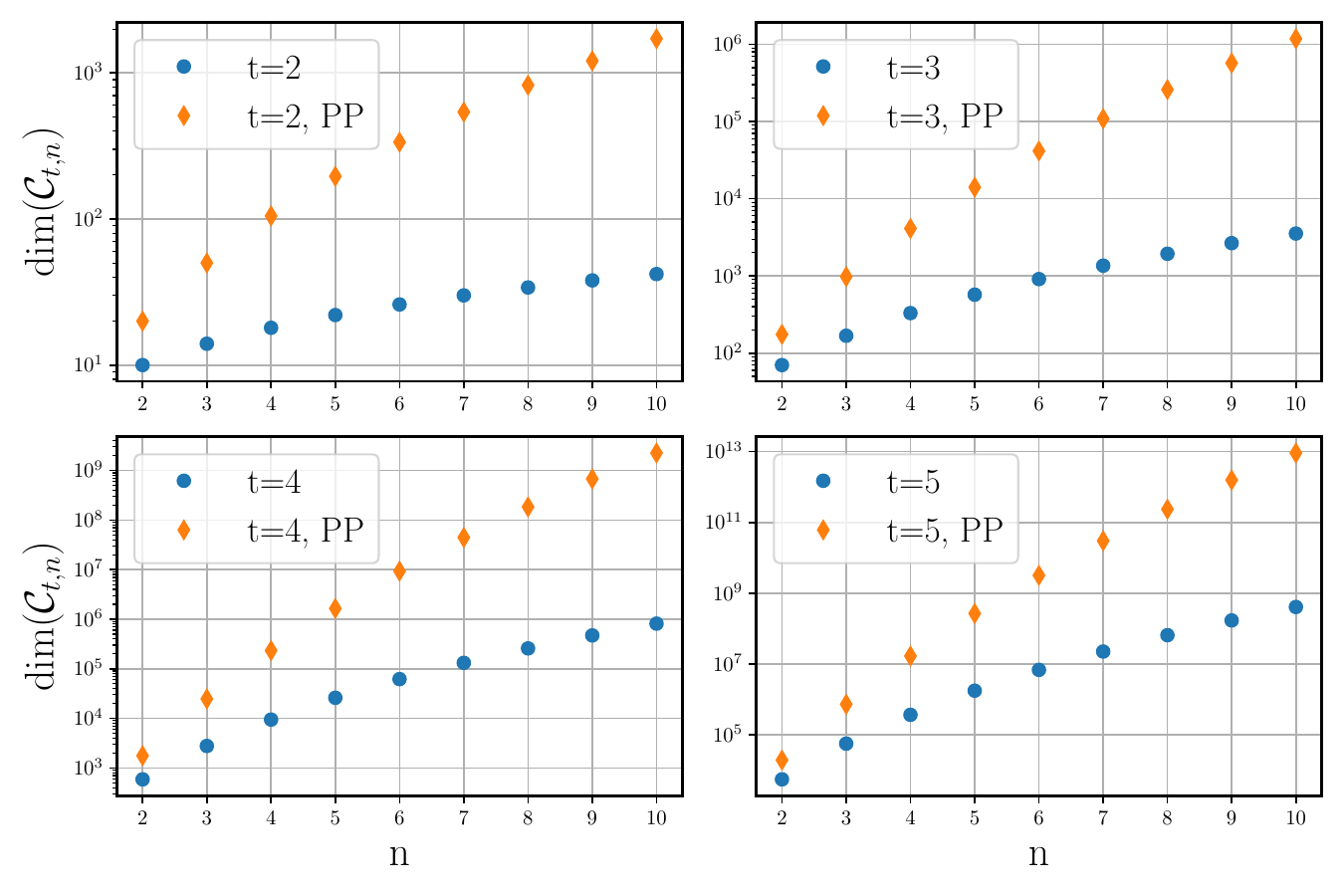}
    \caption{{\bf Dimensions of the commutants}. We plot in a logarithmic scale the dimensions of $\CC_{t,n}$ and $\CC_{t,n}^{\rm PP}$ as a function of the number of fermionic modes $n$, for values of $t=2,3,4,5$. }
    \label{fig:dimensions}
\end{figure}

\subsection{Dimensions of the commutant}

Next, we present an explicit formula for the dimension of $\CC^{\rm PP}_{t,n}$. We do so by resorting to representation-theoretic tools such as Weyl integration and Selberg-type integrals. The result is stated in the next theorem, and proven in Appendix~\ref{ap:PP-dim}.

\begin{theorem}
\label{th:PP-dimension}
    The dimension of $\CC^{\rm PP}_{t,n}$ is 
    \begin{equation}
        \dim\left(\CC^{\rm PP}_{t,n}\right)=\prod_{j=0}^{n-1}\frac{j!\,(j+2t)!}{(j+t)!^2}\,.
    \end{equation}
\end{theorem}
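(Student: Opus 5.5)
The plan is to compute $\dim(\CC^{\rm PP}_{t,n})$ as a Haar integral and evaluate it in closed form. Since $\CC^{\rm PP}_{t,n}$ is the commutant of the representation $U\mapsto R(U)^{\otimes t}$ of the compact group $\U(n)$, its dimension equals $\sum_\lambda m_\lambda^2$, where $m_\lambda$ are the multiplicities in the isotypic decomposition~\eqref{eq:isotypical_decomp_general}. By Schur orthogonality this is
\begin{equation}
\dim\left(\CC^{\rm PP}_{t,n}\right)=\int_{\U(n)} dU\,\left|\Tr R(U)\right|^{2t}\,.
\end{equation}
The first step is the character. On $\HC\cong\bigoplus_r\Lambda^r(\CBB^n)$, if $U$ has eigenvalues $z_1,\dots,z_n$, then
\begin{equation}
\Tr R(U)=\sum_r e_r(z)=\prod_{i=1}^n(1+z_i)\,,
\end{equation}
where $e_r$ denotes the $r$-th elementary symmetric polynomial.

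The second step applies the Weyl integration formula for $\U(n)$, with $z_j=e^{i\theta_j}$. This turns the dimension into
\begin{equation}
\frac{1}{n!\,(2\pi)^n}\int_{[0,2\pi]^n}\prod_{j=1}^n\left|1+e^{i\theta_j}\right|^{2t}\prod_{j<k}\left|e^{i\theta_j}-e^{i\theta_k}\right|^2 d\theta\,.
\end{equation}
This is exactly a Morris integral, i.e.\ the circular (trigonometric) form of the Selberg integral, with parameters $a=b=t$ and $\gamma=1$. The Morris constant-term identity evaluates it as
\begin{equation}
\prod_{j=0}^{n-1}\frac{\Gamma(1+j)\,\Gamma(1+2t+j)}{\Gamma(1+t+j)^2}\,,
\end{equation}
which is precisely the claimed product $\prod_{j=0}^{n-1} j!\,(j+2t)!/(j+t)!^2$. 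To stay self-contained I would either quote the Morris identity or derive it from Selberg's integral. The derivation substitutes $|1+e^{i\theta}|^{2}=4\cos^2(\theta/2)$ and maps the circle to $[0,1]$ via $x=\cos^2(\theta/2)$; one must check that the Vandermonde factor transforms into the Selberg form with the correct Jacobian and power-of-$4$ factors.

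Carrying out this change of variables and tracking the constants, so that the normalization of the Selberg evaluation reproduces exactly the ratio of factorials, is the step I expect to be the main obstacle. I would first sanity-check the result at $n=1$. There the integral is $\frac{1}{2\pi}\int|1+e^{i\theta}|^{2t}d\theta=\binom{2t}{t}$, which matches $\sum_k\binom{t}{k}^2$ obtained by counting $\U(1)$ charge sectors.

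As an independent route, I would use the skew Howe duality invoked for Theorem~\ref{th:PP-generators}. It gives $\Lambda(\CBB^n\otimes\CBB^t)\cong\bigoplus_\lambda V^{\U(n)}_\lambda\otimes V^{\U(t)}_{\lambda'}$, with $\lambda$ ranging over partitions fitting in an $n\times t$ box. Hence $\dim\CC^{\rm PP}_{t,n}=\sum_{\lambda}\big(\dim V^{\U(t)}_{\lambda'}\big)^2$. Via RSK (pairs of semistandard tableaux), this sum counts plane partitions in an $n\times t\times t$ box. MacMahon's box formula $\prod_{i=0}^{a-1}\frac{i!\,(i+b+c)!}{(i+b)!\,(i+c)!}$ with $a=n$ and $b=c=t$ then reproduces the stated product. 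This gives a consistency check of the integral computation.
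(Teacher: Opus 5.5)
Your proposal is correct and follows the paper's proof essentially step for step: $\dim\CC^{\rm PP}_{t,n}=\sum_\lambda m_\lambda^2=\int_{\U(n)}|\det(\id+U)|^{2t}\,d\mu(U)$, then Weyl integration with $|W|=n!$, then (after the shift $\theta_j\to\theta_j+\pi$) the Morris/special Selberg evaluation $M_n(t,t,1)$, whose factors $(j+1)!$ combine with the prefactor $1/n!$ to give $\prod_j j!$; the paper also records your box-count cross-check as $\sum_{\lambda\subseteq n^t}(\dim V^{\U(t)}_\lambda)^2=\prod_{i,j=1}^{t}\frac{n+i+j-1}{i+j-1}$. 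Quote the Morris identity, as the paper does, rather than trying the substitution $x_j=\cos^2(\theta_j/2)$: that substitution fails for $n\ge 2$, because $|e^{i\theta_j}-e^{i\theta_k}|^2=4\sin^2\frac{\theta_j-\theta_k}{2}$ is not a function of $x_j,x_k$ (the reduction to a Selberg Vandermonde works for the sign-flip-symmetric $\Spin(2n)$ Weyl denominator used in the proof of Theorem~\ref{th:dimension}, but not for $\U(n)$).
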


To better grasp the meaning of Theorem~\ref{th:PP-dimension}, it is instructive to study the behavior of $\dim\left(\CC^{\rm PP}_{t,n}\right)$ in different regimes. First, we can fix the number of copies $t$ and let the number of fermionic modes $n$ scale. In this regime, the dimension of the PP commutant behaves as $\dim\left(\CC^{\rm PP}_{t,n}\right)\in \Theta\left(n^{t^2}\right)$, as shown in Appendix~\ref{ap:scaling-dim-PP}. That is, polynomial in $n$ but with a polynomial degree that grows rapidly as $t^2$. This growth demonstrates that there exists a rich structure of invariant polynomials in the hopping operators $\widetilde{\Omega}_{j,k}$, even at moderate number of copies $t$. In Fig.~\ref{fig:dimensions}, we plot the dimension of $\CC_{t,n}$ for fixed $t=2-5$ and scaling $n$. There, we can observe that, even for small number of $n$ (i.e., $n\leq 10$) and $t$, the dimension of the PP commutant grows very quickly. Moreover, for fixed $n$ and scaling $t$, we find that $\dim\left(\CC^{\rm PP}_{t,n}\right)\in \Theta\left(4^{nt}\,t^{-n^2/2}\right)$, an exponential scaling with $t$.

\subsection{Constructing a basis for the commutant via GT}\label{sec:GTappliedtoPP}

As previously noted, the skew Howe duality states that the pair $(\mathbb{U}(n),\mathbb{U}(t))$ admits representations which are mutually commuting on $\FC_n^{\otimes t}=\Lambda(\mathbb{C}^n\otimes \mathbb{C}^t)$. Instantiating Eq.~\eqref{eq:howe-general-decomposition}  leads to  the decomposition 
\begin{equation}
    \FC_n^{\otimes t}\cong\bigoplus_{\lambda\subseteq n^t} V_{\lambda^T}^{\mathbb{U}(n)}\otimes V_{\lambda}^{\mathbb{U}(t)}\,.\label{eq:Schur-Weyl-passive-main}
\end{equation}
Above, the sum runs over partitions, or Young diagrams, $\lambda$ that fit inside an $n\times t$ rectangle, and we defined $\lambda^T$ as the partition obtained by transposing the Young diagram. We refer the reader to Appendix~\ref{ap:PP_GT_method} for additional details. The latter contains all the ingredients needed to implement the GT procedure (see the background Section~\ref{sec:GT}) for different values of $t$. We here present explicit results for $t=1,2$ and in the appendix present some considerations for $t=3$.

\subsubsection{$t=1$ case}

For $t=1$ the copy group is $\mathbb{U}(1)$ and we can only obtain partitions with at most one row, and row length at most $n$. That is,  $\lambda=(0),(1),\cdots ,(n)$ (with transpose given by $\lambda^T=(r)^T=(1^r)$). Since the group is abelian, all such irreps are one dimensional, meaning that
\begin{equation}
    \dim(\CC^{\rm PP}_{1,n})=n+1\,.
\end{equation}
Here, the GT restriction chain for $\mathbb{U}(1)$ is trivial, implying from Eq.~\eqref{eq:X-ZPZ} that the projectors into each one-dimensional irrep are a basis for the commutant. 

Indeed, defining the number operator
\begin{equation}
    N=\widetilde\Omega_{1,1}\,,
\end{equation}
we can construct the projectors onto the irreps as (see Appendix~\ref{ap:construction-GT-PP})
\begin{equation}
    P_r=\prod_{\substack{s= 0\\s\neq r}}^n\frac{N-s}{r-s}\,,
\end{equation}
whose action projects onto the $r$-particle subsector, i.e.,  $P_r:\Lambda(\mathbb{C}^n)\rightarrow\Lambda^r(\mathbb{C}^n)$, and which satisfy $P_rP_r'=\delta_{r,r'}P_r$ and $\sum_{r=0}^nP_r=\id$. As such, we have
\begin{equation}
    \CC_{1,n}^{\rm PP}={\rm span}_{\mathbb{C}}\{P_0,P_1,\cdots,P_n\}\,.
\end{equation}
Here, one may wonder how this basis relates to the one previously known in Eq.~\eqref{eq:comm-powers-N}. To relate these two alternative commutant bases, we begin by orthogonalizing the basis of Eq.~\eqref{eq:comm-powers-N}, which leads to  
\begin{equation}
    \CC_{1,n}^{\rm PP}={\rm span}_{\mathbb{C}}\{E_0,
E_1,\ldots,E_n\}\,,\end{equation} 
with $E_0 = \mathbb I$ and
\begin{equation}
E_a=\sum_{i_1<i_2<...<i_a} Z_{i_1}Z_{i_2}...Z_{i_a}\,.
\end{equation}
This basis is related to $N$ through Krawtchouk polynomials $E_a=K_a(n-N;n)$,
which in turn can be found to imply 
\begin{equation}\label{eq:passiveEPrelation}
   E_a=\sum_r K_a(n-r;n)P_r\,.
\end{equation}
See Appendix \ref{ap:construction-GT-PP} for details.

\subsubsection{$t=2$ case}\label{sec:PPt2}

Here, the partitions allowed for the copy side group $\mathbb{U}(2)$ are 
\begin{equation}
    \lambda=(\lambda_1,\lambda_2)\,,\quad n\geq\lambda_1\geq\lambda_2\geq 0\,,
\end{equation}
such that
\begin{equation}
\lambda^T=(2^{\lambda_2},1^{\lambda_1-\lambda_2})\,,
\end{equation}
meaning that the Fock space decomposes as
\begin{equation}
    \FC_n^{\otimes 2}\cong\bigoplus_{n\geq\lambda_1\geq\lambda_2\geq 0} V_{(2^{\lambda_2},1^{\lambda_1-\lambda_2})}^{\mathbb{U}(n)}\otimes V_{(\lambda_1,\lambda_2)}^{\mathbb{U}(2)}\,.
\end{equation}
Next, since
\begin{equation}\label{eq:dim-t-2-passive}
    \dim\left(V_{(\lambda_1,\lambda_2)}^{\mathbb{U}(2)}\right)=\lambda_1-\lambda_2+1\,,
\end{equation}
we find
\begin{align}
    \dim(\CC_{2,n}^{\rm PP})&=\sum_{n\geq\lambda_1\geq\lambda_2\geq 0}(\lambda_1-\lambda_2+1)^2\nonumber\\
    &=\frac{(n+1)(n+2)^2(n+3)}{12}\,.
\end{align}

Next, we note that the subgroup chain is
\begin{equation}
    \mathbb{U}(1)\subset\mathbb{U}(2)\,,
\end{equation}
and the GT patterns will be a triangular array 
\begin{equation}\label{eq:GT-pattern-passive-t2}
    \begin{array}{ccc}
       \lambda_1  & & \lambda_2  \\
         & m & 
    \end{array}
\end{equation}
with $\lambda_1\geq m\geq \lambda_2$. Notably, we show in Appendix~\ref{ap:construction-GT-PP} that the rows of the GT pattern indicates the particle sector to which the irrep belongs to, as well as the number of particles in each copy of the Hilbert space. Namely, the irrep belongs to the sector with $|\lambda|=\lambda_1+\lambda_2$ particles, where $m$ of them are in the first copy, and $\lambda_1+\lambda_2-m$ in the second one. In the same appendix, we prove that the following operators belong to $\CC_{2,n}^{\rm PP}$. From the zero particle sector we have one commutant element
\begin{equation}
    X^{(0,0)}_{0,0}=\ket{0}\bra{0}\,.
\end{equation}
From the one-particle sector there are four commutant elements
\begin{align}
    X^{(1,0)}_{T_1,T_1}\!=\!\sum_{p=1}^n{\widetilde{a}_p^{\dagger(1)}}\ket{0}\!\bra{0}\widetilde{a}_p^{(1)}\,,\,    X^{(1,0)}_{T_2,T_2}\!=\!\sum_{p=1}^n{\widetilde{a}_p^{\dagger(2)}}\ket{0}\!\bra{0}\widetilde{a}_p^{(2)},\nonumber\\ X^{(1,0)}_{T_1,T_2}\!=\!\sum_{p=1}^n{\widetilde{a}_p^{\dagger(1)}}\ket{0}\!\bra{0}\widetilde{a}_p^{(2)}\,,\,    X^{(1,0)}_{T_2,T_1}\!=\!\sum_{p=1}^n{\widetilde{a}_p^{\dagger(2)}}\ket{0}\!\bra{0}\widetilde{a}_p^{(1)}.\nonumber
\end{align}
Then, from the two-particle subsector there are 10 basis elements coming from two different irreps. First, the single element
\begin{equation}
  X^{(1,1)}_{T,T} =\sum_{1\leq p\leq q\leq n}\ket{p,q;+}\bra{p,q;+}\,,
\end{equation}
where for $1\leq p<q\leq n $ we defined
\begin{equation}
    \ket{p,q;+}=\frac{1}{\sqrt{2}}({\widetilde{a}_p^{\dagger(1)}}{\widetilde{a}_q^{\dagger(2)}}+{\widetilde{a}_q^{\dagger(1)}}{\widetilde{a}_p^{\dagger(2)}})\ket{0}\,,
\end{equation}
and for  $p=q$
\begin{equation}
    \ket{p,p;+}={\widetilde{a}_p^{\dagger(1)}}{\widetilde{a}_p^{\dagger(2)}}\ket{0}\,.
\end{equation}
And second, nine elements given by
\begin{align}
   X^{(2,0)}_{T_i,T_j}=\sum_{1\leq p<q\leq n}\ket{p,q;i}\bra{p,q;j}
\end{align}
for
\begin{equation}
\ket{p,q;i}=\begin{cases}
    {\widetilde{a}_p^{\dagger(1)}}{\widetilde{a}_q^{\dagger(1)}}\ket{0}\,,\quad p< q\,, \text{ if }i=2\,,\\ \\
    \frac{1}{\sqrt{2}}({\widetilde{a}_p^{\dagger(1)}}{\widetilde{a}_q^{\dagger(2)}}-{\widetilde{a}_q^{\dagger(1)}}{\widetilde{a}_p^{\dagger(2)}})\ket{0}\,, \text{ if }i=1\,,\\ \\
    {\widetilde{a}_p^{\dagger(2)}}{\widetilde{a}_q^{\dagger(2)}}\ket{0}\,,\quad p< q\,,\text{ if }i=0\,.
\end{cases}
\end{equation}

The examples above already make the general structure of the PP $t=2$ commutant basis apparent. For a fixed block  $(\lambda_1,\lambda_2) $, there will be an operator $X^{(\lambda)}_{T,T}=P_\lambda^{\mathrm{hw}}$, where the highest weight line is built from  $\lambda_1-\lambda_2 $ creation operators in copy  $1 $, together with  $\lambda_2 $ copy-singlet pairs, projected onto the physical  $\mathbb U(n) $-symmetry type  $(2^{\lambda_2},1^{\lambda_1-\lambda_2}) $. Then, the rest of the basis elements are obtained from an effective lowering operator that moves particles from the first copy onto the second (while still respecting the physical copy symmetry).

In Appendix~\ref{ap:construction-GT-PP} we fully flesh out this intuition, as well as provide a nice interpretation of the commutant elements in terms of spin representations (e.g., zero particles come from a spin-$0$ representation, one particle from a spin-$\frac{1}{2}$, and two particle from spin-$1$).
The basic idea is to make use of the operators 
$N=N_1+N_2$, $J_z=\frac{N_1-N_2}{2}$
and  identify $
J_+=\widetilde{\Omega}_{12}$, 
$J_-=\widetilde{\Omega}_{21}$ such that we can relate the GT index $m$ to a spin index $m_{\rm spin}=-j,-j+1\ldots,j$ by subtracting the central charge $N$, i.e., $m_{\rm spin}=m-\frac{N}{2}$.

\section{Commutant of fermionic Gaussian unitaries}\label{sec:general}

Next, we turn our attention to general fermionic Gaussian unitaries, and we study the corresponding  $t$-th order commutant $\CC_{t,n}$.

\subsection{Generators of the commutant}
As in the PP case, we begin by finding a generating set of the commutant $\CC_{t,n}$ for general fermionic Gaussian unitaries, by translating the abstract mathematical results in Ref.~\cite{hasegawa1989spin,wenzl2020dualities} to the full spinor Majorana representation on $\HC^{\otimes t}$. To this end, we define the operators
\begin{equation}
    \widetilde{c}_\mu^{\,(j)} = \Gamma_1 \otimes \Gamma_2 \otimes \cdots \otimes c_\mu^{(j)} \otimes \id \otimes \id \otimes \cdots \,,
\end{equation}
which we refer to as ``dressed'' Majorana operators. These operators satisfy the anticommutation relations
\begin{equation}\label{eq:Majoranaanticomm}
   \left \{ \widetilde{c}_\mu^{\,(j)}, \widetilde{c}_\nu^{\,(k)} \right\} = 2\delta_{\mu\nu}\delta_{jk}\,.
\end{equation}
That is, they behave as genuine fermionic Majorana operators both within and  across copies of $\HC^{\otimes t}$.
In particular, when $j\neq k$ we have
\begin{equation}
    \widetilde{c}_\mu^{\,(j)} \widetilde{c}_\mu^{\,(k)} =   \id \otimes \cdots \otimes \Gamma_j c_\mu \otimes \Gamma_{j+1} \otimes  \Gamma_{j+2} \otimes \cdots \otimes c_\mu \otimes  \id \otimes\cdots \,, \nonumber
\end{equation}
and when $j=k$, we find $\left(\widetilde{c}_\mu^{\,(j)}\right)^2=\id \otimes \id \otimes \cdots$.

Next, let us introduce the important family of operators
\begin{equation}\label{eq:tildeQdef}
    \widetilde{Q}_{j,k} =\frac{1}{2}\sum_{\mu=1}^{2n} \widetilde{c}_\mu^{\,(j)} \widetilde{c}_\mu^{\,(k)}\,.
\end{equation}
In general we  assume $j\neq k$ since $(\widetilde{Q}_{j,j})^2\propto \mathbbm{1}$ as a consequence of Eq.\ \eqref{eq:Majoranaanticomm}. These operators are crucial for the two following important Lemmas, whose proof can be found in Appendix~\ref{ap:generators-active}:
\begin{lemma}\label{lemma:QinC}
    Let $\widetilde{Q}_{j,k}$ be an operator defined as in Eq.~\eqref{eq:tildeQdef}. Then, $\widetilde{Q}_{j,k}\in \CC_{t,n}$.
\end{lemma}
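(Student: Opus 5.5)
To prove that $\widetilde{Q}_{j,k}\in\CC_{t,n}$, it suffices to show $R(U)^{\otimes t}\,\widetilde{Q}_{j,k}\,R(U)^{\dagger\otimes t}=\widetilde{Q}_{j,k}$ for every fermionic Gaussian unitary $R(U)$, $U\in\SO(2n)$. We do this by computing how each dressed Majorana transforms under conjugation by $R(U)^{\otimes t}$. The argument then reduces to the orthogonality of $U$.

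\emph{Step 1: parity is invariant.} Each $R(U)$ is generated by Hamiltonians quadratic in the Majoranas, and these commute with $\Gamma=(-i)^n c_1\cdots c_{2n}$. Hence $R(U)\Gamma R(U)^\dagger=\Gamma$. One can also see this from Eq.~\eqref{eq:Majorana-action}: conjugation maps $c_1\cdots c_{2n}$ to $\det(U)\, c_1\cdots c_{2n}$, and $\det U=1$.

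\emph{Step 2: dressed Majoranas transform as single Majoranas.} Since $R(U)^{\otimes t}$ acts factor by factor, we combine Step 1 with Eq.~\eqref{eq:Majorana-action}. The $\Gamma$ factors on copies $1,\dots,j-1$ are left unchanged, the identity factors are unchanged, and $c^{(j)}_\mu\mapsto\sum_\nu U_{\mu\nu}c^{(j)}_\nu$. Therefore
\begin{equation}
R(U)^{\otimes t}\,\widetilde{c}^{\,(j)}_\mu\,R(U)^{\dagger\otimes t}=\sum_{\nu=1}^{2n}U_{\mu\nu}\,\widetilde{c}^{\,(j)}_\nu\,,
\end{equation}
with the same real matrix $U$ for every copy $j$.

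\emph{Step 3: contract indices.} Conjugation is multiplicative, so
\begin{equation}
R(U)^{\otimes t}\,\widetilde{Q}_{j,k}\,R(U)^{\dagger\otimes t}=\frac12\sum_{\nu,\nu'}\Big(\sum_\mu U_{\mu\nu}U_{\mu\nu'}\Big)\widetilde{c}^{\,(j)}_\nu\widetilde{c}^{\,(k)}_{\nu'}\,.
\end{equation}
Orthogonality of $U$ gives $U^TU=\id$, so the bracket equals $\delta_{\nu\nu'}$ and the right-hand side is again $\widetilde{Q}_{j,k}$. This holds for all $j,k$, including $j=k$, where $\widetilde{Q}_{j,j}=n\id$.

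The only delicate point is Step 2: one must check that the parity strings attached to the dressed operators do not spoil covariance. This is settled by Step 1, since Gaussian unitaries preserve parity. The argument is purely algebraic and works directly at the group level, without passing through the Lie algebra.
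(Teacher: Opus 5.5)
Your proof is correct and uses the same two ingredients as the paper: orthogonality of $U$, which gives $\sum_\mu U_{\mu\nu}U_{\mu\nu'}=\delta_{\nu\nu'}$, and invariance of the parity $\Gamma$ under Gaussian unitaries. The only difference is bookkeeping. The paper first proves that the undressed $Q_{j,k}$ is invariant and then notes that $\widetilde{Q}_{j,k}$ differs from it by a string of parity operators $\Gamma_s$, which lie in the commutant. You instead build parity invariance into the covariance of each dressed Majorana $\widetilde{c}^{\,(j)}_\mu$, which avoids computing that parity string explicitly.
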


\begin{lemma}\label{lem:Q-rep-gl}
The operators $\widetilde{Q}_{j,k}$ satisfy the Lie algebra commutation relations
\small
\begin{equation} 
\left[\widetilde{Q}_{j,k},\widetilde{Q}_{j',k'}\right]= \delta_{kj'} \widetilde{Q}_{j,k'} + \delta_{jk'} \widetilde{Q}_{k,j'} -\delta_{kk'} \widetilde{Q}_{j,j'} - \delta_{jj'} \widetilde{Q}_{k,k'} \nonumber\,,
\end{equation}
\normalsize
and therefore generate a representation of the $\mathfrak{so}(t)$ algebra.
\end{lemma}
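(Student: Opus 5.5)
The plan is a direct computation with the canonical anticommutation relations of the dressed Majoranas in Eq.~\eqref{eq:Majoranaanticomm}. Only those relations are needed; the Jordan--Wigner structure can be forgotten. First, for $j\neq k$ the two factors in each summand of $\widetilde Q_{j,k}$ anticommute. Hence $\widetilde Q_{j,k}=-\widetilde Q_{k,j}$, and the operators $\widetilde Q_{j,k}$ with $j<k$ give $\binom{t}{2}$ candidates, matching $\dim\mathfrak{so}(t)$. For coincident indices we adopt the literal definition $\widetilde Q_{j,j}=\frac12\sum_\mu (\widetilde c^{(j)}_\mu)^2=n\,\id$, so that every term appearing below is well defined.

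The core step uses the elementary identity, valid for any four operators,
\begin{equation}
[AB,CD]=A\{B,C\}D-AC\{B,D\}+\{A,C\}DB-C\{A,D\}B\,. \nonumber
\end{equation}
We apply it with $A=\widetilde c^{(j)}_\mu$, $B=\widetilde c^{(k)}_\mu$, $C=\widetilde c^{(j')}_\nu$ and $D=\widetilde c^{(k')}_\nu$. Each anticommutator equals $2\delta_{\mu\nu}$ times a Kronecker delta in the copy indices, and therefore collapses the double sum over $\mu,\nu$ to a single sum. Including the prefactor $\frac14$, the four terms become, in order, $\delta_{kj'}\widetilde Q_{j,k'}$, $-\delta_{kk'}\widetilde Q_{j,j'}$, $\delta_{jj'}\widetilde Q_{k',k}$ and $-\delta_{jk'}\widetilde Q_{j',k}$. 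The antisymmetry $\widetilde Q_{k',k}=-\widetilde Q_{k,k'}$ and $\widetilde Q_{j',k}=-\widetilde Q_{k,j'}$ then turns these into exactly the four terms of the claimed bracket.

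The only delicate point, and the main obstacle I expect, is the bookkeeping when indices coincide, since antisymmetry fails for $\widetilde Q_{j,j}=n\,\id$. I would check the degenerate configurations separately. For $(j',k')=(j,k)$ the terms $-\delta_{kk'}\widetilde Q_{j,j}$ and $\delta_{jj'}\widetilde Q_{k,k}$ equal $-n\id$ and $+n\id$, so they cancel, consistent with $[\widetilde Q_{j,k},\widetilde Q_{j,k}]=0$. For $(j',k')=(k,j)$ the bracket vanishes trivially, and the scalar terms $\delta_{kj'}\widetilde Q_{j,j}=n\id$ and $-\delta_{jk'}\widetilde Q_{k,k}=-n\id$ likewise cancel. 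In all remaining cases at most one delta fires, and the surviving $\widetilde Q$ has distinct indices, so antisymmetry applies.

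Finally, I would identify the algebra. Set $L_{jk}=E_{jk}-E_{kj}$ in the defining representation of $\mathfrak{so}(t)$, where $E_{jk}$ is the matrix unit with a single $1$ in entry $(j,k)$. A two-line matrix computation gives
\begin{equation}
[L_{jk},L_{j'k'}]=\delta_{kj'}L_{jk'}+\delta_{jk'}L_{kj'}-\delta_{kk'}L_{jj'}-\delta_{jj'}L_{kk'}\,, \nonumber
\end{equation}
which has the same structure constants as the bracket above. Hence the linear map $L_{jk}\mapsto\widetilde Q_{j,k}$ for $j<k$ is a Lie algebra homomorphism, and so defines a representation of $\mathfrak{so}(t)$. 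The operators $\widetilde Q_{j,k}$ are anti-Hermitian, because each summand is a product of two anticommuting Hermitian operators. This makes the representation, in addition, a unitary representation of the compact real form.
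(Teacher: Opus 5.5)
Your route is the same as the paper's: a direct computation from the anticommutation relations, followed by comparison with the $L_{j,k}$ relations. Your four-operator identity is correct, and so are the four raw terms $\delta_{kj'}\widetilde Q_{j,k'}$, $-\delta_{kk'}\widetilde Q_{j,j'}$, $\delta_{jj'}\widetilde Q_{k',k}$, $-\delta_{jk'}\widetilde Q_{j',k}$.

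The error is in the step you flagged as delicate. Once you fix $\widetilde Q_{j,j}=n\id$, the lemma's displayed identity is false in the two degenerate configurations, while the commutator vanishes in both:
\begin{itemize}
\item for $(j',k')=(j,k)$ its right-hand side is $-\widetilde Q_{j,j}-\widetilde Q_{k,k}=-2n\id$;
\item for $(j',k')=(k,j)$ it is $\widetilde Q_{j,j}+\widetilde Q_{k,k}=2n\id$.
\end{itemize}
Your check misses this because you verified that the raw, pre-antisymmetrization terms cancel. That is automatic, since those terms were derived as an exact identity. What needed checking is the antisymmetrized form, and that is exactly where the rewritings $\widetilde Q_{k',k}\to-\widetilde Q_{k,k'}$ (when $k=k'$) and $\widetilde Q_{j',k}\to-\widetilde Q_{k,j'}$ (when $j'=k$) fail, because $\widetilde Q_{a,a}=n\id\neq-\widetilde Q_{a,a}$. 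So the sentence claiming that antisymmetry turns the raw terms into exactly the claimed bracket is wrong in precisely these cases.

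There are two ways to repair this.
\begin{itemize}
\item Read the right-hand side with $\widetilde Q_{a,a}:=0$, that is, as the image of the $\mathfrak{so}(t)$ bracket under $L_{a,b}\mapsto\widetilde Q_{a,b}$ with $L_{a,a}=0$. This is also why your matrix computation holds for all indices. With this convention both degenerate cases read $0=0$, the nondegenerate cases go through as you argued, and your final paragraph is valid.
\item Note that the homomorphism property only needs to be checked on basis pairs with $j<k$ and $j'<k'$. There $(j',k')=(k,j)$ cannot occur, and $(j',k')=(j,k)$ is just $[X,X]=0$ on both sides.
\end{itemize}
The paper's proof tacitly uses the first reading without comment. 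With this fix, your argument is a cleaner version of the same proof.
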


Lemma \ref{lem:Q-rep-gl} is crucial to construct the commutant $\CC_{t,n}$ of fermionic Gaussian unitaries, as one can now invoke the Howe duality~\cite{hasegawa1989spin,wenzl2020dualities}. 
In our language, the Howe duality asserts that the commutant algebra of $\mathbb{SPIN}(2n)$ acting on $\HC_+^{\otimes t}$ is the image of the universal enveloping algebra of $\so(t)$. Therefore, it follows from this duality that linear combinations of monomials in the $\widetilde{Q}_{jk}$ and the identity span the commutant in the $\HC_+^{\otimes t}$ sector. To extend the commutant to the entire space $\HC^{\otimes t}$, we use the fact that this extension can be realized as a $\mathbb{Z}_2$ extension by including a generator $F$ that satisfies $F^2 = \id$ (i.e., $F$ is an involution), and also~\cite{wenzl2020dualities}
 \begin{equation}\label{eq:involution}
        F \widetilde{Q}_{j,k}  F =\begin{cases}
            - \widetilde{Q}_{j,k} \quad {\rm if\;} j=1\; {\rm or}\; k=1 \\
            \phantom{-} \widetilde{Q}_{j,k} \quad {\rm otherwise}
        \end{cases}\,.
\end{equation}
The parity operator $\Gamma_1$ can be readily seen to fulfill the requirements of the additional generator $F$. Hence, employing the notation $\left< \widetilde{Q}_{j,j+1}, \Gamma_1 \right>$ to denote the products that can be obtained from the generators, the previous discussion leads to the following theorem.
\begin{theorem}[Commutant generators]\label{th:generators}
    The $t$-th order commutant of $n$-mode fermionic Gaussian unitaries, $\CC_{t,n}$, is generated as
    \begin{equation}
        \CC_{t,n} = \Span_{\CBB} \left< \widetilde{Q}_{j,j+1}, \Gamma_1 \right> \,,\quad {\rm with}\; j\in[t-1] \,.
    \end{equation}
\end{theorem}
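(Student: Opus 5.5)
The plan is to prove the two inclusions separately. Write $\mathcal A_t:=\Span_{\CBB}\langle \widetilde Q_{j,j+1},\Gamma_1\rangle$.

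\textbf{Step 1: $\mathcal A_t\subseteq\CC_{t,n}$.} Lemma~\ref{lemma:QinC} gives $\widetilde Q_{j,j+1}\in\CC_{t,n}$. Next, $\Gamma_1=\Gamma\otimes\id\otimes\cdots$ commutes with $R(U)^{\otimes t}$, because $\Gamma\in\CC_{1,n}$. Since $\CC_{t,n}$ is an associative algebra, every product and linear combination of these generators also lies in $\CC_{t,n}$.

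\textbf{Step 2: $\mathcal A_t$ contains all $\widetilde Q_{j,k}$.} By Lemma~\ref{lem:Q-rep-gl}, for pairwise distinct indices $[\widetilde Q_{j,k},\widetilde Q_{k,l}]=\widetilde Q_{j,l}$ up to a sign fixed by the antisymmetry $\widetilde Q_{k,j}=-\widetilde Q_{j,k}$. This antisymmetry follows from Eq.~\eqref{eq:Majoranaanticomm} when $j\neq k$. Nested commutators of nearest-neighbour generators therefore produce every $\widetilde Q_{j,k}$ with $j<k$. Hence $\mathcal A_t$ contains the image of $\mathscr U(\so(t))$ together with $\Gamma_1$.

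\textbf{Step 3: $\CC_{t,n}\subseteq\mathcal A_t$.} This uses the Howe duality for $(\mathfrak{spin}(2n),\so(t))$, or $\mathfrak{pin}$ versions, on the spinor module $\Lambda(\CBB^n\otimes\CBB^t)$~\cite{hasegawa1989spin,wenzl2020dualities}. First I check that the $\widetilde Q_{j,k}$ are exactly the duality's $\so(t)$ generators under the identification $\HC^{\otimes t}\cong\FC_{nt}$. The dressed Majoranas $\widetilde c_\mu^{(j)}$ form $2nt$ Clifford generators, which give this identification. In that picture, $R(U)^{\otimes t}$ is the Gaussian unitary rotating each copy's Majoranas by the same $U\in\SO(2n)$, namely the $O(2n)\times O(t)$ split of $\so(2nt)$. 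The dressing strings $\Gamma_1\cdots\Gamma_{j-1}$ are invariant, so the tensor product action coincides with the block-diagonal Gaussian action.

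The cited duality states that on the $\mathbb Z_2$-graded sector the commutant of $\mathbb{SPIN}(2n)$ is generated by $\mathscr U(\so(t))$, extended by an involution $F$ satisfying Eq.~\eqref{eq:involution}, which gives an $O(t)$ or $\mathrm{Pin}$-type extension. It remains to verify that $F=\Gamma_1$ qualifies. $\Gamma_1$ anticommutes with $\widetilde c^{(1)}_\mu$ and commutes with $\widetilde c^{(k)}_\mu$ for $k\geq2$. Therefore $\Gamma_1\widetilde Q_{1,k}\Gamma_1=-\widetilde Q_{1,k}$, while $\widetilde Q_{j,k}$ with $j,k\ge2$ is left fixed, and $\Gamma_1^2=\id$.

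\textbf{Main obstacle.} The hard step is matching precisely which version of the duality applies. The group acting here is $\mathbb{SPIN}(2n)$, not $\mathrm{Pin}(2n)$, and the representation space is all of $\HC^{\otimes t}$, not only $\HC_+^{\otimes t}$. The image of $\mathscr U(\so(t))$ alone misses operators distinguishing the parity sectors; for $t=1$, for example, $\Gamma$ is needed. To handle this, I would decompose $\HC^{\otimes t}$ by the individual parities $\Gamma_j$ and check that $\Gamma_1$, together with the $\so(t)$ action, realizes the full $O(t)$ commutant (for even $t$, $\Gamma_1$ acts as a reflection). I would also confirm that this matches the known $\CC_{1,n}$ and $\CC_{2,n}$, whose dimension is $2(2n+1)$, as consistency checks. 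If a precise citation is unavailable, a fallback is to compare the dimensions of both sides through characters.
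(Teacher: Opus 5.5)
Your proposal is correct and follows essentially the same route as the paper. It shows membership of the generators via Lemma~\ref{lemma:QinC} and $[\Gamma_1,R(U)^{\otimes t}]=0$, and generates all $\widetilde Q_{j,k}$ from the nearest-neighbour ones by commutators. It then obtains the reverse inclusion by citing the Hasegawa/Wenzl Howe duality with its $\mathbb{Z}_2$-extension, after checking that $\Gamma_1$ is an involution flipping the sign of exactly the $\widetilde Q_{1,k}$. Your explicit identification of $R(U)^{\otimes t}$ with the block-diagonal Gaussian action of $U\otimes\id_t$ on the $2nt$ dressed Majoranas is a step the paper leaves implicit, and it is what justifies applying the cited duality on all of $\HC^{\otimes t}$.
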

We refer the reader to Appendix~\ref{ap:generators-active} for a detailed proof. Notice here that we are not choosing all the $\widetilde{Q}_{j,k}$ as generators, since the $\OC(t)$ neighboring $\widetilde{Q}_{j,j+1}$ suffice to generate the entire $\so(t)$ algebra, as directly implied by the commutation relations in Lemma~\ref{lem:Q-rep-gl}. Theorem~\ref{th:generators} identifies $\CC_{t,n}$ as the algebra spanned by polynomials in the generators $\widetilde{Q}_{j,j+1}$ and $\Gamma_1$. As in the PP case, the question of how to find a basis for $\CC_{t,n}$ is not addressed by Theorem~\ref{th:generators} by itself. Instead, we address this problem below where we also identify connections with the approaches for PP unitaries of Section \ref{sec:GTappliedtoPP}.

\subsection{Dimensions of the commutant}

We next compute an explicit formula for the dimension of the fermionic Gaussian unitaries commutant. We do so by again resorting to representation-theoretic tools, namely, Weyl integration and Selberg-type integrals. The result is stated in the next theorem, and proven in Appendix~\ref{ap:PP-dim}.

\begin{theorem}[Commutant dimension]\label{th:dimension}
    The dimension of the $t$-th order commutant of the $n$-qubit matchgate group is given by
    \begin{equation}
        \dim(\CC_{t,n}) = \frac{1}{2^{n-1}} \prod_{j=0}^{n-1} \frac{(2j)!\,(2t+2j)!}{(t+j)!\,(t+n+j-1)!}\ \,.
    \end{equation}
\end{theorem}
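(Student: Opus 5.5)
The plan is to compute the dimension as a Haar integral of a character and reduce it to a Selberg integral, as in the PP case (Theorem~\ref{th:PP-dimension}). By Schur orthogonality, $\dim(\CC_{t,n})=\int_{\Spin(2n)}|\chi(U)|^{2t}\,dU$, where $\chi=\Tr R(U)$ is the character of the full spinor representation $\HC=\HC_+\oplus\HC_-$.

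\textbf{Step 1: the character on a maximal torus.} Parametrize the torus by angles $\theta_1,\dots,\theta_n$, so the $\SO(2n)$ image has eigenvalues $e^{\pm i\theta_k}$. The weights of $\HC$ are $\tfrac12(\pm\theta_1\pm\cdots\pm\theta_n)$, each with multiplicity one, so
\begin{equation}\nonumber
\chi(\theta)=\prod_{k=1}^n\bigl(e^{i\theta_k/2}+e^{-i\theta_k/2}\bigr)=\prod_{k=1}^n 2\cos(\theta_k/2).
\end{equation}
This is real, and $|\chi|^{2t}=\prod_k(2+2\cos\theta_k)^t$. This function is invariant under $\theta_k\mapsto\theta_k+2\pi$, so it descends to $\SO(2n)$. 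We may therefore use the Weyl integration formula for $\SO(2n)$, whose Weyl density is proportional to $\prod_{j<k}(\cos\theta_j-\cos\theta_k)^2$. There is no extra $\prod_k\sin^2$ factor here, unlike for $\SO(2n+1)$.

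\textbf{Step 2: change of variables.} Restrict to $\theta_k\in[0,\pi]$ and set $x_k=(1+\cos\theta_k)/2\in[0,1]$, so that $d\theta_k=dx_k/\sqrt{x_k(1-x_k)}$ up to sign, and $2+2\cos\theta_k=4x_k$. The integral becomes
\begin{equation}\nonumber
\dim(\CC_{t,n})=4^{nt}\,\frac{S_n(t+\tfrac12,\tfrac12,1)}{S_n(\tfrac12,\tfrac12,1)},
\end{equation}
where $S_n(\alpha,\beta,\gamma)=\int_{[0,1]^n}\prod_k x_k^{\alpha-1}(1-x_k)^{\beta-1}\prod_{j<k}|x_j-x_k|^{2\gamma}\,dx$ is the Selberg integral. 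Normalizing by the $t=0$ value removes all constants coming from the Haar measure and the Weyl group.

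\textbf{Step 3: evaluate with Selberg's formula.} At $\gamma=1$,
\begin{equation}\nonumber
S_n(\alpha,\beta,1)=\prod_{j=0}^{n-1}\frac{\Gamma(\alpha+j)\Gamma(\beta+j)\Gamma(j+2)}{\Gamma(\alpha+\beta+n-1+j)}.
\end{equation}
In the ratio the $\Gamma(\beta+j)$ and $\Gamma(j+2)$ factors cancel, leaving
\begin{equation}\nonumber
\dim(\CC_{t,n})=4^{nt}\prod_{j=0}^{n-1}\frac{\Gamma(t+\tfrac12+j)\,\Gamma(n+j)}{\Gamma(\tfrac12+j)\,\Gamma(t+n+j)}.
\end{equation}

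\textbf{Step 4: convert to factorials.} Apply the duplication formula $\Gamma(m+\tfrac12)=\sqrt{\pi}\,(2m)!/(4^m m!)$ with $m=t+j$ and with $m=j$. The powers of $4$ cancel against $4^{nt}$, which leaves
\begin{equation}\nonumber
\dim(\CC_{t,n})=\prod_{j=0}^{n-1}\frac{(2t+2j)!\,j!\,(n+j-1)!}{(t+j)!\,(2j)!\,(t+n+j-1)!}.
\end{equation}

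This is where I expect the main obstacle: massaging this product into the stated form $\frac{1}{2^{n-1}}\prod_j\frac{(2j)!\,(2t+2j)!}{(t+j)!\,(t+n+j-1)!}$. That requires the identity $\prod_{j=0}^{n-1}j!\,(n+j-1)!/(2j)!^2=2^{1-n}$, which I would prove by induction on $n$ (telescoping the ratio of consecutive terms).

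The other delicate point is the normalization of the Weyl measure. The factor $2^{n-1}$ is a symptom of it: it reflects $|W(D_n)|=2^{n-1}n!$ and the fact that the torus Weyl chamber of $D_n$ allows only even numbers of sign flips. Computing via the ratio with $t=0$ sidesteps this. I would sanity-check the final formula against $t=1$ (should give $2$) and $t=2$ (should give $2(2n+1)$).
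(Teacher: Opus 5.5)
Your proposal is correct and follows essentially the paper's route: a character integral, Weyl integration on the torus, reduction to a Selberg integral at $\gamma=1$, and the duplication formula. Your substitution $x_k=(1+\cos\theta_k)/2$ is the paper's two-step substitution $x_j=\cos\phi_j$, $y_j=x_j^2$ followed by $y\mapsto 1-y$, which is why your $\alpha,\beta$ appear swapped. The one real difference is that you fix the normalization by dividing by the $t=0$ value rather than tracking $|W|=2^{n-1}n!$ and the torus prefactors explicitly, so you need the extra identity $\prod_{j=0}^{n-1} j!\,(n+j-1)!/(2j)!^2=2^{1-n}$. That identity does hold, since the ratio of its value at $n+1$ to its value at $n$ is $\frac{n!}{(2n)!}\cdot\frac{(2n-1)!}{(n-1)!}=\frac12$ and its value at $n=1$ is $1$.
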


In Fig.~\ref{fig:dimensions} we plot $\dim(\CC_{t,n})$ alongside $\dim\left(\CC^{\rm PP}_{t,n}\right)$. There, we see that while $\dim(\CC_{t,n})$ still grows extremely rapidly with $n$ and $t$, it is much smaller than the commutant for PP fermionic Gaussian unitaries.

\subsection{Constructing a basis for the commutant via GT}

Given that the pair $(\mathbb{SPIN}(2n),\mathscr{U}\left(\so(t)\right))$  admit mutually commuting representations over $\mathcal H^{\otimes t}$, we can use the extended skew Howe duality~\cite{wenzl2020dualities} to obtain an irrep decomposition
\begin{equation}
    \mathcal H =  \bigoplus_{\lambda} V_\lambda^{\mathbb{SPIN}(2n)}\otimes W_\lambda^{\so(t)}\,.
\end{equation}
Here,  the irreducible $\mathfrak{so}(t)$-modules are indexed by dominant the so-called highest-weights $P_+$
\begin{equation}
    \lambda=(\lambda_1,\dots,\lambda_r)\in P_+\,,
\end{equation}
together with the standard dominance conditions
\begin{equation}
    \lambda_1\ge \lambda_2\ge \cdots \ge \lambda_r\ge 0
    \qquad\text{for $t=2r+1$ (type $B_r$)},\nonumber
\end{equation}
and
\begin{equation}
    \lambda_1\ge \lambda_2\ge \cdots \ge \lambda_{r-1}\ge |\lambda_r|
    \qquad\text{for $t=2r$ (type $D_r$)}.\nonumber
\end{equation}

From here, we can implement the GT procedure for different values of $t$. We here present explicit results for $t=2,3$ and in the appendix we present some considerations for $t=4$. Importantly, we note that, as discussed in Appendix~\ref{ap:gauss_GT_method}, the fact that parity is preserved allows us to construct the commutant in two steps. First, we find a partial basis of the commutant $A_{t,n}$ obtained by  focusing on the even parity sector, and then generalize to the full commutant as
\begin{equation}\label{eq:active:Ant}
    \CC_{t,n} = {\rm span}_{\mathbb{C}}\left< A_{t,n}, \Gamma_1\right>\,,
\end{equation} 
which is precisely why the form obtain in Theorem~\ref{th:generators}.

In addition, we will see that our analysis implies that for the first few $t$ the commutants of general gaussian unitaries very closely resemble those for PP ones, a fact that will prove to be particularly useful. 

\subsubsection{$t=1$ case}

Here we have that since $\mathfrak{so}(1)=0$ is trivial, and there are no nontrivial orthogonal highest-weights to choose from. Moreover, there are no $\widetilde{Q}_{j,k}$ operators to realize as one needs $j$ strictly smaller than $k$. Therefore, the associative algebra is just the scalar algebra $A_{1,n}=\mathbb{C}\id$, meaning that  we can complete the commutant as
\begin{equation}
    \CC_{1,n}={\rm span}_{\mathbb{C}}\{\id,\Gamma\}\,.
\end{equation}

\subsubsection{$t=2$ case}\label{sec:t2active}
While this corresponds to the first non trivial, we can readily solve it as the algebra $\mathfrak{so}(2)$ is abelian. As such, the GT chain is trivial since it terminates at the first term. However, unlike the $t=1$ scenario, we do have a operators $\widetilde{Q}_{1,2}$, from which we can define the Hermitian generator $M=-i\widetilde{Q}_{1,2}$ of the infinitesimal action of $\mathfrak{so}(2)$. One can easily check that the operator $M$ has eigenvalues $\{-n,-n+1,\cdots,n-1,n\}$, meaning that we can construct the projectors onto the irreps as
\begin{equation}\label{eq:proj-M-t-2}
    P_m=\prod_{\substack{s= -n\\s\neq m}}^n\frac{M-s}{m-s}\,.
\end{equation}
Then, a basis of the full commutant is
\begin{equation}
    \CC_{2,n}={\rm span}_{\mathbb{C}}\left< P_m,\Gamma P_m\,|\, m=-n,-n+1,\ldots,n\right>\,.
\end{equation}
The dimension of the commutant is thus
\begin{equation}
    \dim(\CC_{2,n})=2(2n+1)\,.
\end{equation}
This agrees with the dimension of the basis provided in Section~\ref{sec:tcommutatbackground}. 

Here we can again wonder how this basis relates to that in Eq.~\eqref{eq:Qt2}. Notably, we can use the fact that $t=2$ case for general fermionic Gaussian unitaries very  closely resembles the $t=1$ for PP ones. This connection goes well beyond the simple fact that both cases are abelian, and we refer the reader to Appendix~\ref{ap:generalgaussianconstr} for additional details. Importantly, we can leverage this deep relation along with Eq.~\eqref{eq:passiveEPrelation} to prove the following Lemma.
\begin{lemma}\label{lemm:QfromP} Consider the basis of $\CC_{2,n}$ in Eq.~\eqref{eq:Qt2} and take the operators
\begin{equation}
E_{2k}=Q_{2k}^0,
\qquad
E_{2k+1}=-i\,Q_{2k+1}^1\,,
\end{equation}
as well as $\Gamma_1 E_k$. Then, consider the operators $P_m$ of Eq.~\eqref{eq:proj-M-t-2}. We obtain
    \begin{equation}\label{eq:QfromP}
    E_k=\sum_m K_k(n-m;2n)P_m\,,
\end{equation}
where $K_a(x;n)=\sum_{j=0}^a (-1)^j \binom{x}{j}\binom{n-x}{a-j}$ are the Krawtchouk polynomials.
\end{lemma}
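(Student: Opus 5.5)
The plan is to realize, inside two copies, a set of $2n$ genuine fermionic modes built from the dressed Majoranas. In these modes $M=-i\widetilde Q_{1,2}$ becomes (up to a shift) a total number operator, and the operators $E_k$ become elementary symmetric polynomials in the corresponding ``$Z$'' operators. Eq.~\eqref{eq:QfromP} would then follow from Eq.~\eqref{eq:passiveEPrelation}, i.e., from the $t=1$ PP result with $n$ replaced by $2n$.

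\textbf{Step 1: the paired modes.} For $t=2$, Eq.~\eqref{eq:Majoranaanticomm} tells us that $\{\widetilde c^{\,(1)}_\mu,\widetilde c^{\,(2)}_\mu\}_{\mu=1}^{2n}$ are $4n$ mutually anticommuting Majoranas. I would pair them into $2n$ modes,
\begin{equation}
b_\mu=\tfrac12\bigl(\widetilde c^{\,(1)}_\mu+ i\,\widetilde c^{\,(2)}_\mu\bigr)\,,\qquad \mu=1,\dots,2n\,,
\end{equation}
which satisfy the CAR. A one-line computation gives $b_\mu^\dagger b_\mu=\tfrac12(\id+i\,\widetilde c^{\,(1)}_\mu\widetilde c^{\,(2)}_\mu)$. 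Writing $Z^{(b)}_\mu=\id-2b_\mu^\dagger b_\mu$, this becomes
\begin{equation}
\widetilde c^{\,(1)}_\mu\widetilde c^{\,(2)}_\mu=i\,Z^{(b)}_\mu\,.
\end{equation}
Summing over $\mu$ yields $M=-i\widetilde Q_{1,2}=n-N_b$ (or $N_b-n$, depending on which of $b_\mu,b_\mu^\dagger$ is called the annihilator), where $N_b=\sum_\mu b^\dagger_\mu b_\mu$. This immediately recovers the spectrum $\{-n,\dots,n\}$. It also identifies $P_m$ in Eq.~\eqref{eq:proj-M-t-2} with the $b$-particle-number projector $P^{(b)}_r$ for $r=n-m$, because the Lagrange interpolation formulas coincide after the affine change of variables.

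\textbf{Step 2: expressing $E_k$ through the $Z^{(b)}_\mu$.} Next, for each subset $\mu_1<\dots<\mu_k$ I would compute
\begin{equation}
\prod_{l=1}^k \widetilde c^{\,(1)}_{\mu_l}\widetilde c^{\,(2)}_{\mu_l}=\prod_{l=1}^k\bigl(c_{\mu_l}\Gamma\otimes c_{\mu_l}\bigr)\,,
\end{equation}
using $\widetilde c^{\,(1)}_\mu\widetilde c^{\,(2)}_\mu=c_\mu\Gamma\otimes c_\mu$. To reorder the product I would use that $\Gamma$ anticommutes with every $c_\mu$. The result is $\pm\, c_{\mu_1}\cdots c_{\mu_k}\otimes c_{\mu_1}\cdots c_{\mu_k}$ for $k$ even, and $\pm\,\Gamma c_{\mu_1}\cdots c_{\mu_k}\otimes c_{\mu_1}\cdots c_{\mu_k}$ for $k$ odd. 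Here the sign depends only on $k$, not on the subset. This is exactly why $Q^0_{2k}$ appears for even $k$ and $Q^1_{2k+1}$ for odd $k$ in the definition of the $E_k$. On the other hand, the same product equals $i^k\prod_l Z^{(b)}_{\mu_l}$, because the $Z^{(b)}_\mu$ commute with one another. Summing over subsets then gives, for every $k$,
\begin{equation}
E_k=\sum_{\mu_1<\cdots<\mu_k}Z^{(b)}_{\mu_1}\cdots Z^{(b)}_{\mu_k}\,,
\end{equation}
once one checks that the phase $i^k$ times the reordering sign equals $1$ for even $k$ and is compensated by the $-i$ for odd $k$.

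\textbf{Step 3: invoking the $t=1$ result.} The right-hand side above is precisely the operator $E_k$ of the $t=1$ PP analysis, now for $2n$ modes. Eq.~\eqref{eq:passiveEPrelation} therefore gives $E_k=\sum_r K_k(2n-r;2n)P^{(b)}_r$. Substituting $r=n-m$ yields $2n-r=n+m$. The Krawtchouk reflection symmetry $K_k(2n-x;2n)=(-1)^kK_k(x;2n)$, together with the relabeling $m\to-m$ (equivalently, swapping $b\leftrightarrow b^\dagger$ in Step 1), should then produce $K_k(n-m;2n)$ as in Eq.~\eqref{eq:QfromP}. The statement for $\Gamma_1E_k$ would follow by multiplying both sides by $\Gamma_1$.

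\textbf{Main obstacle.} I expect the only real difficulty to be the bookkeeping of signs and phases. This includes the reordering sign in Step 2, the factor $i^k$, and the orientation choice of $b$ versus $b^\dagger$ in Step 1. These must conspire so that the argument of the Krawtchouk polynomial comes out as $n-m$ rather than $n+m$, and so that no stray $(-1)^k$ remains. I would fix the conventions by checking $k=1$ directly: $E_1=-iQ^1_1$ should equal $\sum_\mu Z^{(b)}_\mu=2n-2N_b$, and comparing this with $K_1(x;2n)=2n-2x$ pins down the orientation. The reordering sign for general $k$ would then be settled by induction on $k$.
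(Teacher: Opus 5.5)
Your route is the paper's. Your $Z^{(b)}_\mu=\id-2b_\mu^\dagger b_\mu$ is exactly its $y_\mu=-i\widetilde c^{\,(1)}_\mu\widetilde c^{\,(2)}_\mu$, and your $b_\mu$ are the $f_\mu$ of the $t=3$ appendix. As in the paper, $M=\tfrac12\sum_\mu y_\mu$, the elementary symmetric polynomials in the $y_\mu$ are evaluated by the same eigenvalue count as for $t=1$, and they are then matched to $Q^0_{2k}$ and $-iQ^1_{2k+1}$.

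The gap is in how you close the sign bookkeeping at the end of Step 3, and that mechanism cannot work. The relabeling $m\to-m$ is not available, because $P_m$ is fixed by $M=-i\widetilde Q_{1,2}$. Swapping $b\leftrightarrow b^\dagger$ sends every $Z^{(b)}_\mu\to-Z^{(b)}_\mu$. That multiplies your Step-2 identity by $(-1)^k$ at the same time as it reflects the Krawtchouk argument, so the two effects cancel. A relation between fixed operators cannot depend on auxiliary choices, and the reflection identity alone leaves an uncompensated $(-1)^k$. For the same reason, your $k=1$ check cannot ``pin down an orientation''; it can only test the identity.

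Done carefully, the argument runs as follows. Skip Eq.~\eqref{eq:passiveEPrelation} and count directly. On a joint eigenvector with $x$ of the $Z^{(b)}_\mu$ equal to $-1$, one has $N_b=x$ and $M=n-x$. The operator $e_k:=\sum_{\mu_1<\cdots<\mu_k}Z^{(b)}_{\mu_1}\cdots Z^{(b)}_{\mu_k}$ acts there as $K_k(x;2n)$. Hence $e_k=\sum_m K_k(n-m;2n)P_m$, with no reflection needed.

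Your transcription $K_k(2n-r;2n)$ introduces a spurious $(-1)^k$. The $n-r$ in Eq.~\eqref{eq:passiveEPrelation} assumes $N$ counts $+1$ eigenvalues, whereas your $N_b$ counts $-1$ eigenvalues.

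The only remaining sign is the one in Step 2, and the definitions fix it. From $\widetilde c^{\,(1)}_\mu=c_\mu\otimes\id$ and $\widetilde c^{\,(2)}_\mu=\Gamma\otimes c_\mu$ one gets $\widetilde c^{\,(1)}_\mu\widetilde c^{\,(2)}_\mu=c_\mu\Gamma\otimes c_\mu=-\Gamma c_\mu\otimes c_\mu$, so $Z^{(b)}_\mu=i\,\Gamma c_\mu\otimes c_\mu$. Your reordering then gives $e_{2k}=Q^0_{2k}$ but $e_{2k+1}=+iQ^1_{2k+1}$. So the claim that the $-i$ ``compensates'' is false. Already for $n=1$ (with $c_1=X$, $c_2=Y$, $\Gamma=Z$) one finds $-iQ^1_1=Y\otimes X-X\otimes Y=-2M$, whereas $K_1(n-M;2n)=2M$.

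Corrected this way, your argument proves Eq.~\eqref{eq:QfromP} for $E_k:=e_k$, and hence as stated for even $k$. The normalization $E_{2k+1}=-iQ^1_{2k+1}$ holds only with the opposite ordering $\widetilde c^{\,(1)}_\mu\widetilde c^{\,(2)}_\mu=\Gamma c_\mu\otimes c_\mu$, which is the one displayed after Eq.~\eqref{eq:Majoranaanticomm}. That is the ``up to convention'' in the paper's own proof. You should state this convention explicitly rather than try to absorb it into the choice of $b$ versus $b^\dagger$.
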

The proof of this relation is provided in the Appendix \ref{app:active2}. Notice that using the orthogonality relation of the Krawtchouk polynomials we can  invert Eq.\ \eqref{eq:QfromP} to write 
\begin{equation}
P_m=\frac{1}{d^2}
\binom{2n}{n-m}
\sum_{k=0}^{2n}
\binom{2n}{k}^{-1}K_k(n-m;2n)\,E_k.
\end{equation}

\subsubsection{$t=3$ case}

The $t=3$ case corresponds to the first non-abelian algebra $\mathfrak{so}(3)$. However, here we can leverage the isomorphism $ \mathfrak{so}(3)\cong\mathfrak{su}(2)$ to greatly simplify the calculations. 
Since $\mathfrak{so}(3)$ has rank $1$ the available irreps are labeled by the single spin weight $\lambda=0,1,\ldots,n$, and we know that for each $\dim(W_\lambda^{\mathfrak{so}(2)})=2\lambda+1$. 
Then, the GT subgroup chain is 
\begin{equation}
    \mathbb{SO}(3)\downarrow\mathbb{SO}(2)\,,
\end{equation}
and the patterns for a given $\lambda$ are obtained from the $\mathbb{SO}(2)$ weight, i.e.,
\begin{equation}
    \GT(\lambda)=-\lambda,\lambda+1,\ldots,\lambda\,,
\end{equation}
such that $|\GT(\lambda)|=2\lambda+1$ (this makes the connection to spin irreps even more suggestive). 

To obtain the commutant basis we define the Hermitian spin operators
\begin{equation}
    J_z=-i\widetilde{Q}_{1,2}\,,\;J_+=\widetilde{Q}_{1,3}+i\widetilde{Q}_{2,3}\,,\; J_-=-(\widetilde{Q}_{1,3}-i\widetilde{Q}_{2,3})\,,\nonumber
\end{equation}
which can be readily found to satisfy the $\mathfrak{su}(2)$ commutation relations. From here, one introduces the ``total angular momentum'' operator
\begin{equation}
    J^2=J_z^2+\frac{1}{2}(J_+J_-+J_-J_+)\,,
\end{equation}
with eigenvalues $\lambda(\lambda+1)$, and obtains the irrep projectors
\begin{equation}
    P_\lambda=\prod_{\substack{\lambda'= 0\\\lambda'\neq \lambda}}^n\frac{J^2-\lambda'(\lambda'+1)}{\lambda(\lambda+1)-\lambda'(\lambda'+1)}\,,
\end{equation}
and the one dimensional ones
\begin{equation}
    P_{\lambda,m}=P_\lambda\prod_{\substack{m'= -n\\m'\neq m}}^n\frac{J_z-m'}{m-m'}\,.
\end{equation}
From here, the commutant basis simply follows the standard GT recipe.

Let us now remark that contrary to our $t=2$ construction for PP Gaussian unitaries, we have not yet provided a closed form for $P_{\lambda,m}$. Even if the algebras at play are isomorphic, in the PP case one obtains that the projector onto the vacuum state is a natural element of the commutant (in any number of copies). Instead, for general Gaussian operators there is no straightforward choice of such a state. Notably, it is still possible to find a construction that closely resembles the PP case.
The trick is to introduce creation and annihilation operators for a new ``emergent'' set of fermions $f$ lying in, say, the first two copies of the Hilbert space. We describe this construction in detail in the Appendix \ref{ap:activet3} where we show for example that $P^{\mathrm{hw}}_{j=n}
    =
    |0_f\rangle\langle 0_f|\otimes \id_{(3)}$, which projects onto the vacuum of the fermions $f$,  is an element of the commutant.
Just as the projectors in section \ref{sec:t2active}
provide an alternative description of the commutant with respect to the $Q^i_k$, the current description differs from the one provided in \cite{wan2022matchgate}, and we leave for future work the determination of the change of basis that connects them.

\section{Applications}\label{sec:applications}

In this section we present different applications for our main results. First, we will connect our commutant elements with known quantifiers of genuine fermionic correlations, as well as propose new techniques for studying non-fermionic Gaussianity in quantum states. Next, we analytically compute the  stabilizer entropy of general and particle-preserving fermionic Gaussian states.

\subsection{Commutants and genuine Fermionic correlations}

\subsubsection{General fermionic Gaussian unitaries}
Recent works~\cite{diaz2025unified,bermejo2025characterizing,deneris2025analyzing,coffman2026group} have shown that the commutant elements provides a natural starting point for studying quantum resource theories~\cite{chitambar2008tripartite}. Indeed, by characterizing all the possible invariants under the action of free operations (here taken to be as general fermionic Gaussian unitaries), one can provide a fine-grained description of the resources in a quantum state.

For example, for $t=2$, if one computes quantities such as
\begin{equation}
    \Tr[\rho^{\otimes 2}M]\,,\quad\text{for}\quad M\in\CC_{t,n}\,,
\end{equation}
then one obtains scalars which remain invariant under the transformation $\rho\rightarrow R(U)\rho R\ad(U)$ for all $U\in \mathbb{SPIN}(2n)$, and which quantify genuine fermionic correlations~\cite{diaz2023showcasing,bermejo2025characterizing}. As we discuss now,  our results allow us to reinterpret results in the literature in terms of our commutant basis elements. 

In particular, recall the basis for $\CC_{2,n}$ given by the operators $Q_k^0,
Q_k^1$ in Eq.~\eqref{eq:Qt2}.  The quantities $p_{k,i}=\Tr\Big[\rho^{\otimes 2} \frac{(-1)^{\lfloor \frac{k}{2}\rfloor}}{d}Q^i_k\Big]$, which are a probability distribution for pure states, were explored in \cite{diaz2023showcasing} and shown to provide a fingerprint of the state's resourcefulness. Additionally,  for $k=2$ one also recovers known measures of entanglement based on the generalized one-body reduced density matrix \cite{barnum2004subsystem,gigena2015entanglement}. In light of Section~\ref{sec:t2active}, we know that we can describe such commutant basis in terms of the spectral projectors $P_m$ of Eq.~\eqref{eq:proj-M-t-2} as per Lemma \ref{lemm:QfromP}.
The $P_m$ description of the commutant makes manifest that the only non-trivial part occupied by pure Gaussian states $|\psi\rangle^{\otimes 2}$ is the identity. This follows from the fact that~\cite{bravyi2004lagrangian}
\begin{equation}
    \widetilde{Q}_{12}|\psi\rangle^{\otimes 2}\equiv Q_{12}|\psi\rangle^{\otimes 2}=0\,.
\end{equation}
Moreover, it is also instructive to recover the purities $p_k=\Tr\Big[\rho^{\otimes 2} \frac{(-1)^{k}}{d}Q^0_{2\kappa}\Big]$ of fermionic Gaussian states from $P_m$ and the change of basis. Using Eq.\ \eqref{eq:QfromP} along with $P_m|\psi\rangle^{\otimes 2}=\delta_{m0}|\psi\rangle^{\otimes 2}$ leads to
\begin{equation}
    {\rm Tr}[Q^0_{2k} (\ket{\psi}\bra{\psi})^{\otimes 2}]=K_{2k}(0)=(-1)^k \binom{n}{k}\,,
\end{equation}
with the other overlaps vanishing ($K_{2k+1}(0)=0$)
leading to $p_k=\frac{1}{d}\binom{n}{k}$. This recovers exactly the result from \cite{diaz2023showcasing,bermejo2025characterizing}.

The previous discussion shows that our $P_m$ basis make manifest a point that is obscured when using the operators $Q^i_k$ which were obtained by applying Schur's lemma to the irrep decomposition of the operator space~\cite{diaz2023showcasing}. Namely, that the information in the invariants obtained from an operator irrep-based construction (such as $Q^i_k$) is not independent--in agreement with the extreme case of Gaussian states and Wick's theorem. Thus  the current characterization of the commutant, even for $t=2$,  provides a different and complementary angle for studying quantum resources from an algebraic setting.

\subsubsection{PP Gaussian unitaries: purities of reduced density matrices}

We now consider the quantum resource theory where  PP fermionic Gaussian unitaries are free operations.  We will show how Theorem \ref{th:PP-generators} can lead to new resource quantifiers already in the $t=2$ case.

Let us first make a few basic remarks. When working with dressed fermionic operators,  their two-copy vacuum $|\tilde{0}\rangle$ is $|\tilde{0}\rangle=|0\rangle \otimes |0\rangle$ since the latter is annihilated by all $\widetilde{a}^{(l)}_i$. Moreover, given a general state $\rho$ with well defined parity,  we have $\Gamma_1 \rho^{\otimes 2} \Gamma_1=\Gamma_1 \rho \Gamma_1 \otimes \rho = \rho^{\otimes 2}$. This means that when working on two copies we can always interchange $\rho^{\otimes 2}$ with its corresponding dressed version and viceversa. Equivalently we can replace $a_i^{(l)}\leftrightarrow \widetilde{a}_i^{(l)}$. Let us then notice that for a state $|\psi\rangle$ with fixed particle number $N$, $J_z |\psi\rangle^{\otimes 2}=0$ as it follows from $J_z=(N_1-N_2)/2$ (see last part of section \ref{sec:PPt2}). This means that in studying invariant quantities such as $\Tr[\rho^{\otimes 2}X^{(N,j)}_{mm'}]$ only the sector $m=m'=0$ with $N=2r$  is not trivial (here we use the convention $m\equiv m_{\text{spin}}$ introduced at the end of Section~\ref{sec:PPt2}).\\

With the previous clarifications, we can now discuss the relation between the commutant elements and the reduced density matrices (RDMs). Let $|\psi\rangle$ be a pure  state of $r$ fermions and let
\begin{equation}
\rho^{(K)}_{\alpha\alpha'}
=
\langle \psi|C^{(K)\dagger}_{\alpha'}C^{(K)}_{\alpha}|\psi\rangle\,,
\end{equation}
denote its $K$-body reduced density matrix, where $C^{(K)\dagger}_{\alpha}=a^\dag_{i_1}\dots a^\dag_{i_K}$ creates the Slater determinant associated with the ordered $K$-tuple $\alpha=(i_1,\dots,i_K)$. As shown in Ref.~\cite{gigena2021many}, the matrix $\rho^{(K)}$ determines expectation values of arbitrary $K$-body observables and its spectrum encodes the $(K,r-K)$ many-body entanglement structure.
For our purposes, the key observation is that the purity of $\rho^{(K)}$ can be written as an overlap of $\rho^{\otimes 2}$ with a natural operator in the PP second commutant. Indeed, let us define
\begin{equation}
\Omega_K
=
\sum_{\alpha,\alpha'}
C_{\alpha'}^{(K,1)\dagger}C_{\alpha}^{(K,1)}
\,C_{\alpha}^{(K,2)\dagger}C_{\alpha'}^{(K,2)},
\label{eq:OmegaM}
\end{equation}
where superscripts $(1),(2)$ denote the two copies. After dressing and reordering the ladder operators, $\Omega_K$
 becomes a polynomial in $\widetilde{\Omega}_{ij}$ so that 
Theorem \ref{th:PP-generators} implies $\Omega_K\in  \CC_{2,n}^{\rm PP}$ (see the example of $\Omega_1$ below). Then
\begin{align}
\Tr[\rho^{\otimes 2}\Omega_K]
&=
\sum_{\alpha,\alpha'}
\langle \psi|C_{\alpha'}^{(K)\dagger}C_{\alpha}^{(K)}|\psi\rangle
\langle \psi|C_{\alpha}^{(K)\dagger}C_{\alpha'}^{(K)}|\psi\rangle
\nonumber\\
&=
\sum_{\alpha,\alpha'}
\rho^{(K)}_{\alpha\alpha'}\rho^{(K)}_{\alpha'\alpha}
=
\Tr[(\rho^{(K)})^2].
\label{eq:purityOmega}
\end{align}
Hence the quadratic purity of the $K$-body reduced state is itself the overlap with an element in $\CC_{2,n}^{\rm PP}$.
If we normalize the  $K$-body reduced density matrix,
then its quadratic entropy may be written as
$
S_2^{(K)}
=
1-\binom{r}{K}^{-2}\Tr[(\rho_n^{(K)})^2]
=
1-\binom{r}{K}^{-2}\Tr[\rho^{\otimes 2}\Omega_K].
$
In particular, for $K=1$ we have $C^{(1)\dagger}_{\alpha}=a_p^\dagger$, and Eq.~\eqref{eq:OmegaM} becomes
$
\Omega_1
=
\sum_{p,q}
a_{q}^{\dagger(1)}a_{p}^{(1)}
a_{p}^{\dagger(2)}a_{q}^{(2)}$. 
In the notation of Section~\ref{sec:PPt2}
one finds
\begin{equation}
\Omega_1=N_1-J_+J_-\,,
\end{equation}
such that the one-body entropy is given by 
\begin{equation}
\Tr[(\rho^{(1)})^2]
=
\Tr\left[\rho^{\otimes 2}\left(\frac{N}{2}-J_+J_-\right)\right].
\label{eq:onebodypurityJ}
\end{equation}
This expression allows us to derive an important conclusion: In the $m=0$ subspace
$
J_+J_-|j,0\rangle=j(j+1)|j,0\rangle
$
so that Eq.~\eqref{eq:onebodypurityJ} may be rewritten as
\begin{equation}
\Tr[(\rho^{(1)})^2]
=
\sum_j [r-j(j+1)]\,p_j\,,
\label{eq:puritypj}
\end{equation}
where we defined the probabilities 
\begin{equation}
    p_j=\Tr\left[\rho^{\otimes 2}X^{(2r,j)}_{00}\right]\,.
\end{equation}
This shows that the overlaps with the projectors $X^{(2r,j)}_{00}$ provide a refinement of the well known one-body purity. Indeed, the quadratic entropy depends  on the coarse grained sum in Eq.~\eqref{eq:puritypj}, while the individual probabilities $p_j$ contain fine grained information. As such, in the spirit of \cite{bermejo2025characterizing} we conjecture that the complete collection of $p_j$ will provide a finer-grained characterization of the state's resourcefulness. 

As an example of such a proposal, we first consider the case when the quantum state is a pure Slater determinants of fixed number of particles $r$, i.e., a state with no genuine fermionic correlations. For example let $|r\rangle=a_1^\dagger a_2^\dagger\cdots a_r^\dagger|0\rangle$
be a reference Slater determinant. Its two-fold tensor product can be written, up to an overall sign, as
\begin{equation}
|r\rangle^{\otimes 2}
\sim
\prod_{p=1}^r
\Big(a_p^{\dagger(1)}a_p^{\dagger(2)}\Big)|0\rangle.
\end{equation}
Each individual factor $d_p^\dagger=a_p^{\dagger(1)}a_p^{\dagger(2)}$
satisfies
\begin{equation}
J_z\,d_p^\dagger|0\rangle=0,
\qquad
J_\pm\,d_p^\dagger|0\rangle=0\,.
\end{equation}
 Hence the copy Slater determinant is itself a singlet 
\begin{equation}
J_z|r\rangle^{\otimes 2}=
J_\pm|r\rangle^{\otimes 2}=
J^2|r\rangle^{\otimes 2}=0.
\end{equation}
Since every Slater determinant with $r$ particles is obtained from $|r\rangle$ by a passive FLO transformations, and the projectors $X^{(2r,j)}_{00}$ belong to the passive commutant, the same statement holds for any Slater determinant $|\psi_{\rm SD}\rangle$. We thus have 
\begin{equation}\label{eq:Sloverlap}
p_j=\Tr\left[\bigl(|\psi_{\rm SD}\rangle\langle\psi_{\rm SD}|\bigr)^{\otimes 2}
X^{(2r,j)}_{00}\right]
=\delta_{j0}\,.
\end{equation}
This equation implies that Slater determinants only occupy the $j=0$ sector of the commutant. This captures the intuition of Ref.~\cite{bermejo2025characterizing} where it was stated that free states contains the most compressible amount of information. In the current case, this intuition is extreme as a Slater determinant only lies in the smallest possible subspace compatible with the corresponding number of particles and the representation. One finds that the corresponding dimension is $d_{r}=\dim V_{(2^r)}^{\mathbb U(n)}=\frac{1}{r+1}\binom{n}{r}\binom{n+1}{r}$ (one makes use of $\lambda_1=\lambda_2=r$ so that $\lambda^T=(2^r)$ and the Weyl dimensional formula in Appendix \ref{ap:PP-dim}). Notice that the formula is invariant under the particle-hole transformation mapping $r\to n-r$ and that the maximum is attained at the center ($r=n/2$ for even $n$ and for both $r=\left \lfloor{\frac{n}{2}}\right \rfloor, \left \lfloor{\frac{n}{2}}\right \rfloor+1$ for odd $n$). For large $n$ one finds the asymptotic Gaussian behavior 
\begin{equation}\label{eq:gaussian}
    d_{r}\simeq d_{\lfloor{\frac{n}{2}}\rfloor}\exp\left[-\frac{4}{n}\Big(r-\left \lfloor{\frac{n}{2}}\right \rfloor\Big)^2\right]\,.
\end{equation}

To finish, we recall that 
\begin{equation}
\Tr[\rho^{\otimes 2}J_+J_-]=r-\Tr(\rho^{(1)2}),
\end{equation}
is equal to zero if and only if the state is a Slater determinant, so that we recover a witness
of free state from the condition $p_j=\delta_{j0}\Leftrightarrow \Tr[\rho^{\otimes 2}J_+J_-]=0$. But this is just one particular witness, based on Eq.\ \eqref{eq:puritypj}. Instead, one can diagnose deviations from a free state in all the different $j$ directions individually via $p_j$ or suitable convex combinations of $p_j$.

\subsubsection{PP Gaussian unitaries: generalized Pl\"ucker conditions from spin truncation in ${\mathfrak{su}(2)}$}

In this section we show how our results can be used to recover the resourcefulness criteria of Ref.~\cite{semenyakin2025classifying}. Therein, the authors classified fermionic states from the quantities $\omega_k={\rm tr}[\rho^{\otimes 2}\widetilde{\Omega}^k_{12}]$ which they showed can be written as a ``twisted'' version of the purity of the one-body RDM. From Theorem \ref{th:PP-generators} it is clear that $\widetilde{\Omega}^k_{12}\in \CC^{\rm PP}_{2,n}$ since it is just a monomial of the generator of the algebra. In this sense, the twisted-purity witnesses of \cite{semenyakin2025classifying} are naturally embedded in the broader scheme provided by the full knowledge of $\CC^{\rm PP}_{2,n}$.

Moreover, in \cite{semenyakin2025classifying} the authors introduce the ``Generalized Pl\"ucker relations'':
\begin{equation}\label{eq:plucker}
    \widetilde{\Omega}_{12}^k|\psi\rangle^{\otimes 2}=0
\end{equation}
where $k$ is a positive integer. These conditions define a hierarchy of complexity of states with $k=1$ corresponding to Slater determinants.
 Notably, these conditions can be easily understood in our framework: \begin{equation}
\widetilde{\Omega}_{12}^k\,|\psi\rangle^{\otimes 2}=0
\;\Leftrightarrow \;\;
p_j=0\; \text{for}\; j\geq k.
\end{equation}
In other words, the Pl\"ucker condition $\widetilde{\Omega}_{12}^k|\psi\rangle^{\otimes 2}=0$ is equivalent to the statement that the copy state has support only on spin sectors $j<k$. This is easy to understand  by noting that a fermionic state with fixed number of particles leads to $|\psi\rangle^{\otimes 2}
=\sum_{j,\alpha} c_{j,\alpha}\,|j,m=0,\alpha\rangle$ for $\alpha$ a degeneracy index. Namely, we only occupy the $m=0$ sector. As a consequence,  $\widetilde{\Omega}_{12}^k|\psi\rangle^{\otimes 2}=\sum_{j,\alpha} c_{j,\alpha}\,J_+^k|j,0,\alpha\rangle$ which is zero iff $c_{j,\alpha}=0$ for $j\geq k$. This result clearly captures once more the intuition presented in~\cite{bermejo2025characterizing}: A state such that $|\psi\rangle^{\otimes 2}$  occupies larger (and more) $j$  sectors requires, in principle, more information to be represented than states living in smaller subspaces of the commutant, and is thus more resourceful.

By following Weyl dimension formula explained in Appendix \ref{ap:PP-dim} we can actually find the dimension of each subspace of fixed $j$ and $r$. Fixing both
forces $\lambda=(r+j,r-j)$
which in  the transpose partition is
$
\lambda^T=(2^{\,r-j},1^{\,2j})$. From this we get 
\begin{equation}
    d_{r,j}=\dim V^{\mathbb U(n)}_{(2^{\,r-j},1^{\,2j})}=
\frac{2j+1}{r+j+1}\binom{n+1}{r-j}\binom{n}{r+j}\,,\nonumber
\end{equation}
holding for $j=0,1,\dots,\min(r,n-r)$. 
Thus the allowed space for states characterized by the condition \eqref{eq:plucker} for a given $k$ has dimension  $\sum_{j=0}^{k-1} d_{r,j}$. One can show that all terms peak at ``half''-filling  and that the particle-hole symmetry ($r\to n-r$) is preserved, just as for $j=0$. The same asymptotic behavior of Eq.\ \eqref{eq:gaussian} is also recovered for large $n$ (with a different overall constant).

\subsubsection{Properties of Gaussian states and Slater determinants for general $t$}

To finish this section, we show that the basic properties of the dressed operator allow us to readily prove the following results: 
 Given a Slater determinant $|\psi\rangle$ and $ \widetilde{\Omega}_{ij}$ for $i\neq j$ defined in $t$ copies we have
    \begin{equation}\label{eq:pluckert}
        \widetilde{\Omega}_{ij}|\psi\rangle^{\otimes t}=0\quad \forall t\,.
    \end{equation}
Similarly, given a Gaussian state $|\psi\rangle$ and $ \widetilde{Q}_{ij}$ defined in $t$ copies we have
    \begin{equation}\label{eq:bravit}
        \widetilde{Q}_{ij}|\psi\rangle^{\otimes t}=0\quad \forall t\,.
    \end{equation}
Notice that $t=2$ the first equation generalizes the Pl\"ucker conditions~\cite{semenyakin2025classifying} while the second recovers the criteria of Ref.~\cite{bravyi2004lagrangian}. Hence, our results constitute a non-trivial generalization of the aforementioned results.

To see why these results hold, note that for a computational basis state, each term $\widetilde a_p^{\dagger(i)}\widetilde a_p^{(j)}$ necessarily annihilates $|\psi\rangle^{\otimes t}$ since if the mode $p$ is empty $\widetilde a_p^{(j)}$ will destroy it and if it is full $\widetilde a_p^{\dagger(i)}$ will. Then it is clear that the complete sum will annihilate the copied reference state. The same holds for any Slater determinant as $\widetilde{\Omega}_{ij}$ commutes with PP Gaussian unitaries.
Similarly, for the general case we can just take the vacuum and expand
\begin{equation}
    \widetilde Q_{ij}
    =
    \frac12\sum_{\mu=1}^{2n}\widetilde c_\mu^{(i)}\widetilde c_\mu^{(j)}=\sum_{p=1}^n
    \Big(
    \widetilde a_p^{\dagger(i)}\widetilde a_p^{(j)}
    -
    \widetilde a_p^{\dagger(j)}\widetilde a_p^{(i)}
    \Big)\,.
\end{equation}
It is then clear that $
    \widetilde Q_{ij}|0\rangle^{\otimes t}=0$ since the $\widetilde a_p^{(j)}$ annihilate the vacuum.

Hence, as a corollary of Eqs.~\eqref{eq:pluckert}--\eqref{eq:bravit} and of Theorems \ref{th:PP-generators}, \ref{th:generators} we thus know that the free fermionic states of the general Gaussian theory, when copied to $\mathcal{H}^{\otimes t}$ only occupy a one dimensional sector. For PP instead, the free states lie in the sector generated by $\widetilde{\Omega}_{ii}$, in agreement with the discussion about the $j=0$ sector in the example of $t=2$ of the previous section.

\subsection{Stabilizer entropy of fermionic Gaussian states}

The stabilizer entropy introduced in~\cite{leone2022stabilizer} has become the standard measure of the magic possessed by quantum states. In brief, this quantity can be used to assess the number of non-Clifford operations needed to prepare a quantum state, a characterization that is of pivotal importance in the near-term quantum devices landscape.
The representation-theoretic framework that allows us to completely characterize the commutant of fermionic Gaussian unitaries, also enables us to exactly compute the fermionic Gaussian states average of the linearized version of the stabilizer entropy, which we here recall to be given by $M_{\rm lin}\left(\ket{\psi}\right) = 1 - 2^n S_4(\ket{\psi})$, with
\begin{equation}
     S_4(\ket{\psi}) =\frac{1}{4^n}\sum_{P\in\PC_n}\Tr[\ketbra{\psi}{\psi}P]^4\,.
\end{equation}
Above, $\PC_n=\{I,X,Y,Z\}^{\otimes n}$ is the set of all Pauli strings of $n$ qubits.

\subsubsection{Fixed-particle fermionic Gaussian states}
\begin{figure}[t]
    \centering
    \includegraphics[width=\linewidth]{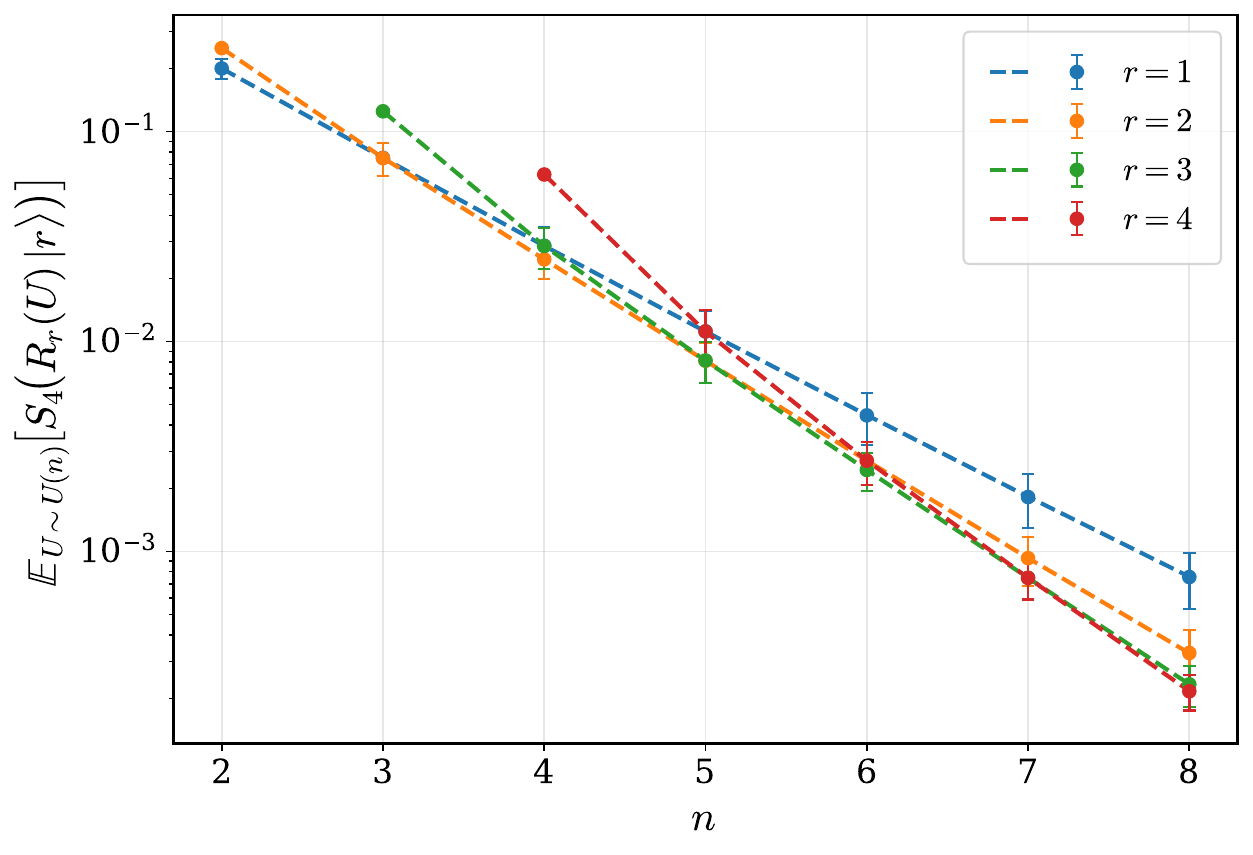}
    \caption{\textbf{Comparison of numerical estimates and analytical predictions for $\mathbb{E}_{U\sim\U(n)}S_4(R_r(U)\ket{r})$.} The solid dots correspond to the average over $10^4$ independent samples of particle-preserving fermionic gaussian unitaries, and the associated error bars show one standard deviation from the mean. The dashed lines instead are the closed formula shown in Eq.~\eqref{eq:avg_pp_stab_sector_formula_1}. The system sizes considered are $n=2,\dots,8$ and for each $n$ we show the results from the sectors with $r=1,\dots,\min(n,4)$ particles, corresponding to colors blue, orange, green and red, respectively.}
    \label{fig:pp_avg_stab}
\end{figure}
As we show in Appendix~\ref{ap:avg_pp_stab_sector}, the average linear stabilizer entropy over the set of fermionic Gaussian states with fixed particle number $r$ is
\begin{align}
\label{eq:avg_pp_stab_sector_formula_1}
    &\mathbb{E}_{U\sim\U(n)}
    S_4\!\left(R_r(U)\ket{r}\right)\\
    &=
    \frac{1}{2^n\,d_{n,r}}
    \sum_{k=0}^{\min(r,n-r)}
    \frac{n!}{(n-r-k)!\,(2k)!\,(r-k)!}\,c_k\,.
\end{align}
Above, $R_r(U)$ is the representation of $\U(n)$ acting on the $r$-particle sector $\Lambda^r(\CBB^n)$, and we defined
\begin{equation}
    \ket{r}=a_1^\dagger a_2^\dagger \cdots a_r^\dagger \ket{0}
    \in \Lambda^r(\CBB^n)\,,
\end{equation}
as the reference $r$-particle Slater determinant. Then, we denoted by
\begin{equation}
    d_{n,r}=\dim\bigl(V^{\U(n)}_{(4^r)}\bigr)=\prod_{i=1}^r\prod_{j=1}^4
    \frac{n+j-i}{r+5-i-j}\,,
\end{equation}
the dimension of the $\U(n)$ irrep associated with the parition $(4^r)$, the rectangular tableau with $r$ rows and $4$ columns.
Furthermore, given a partition $\lambda=(\lambda_1,\lambda_2,\dots)$ of $r$, we define its double as
\begin{equation}
    2\lambda=(2\lambda_1,2\lambda_2,\dots)\,,
\end{equation}
and we denote by $H(2\lambda)$ the hook-product of the Young diagram $2\lambda$.
Lastly, given the generalized Pochhammer symbol
\begin{equation}
    (a)_\lambda
    =
    \prod_{j\geq 1}
    \left(a-\frac{j-1}{2}\right)_{\lambda_j}\!,
    \,
    (x)_k=x(x+1)\cdots(x+k-1)\,,\nonumber
\end{equation}
we defined
\begin{equation}
\label{ap-eq:ck_def}
    c_k
    =
    4^k
    \sum_{\lambda\vdash k,\ \ell(\lambda)\leq 3}
    \frac{(2k)!}{H(2\lambda)}
    \frac{\left(\frac{3}{2}\right)_\lambda}{(3)_\lambda}\,,
\end{equation}
for $\ell(\lambda)$ the number of rows of the Young diagram associated to the partition $\lambda$.

In Fig.~\ref{fig:pp_avg_stab} we plot the derived analytical expression in Eq.~\eqref{eq:avg_pp_stab_sector_formula_1} against its numerical estimates obtained by sampling particle-preserving fermionic Gaussian unitaries with the algorithm outlined in~\cite{braccia2025optimal}. Each estimate is obtained as the average over $10^4$ independent samples. The analytical prediction is in perfect agreement with the numerical data, bringing additional validation to our derivation.

\subsubsection{General fermionic Gaussian states}

\begin{figure}[t]
    \centering
    \includegraphics[width=\linewidth]{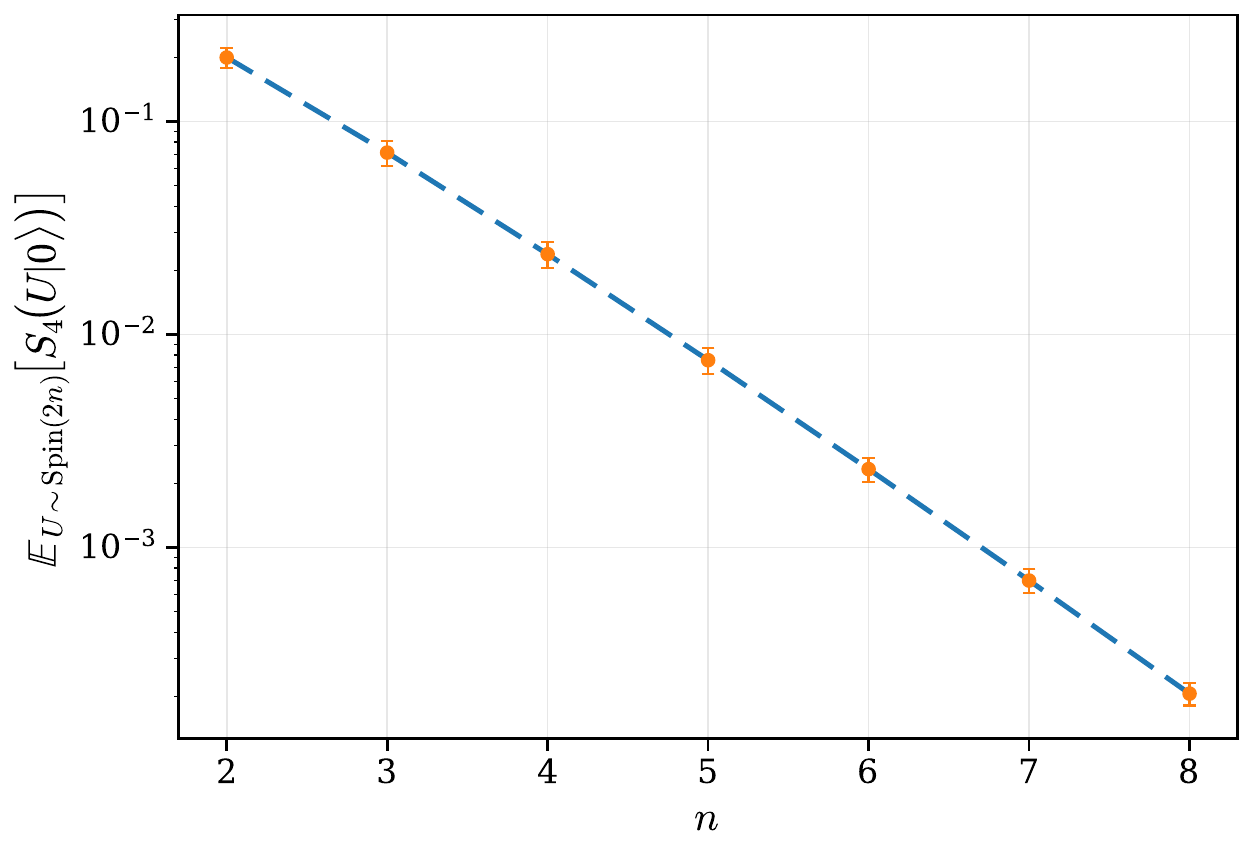}
    \caption{\textbf{Comparison of numerical estimates and analytical predictions for $\mathbb{E}_{U\sim\Spin(2n)}S_4(U\ket{0})$.} For increasing system sizes $n=2,\dots,8$, the dots show the average of $S_4(U\ket{0})$ over $10^4$ independent samples of $U\sim\Spin(2n)$ and the associated error bars show one standard deviation from the mean. The dashed line instead shows the analytical expression for $\mathbb{E}_{U\sim\Spin(2n)}S_4(U\ket{0})$ given in Eq.~\eqref{eq:avg_gauss_stab}.}
    \label{fig:avg_gaussian_stab}
\end{figure}

As we prove in Appendix~\ref{ap:avg_gaussian_stab}, over the set of even-parity fermionic Gaussian pure states $\ket{\psi}=R(U)\ket{0}$, for $U\in\Spin(2n)$ one finds
\begin{equation}
\label{eq:avg_gauss_stab}
    \mathbb{E}_{U\sim\Spin(2n)}S_4(R(U)\ket{0})= \frac{1}{\Cat_{n+1}}\,,
\end{equation}
and hence
\begin{equation}
    \mathbb{E}_{U\sim\Spin(2n)}M_{\rm lin}(R(U)\ket{0})= 1 - \frac{2^n}{\Cat_{n+1}}\,,
\end{equation}
for $\Cat_m=\frac{1}{m+1}\binom{2m}{m}$ the $m$-th Catalan number. 
We also notice that the very same result holds for the odd-parity fermionic Gaussian states $\ket{\phi}\in\HC_-$. Indeed, notice that one can switch between parity sectors $\HC_+,\HC_-$ via the action of the operator $X_1=c_1$. This operator either commutes or anticommutes with the Majorana binomials $c_\mu c_\nu$ that generate any fermionic Gaussian unitary via $R(U)=e^{\frac{1}{2} \sum_{\mu,\nu=1}^{2n}  h_{\mu\nu} c_\mu c_\nu}$ for some real antisymmetric matrix $h$. As a consequence, it stabilizes the fermionic Gaussian unitaries group $\Spin(2n)$, implying that integrating over $R(U)$ is the same as integrating over $X_1R(U)X_1$. 
Now, any odd-parity fermionic Gaussian state can be expressed as $\ket{\phi}=R(U)X_1\ket{0}=X_1(X_1 R(U) X_1)\ket{0}$. Thus, after the map $X_1 R(U) X_1 \to R(U)$, the only difference with the even-parity sector calculation is that the Paulis $P$ appearing in the summation that defines $S_4$ are now conjugated by $X_1$. However this leaves $S_4$ invariant, as $X_1$ is also a Clifford operator and hence normalizes the Pauli group, and any eventual phase $\{\pm1,\pm i\}$ disappears when we take the fourth power.

In Fig~\ref{fig:avg_gaussian_stab} we plot the derived analytical expression for the average $\mathbb{E}_{U\sim\Spin(2n)}S_4(U\ket{0})$ against the numerical estimates obtained by sampling $10^4$ random instances of fermionic Gaussian unitaries obtained via the method presented in~\cite{braccia2025optimal}. The data show perfect agreement, providing additional validation to our result.

Lastly, we compare Eq.~\eqref{eq:avg_gauss_stab} to the averages of $S_4$ over completely Haar random states over $\HC$ and product states where each qubit's state is Haar random in $\CBB^2$. One has~\cite{deneris2025analyzing}
\begin{align}
    \mathbb{E}_{\ket{\psi}\sim\HC}S_4(\ket{\psi})&= \frac{1}{2^{n-2}(2^n+3)}\\
    \mathbb{E}_{\ket{\phi_i}\sim\CBB^2}S_4(\otimes_i\ket{\phi_i}) &= \left(\frac{2}{5}\right)^n\,.
\end{align}
Particularly, we consider the large $n$ limit, where
\begin{align}
    \mathbb{E}_{U\sim\Spin(2n)}S_4(U\ket{0})&\sim \frac{n^{3/2}}{4^{n}}\frac{\sqrt{\pi}}{4}\\
    \mathbb{E}_{\ket{\psi}\sim\HC}S_4(\ket{\psi})&\sim \frac{4}{4^n}\\
    \mathbb{E}_{\ket{\phi_i}\sim\CBB^2}S_4(\otimes_i\ket{\phi_i}) &= \left(\frac{2}{5}\right)^n\,.
\end{align}
We hence see that taking the full Haar random average as benchmark, product states have exponentially less magic, while the magic of random fermionic Gaussian states is suppressed by only a factor $n^{3/2}$.

\section{Outlook}\label{sec:outlook}

In this work we characterized the $t$-th order commutants of both PP and general fermionic Gaussian unitaries. On the PP side, we showed that the commutant is generated by the dressed hopping operators between copies, while for general fermionic Gaussian unitaries it is generated by dressed quadratic Majorana bilinears together with parity. Beyond identifying generators, we provide general dimension formulas for the commutants and showcase how the GT construction can be used to obtain the basis elements. 

Beyond the purely mathematical constructions, our work also showcases the importance of providing some ``physical'' interpretation to the commutants. For instance, in the PP case, our $t=2$ analysis revealed a rich underlying structure which allowed us to view how different GT patterns map to basis elements via particles hopping from one copy onto the next. Moreover, the subtle connection uncovered between PP and general fermionic Gaussian unitary commutants enabled us to borrow results from one case, and apply it onto another.  

Then, our results also clarify the information-theoretic role of fermionic invariants arising in $\rho^{\otimes t}$. In particular, we make a connection with recent advancements in the algebraic characterization of the resourcefulness of quantum states~\cite{diaz2025unified,bermejo2025characterizing,deneris2025analyzing,coffman2026group}. Here, our basis for the commutant provides an alternative descriptor for resource and complexity analyses. From this perspective, several known quantities—such as purities of reduced density matrices, witnesses based on generalized Pl\"ucker-type conditions, and low-order Gaussianity criteria—appear as coarse-grained projections of the much finer algebraic structure enabled by our results. Indeed, the full commutant leads to a hierarchy of fermionic correlations that is invisible to standard low-order diagnostics, and suggests a route toward finer resource theories in which replicated-sector occupations quantify deviations from Gaussianity or from Slater-determinant structure.

To finish, we note that our work leaves several directions open. On the technical side, it would be valuable to push the GT construction to arbitrary $t$ in a fully explicit and computationally efficient way, and to derive closed change-of-basis formulas connecting our GT-adapted operators with previously known low-order commutant bases. On the structural side, it would be interesting to better understand the asymptotic representation theory of these commutants, as this could enable, e.g., understanding whether  PP fermionic Gaussian unitaries form Gaussian processes~\cite{garcia2023deep,garcia2024architectures}. Finally, on the application side, an important challenge is to determine how much of the commutant invariant data is operationally accessible in experiments, and which subsets of these invariants provide the sharpest signatures of non-Gaussianity, many-body complexity, or computational advantage. We expect that progress along these directions will further strengthen the role of commutant methods in fermionic quantum information theory.

\section*{Acknowledgments}
P.B. and M.C. were supported Laboratory Directed Research and Development (LDRD) program of Los Alamos National Laboratory (LANL) under project number 20260043DR. N.L.D. acknowledges supported by the Center for Nonlinear Studies at LANL. M.L. and M.C. were supported  by LANL's ASC Beyond Moore’s Law project. D.G.-M. acknowledges financial support from the European Research Council (ERC) via the Starting grant q-shadows (101117138) and from the Austrian Science Fund (FWF) via the SFB BeyondC (10.55776/FG7).

\medskip

\noindent {\bf Note added:} A few days before our work was submitted
to the arXiv, the manuscript~\cite{sierant2026theory} which independently studies the commutant of general matchgate circuits, or Bogoliubov transformations, which constitute a representation of $\mathbb{O}(2n)$. We remark that while the methods  and results in our work and in Ref.~\cite{sierant2026theory} are similar (e.g., the use of GT techniques to construct orthonormal bases of the commutant, or the derivation for the average stabilizer entropy for fermionic Gaussian states), there are some important key differences to highlight. First, we note that both works are complementary. Second, we stress that we here present a characterization of the commutants for the matchgate (or fermionic Gaussian) representations of the groups $\mathbb{U}(n)\subset\mathbb{SO}(2n)\subset\mathbb{O}(2n)$. Indeed, one can show that the restriction $\mathbb{SO}(2n)\subset\mathbb{O}(2n)$ doubles the dimension of the commutant, meaning that the results in Ref.~\cite{sierant2026theory} constitute the basis for $A_{t,n}$ in Eq.~\eqref{eq:active:Ant}, but our bases are twice as large. Still, even when focusing on the terms in $A_{t,n}$, our bases are slightly different than those in Ref.~\cite{sierant2026theory}, and we also showcase how the GT derivations are related to other known constructions. Then, and most importantly, our work considers the PP fermionic Gaussian representation of $\mathbb{U}(n)$, which is not studied in Ref.~\cite{sierant2026theory}. Finally, we remark that our application discussions significantly differ from those of Ref.~\cite{sierant2026theory}, especially when it comes to using the commutant elements to fine-grain the characterization of the resource of a state.

\bibliography{quantum}
\clearpage
\newpage

\appendix

\onecolumngrid

\setcounter{theorem}{0}
\setcounter{lemma}{0}

\section*{Appendix}

In the following appendices we provide additional details, as well a proofs for the results in the main text. 

\section{Proofs of Lemmas 1, 2 and Theorem 1}
\label{ap:PP-generators}

 In this appendix, we begin by providing a proof for Lemmas~\ref{lem:Omega-in-C} and~\ref{lem:Omega-rep-gl}, which we then use to prove  Theorem~\ref{th-ap:PP-generators}. 

 \begin{lemma}\label{lem:Omega-in-C-AP}
    Let $\widetilde{\Omega}_{j,k}$ be an operator as defined in Eq.~\eqref{eq:PP-gen}. Then, 
    \begin{equation}
        \widetilde{\Omega}_{j,k}\in\CC^{\rm PP}_{t,n}\,,\quad \forall j,k\,.
    \end{equation}
\end{lemma}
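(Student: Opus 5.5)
We must show that $\widetilde\Omega_{j,k}$ commutes with $R(U)^{\otimes t}$ for every PP Gaussian unitary $R(U)$. The plan is to first strip off the parity dressing, and then use the transformation law in Eq.~\eqref{eq:creation-PPaction}.

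\emph{Step 1: removing the dressing.} For $j<k$ the $\Gamma$ strings overlap on copies $1,\dots,j-1$, and $\Gamma_l^2=\id$, so those factors cancel. Using $a_p^{\dagger}\Gamma=-\Gamma a_p^\dagger$ on copy $j$, we obtain
\begin{equation}\nonumber
\widetilde a_p^{\dagger(j)}\widetilde a_p^{(k)} = \pm\, \id\otimes\cdots\otimes a_p^{\dagger}\Gamma\otimes\Gamma\otimes\cdots\otimes\Gamma\otimes a_p\otimes\id\otimes\cdots,
\end{equation}
with $\Gamma$ on copies $j+1,\dots,k-1$. The overall sign does not depend on $p$. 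The case $j>k$ is analogous. For $j=k$ the dressing cancels completely, giving $\widetilde\Omega_{j,j}=N^{(j)}$, the number operator on copy $j$.

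\emph{Step 2: invariance of each piece.} A PP Gaussian unitary commutes with $\Gamma$, since it preserves particle number and hence parity. Therefore every $\Gamma$ factor is invariant under conjugation by $R(U)$. The same holds for $\Gamma$ placed next to $a_p^\dagger$ inside copy $j$, so the conjugation acts only on the $a_p^\dagger$ and $a_p$ factors. Conjugating by $R(U)^{\otimes t}$ with Eq.~\eqref{eq:creation-PPaction} then gives
\begin{equation}\nonumber
\sum_p R(U)a_p^\dagger R(U)^\dagger \otimes\cdots\otimes R(U)a_pR(U)^\dagger=\sum_{q,q'}\Big(\sum_p U_{qp}U^*_{q'p}\Big)\,a_q^\dagger\otimes\cdots\otimes a_{q'} .
\end{equation}
Unitarity gives $\sum_p U_{qp}U^*_{q'p}=(UU^\dagger)_{qq'}=\delta_{qq'}$, so the expression returns exactly $\widetilde\Omega_{j,k}$. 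The diagonal case $j=k$ follows by the same computation within a single copy, or directly from the fact that the PP action conserves $N$.

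The only delicate point is the sign bookkeeping in Step 1. The argument needs only that the sign is uniform in $p$ and that $R(U)$ commutes with $\Gamma$, so this step is routine. As an alternative that avoids it, one can check the claim on generators: show that $[H^{(1)}+\dots+H^{(t)},\widetilde\Omega_{j,k}]=0$ for $H=\sum h_{pq}a^\dagger_pa_q$. This works because the dressed operators satisfy the CAR across copies, so $\sum_{l}H^{(l)}=\sum_{l,p,q}h_{pq}\widetilde a_p^{\dagger(l)}\widetilde a_q^{(l)}$ (the dressing cancels in each even-degree term). The commutator of the two bilinears then gives $\sum_{p,q}h_{pq}\bigl(\widetilde a_p^{\dagger(j)}\widetilde a_q^{(k)}-\widetilde a_p^{\dagger(j)}\widetilde a_q^{(k)}\bigr)=0$ by the standard identity $[A^\dagger B,C^\dagger D]$ for CAR operators. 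Exponentiating this yields the claim.
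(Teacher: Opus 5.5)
Your proof is correct and follows essentially the same route as the paper. The paper first shows that the undressed $\Omega_{j,k}=\sum_p a_p^{\dagger(j)}a_p^{(k)}$ is invariant, using the transformation law and $\sum_p U_{qp}U^*_{rp}=\delta_{qr}$, and then notes that $\widetilde\Omega_{j,k}$ differs from it only by parity factors commuting with PP unitaries; your Steps 1--2 do the same with the order reversed. Incidentally, the sign in your Step 1 is $+$, since for $j<k$ one has exactly $\widetilde\Omega_{j,k}=\Omega_{j,k}\Gamma_j\cdots\Gamma_{k-1}$.
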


\begin{proof}
Let us first show that for all PP fermionic Gaussian unitaries $R(U)$ and all $j,k$,
\begin{equation} 
\left[R(U)^{\otimes t},\Omega_{j,k}\right] = 0\,,
\end{equation}
with
\begin{equation}
\Omega_{j,k}= \sum_{p=1}^n a^{\dagger (j)}_p a^{(k)}_p\,.
\end{equation}
Here, the creation and annihilation operators can be defined in terms of the Majoranas, as
\begin{equation}
    a_p^{(j)}=\frac{c_{2p-1}^{(j)}+i\,c_{2p}^{(j)}}{2}\,, \qquad 
    a_p^{\dagger (j)}=\frac{c_{2p-1}^{(j)}-i\,c_{2p}^{(j)}}{2}\,.
\end{equation}
 Using that under conjugation by a PP fermionic Gaussian unitary $R(U)$, $a_p,a_p^\dagger$ transform as
\begin{equation}
R(U) a^\dagger_p R^\dagger(U) = \sum_{q=1}^n u_{qp} a^\dagger_q \,, \qquad R(U) a_p R^\dagger(U) = \sum_{q=1}^n u^*_{qp} a_q \,,
\end{equation}
where $u \in \mathbb{U}(n)$ is unitary, we obtain
\begin{align}
R(U)^{\otimes t}\Omega_{j,k}R^\dagger(U)^{\otimes t} &= \sum_{p=1}^n R(U)^{\otimes t} a^{\dagger (j)}_p a^{(k)}_p R^\dagger(U)^{\otimes t} \nonumber\\
&= \sum_{p=1}^n \left(\sum_{q=1}^n U_{qp} a^{\dagger (j)}_q\right) \left(\sum_{r=1}^n U^*_{rp} a^{(k)}_r\right)\nonumber\\
&= \sum_{q,r=1}^n \left(\sum_{p=1}^n U_{qp} U^*_{rp}\right)a^{\dagger (j)}_q a^{(k)}_r \nonumber\\
&=\sum_{q=1}^n a^{\dagger (j)}_q a^{(k)}_q = \Omega_{j,k}\,,
\end{align}
as implied by the fact that $\sum_p U_{qp} U^*_{rp}=\delta_{qr}$ since $U\in\UBB(n)$. Therefore, the operators $\Omega_{j,k}$ belong to the commutant $\CC^{\rm PP}_{t,n}$.

Next, we introduce the dressed fermionic operators
\begin{align}
&\widetilde a^{(j)}_p = \Gamma_1 \otimes \Gamma_2 \otimes \cdots \otimes \Gamma_{j-1} \otimes a^{(j)}_p \otimes \mathbbm{1} \otimes \mathbbm{1} \otimes \cdots \,, \nonumber \\
&\widetilde a^{\dagger (j)}_p = \Gamma_1 \otimes \Gamma_2 \otimes \cdots \otimes \Gamma_{j-1} \otimes a^{\dagger (j)}_p \otimes \mathbbm{1} \otimes \mathbbm{1} \otimes \cdots \, .
\end{align}
They satisfy the canonical anticommutation relations
\begin{equation} \label{eq-ap:Omega-comm}
\left\{\widetilde a^{(j)}_p,\, \widetilde a^{\dagger (k)}_q\right\} = \delta_{pq}\delta_{jk}, \qquad \left\{\widetilde a^{(j)}_p,\,\widetilde a^{(k)}_q\right\} = \left\{\widetilde a^{\dagger (j)}_p,\,\widetilde a^{\dagger (k)}_q\right\} = 0\,.
\end{equation}
Hence, besides the usual anticommutation relations of distinct fermionic operators acting on the same copy, the $\widetilde a^{(j)}_p$ and $\widetilde a^{\dagger (j)}_p$ operators anti-commute across different copies of $\HC^{\otimes t}$. To see this, let us first assume that $j \neq k$, so that without loss of generality we can choose $k > j$. Recall that $\left\{a^{(j)}_p,\Gamma_j\right\}=0$ and $\left\{a^{\dagger (j)}_p,\Gamma_j\right\}=0$, and we arrive at
\begin{equation}
\left\{\widetilde a^{(j)}_p,\,\widetilde a^{(k)}_q\right\}=0, \qquad \left\{\widetilde a^{(j)}_p,\,\widetilde a^{\dagger (k)}_q\right\}=0\,.
\end{equation}
Moreover, when $j=k$ we simply recover the usual Clifford algebra relations on the $j$-th copy,
\begin{equation}
\left\{\widetilde a^{(j)}_p,\widetilde a^{\dagger (j)}_q\right\} = \left\{a^{(j)}_p,a^{\dagger (j)} _q\right\} = \delta_{pq}, \qquad \left\{\widetilde a^{(j)}_p,\widetilde a^{(j)}_q\right\} = \left\{\widetilde a^{\dagger (j)}_p,\widetilde a^{\dagger (j)}_q\right\} = 0\,.
\end{equation}
We can now define
\begin{equation}
\widetilde\Omega_{j,k} = \sum_{p=1}^n \widetilde a^{\dagger (j)}_p \widetilde a^{(k)}_p \,.
\end{equation}
Since $\widetilde\Omega_{j,k}$ can be generated from $\Omega_{j,k}$ by simply multiplying by parity operators, and each parity operator $\Gamma_r$ commutes with PP fermionic Gaussian unitaries, it follows that, for all PP Gaussian unitaries,
\begin{equation}
\left[R(U)^{\otimes t},\widetilde\Omega_{j,k}\right] = 0  \,.
\end{equation}
Hence, the generators $\widetilde\Omega_{j,k}$ belong to the commutant $\CC^{\rm PP}_{t,n}$.
\end{proof}

\medskip

Next, let us derive Lemma~\ref{lem:Omega-rep-gl-AP}, which we restate for convenience.

\begin{lemma}\label{lem:Omega-rep-gl-AP}
The operators $\widetilde\Omega_{j,k}$ satisfy the Lie algebra commutation relations
\begin{equation}  \label{eq-lem:Omega-rep-gl-AP}
\left[\widetilde{\Omega}_{j,k}, \widetilde{\Omega}_{j',k'}\right] = \delta_{kj'}\widetilde\Omega_{j,k'} - \delta_{jk'}\widetilde\Omega_{j',k}\,,
\end{equation}
and therefore generate a representation of the $\mathfrak{u}(t)$ algebra.
\end{lemma}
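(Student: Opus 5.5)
The plan is to reduce the commutator of two bilinears to the canonical anticommutation relations of the dressed operators, Eq.~\eqref{eq-ap:Omega-comm}, established in the proof of Lemma~\ref{lem:Omega-in-C-AP}. Since the $\widetilde a^{(j)}_p,\widetilde a^{\dagger(j)}_p$ form a single family of $nt$ genuine fermionic modes labeled by the pair $(p,j)$, each $\widetilde\Omega_{j,k}$ is a quadratic number-conserving bilinear $\sum_p \widetilde a^{\dagger(j)}_p\widetilde a^{(k)}_p$. The whole computation therefore takes place inside a standard CAR algebra, and the dressing by parity strings plays no further role.

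First, I will use the elementary operator identity
\begin{equation}
[AB,CD]=A\{B,C\}D-AC\{B,D\}+\{A,C\}DB-C\{A,D\}B\,,
\end{equation}
valid for arbitrary operators. I will apply it with $A=\widetilde a^{\dagger(j)}_p$, $B=\widetilde a^{(k)}_p$, $C=\widetilde a^{\dagger(j')}_q$ and $D=\widetilde a^{(k')}_q$. The anticommutators $\{B,D\}$ and $\{A,C\}$ vanish, since annihilators anticommute among themselves and creators anticommute among themselves. We are left with $\{B,C\}=\delta_{pq}\delta_{kj'}$ and $\{A,D\}=\delta_{pq}\delta_{jk'}$. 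This gives
\begin{equation}
[\widetilde a^{\dagger(j)}_p\widetilde a^{(k)}_p,\widetilde a^{\dagger(j')}_q\widetilde a^{(k')}_q]=\delta_{pq}\bigl(\delta_{kj'}\widetilde a^{\dagger(j)}_p\widetilde a^{(k')}_q-\delta_{jk'}\widetilde a^{\dagger(j')}_q\widetilde a^{(k)}_p\bigr)\,.
\end{equation}
Summing over $p,q$ then collapses the Kronecker delta $\delta_{pq}$ and yields exactly Eq.~\eqref{eq-lem:Omega-rep-gl-AP}.

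Second, I will identify the algebra. The relations obtained are precisely those of the elementary matrices, $[E_{jk},E_{j'k'}]=\delta_{kj'}E_{jk'}-\delta_{jk'}E_{j'k}$. Hence the map $E_{jk}\mapsto\widetilde\Omega_{j,k}$ extends linearly to a Lie algebra homomorphism from $\mathfrak{gl}(t)$. To land in $\mathfrak u(t)$, I will note that $\widetilde\Omega_{j,k}^\dagger=\widetilde\Omega_{k,j}$, because $\Gamma_i$ is Hermitian and commutes with the undressed operators on other copies. The anti-Hermitian combinations $i(\widetilde\Omega_{j,k}+\widetilde\Omega_{k,j})$, $\widetilde\Omega_{j,k}-\widetilde\Omega_{k,j}$ and $i\widetilde\Omega_{j,j}$ therefore realize $\mathfrak u(t)$ as a real form, with $\mathfrak{gl}(t)$ being its complexification, and they exponentiate to unitaries.

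The only genuine subtlety is making sure that no stray signs survive from the dressing. This is already settled by Eq.~\eqref{eq-ap:Omega-comm}: once the operators of different copies anticommute, the identity above applies verbatim. For $j=k$ the parity strings square to the identity, because $\widetilde a^{\dagger(j)}_p\widetilde a^{(j)}_p$ contains $\Gamma_i^2=\id$. So I expect no real obstacle beyond careful bookkeeping of the ordering of the terms in the identity.
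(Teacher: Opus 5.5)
Your proposal is correct and follows essentially the same route as the paper: both reduce the commutator of the bilinears to the dressed canonical anticommutation relations of Eq.~\eqref{eq-ap:Omega-comm}, keep the two surviving terms $\delta_{pq}\delta_{kj'}\widetilde a^{\dagger(j)}_p\widetilde a^{(k')}_q-\delta_{pq}\delta_{jk'}\widetilde a^{\dagger(j')}_q\widetilde a^{(k)}_p$, and sum over $p,q$. Your additional check that $\widetilde\Omega_{j,k}^\dagger=\widetilde\Omega_{k,j}$, so that the anti-Hermitian combinations realize the real form $\mathfrak u(t)$ of $\mathfrak{gl}(t)$, is a small precision that the paper leaves implicit.
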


\begin{proof}
We simply need to compute the commutation relations of the $\widetilde\Omega_{j,k}$, and compare them to the commutation relations of the canonical basis $E_{j,k}$ of $\mathfrak{u}(t)$, namely
\begin{equation}
[E_{j,k},E_{j',k'}] = \delta_{kj'}E_{j,k'} - \delta_{jk'}E_{j',k}\,.
\end{equation}
Using Eq.~\eqref{eq-ap:Omega-comm}, we obtain
\begin{align}
\left[\widetilde a^{\dagger (j)}_p \widetilde a^{(k)}_p,
 \widetilde a^{\dagger (j')}_q \widetilde a^{(k')}_q\right]
&=
\widetilde a^{\dagger (j)}_p
\{\widetilde a^{(k)}_p,\widetilde a^{\dagger (j')}_q\}
\widetilde a^{(k')}_q
-
\widetilde a^{\dagger (j')}_q
\{\widetilde a^{(k')}_q,\widetilde a^{\dagger (j)}_p\}
\widetilde a^{(k)}_p
\nonumber\\
&=
\delta_{pq}\delta_{kj'}
\widetilde a^{\dagger (j)}_p \widetilde a^{(k')}_q
-
\delta_{pq}\delta_{jk'}
\widetilde a^{\dagger (j')}_q \widetilde a^{(k)}_p .
\end{align}
Summing over $p$ and $q$ then yields
\begin{equation}
\left[\widetilde\Omega_{j,k},\widetilde\Omega_{j',k'}\right]
=
\delta_{kj'}\widetilde\Omega_{j,k'}
-
\delta_{jk'}\widetilde\Omega_{j',k},
\end{equation}
which are exactly the commutation relations of $\mathfrak{u}(t)$.
\end{proof}

 \medskip

After proving Lemmas~\ref{lem:Omega-in-C-AP} and~\ref{lem:Omega-rep-gl-AP}, we now derive Theorem~\ref{th-ap:PP-generators}.

\begin{theorem}\label{th-ap:PP-generators}
    The commutant $\CC^{\rm PP}_{t,n}$, is generated as
    \begin{equation}\nonumber
\CC^{\rm PP}_{t,n}
=
{\rm span}_{\mathbb C}
\Big\langle
\widetilde\Omega_{j,j+1},
\widetilde\Omega_{j+1,j},
\widetilde\Omega_{1,1}
\Big\rangle \,,\;\text{with } 1\leq j\leq t-1\,.
\end{equation}
\end{theorem}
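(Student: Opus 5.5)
The plan is to prove the two inclusions separately. We then reduce the generating set to the nearest-neighbour elements plus $\widetilde\Omega_{1,1}$.

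\emph{Easy inclusion.} By Lemma~\ref{lem:Omega-in-C-AP}, every $\widetilde\Omega_{j,k}$ lies in $\CC^{\rm PP}_{t,n}$. The commutant is an associative algebra, so all polynomials in these operators lie in it as well. Hence ${\rm span}_{\mathbb C}\langle \widetilde\Omega_{j,j+1},\widetilde\Omega_{j+1,j},\widetilde\Omega_{1,1}\rangle\subseteq \CC^{\rm PP}_{t,n}$.

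\emph{Reverse inclusion.} The steps are as follows.

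First, fix the identification $\HC^{\otimes t}\cong \Lambda(\mathbb C^n\otimes\mathbb C^t)$ of Eq.~\eqref{eq:hilbert-fock-PP}. We send the dressed operators $\widetilde a^{\dagger(j)}_p$ to creation operators for the basis vector $e_p\otimes e_j$. This is a legitimate CAR-algebra isomorphism precisely because of Eq.~\eqref{eq-ap:Omega-comm}: the Jordan--Wigner strings $\Gamma_1\cdots\Gamma_{j-1}$ make all $nt$ modes mutually anticommuting.

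Second, check how the two actions look under this identification. $R(U)^{\otimes t}$ acts as $\Lambda(u\otimes\id_t)$. This holds because $\Gamma_r$ commutes with $R(U)$, so conjugation still gives $\widetilde a^{\dagger(j)}_p\mapsto\sum_q u_{qp}\widetilde a^{\dagger(j)}_q$, and the vacuum is fixed up to a phase. The $\widetilde\Omega_{j,k}$ act as the second-quantized operators $d\Lambda(\id_n\otimes E_{jk})$. By Lemma~\ref{lem:Omega-rep-gl-AP}, these realize $\mathfrak{gl}(t)$, with Lie group $\Lambda(\id_n\otimes g)$.

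Third, invoke the skew $(\U(n),\U(t))$ Howe duality~\cite{aboumrad2022skew}. It gives the multiplicity-free decomposition of Eq.~\eqref{eq:Schur-Weyl-passive-main}. By Eq.~\eqref{eq:howe-general-commutant}, the commutant of $\U(n)$ is then $\bigoplus_\lambda \End(V_\lambda^{\U(t)})$ over pairwise inequivalent $\U(t)$ irreps $\lambda\subseteq n^t$. The algebra generated by the $\widetilde\Omega_{j,k}$ is the image of $\mathscr U(\mathfrak{gl}(t))$. By irreducibility, Burnside's theorem gives the full $\End(V_\lambda)$ on each block. Because the blocks are pairwise inequivalent, the Jacobson density theorem (or Wedderburn) shows the image is the whole direct sum. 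This yields $\CC^{\rm PP}_{t,n}=\mathrm{Im}\,\mathscr U(\mathfrak{gl}(t))$ as in Eq.~\eqref{eq:howe-general-envelope}.

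One subtlety needs care. Inequivalence must hold for the $\mathfrak{gl}(t)$ representations and not merely for the $\mathfrak{sl}(t)$ ones. This is where $\widetilde\Omega_{1,1}$, or equivalently the central element $N=\sum_j\widetilde\Omega_{j,j}$, is needed: two distinct $\lambda$ with the same shape modulo columns of height $t$ differ in $|\lambda|$.

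\emph{Reduction of generators.} The commutation relation of Lemma~\ref{lem:Omega-rep-gl-AP} gives $[\widetilde\Omega_{j,j+1},\widetilde\Omega_{j+1,j+2}]=\widetilde\Omega_{j,j+2}$. Iterating produces every $\widetilde\Omega_{j,k}$ with $j<k$, and the transposed relations produce every $\widetilde\Omega_{j,k}$ with $j>k$. The diagonal elements come from $[\widetilde\Omega_{j,j+1},\widetilde\Omega_{j+1,j}]=\widetilde\Omega_{j,j}-\widetilde\Omega_{j+1,j+1}$, once $\widetilde\Omega_{1,1}$ is supplied. Thus the listed $\mathcal O(t)$ elements generate all of $\mathfrak{gl}(t)$, and therefore the whole commutant.

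\emph{Main obstacle.} I expect the main difficulty to be the second step. One must verify carefully that the dressed identification intertwines $R(U)^{\otimes t}$ with the standard $\Lambda(u\otimes\id)$ action, including signs and the vacuum phase. This is what allows the abstract skew Howe duality, stated for $\Lambda(\mathbb C^n\otimes\mathbb C^t)$, to be applied verbatim to $\HC^{\otimes t}$.
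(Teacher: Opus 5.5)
Your proposal is correct and follows essentially the same route as the paper. Like the paper, you invoke the skew $(\U(n),\U(t))$ Howe duality to identify $\CC^{\rm PP}_{t,n}$ with the image of $\mathscr U(\mathfrak{gl}(t))$ generated by the $\widetilde\Omega_{j,k}$, then reduce to the nearest-neighbour generators plus $\widetilde\Omega_{1,1}$ using the commutation relations. Your extra details fill in steps the paper takes for granted rather than change the approach: the Fock identification, the Burnside/density passage from the multiplicity-free decomposition to the enveloping-algebra statement, and the observation that $\widetilde\Omega_{1,1}$ is what separates blocks that are equivalent as $\mathfrak{sl}(t)$-modules.
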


\begin{proof}

Since the commutation relations in Lemma~\ref{lem:Omega-rep-gl-AP} are isomorphic to those of the canonical basis of $\mathfrak{u}(t)$,
the operators $\widetilde\Omega_{j,k}$ realize a representation $\phi$ of $\mathfrak{u}(t)$. Once this connection is
established, we can invoke the skew Howe duality~\cite{aboumrad2022skew}, which plays for the pair $(\U(n),\U(t))$ the role that the Schur-Weyl duality plays for the pair $(\U(d),S_t)$. In the present setting, this duality asserts that the commutant of the diagonal action of $\U(n)$ on the fermionic Fock space is the image of the universal enveloping
algebra of $\mathfrak{u}(t)$ under the representation generated by the $\widetilde\Omega_{j,k}$. Let us here recall the definition of a universal enveloping algebra.

\begin{supdefinition}[Universal enveloping algebra]\label{def-ap:universal-algebra}
  Let $\g$ be a matrix Lie algebra. The universal enveloping algebra of $\g$, denoted $\mathscr{U}(\g)$, is the unital associative algebra generated by $\g$ subject to the relations $XY-YX=[X,Y]$ $\,\forall X,Y\in\g $.
\end{supdefinition}

Importantly, to better understand Supplemental Definition~\ref{def-ap:universal-algebra}, we can emphasize that once we choose a Lie algebra representation $\varphi:\g\to\mathrm{End}(V)$, it extends uniquely to an algebra homomorphism $\varphi:\mathscr{U}(\g)\to\mathrm{End}(V)$~\cite{hall2013lie}. In particular, if $\varphi(\g)$ is generated by the operators $\widetilde{\Omega}_{j,k}$, then
\begin{equation} \label{eq-ap:universal-algebra}
    \varphi\left(\mathscr{U}(\g)\right) = \Span_{\CBB} \left\langle \widetilde{\Omega}_{j,k} \right\rangle \,,
\end{equation}
is the image of the corresponding representation of the unital associative algebra generated by these operators. That is, the vector space obtained by allowing linear combinations over the complex numbers of finite products of the $\widetilde{\Omega}_{jk}$ generators, together with the identity matrix. Therefore,
\begin{equation}
\CC^{\rm PP}_{t,n} = \phi\left(\mathscr{U}(\mathfrak{u}(t))\right) = {\rm span}_{\mathbb C} \Big\langle\widetilde\Omega_{j,k} \Big\rangle\,,
\qquad 1 \leq j,k \leq t\,.
\end{equation}

Next, let us show that one may reduce to a smaller generating set. First, from Eq.~\eqref{eq-lem:Omega-rep-gl-AP} one finds
\begin{equation}
\left[\widetilde\Omega_{j,j+1},\widetilde\Omega_{j+1,j}\right] = \widetilde\Omega_{j,j} - \widetilde\Omega_{j+1,j+1}\,,
\end{equation}
so once one diagonal operator is available, say $\widetilde\Omega_{1,1}$, all diagonal operators can be generated recursively. In addition, for distinct indices one has
\begin{equation}
\left[\widetilde\Omega_{j,k},\widetilde\Omega_{k,\ell}\right] = \widetilde\Omega_{j,\ell}\,,
\end{equation}
which allows one to generate all off-diagonal operators from nearest-neighboring ones. Thus, we conclude that
\begin{equation}
\CC^{\rm PP}_{t,n} = {\rm span}_{\mathbb C} \Big\langle \widetilde\Omega_{j,j+1}\,,
\widetilde\Omega_{j+1,j}\,,
\widetilde\Omega_{1,1}
\Big\rangle \,.
\end{equation}

\end{proof}

\section{Proof of Theorem 2}
\label{ap:PP-dim}

In this section we present a proof for Theorem~\ref{th:PP-dimension}, which we here recall for convenience
\begin{theorem}[PP commutant dimensions]
\label{ap-th:PP-dimension}
    The dimension of the $t$-th order commutant of the $n$-qubit PP matchgate group is given by
    \begin{equation}
        \dim \CC^{\rm PP}_{t,n}=\prod_{j=0}^{n-1}\frac{j!\,(j+2t)!}{(j+t)!^2}\,.
    \end{equation}
\end{theorem}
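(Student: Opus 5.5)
Using the skew Howe decomposition in Eq.~\eqref{eq:Schur-Weyl-passive-main} together with Eq.~\eqref{eq:howe-general-commutant}, the commutant is a direct sum of full matrix blocks $\End(V_\lambda^{\U(t)})$ over partitions $\lambda\subseteq n^t$ (at most $t$ rows, $\lambda_1\le n$). Hence
\begin{equation}
\dim \CC^{\rm PP}_{t,n}=\sum_{\lambda\subseteq n^t}\left(\dim V_\lambda^{\U(t)}\right)^2 .
\end{equation}
Equivalently, since the pairing is multiplicity-free, we may sum instead $(\dim V^{\U(n)}_{\lambda^T})^2$ over $\lambda^T\subseteq t^n$. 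This second form is the one I would use, because the target product runs over $j=0,\dots,n-1$. It also matches the Weyl-integration viewpoint mentioned in the main text, $\dim\CC^{\rm PP}_{t,n}=\int_{\U(n)}|\chi_{\Lambda}(U)|^{2t}\,dU$ with $\chi_\Lambda(U)=\prod_{i=1}^n(1+z_i)$ the character of $\Lambda(\CBB^n)$.

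\textbf{Step 1: reduce to a Selberg integral.} I would apply the Weyl integration formula on $\U(n)$ with eigenvalues $z_i=e^{i\theta_i}$. This gives
\begin{equation}
\dim\CC^{\rm PP}_{t,n}=\frac{1}{n!}\int_{[0,2\pi]^n}\prod_{i=1}^n|1+e^{i\theta_i}|^{2t}\prod_{i<j}|e^{i\theta_i}-e^{i\theta_j}|^2\,\prod_i\frac{d\theta_i}{2\pi}.
\end{equation}
This is exactly the circular (Morris) form of the Selberg integral with parameters $a=b=t$ and $\gamma=1$. Indeed, $|1+e^{i\theta}|^{2t}=(1+e^{i\theta})^t(1+e^{-i\theta})^t$, and after the substitution $z\to -z$ one gets the Morris constant-term identity
\begin{equation}
\mathrm{CT}\prod_i(1-z_i)^a(1-1/z_i)^b\prod_{i\neq j}(1-z_i/z_j)^{\gamma}
\end{equation}
at $\gamma=1$.

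\textbf{Step 2: evaluate.} The Morris identity states that this constant term equals
\begin{equation}
\prod_{j=0}^{n-1}\frac{(a+b+j\gamma)!\,((j+1)\gamma)!}{(a+j\gamma)!\,(b+j\gamma)!\,\gamma!}.
\end{equation}
Setting $a=b=t$ and $\gamma=1$ gives $\prod_{j}\frac{(2t+j)!\,(j+1)!}{(t+j)!^2}$. The factor $\prod_{i\ne j}$ counts ordered pairs, so it carries an extra $n!$ relative to $\prod_{i<j}|\cdot|^2$. Dividing by $n!=\prod_j (j+1)$ turns $(j+1)!$ into $j!$, which yields the claimed product.

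\textbf{Main obstacle.} The main risk is bookkeeping of normalizations: the $1/n!$ in Weyl's formula, ordered versus unordered Vandermonde pairs, and the sign change $z\to-z$. I would sanity-check the final formula against $t=1$, where the product telescopes to $n+1$, and against the $t=2$ sum $\frac{(n+1)(n+2)^2(n+3)}{12}$ computed in Sec.~\ref{sec:PPt2}. If citing Morris is undesirable, an alternative route is to sum $(\dim V_\lambda^{\U(t)})^2$ over $\lambda\subseteq n^t$ directly. That sum counts pairs of semistandard tableaux and, by RSK/MacMahon-type box enumeration, equals the number of plane partitions in a $t\times t\times n$ box, $\prod_{i,j=1}^{t}\frac{i+j+n-1}{i+j-1}$. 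This would then need to be massaged into the stated product, which is a direct but tedious factorial manipulation.
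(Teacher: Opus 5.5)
Your Steps 1--2 are the paper's proof: write $\dim\CC^{\rm PP}_{t,n}=\int_{\U(n)}|\chi_\HC|^{2t}$, apply Weyl integration with $|W|=n!$, rotate the torus by $\pi$, evaluate the circular Selberg (Morris) integral at $a=b=t$, $\gamma=1$, and divide once by $n!$. The final formula is correct.

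Two of your justifications are wrong and should be fixed.

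First, summing $(\dim V^{\U(n)}_{\lambda^T})^2$ is \emph{not} equivalent to your first form. That sum is the dimension of the commutant of the copy $\U(t)$ action. Already for $t=1$, $n=2$ it gives $1+4+1=6$ instead of $3$. The character integral equals $\sum_\lambda m_\lambda^2$ with $m_\lambda=\dim V^{\U(t)}_\lambda$, which is your first form. So the link from the commutant to the Weyl integral should go through the multiplicities, not through that false identity.

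Second, on the unit torus $\prod_{i\neq j}(1-z_i/z_j)=\prod_{i<j}|z_i-z_j|^2$ holds identically. Hence the Morris constant term equals your Step 1 integral with no extra factor. The single $1/n!$ comes solely from $1/|W|$. Taking your ordered-pairs remark at face value would divide by $(n!)^2$ and give the wrong answer.

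Your plane-partition alternative is sound. The diagonal slices of a $t\times t$ plane partition give two Gelfand--Tsetlin patterns sharing the top row $\lambda$. The resulting box count $\prod_{i,j=1}^t\frac{n+i+j-1}{i+j-1}$ is exactly the identity the paper records right after its proof.
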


\begin{proof}

    To prove the claimed formula for the dimension of $\CC^{\rm PP}_{t,n}$ we will first show how $\dim(\CC^{\rm PP}_{t,n})$ maps to a group integral of the squared norm of character of the representation $\HC^{\otimes t}$ of $G=\U(n)$, and then explicitly carry out said integral.

    We begin by recalling the basic representation-theoretic result known as Maschke's theorem~\cite{serre1977linear,fulton1991representation}, or isotypic decomposition. Given a compact group $G$, acting via some unitary representation $R$ on a vector space $V$, Maschke's theorem states that there exists a basis in which $R(g\in G)$ and $V$ are maximally block-diagonalized
    \begin{equation}
    \label{ap-eq:block-diagonalization}
        R(g) \simeq \bigoplus_{\lambda} I_{m_\lambda} \otimes r_{\lambda}(g)\,, \quad V \simeq \bigoplus_\lambda \mathbb{C}^{m_\lambda} \otimes V_{\lambda}\,.
    \end{equation}
    Above, $r_\lambda:G\rightarrow \mathbb{U}(V_\lambda)$ is an irreducible representation, usually abbreviated as irrep, of $G$ labeled by $\lambda$, and $V_\lambda$ the associated irreducible $G$-module. The integer $m_\lambda$ is called the multiplicity of the irrep labeled by $\lambda$, and it counts the number of times each irrep $r_\lambda$ appears in $R$, as well as the number of occurrences of $V_\lambda$ in $V$.

    Now consider the space of linear operators acting on $V$, i.e. $\End(V)\simeq V\otimes V^*$. The representation $R$ naturally extends to $\End(V)$ as its adjoint, and the ensuing isotypic decomposition of $\End(V)$ reads
    \begin{equation}
    \label{ap-eq:isotypic_LV}
        \End(V)\cong V\otimes V^*\cong
        \left(\bigoplus_\lambda \mathbb{C}^{m_\lambda} \otimes V_\lambda\right)\otimes
        \left(\bigoplus_\mu \mathbb{C}^{m_\mu} \otimes \left(V_\mu\right)^*\right)
        \cong \bigoplus_{\lambda,\mu} \mathbb{C}^{m_\lambda m_\mu}\otimes(V^\lambda\otimes \left(V^\mu\right)^*)\,.
    \end{equation}
    By definition, the commutant $\End_G(V)$ is the subspace of operators in $\End(V)$ that commute with $R$, equivalently that are left unchanged by its induced adjoint action. It follows from the previous decomposition that $\dim(\End_G(V))$ is equal to the multiplicity of the trivial representation of $\End(V)$ appearing in its isotypical decomposition~\eqref{ap-eq:isotypic_LV}.
    Given the compactness of $G$ and the finite dimensionality of $V$, Schur's lemma~\cite{fulton1991representation} gives that $V_\lambda\otimes \left(V_\mu\right)^*$ contains, with multiplicity one, the trivial irrep if and only if $\lambda=\mu$. Hence, the next lemma immediately follows.
    \begin{lemma}[Commutant's dimension as a sum of squared multiplicities]
    \label{ap-lem:dim_comm_sum_squares_mult} The dimension of the commutant of a group's representation acting on $V$, for the group G, is given by
        \begin{equation}
            \dim(\End_G(V)) =\sum_{\lambda} m_\lambda^2\,.
        \end{equation}
    \end{lemma}
    Now we can relate the above formula to the characters of the representation.
    \begin{lemma}[Character integral equals the sum of squared multiplicities]
    \label{ap-lem:char_integral_sum_squares}
        Let $d\mu$ be the unique normalized Haar measure of $G$, and let $\chi_V(g)=\Tr[R(g)]$ denote the character of the representation $R$ of $G$. Then
        \begin{equation}
            \int_G |\chi_V(g)|^2 d\mu(g)=\sum_{\lambda} m_\lambda^2\,.
        \end{equation}
    \end{lemma}
    \begin{proof}
        Using the decomposition $\chi_V=\sum_\lambda m_\lambda \chi_\lambda$ and the orthonormality of irreducible characters in $L^2(G)$~\cite{fulton1991representation},
        \begin{equation}
            \int_G |\chi_V|^2 d\mu(g)=\int_G \left(\sum_\lambda m_\lambda \chi_\lambda\right)\left(\sum_\mu m_\mu {\chi_\mu^*}\right)d\mu(g)
            =\sum_{\lambda,\mu} m_\lambda m_\mu \int_G \chi_\lambda {\chi_\mu^*} d\mu(g)
            =\sum_\lambda m_\lambda^2\,.
        \end{equation}
    \end{proof}

    Combining Lemmas~\ref{ap-lem:dim_comm_sum_squares_mult} and~\ref{ap-lem:char_integral_sum_squares} we get to 
    \begin{lemma}[Commutant's dimension as character integral]
    \label{ap-lem:comm_size_character_int}
        \begin{equation}
            \dim(\End_G(V)) = \int_G |\chi_V|^2 d\mu(g)\,.
        \end{equation}
    \end{lemma}

    We now apply this result to our case of interest $G=\U(n)$, $V=\HC^{\otimes t}$. 
    Since traces, and hence characters, factorize under tensor products we get $\chi_{\HC^{\otimes t}}(g)=\chi_\HC(g)^t$ thus
    \begin{equation}
    \label{ap-eq:comm_dim_int}
        \dim \mathcal{C}^{\rm PP}_{t,n}
        =\int_{\U(n)} |\chi_\HC(g)|^{2t}\,d\mu(g)\,. 
    \end{equation}
    We can explicitly carry out the integral in Eq.~\eqref{ap-eq:comm_dim_int} by resorting to the Weyl integration formula~\cite{brocker2003representations}. The latter states that, for a compact connected Lie group $G$ with maximal torus $T$ and Weyl group $W$, the group average of any class function $f$, i.e. a function invariant under conjugation of its argument by the group, can be written as
    \begin{equation}
    \label{ap-eq:weyl_integral}
        \int_G f(g) d\mu(g)=\frac{1}{|W|}\int_T f(\tau)|\delta(\tau)|^2d\mu(\tau)\,,
    \end{equation}
    where $\delta(t)$ is the Weyl denominator. Since the character is a class function, we can apply this formula to compute Eq.~\eqref{ap-eq:comm_dim_int}.
    Here, a maximal torus $T$ of a Lie group $G$ is its maximal compact, connected, abelian Lie subgroup; the Weyl group $W$ of $G$ is defined as $W(G) = N_G(T)/T$, for $N_G(T)$ the normalizer of $T$ in $G$, i.e. the subgroup of $G$ that sends $T$ to itself under group conjugation minus $T$ itself; and the Weyl denominator, sometimes also called the Weyl determinant factor, is
    \begin{equation}
    \label{ap-eq:weyl-denominator-general}
        \delta(\tau)
        =
        \prod_{\alpha > 0}
        \left(e^{\frac{\alpha(\tau)}{2}}-e^{-\frac{\alpha(\tau)}{2}}\right)\,,
    \end{equation}
    for $\alpha(\tau)$ the roots of $G$ relative to $T$, the product running only over the positive roots.
    We are now ready to explicitly evaluate all the needed ingredients. 
    However, let us first recall that the group of PP fermionic Gaussian unitaries acts naturally on 
    \begin{equation}
        V=\Lambda(\mathbb{C}^n)=\bigoplus_{r=0}^n \Lambda^r(\mathbb{C}^n)\cong\HC\,,
    \end{equation}
    that is, on the Fock space of $n$ fermionic modes, which decomposes in the sum of the alternating fixed particle number sectors.
    If $U:\mathbb{C}^n\rightarrow \mathbb{C}^n$ is the standard representation of $\U(n)$, then its induced action on the $k$-th exterior power of $\mathbb{C}^n$ is
    \begin{align}
        &\Lambda^k(U): \Lambda^k(\mathbb{C}^n) \rightarrow \Lambda^r(\mathbb{C}^n)\\
        &\Lambda^k(U)(v_1\wedge \dots \wedge v_k)= (Uv_1\wedge \dots \wedge Uv_k)\,.
    \end{align}
    Hence, on the whole Fock space we have the representation
    \begin{align}
        &R(U):\HC\rightarrow \HC\\
        &R(U)=\bigoplus_{k=0}^n \Lambda^k(U)\,,
    \end{align}
    which is sometimes referred to as the second quantization of $U$.
    
    We now identify the maximal torus $T$ of $G=\U(n)$. For a connected compact Lie group $G$ with associated Lie algebra $\mathfrak{g}$, the maximal torus corresponds to the group obtained by exponentiating the Cartan subalgebra $\mathfrak{t}\subseteq \mathfrak{g}$, the latter being the maximal abelian subalgebra contained in $\mathfrak{g}$.
    Applying this to $G=\U(n)$ one readily finds that its maximal torus $T$ is given by the group of diagonal unitary matrices $T=\{\tau={\rm diag}(e^{i\theta_1},\dots,e^{i\theta_n})\,|\, (\theta_1,\dots,\theta_n)\in [-\pi,\pi]^n\}$, which can be recovered from the fact that a convenient choice for the Cartan $\mathfrak{t}\subset\mathfrak{u}(n)$ is given by the $n\times n$ skew-hermitian matrices $\{d_j\,|\, (d_j)_{l,m} = i \delta_{l,j}\delta_{m,j}\}$.   
    Next, consider the character $\chi_\HC(U)=\Tr[R(U)]$. Choose a basis of the single-particle space $\mathbb{C}^n$ where $U$ is diagonal, then $U={\rm diag}(\eta_1,\dots,\eta_n)$ for $\eta_i=e^{i\theta_i}$ and $\theta_i\in[-\pi,\pi]$. By explicit calculation one finds
    \begin{align}
        \chi_\HC(U)&=\Tr[R(U)]=\sum_{k=0}^n\Tr[\Lambda^k(U)]\\
        &=1+\sum_{i=1}^n\eta_i+\sum_{1\leq i_1 < i_2 \leq n} \eta_{i_1}\eta_{i_2} + \dots + \sum_{1\leq i_1 <\dots < i_n \leq n} \eta_{i_1}\dots\eta_{i_n}\\
        &=\prod_{i=1}^n(1+\eta_i) = \det(\id + U)\,.
    \end{align}
    Since when the elements of the torus $T$, are diagonal matrices, the character restricted to $T$ easily reads
    \begin{equation}
    \label{ap-eq:pp_torus_char}
        \chi_\HC(\tau) = \prod_{j=1}^n(1+e^{i\theta_j}) \qquad \tau={\rm diag}(e^{i\theta_1},\dots,e^{i\theta_n})\,.
    \end{equation}
    We now turn our attention to the Weyl determinant $\delta(\tau)$. 
    The positive roots $\alpha>0$ of $\mathfrak{u}(n)$ are
    \begin{equation}
        \Phi^+ = \{e_i-e_j\,|\, 1\leq i<j \leq n\}\,
    \end{equation}
    after the usual identification of the Cartan $\mathfrak{t}$ with an $n$-dimensional euclidean space with canonical basis $\{e_j\}_{j=1}^n$.
    Then, for any $\tau={\rm diag}(e^{i\theta_1},\dots,e^{i\theta_n})$ one gets
    \begin{align}
    \label{ap-eq:pp_weyl_determ}
        |\delta(\tau)|^2&=\prod_{1\leq l<m \leq n}|e^{i\frac{\theta_l-\theta_m}{2}}-e^{-i\frac{\theta_l-\theta_m}{2}}|^2\\
        &=\prod_{1\leq l<m \leq n}|e^{-i\frac{\theta_m}{2}}\left(e^{i\theta_l}-e^{i\theta_m}\right)e^{-i\frac{\theta_l}{2}}|^2\\
        &=\prod_{1\leq l<m \leq n}|e^{i\theta_l}-e^{i\theta_m}|^2\,.
    \end{align}
    Lastly, the Weyl group of $\U(n)$ is $W\cong S_n$~\cite{goodman2009symmetry}, and hence has size $|W|=n!$.
    Furthermore, in the chosen parametrization, and since it is Abelian, the Haar measure on $T$ is $d\mu(\tau)=\prod_{j=1}^n \frac{d\theta_j}{2\pi}$.
    Substituting Eqs.~\eqref{ap-eq:pp_torus_char} and~\eqref{ap-eq:pp_weyl_determ} into Eq.~\eqref{ap-eq:weyl_integral} together with $|W|$ and $d\mu(\tau)$ we get to
    \begin{align}
    \label{ap-eq:dim-pp-integral}
        \dim(\CC_{t,n}^{\rm PP})&=\frac{1}{(2\pi)^n n!}\int_{[-\pi,\pi]^n} \prod_{j=1}^n|1+e^{i\theta_j}|^{2t} \prod_{1\leq l<m \leq n}|e^{i\theta_l}-e^{i\theta_m}|^2 \prod_{j=1}^n d\theta_j \\
        &=\frac{1}{(2\pi)^n n!}\int_{[-\pi,\pi]^n} \prod_{j=1}^n|1-e^{i\theta_j}|^{2t} \prod_{1\leq l<m \leq n}|e^{i\theta_l}-e^{i\theta_m}|^2 \prod_{j=1}^n d\theta_j \,
    \end{align}
    where in the last line we performed the change of variables $\theta_j\to\theta_j+\pi$ and noticed the periodicity of the integral to keep the integration domain fixed at $[-\pi,\pi]^n$.
    Written in this form, the integral is exactly the special Selberg integral
    \begin{align}
    \label{ap-eq:special-selberg-integral}
        M_n(a,b,\gamma)&=\frac{1}{(2\pi)^n}\int_{[-\pi,\pi]^n} \prod_{j=1}^n e^{i\frac{\theta_j(a-b)}{2}}|1-e^{i\theta_j}|^{a+b} \prod_{1\leq l<m \leq n}|e^{i\theta_l}-e^{i\theta_m}|^{2\gamma} \prod_{j=1}^n d\theta_j \\
        &=\prod_{j=0}^{n-1}\frac{\Gamma(1+a+b+j\gamma)\Gamma(1+(j+1)\gamma)}{\Gamma(1+a+j\gamma)\Gamma(1+b+j\gamma)\Gamma(1+\gamma)}\,
    \end{align}
    where the last line holds for $a,b,\gamma\in\mathbb C$ such that
    \begin{equation}
    \label{ap-eq:special-selberg-conditions}
        \Re(a+b+1)>0,\quad \Re(\gamma)>-\min\{1/n,\Re(a+b+1)/(n-1)\}\,.
    \end{equation}
    Recognizing that Eq.~\eqref{ap-eq:dim-pp-integral} is proportional to Eq.~\eqref{ap-eq:special-selberg-integral} for $a=b=t$ and $\gamma=1$, which satisfy the conditions in Eq.~\eqref{ap-eq:special-selberg-conditions}, we find the claimed expression
    \begin{align}
        \dim(\CC_{t,n}^{\rm PP})&=\frac{1}{n!}M_n(t,t,1)=\frac{1}{n!}\prod_{j=0}^{n-1}\frac{\Gamma(1+2t+j)\Gamma(1+(j+1))}{\Gamma(1+t+j)\Gamma(1+t+j)\Gamma(2)} \\
        &=\prod_{j=0}^{n-1} \frac{j!(j+2t)!}{(j+t)!^2}\,,\label{eq:dim-passive-proof}
    \end{align}
    where in the last line we have used the standard property of the gamma function over the integers
    $\Gamma(1+m)=m!$.
\end{proof}

To finish, we also provide an alternative equation for the dimension of the commutant. We begin by recalling the Weyl dimension formula. Given a semi-simple Lie algebra $\mathfrak{g}$ with positive roots $\Delta_+$ and Weyl vector $\omega=\frac{1}{2}\sum_{\alpha \in\Delta_+}\alpha$, and an irreducible highest-weight module $L(\Gamma)$ of dominant highest-weight $\Gamma$, then 
\begin{equation}
    \dim(L(\Gamma))=\prod_{\alpha\in\Delta_+}\frac{(\Gamma+\omega,\alpha)}{(\omega,\alpha)}\,.
\end{equation}
For $\mathbb{U}(k)$ the positive roots are $\Delta_+=\{e_i-e_j\,|\,1\leq i<j\leq k\}$, where $e_1,\ldots,e_k$ form an orthonormal basis of $\mathbb{R}^k$, and we can express $\omega=\frac{1}{2}\sum_{i\leq j}(e_i-e_j)$. Then, given a $\Gamma=(\lambda_1,\ldots,\lambda_k)$ and a positive root $\alpha=e_i-e_j$ one obtains
\begin{equation}
    (\Gamma+\omega,\alpha)=\lambda_i+\omega_i-\lambda_j-\omega_j\,,
\end{equation}
where we defined $\omega_i=(e_i,\omega)$. Then, since $\omega_i-\omega_j=j-1$ we obtain
\begin{equation}
    \frac{(\Gamma+\omega,\alpha)}{(\omega,\alpha)}=\frac{\lambda_i-\lambda_j+j-i}{j-1}\,,
\end{equation}
and hence 
\begin{equation}
    \dim(V_{\lambda}^{\mathbb{U}(t)})=\prod_{1\leq i<j\leq t}\frac{\lambda_i-\lambda_j+j-i}{j-1}\,.
\end{equation}
Putting it all together we obtain an alternative formula for the commutant dimension
\begin{equation}
    \dim\left(\End_{\mathbb{U}(n)}\left(\Lambda(\mathbb{C}^n\otimes \mathbb{C}^t)\right)\right)=\sum_{\lambda\subseteq n^t}\left(\prod_{1\leq i<j\leq t}\frac{\lambda_i-\lambda_j+j-i}{j-1}\right)^2= 
\prod_{i=1}^t\prod_{j=1}^t
\frac{n+i+j-1}{i+j-1}\,.
\end{equation}

\section{Scaling of the dimension of the commutant $\CC_{t,n}^{\rm PP}$}
\label{ap:scaling-dim-PP}

After obtaining the formula for $\dim\left(\CC_{t,n}^{\rm PP}\right)$ in Theorem~\ref{ap-th:PP-dimension}, namely,
\begin{equation} \label{eq-ap:dim-PP-com2}
\dim \left(C^{\mathrm{PP}}_{t,n}\right) =\prod_{j=0}^{n-1}\frac{j!(j+2t)!}{(j+t)!^2}\,,
\end{equation}
we study its scaling. We begin with fixed $t$ and scaling $n$. The first step is to  rewrite the factor
\begin{equation}
    \frac{j!(j+2t)!}{(j+t)!^2}\,.
\end{equation}
We will do so by using the elementary identity 
\begin{equation}
(m+l)! =  m!\,(m+1)\cdots (m+l) = m!\prod_{s=1}^{l}(m+s)\,,
\end{equation}
 We find
 \begin{equation}
    (j+2t)! = j!\,\prod_{s=1}^{2t}(j+s)\,, \qquad (j+t)! = j!\,\prod_{s=1}^{t}(j+s)\,.
\end{equation}
And so we arrive at
\begin{align}
    \frac{j!(j+2t)!}{(j+t)!^2}  = \frac{\displaystyle j!\,j!\,\prod_{s=1}^{2t}(j+s)}
{\displaystyle j!^2\left(\prod_{s=1}^{t}(j+s)\right)^2} = \frac{\prod_{s=1}^{2t}(j+s)} {\left(\prod_{s=1}^{t}(j+s)\right)^2}= \frac{\left(\prod_{s=1}^{t}(j+s)\right) \left(\prod_{s=t+1}^{2t}(j+s)\right)}{\left(\prod_{s=1}^{t}(j+s)\right)^2}\,.
\end{align}
Hence
\begin{equation}
    \frac{j!(j+2t)!}{(j+t)!^2} = \frac{\prod_{s=t+1}^{2t}(j+s)}{\prod_{s=1}^{t}(j+s)}\,.
\end{equation}
Reindexing the numerator by writing $s=t+k$ with $k=1,\dots,t$, we get
\begin{equation}
\frac{j!(j+2t)!}{(j+t)!^2} = \prod_{k=1}^{t}\frac{j+t+k}{j+k}.
\label{eq:single-factor-rewrite}
\end{equation}
Substituting \eqref{eq:single-factor-rewrite} into \eqref{eq-ap:dim-PP-com2} and exchanging the order of the products then gives
\begin{equation}
    \dim \left(C^{\mathrm{PP}}_{t,n}\right) = \prod_{k=1}^{t}\prod_{j=0}^{n-1}\frac{j+t+k}{j+k}\,.
\end{equation}
We now evaluate the product explicitly,
\begin{equation}
    \prod_{j=0}^{n-1}(j+t+k) = (t+k)(t+k+1)\cdots(n+t+k-1) = \frac{(n+t+k-1)!}{(t+k-1)!},
\end{equation}
and similarly,
\begin{equation}
    \prod_{j=0}^{n-1}(j+k) = k(k+1)\cdots(n+k-1) = \frac{(n+k-1)!}{(k-1)!}\,.
\end{equation}
Therefore,
\begin{equation}
     \dim\left(C^{\mathrm{PP}}_{t,n}\right) = \prod_{k=1}^{t}\prod_{j=0}^{n-1}\frac{j+t+k}{j+k} = \prod_{k=1}^{t} \frac{(n+t+k-1)!}{(t+k-1)!}\,
\frac{(k-1)!}{(n+k-1)!}\,.
\end{equation}
Finally, for each fixed $k$, the factor
\begin{equation}
    \frac{(n+t+k-1)!(k-1)!}{(n+k-1)!(t+k-1)!}
=
\frac{(k-1)!}{(t+k-1)!}\,
\prod_{l=1}^{t}(n+k+l)
\end{equation}
is a polynomial in $n$ of degree $t$, with leading coefficient $\frac{(k-1)!}{(t+k-1)!}$. Since there are $t$ such factors, their product is a polynomial in $n$ of degree $t^2$ and its leading coefficient is $\prod_{k=1}^{t}\frac{(k-1)!}{(t+k-1)!}$.

\medskip

As a sanity check, for the first few values of $t$ one finds
\begin{align}
    t=1:&\qquad \dim\left(\CC^{\mathrm{PP}}_{1,n}\right)=n+1\,,\nonumber \\
t=2:\qquad \dim&\left(\CC^{\mathrm{PP}}_{2,n}\right)
=\frac{(n+1)(n+2)^2(n+3)}{12}\,, \nonumber \\
t=3:\qquad \dim\left(\CC^{\mathrm{PP}}_{3,n}\right)&
=\frac{(n+1)(n+2)^2(n+3)^3(n+4)^2(n+5)}{8640}.
\end{align}
These formulas are all consistent with the general scaling $\dim\left(\CC^{\mathrm{PP}}_{t,n}\right)\in\Theta\left(t^2\right)$ for fixed $t$.

Next, we study the scaling for fixed $n$ and scaling $k$. Using the same strategy as above, we now rewrite each factor as
\begin{equation}
(j+2t)! = (2t)!\prod_{k=1}^{j}(2t+k)\,,\qquad (j+t)! = t!\prod_{k=1}^{j}(t+k)\,.
\end{equation}
Substituting these expressions into the $j$-th factor yields
\begin{equation}
\frac{j!(j+2t)!}{(j+t)!^2} = j!\,\frac{(2t)!}{(t!)^2}\prod_{k=1}^{j}\frac{2t+k}{(t+k)^2}\,.
\end{equation}
Thus, we find
\begin{equation}
\dim\left(C^{\mathrm{PP}}_{t,n}\right) = \left(\frac{(2t)!}{(t!)^2}\right)^n \left(\prod_{j=0}^{n-1}j!\right) \prod_{j=0}^{n-1}\prod_{k=1}^{j}\frac{2t+k}{(t+k)^2}\,.
\end{equation}
We next note that for a fixed $k \in \{1,\dots,n-1\}$, the factor $\frac{2t+k}{(t+k)^2}$ appears once for each $j$ such that $k \leq j \leq n-1$, hence exactly $n-k$ times. Therefore,
\begin{equation}
\prod_{j=0}^{n-1}\prod_{k=1}^{j}\frac{2t+k}{(t+k)^2} = \prod_{k=1}^{n-1}\left(\frac{2t+k}{(t+k)^2}\right)^{n-k}\,.
\end{equation}
Thus we arrive at the identity
\begin{equation}
\dim\left(C^{\mathrm{PP}}_{t,n}\right) = \binom{2t}{t}^{\,n}\left(\prod_{j=0}^{n-1}j!\right)\prod_{k=1}^{n-1}\left(\frac{2t+k}{(t+k)^2}\right)^{n-k}\,.
\end{equation}

\medskip

Let us now analyze the asymptotics as $t \to \infty$, for $n$ fixed. By Stirling's approximation,
\begin{equation}
\binom{2t}{t} \sim \frac{4^t}{\sqrt{\pi t}}.
\end{equation}
Moreover, for each fixed $k$,
\begin{equation}
\frac{2t+k}{(t+k)^2} = \frac{1}{t}\,\frac{2+k/t}{(1+k/t)^2} \sim \frac{2}{t}.
\end{equation}
Since $n$ is fixed, 
\begin{equation}
\prod_{k=1}^{n-1}\left(\frac{2t+k}{(t+k)^2}\right)^{n-k} \sim \prod_{k=1}^{n-1}\left(\frac{2}{t}\right)^{n-k}.
\end{equation}
Substituting these asymptotics into the exact formula gives
\begin{equation}
\dim\left(C^{\mathrm{PP}}_{t,n}\right) \sim \left(\frac{4^t}{\sqrt{\pi t}}\right)^n \left(\prod_{j=0}^{n-1}j!\right)\prod_{k=1}^{n-1}\left(\frac{2}{t}\right)^{n-k}\,.
\end{equation}
We now simplify the expressions. First,
\begin{equation}
\left(\frac{4^t}{\sqrt{\pi t}}\right)^n = 4^{nt}\,\pi^{-n/2}\,t^{-n/2}\,.
\end{equation}
Second,
\begin{equation}
\prod_{k=1}^{n-1}\left(\frac{2}{t}\right)^{n-k} = 2^{\sum_{k=1}^{n-1}(n-k)}\,t^{-\sum_{k=1}^{n-1}(n-k)}\,.
\end{equation}
Using the identity
\begin{equation}
\sum_{k=1}^{n-1}(n-k) = \sum_{r=1}^{n-1}r = \frac{n(n-1)}{2}\,,
\end{equation}
we obtain
\begin{equation}
\prod_{k=1}^{n-1}\left(\frac{2}{t}\right)^{n-k} = 2^{n(n-1)/2}\,t^{-n(n-1)/2}\,.
\end{equation}
Combining everything yields
\begin{equation}
\dim\!\bigl(C^{\mathrm{PP}}_{t,n}\bigr) \sim \pi^{-n/2}2^{n(n-1)/2}\left(\prod_{j=0}^{n-1}j!\right)4^{nt}\,t^{-n/2-n(n-1)/2}\,.
\end{equation}
Finally, we conclude that
\begin{equation}
\dim\left(C^{\mathrm{PP}}_{t,n}\right) \sim K_n\,4^{nt}\,t^{-n^2/2}\,,
\end{equation}
where
\begin{equation}
K_n = \pi^{-n/2} 2^{n(n-1)/2} \prod_{j=0}^{n-1} j!\,.
\end{equation}

\section{GT method for PP fermionic Gaussian unitaries}
\label{ap:PP_GT_method}

Here we describe in detail how to apply the GT method to construct a basis of the commutant $\CC^{\rm PP}_{t,n}$.

Let us start by recalling that the Fock space of $n$ particles is given by 
\begin{equation}
\label{eq:Fock-space-decomp}
    \FC_n=\Lambda(\mathbb{C}^n)=\bigoplus_{r=1}^n\Lambda^r(\mathbb{C}^n)\,,
\end{equation}
where $\Lambda^r$ is the anti-symmetrizer operator of $r$ copies of the space its acts on. Then, we recall that PP fermionic Gaussian unitaries represent the second quantized representation $R$ of $\mathbb{U}(n)$ which acts over creation and annihilation operators as
\begin{equation}
    R(U)a_i\ad R\ad(U)=\sum_{ij}U_{ji}a_j\ad\,,\quad R(U)a_i R\ad(U)=\sum_{ij}\overline{U}_{ji}a_j\,,\quad U\in\mathbb{U}(n)
\end{equation}
where $\overline{U}_{ji}$ denotes the conjugate of the matrix entry $U_{ji}$. 
When taking $t$-copies of this Fock space $\FC_n^{\otimes t}$, the action of PP Gaussian unitaries becomes $R(U)^{\otimes t}$.  
Now we leverage the following important identification~\cite{brosnan2019notes,cautis2014webs}
\begin{equation}
     \FC_n^{\otimes t}=\Lambda(\mathbb{C}^n\otimes \mathbb{C}^t)\,,
\end{equation}
and we invoke the Howe duality, which states that over $\FC_n^{\otimes t}$, the pair $(\mathbb{U}(n),\mathbb{U}(t))$ admits representations which are mutually commuting. Specifically, $\mathbb{U}(n)$ acts on $\FC_n^{\otimes t}$ through the representation $R^{\otimes t}$, whereas $\mathbb{U}(t)$ acts on $\FC_n^{\otimes t}$ via the representation whose infinitesimal action is generated by the operators $\widetilde\Omega_{j,k}$ defined in the main text.
As such, we know that under their joint action, the $t$ copies of the Fock space decompose as
\begin{equation}
\label{eq:Schur-Weyl-passive}
    \FC_n^{\otimes t}\cong\bigoplus_{\lambda\subseteq n^t} V_{\lambda^T}^{\mathbb{U}(n)}\otimes V_{\lambda}^{\mathbb{U}(t)}\,,
\end{equation}
where the sum runs over partitions, or Young diagrams, $\lambda$ that fit inside an $n\times t$ rectangle. Here we have also defined $\lambda^T$ as the partition obtained by transposing the Young diagram. For convenience, we recall that a partition $\lambda=(\lambda_1,\lambda_2,\cdots , \lambda_k)$ with $\lambda_1\geq\lambda_2\geq \cdots \geq \lambda_k\geq 0$ can always be drawn as a Young diagram by stacking $\lambda_1$ boxes on top of $\lambda_2$ boxes, and so on. For instance
\begin{equation}
    \lambda=(3,1,1)=\ydiagram{3,1}\,,\quad \text{leads to}\quad \lambda^T=(2,1,1)=\ydiagram{2,1,1}\,.
\end{equation}
At this point, we note the interesting fact that given a $\lambda$ (or concomitantly a $\lambda^T$), we can check to which particle number $r$ sector $\Lambda^r(\mathbb{C}^n\otimes \mathbb{C}^t)$ the irrep $V_{\lambda^T}^{\mathbb{U}(n)}\otimes V_{\lambda}^{\mathbb{U}(t)}$ belongs to by simply computing
\begin{equation}
\label{eq:particle-lambda}
    r=|\lambda|=|\lambda^T|\,.
\end{equation}

Importantly, from the decomposition of Eq.~\eqref{eq:Schur-Weyl-passive} we find the commutant as
\begin{equation}
   \CC_{t,n}^{\rm PP}=\End_{\mathbb{U}(n)}\left(\Lambda(\mathbb{C}^n\otimes \mathbb{C}^t)\right)=\bigoplus_{\lambda\subseteq n^t} \End\left(V_{\lambda}^{\mathbb{U}(t)}\right)\,,
\end{equation}
whose associated dimension is
\begin{equation}
   \dim(\CC_{t,n}^{\rm PP})= \dim\left(\End_{\mathbb{U}(n)}\left(\Lambda(\mathbb{C}^n\otimes \mathbb{C}^t)\right)\right)=\sum_{\lambda\subseteq n^t}\dim\left(V_{\lambda}^{\mathbb{U}(t)}\right)^2\,.
\end{equation}

We now turn our attention to constructing a basis of $\CC_{t,n}^{\rm PP}$. We proceed by applying the Gelfand--Tsetlin construction, which for the unitary group is classical and well studied~\cite{gelfand1950finite,zhelobenko1973compact,molev2006gelfand}. 
The sequences of restrictions happen along the group chain
\begin{equation}
    \mathbb U(1)\subset \mathbb U(2)\subset \cdots \subset \mathbb U(t)\,,
\end{equation}
or equivalently along the Lie algebra chain
\begin{equation}
    \mathfrak u(1)\subset \mathfrak u(2)\subset \cdots \subset \mathfrak u(t)\,.
\end{equation}
A GT pattern with top row $\lambda=\lambda^{(t)}$ is particularly a chain
\begin{equation}
    \lambda^{(t)}\succ \lambda^{(t-1)}\succ \cdots \succ \lambda^{(1)}\,,
\end{equation}
where
\begin{equation}
    \lambda^{(m)}=(\lambda_1^{(m)},\dots,\lambda_m^{(m)})
\end{equation}
is a dominant highest-weight of $\mathbb U(m)$, and consecutive rows satisfy the standard unitary interlacing inequalities~\cite{molev2006gelfand}
\begin{equation}
    \lambda_i^{(m)}\ge \lambda_i^{(m-1)}\ge \lambda_{i+1}^{(m)},
    \qquad i=1,\dots,m-1\,.
\end{equation}
Since every branching step $\mathbb U(m)\downarrow \mathbb U(m-1)$ is multiplicity-free~\cite{molev2006gelfand}, every such chain selects a unique basis vector in $V_\lambda^{\mathbb U(t)}$. In particular,
\begin{equation}
    |\GT(\lambda)|=\dim V_\lambda^{\mathbb U(t)}\,,
\end{equation}
where we denoted by $|\GT(\lambda)|$ the number of valid GT patterns inside the irrep $V_\lambda^{\mathbb U(t)}$.

We now explicit how a GT path is represented as an operator.
At the operator level, the dual algebra is generated by the $\widetilde\Omega_{jk}$. A convenient Cartan family is given by
\begin{equation}
    H_p=\widetilde\Omega_{pp}-\widetilde\Omega_{p+1,p+1},
    \qquad p=1,\dots,t-1\,,
\end{equation}
together with the central particle-number operator
\begin{equation}
    M=\sum_{p=1}^t \widetilde\Omega_{pp}\,.
\end{equation}
Correspondingly, the raising and lowering operators are given by
\begin{equation}
    E_p=\widetilde\Omega_{p,p+1},
    \qquad
    F_p=\widetilde\Omega_{p+1,p},
    \qquad
    p=1,\dots,t-1\,.
\end{equation}
With these, starting from a highest-weight vector $\ket{\lambda,{\rm hw}}\in V_\lambda^{\mathbb U(t)}$, every GT vector may be obtained by applying to the latter a canonically ordered product of lowering operators $F_p$. Denoting by $Z_T$ the monomial associated with the pattern $T$, one has
\begin{equation}
    \ket{\lambda,T}\propto Z_T\ket{\lambda,{\rm hw}}\,.
\end{equation}
If $P_\lambda^{\mathrm{hw}}$ denotes the projector onto the highest-weight line, then the corresponding matrix units spanning $\End(V_\lambda^{\U(t)})$ are
\begin{equation}
    X^{(\lambda)}_{T,T'}
    \propto
    Z_T\,P_\lambda^{\mathrm{hw}}\,Z_{T'}^\dagger,
    \qquad
    \lambda\subseteq n^t,
    \quad
    T,T'\in \GT(\lambda)\,.
\end{equation}
Thus, after normalization we may write
\begin{equation}
    X^{(\lambda)}_{T,T'}=\ketbra{\lambda,T}{\lambda,T'}\,.
\end{equation}
Hence, collecting the matrix units from all the irreps $\lambda$ we obtain an orthonormal basis of the PP fermionic Gaussian unitaries commutant as
\begin{equation}
    \CC^{\mathrm{PP}}_{t,n}
    =
    \left\{
        X^{(\lambda)}_{T,T'}
        \,:\,
        \lambda\subseteq n^t,\ T,T'\in\GT(\lambda)
    \right\}\,.
\end{equation}

\section{Construction of the commutant basis for PP Gaussian unitaries}\label{ap:construction-GT-PP}

Before proceeding to a case-by-case analysis of different values of $t$, we find it important to  recall that the operators
\begin{equation}
\widetilde\Omega_{j,k} = \sum_{p=1}^n \widetilde a^{\dagger (j)}_p \widetilde a^{(k)}_p \,.
\end{equation}
generate a representation of the $\mathfrak{u}(t)$ algebra. 

\subsection{$t=1$ case}

When $t=1$ we have $
\FC_n=\Lambda(\mathbb{C}^n\otimes \mathbb{C})\cong\Lambda(\mathbb{C}^n)$ and the only allowed partitions with at most one row and row length at most $n$ are $\lambda=(0),(1),\cdots ,(n)$. Their transpose is given, e.g., by $\lambda^T=(r)^T=(1^r)$, so the decomposition in Eq.~\eqref{eq:Schur-Weyl-passive} becomes
\begin{equation}
    \FC_n^{\otimes 1}\cong\bigoplus_{r=0}^n V_{(1^r)}^{\mathbb{U}(n)}\otimes V_{(r)}^{\mathbb{U}(1)}\,.
\end{equation}
Indeed, since the representations $V_{(r)}^{\mathbb{U}(1)}$ are one-dimensional (as $\mathbb{U}(1)$ is abelian compact group), and since by definition $V_{(1^r)}^{\mathbb{U}(n)}=\Lambda^r(\mathbb{C}^n)$ (recall that a column of boxes represents the antisymmetric representation), then we recover the decomposition of Eq.~\eqref{eq:Fock-space-decomp}. Moreover, since $\dim\left(V_{(r)}^{\mathbb{U}(1)}\right)=1$, the dimension of the commutant is
\begin{equation}
    \dim(\CC_{1,n}^{\rm PP})=\dim\left(\End_{\mathbb{U}(n)}\left(\FC_n^{\otimes 1}\right)\right)=\sum_{r=0}^n1=n+1\,.
\end{equation}

From Eq.~\eqref{eq:dim-t-2-passive} we know that this block is one-dimensional. Moreover, the GT restriction chain for $\mathbb{U}(1)$ is trivial, meaning that the highest-weight for each irrep is just an integer $r$, and the set of GT patterns has exactly one element. Thus, for each particle number $r$ we have  
\begin{equation}
    \GT(r)=\{*\}\,,\quad |\GT(r)|=1\,.
\end{equation}

Then, the infinitesimal generator $\widetilde\Omega_{1,1}$ of the group action can be identified with the number operator 
\begin{equation}
    N=\widetilde\Omega_{1,1}\,,
\end{equation}
so that each $\mathbb{U}(1)$-irreducible sector is an eigenstate of $N$ with eigenvalues $\{0,1,\ldots,n\}$, and therefore is the weight operator at the ``bottom'' of the GT chain. This allows us to then construct the projectors onto the irreps as
\begin{equation}
    P_r=\prod_{\substack{s= 0\\s\neq r}}^n\frac{N-s}{r-s}\,,
\end{equation}
whose action is $P_r:\Lambda(\mathbb{C}^n)\rightarrow\Lambda^r(\mathbb{C}^n)$, and which satisfy $P_rP_r'=\delta_{r,r'}P_r$ and $\sum_{r=0}^nP_r=\id$.

Since there is only one GT pattern $T=T'=\{*\}$ we obtain
\begin{equation}
    X_{*,*}^{(r)}=\ket{r,*}\bra{r,*}=P_r\,.
\end{equation}
As such, the GT construction leads to the orthonormal basis for the $t=1$ commutant
\begin{equation}
    \CC_{1,n}^{\rm PP}={\rm span}_{\mathbb{C}}\{P_0,P_1,\cdots,P_r\}\,.
\end{equation}

At this point we find it important to note that an alternative basis for the commutant can be readily found by noting that 
the number operator $N=\sum_{i=1}^n a_i^\dagger a_i$ was chosen as the the basis element for $\mathfrak{u}(1)$, meaning that $N$ is central. However, so is $N^l$ for $l=1,\ldots,n$. Indeed, we could have defined the projector in term of any of those operators. As a matter of fact, one can simple note that while the collection of operators 
$\{
\mathbbm{1}, N, N^2, \ldots, N^n\}
$
are linearly independent, they can already be used to form a basis of the commutant for $t=1$. That is
\begin{equation}
    \CC_{1,n}^{\rm PP}={\rm span}_{\mathbb{C}}\{
\id, N, N^2, \ldots, N^n\}
\end{equation}
The linear independence is easily proven as assuming that the polynomial $\sum_{l=0}^n c_l N^l$ vanishes implies that its action on a computational basis state also vanishes so that $\sum_{l=0}^n c_l m^l=0$ for all $m=0,1,\dots,n$. This implies $n+1$ distinct roots but the degree of the polynomial is $n$ so we must have $c_l=0$. One can remove the linear dependence by instead considering the basis 

\begin{equation}
    \CC_{1,n}^{\rm PP}={\rm span}_{\mathbb{C}}\{
E_1,\ldots,E_n\}\,,\quad \text{with}\quad E_a=\sum_{|A|=a}\prod_{i\in A} Z_i\,.
\end{equation}
In particular, we can express the first few elements of this bases as  
$
E_0 = \mathbb I$,
$E_1 = \sum_i Z_i=S$,
$E_2 = \sum_{i<j} Z_i Z_j$,
$E_3 = \sum_{i<j<k} Z_i Z_j Z_k\,.$
This basis is related to $N$ through Krawtchouk polynomials
\begin{equation}
E_a=K_a(n-N;n).
\end{equation}
Therefore $(\sum_i Z_i)^l=S^l$ and $N^l$ admit natural Krawtchouk expansions in the symmetric Pauli basis. Notice now that by expanding $N=\sum_r r P_r$ we can rewrite this as an explicit relation between the basis of projectors and $E_a$:
\begin{equation}
    E_a=\sum_r K_a(n-r;n)P_r\,.
\end{equation}
To prove this relation, it is enough to evaluate both sides on a computational basis state
$|z_1,\dots,z_n\rangle$, with $z_i=\pm1$, and let $k$ denote the number of sites with
$z_i=-1$. The operator $N$ counts the number of sites with $z_i=+1$, hence
\begin{equation}
N|z_1,\dots,z_n\rangle=(n-k)|z_1,\dots,z_n\rangle.
\end{equation}
On the other hand, $E_r$ is the sum of all products of $r$ distinct $Z_i$'s, so its
eigenvalue on the same state is
\begin{equation}
\sum_{j=0}^a (-1)^j \binom{k}{j}\binom{n-k}{a-j}.
\end{equation}
Indeed, choosing $j$ factors from the $k$ sites with eigenvalue $-1$ and $a-j$ factors
from the remaining $n-k$ sites with eigenvalue $+1$ produces a contribution $(-1)^j$, and
summing over all such choices gives the above expression. By definition, this is precisely the
Krawtchouk polynomial $K_a(k;n)$. Since $k=n-N$ on computational basis states, we  recover the previous relation. 
Conversely, since the Krawtchouk polynomials form an orthogonal basis for functions on $k=0,1,\dots,n$ with weight $\binom{n}{k}$, one may invert the relation and expand any polynomial in $N$ in the $E_r$ basis. In particular,
\begin{equation}
N^l=\sum_{r=0}^l c_{l,a}\,E_a,
\end{equation}
with coefficients
\begin{equation}
c_{l,a}
=
\frac{1}{2^n\binom{n}{a}}
\sum_{k=0}^n
\binom{n}{k}(n-k)^l K_a(k;n).
\end{equation}

\subsection{$t=2$ case}

Now, the partitions allowed will take the form 
\begin{equation}
    \lambda=(\lambda_1,\lambda_2)\,,\quad n\geq\lambda_1\geq\lambda_2\geq 0\,,\quad \text{such that}\quad\lambda^T=(2^{\lambda_2},1^{\lambda_1-\lambda_2})\,.
\end{equation}
Then, the decomposition of the Fock space will be
\begin{equation}
    \FC_n^{\otimes 2}\cong\bigoplus_{n\geq\lambda_1\geq\lambda_2\geq 0} V_{(2^{\lambda_2},1^{\lambda_1-\lambda_2})}^{\mathbb{U}(n)}\otimes V_{(\lambda_1,\lambda_2)}^{\mathbb{U}(2)}\,.
\end{equation}
Next, we can use the fact that
\begin{equation}\label{eqap:dim-t-2-passive}
    \dim\left(V_{(\lambda_1,\lambda_2)}^{\mathbb{U}(2)}\right)=\lambda_1-\lambda_2+1\,,
\end{equation}
so that 
\begin{equation}
    \dim(\CC_{2,n}^{\rm PP})=\dim\left(\End_{\mathbb{U}(n)}\left(\FC_n^{\otimes 2}\right)\right)=\sum_{n\geq\lambda_1\geq\lambda_2\geq 0}(\lambda_1-\lambda_2+1)^2=\frac{(n+1)(n+2)^2(n+3)}{12}\,.
\end{equation}

Next, we note that the subgroup chain is
\begin{equation}
    \mathbb{U}(1)\subset\mathbb{U}(2)\,,
\end{equation}
and the GT patterns will be a triangular array 
\begin{equation}\label{eqap:GT-pattern-passive-t2}
    \begin{array}{ccc}
       \lambda_1  & & \lambda_2  \\
         & m & 
    \end{array}
\end{equation}
with $\lambda_1\geq m\geq \lambda_2$. The set of GT patterns can thus be defined as
\begin{equation}
    \GT(\lambda_1,\lambda_2)=\{m\in\mathbb{Z}\,|\,\lambda_1\geq m\geq \lambda_2\}\,,
\end{equation}
with size
\begin{equation}
    |\GT(\lambda_1,\lambda_2)|=\lambda_1-\lambda_2+1\,,
\end{equation}
in agreement with Eq.~\eqref{eqap:dim-t-2-passive}.  Importantly, we note that we will pick $\mathbb{U}(1)$ via the standard embedding
\begin{equation}
    z\rightarrow\begin{pmatrix}
        z & 0\\0 & 1
    \end{pmatrix}\,.
\end{equation}
As such, given in the single particle space $\mathbb{C}^n\otimes \mathbb{C}^2$, with basis elements of the form $e_p\otimes f_1$ or $e_p\otimes f_2$, the action of $\mathbb{U}(1)$ is
\begin{equation}
    e_p\otimes f_1\rightarrow z (e_p\otimes f_1)\,,\quad e_p\otimes f_2\rightarrow e_p\otimes f_2\,.
\end{equation}
Thus, a particle in copy 1 contributes one factor of $z$ whereas a  particle in copy 2 contributes no factor. When moving onto the Fock space $\Lambda(\mathbb{C}^n\otimes \mathbb{C}^2)$, we take a wedge basis monomial with $n_1$ particles in copy 1 and $n_2$ in copy 2 of the form
\begin{equation}
    (e_{p_1}\otimes f_1)\wedge\cdots\wedge(e_{p_{n_1}}\otimes f_1)\wedge (e_{p_1}\otimes f_2)\wedge\cdots\wedge(e_{p_{n_1}}\otimes f_2)\,.
\end{equation}
Under $\mathbb{U}(1)$ the previous element transforms as $z^{n_1}$.

Next, we note that for $t=2$ there are four operators $\widetilde\Omega_{j,k}$. Indeed, we find it convenient to  perform the identification  
\begin{equation}
    N_1=\widetilde\Omega_{1,1}\,\quad N_2=\widetilde\Omega_{2,2}\,\quad
\end{equation}
which respectively count the number of fermions on each copy of the Fock state, as well as define the operators
\begin{equation}
    J_+=\widetilde\Omega_{1,2}\,,\quad J_-=\widetilde\Omega_{2,1}\,.
\end{equation}
From the previous choice for $\mathbb{U}(1)$, one can identify the eigenvalues of $N_1$ with the weights, and concomitantly with the GT lower entry $m$. 

\begin{remark}\label{remark-particles}
    Combining the previous fact with Eq.~\eqref{eq:particle-lambda} we thus see that given a pattern as in Eq.~\eqref{eqap:GT-pattern-passive-t2}, then the irrep will belong to a sector with $|\lambda|=\lambda_1+\lambda_2$ particles, where the first copy contains $m$ particles, and the second one $\lambda_1+\lambda_2-m$ particles.
\end{remark}

Interestingly, we can define a total number operator 
\begin{equation}
    N=N_1+N_2\,,
\end{equation}
and an effective spin-$z$ operator
\begin{equation}
    J_z=\frac{N_1-N_2}{2}\,.
\end{equation}
As such, $N$ counts the total number of particles across the two copies, while $J_z$ measures the imbalance between the copies. The ladder operators $J_\pm$ move one particle from one copy to the other:
\begin{equation}
J_+:\ (N_1,N_2)\mapsto (N_1+1,N_2-1),\qquad
J_-:\ (N_1,N_2)\mapsto (N_1-1,N_2+1).
\end{equation}
Hence $M$ is conserved by the $\mathfrak{su}(2)$ action, whereas $J_z$ changes by $\pm1$ under $J_\pm$. Moreover, one can readily see that $J_z,J_\pm$ satisfy the standard $\mathfrak{su}(2)$ commutation relations
\begin{equation}
[J_z,J_\pm]=\pm J_\pm,\qquad
[J_+,J_-]=2J_z,
\end{equation}
while $M$ commutes with all of them. Thus, their combination allows us to decompose the algebra
\begin{equation}
\mathfrak{u}(2)=\mathfrak{su}(2)\oplus \mathfrak{u}(1)    
\end{equation}
Below we will find that this interpretation allow us to provide a nice ``physical'' interpretation to the operators in the commutant in terms of spin representations of $\mathbb{SU}(2)$. 

\subsubsection{Irrep $(\lambda_1,\lambda_2)=(0,0)$}

To construct the GT basis for the commutant, we start by considering the trivial irrep $(\lambda_1,\lambda_2)=(0,0)$, whose only GT pattern
\begin{equation}
    \begin{array}{ccc}
      0 & & 0  \\
         & 0 & 
    \end{array}
\end{equation}
corresponds to $m=0$. Here, the irrep corresponding of $\mathbb{U}(2)$ is the trivial one. Since, we know from Eq.~\eqref{eq:particle-lambda} that this block belongs to the zero particle sector, we can identify the highest-weight state by the two-copy vacuum $\ket{0}$, which means that we can define the projector 
\begin{equation}
    P_{(0,0)}^{\mathrm{hw}}=\ket{0}\bra{0}\,,
\end{equation}
which leads to the single operator in the commutant
\begin{equation}
    X^{(0,0)}_{0,0}=\ket{0}\bra{0}\,.
\end{equation}

\subsubsection{Irrep $(\lambda_1,\lambda_2)=(1,0)$}

The next non-trivial case is obtained for $(\lambda_1,\lambda_2)=(1,0)=\sydiagram{1,0}$, corresponding to the block
\begin{equation}
    V_{(1)}^{\mathbb{U}(n)}\otimes V_{(1,0)}^{\mathbb{U}(2)}\cong\mathbb{C}^n\otimes \mathbb{C}^2\,.
\end{equation}
Here, the multiplicity space is of dimension $n$, whereas the copy-side irrep is of dimension $2$. Explicitly, we have two GT patterns
\begin{equation}
    T_1=\quad\begin{array}{ccc}
      1 & & 0  \\
         & 1 & 
    \end{array}\quad\quad\quad\text{and}\quad\quad\quad     T_2=\quad\begin{array}{ccc}
      1 & & 0  \\
         & 0 & 
    \end{array}\quad.
\end{equation}
By following the convention defined above  we find $T_1$ to be the distinguished highest pattern, i.e., $T_{\mathrm{hw}}=T_1$, as it is the pattern with maximal lower entry. Using Eq.~\eqref{eq:particle-lambda}, and given that $\mathbb{C}^n\otimes \mathbb{C}^2=\Lambda^1(\mathbb{C}^n\otimes \mathbb{C}^2)$, we can see that the block $(1,0)$ corresponds to the one-particle sector. Moreover, Remark~\ref{remark-particles} indicates that in $T_1$ the particle is in the first copy, whereas in $T_2$ the particle is in the second one. As such, we define the natural basis
\begin{equation}
    \ket{p,1}=\widetilde a^{\dagger (1)}_p\ket{0}\,,\quad  \ket{p,2}=\widetilde a^{\dagger (2)}_p\ket{0}\,,
\end{equation}
which satisfy 
\begin{equation}
    N_1\ket{p,1}=\ket{p,1}\,,\quad N_2\ket{p,1}=0\,,\quad N_2\ket{p,1}=0\,,\quad N_2\ket{p,2}=\ket{p,2}\,.
\end{equation}
In particular, using that $J_+\ket{p,1}=0$ and $J_+\ket{p,2}=\ket{p,1}$, we identify $\ket{p,1}$ as the highest-weight states. That is, we can define the highest-weight subspace inside the $(1,0)$ block as 
\begin{equation}
    {\rm span}_{\mathbb{C}}\{\ket{p,1}\,|\,p=1,\ldots,n\}\,.
\end{equation}
This allows us to define the projector
\begin{equation}
    P_{(1,0)}^{\mathrm{hw}}=\sum_{p=1}^n\ket{p,1}\bra{p,1}=\sum_{p=1}^n{\widetilde{a}_p^{\dagger(1)}}\ket{0}\bra{0}\widetilde{a}_p^{(1)}\,,
\end{equation}
which is, as expected, one-dimensional on the copy-side but of rank $n$ on the full Hilbert space side (due to the $n$ dimension of the multiplicity space).

From here, we can obtain the lowering operators as $Z_{T_1}=\id$ (as $T_1$ is the highest pattern), and $Z_{T_2}=J_-$. This latter identification follows from the fact that $J_-$ maps the highest-weight line to the states with zero-eigenvalue of $N_1$, i.e., it moves the single particle in the first copy (indicated by $m=1$ in $T_1$) to the second copy (indicated by $m=0$ in $T_2$). Combing these results leads to the explicit basis operators
\begin{align}
    X^{(1,0)}_{T_1,T_1}=\sum_{p=1}^n{\widetilde{a}_p^{\dagger(1)}}\ket{0}\bra{0}\widetilde{a}_p^{(1)}\,,\quad    X^{(1,0)}_{T_2,T_2}=\sum_{p=1}^n{\widetilde{a}_p^{\dagger(2)}}\ket{0}\bra{0}\widetilde{a}_p^{(2)}\,,\nonumber\\ X^{(1,0)}_{T_1,T_2}=\sum_{p=1}^n{\widetilde{a}_p^{\dagger(1)}}\ket{0}\bra{0}\widetilde{a}_p^{(2)}\,,\quad    X^{(1,0)}_{T_2,T_1}=\sum_{p=1}^n{\widetilde{a}_p^{\dagger(2)}}\ket{0}\bra{0}\widetilde{a}_p^{(1)}\,.
\end{align}

\subsubsection{Irrep $(\lambda_1,\lambda_2)=(1,1)$}

Here we focus on the case $(\lambda_1,\lambda_2)=(1,1)=\sydiagram{1,1}$, leading to the block
\begin{equation}
    V_{(2)}^{\mathbb{U}(n)}\otimes V_{(1,1)}^{\mathbb{U}(2)}\cong{\rm Sym}^2(\mathbb{C}^n)\otimes \Lambda^2(\mathbb{C}^2)\cong{\rm Sym}^2(\mathbb{C}^n)\otimes \det\,.
\end{equation} 
As expected from the last equality above (and from Eq.~\eqref{eq:dim-t-2-passive}) the copy side irrep is one dimensional but with physical multiplicity $\dim({\rm Sym}^2(\mathbb{C}^n))=\frac{n(n+1)}{2}$. Now, the only GT pattern that we have is
\begin{equation}
    T=\quad\begin{array}{ccc}
      1 & & 1  \\
         & 1 & 
    \end{array}\,.
\end{equation}
As we will see next, while the dimension is equal to one, the representation is not trivial. 

From Eq.~\eqref{eq:particle-lambda}, and using the fact that 
\begin{equation}\label{eq:decomp-teo-particles}
    \Lambda^2(\mathbb{C}^n\otimes\mathbb{C}^2)\cong{\rm Sym}^2(\mathbb{C}^n)\otimes \Lambda^2(\mathbb{C}^2)\oplus \Lambda^2(\mathbb{C}^2)\otimes{\rm Sym}^2(\mathbb{C}^n)\,,
\end{equation}
we find that the block $(1,1)$ is the first term of the summand, and thus lives within the two particle sector. Moreover, Remark~\ref{remark-particles} indicates that there is one particle in the first copy, and one particle in the second copy. Hence, a natural basis to consider are two particle states of the general form
\begin{equation}
    \ket{\psi}=\sum_{p,q=1}^n\psi_{pq}{\widetilde{a}_p^{\dagger(1)}}{\widetilde{a}_q^{\dagger(2)}}\ket{0}\,.
\end{equation}
We can find the highest-weight state by applying to $\ket{\psi}$ the raising operator $J_+$, and determining the coefficients $\psi_{pq}$ for which the state is annihilated. In particular, using the relation
\begin{equation}
    J_+{\widetilde{a}_p^{\dagger(1)}}{\widetilde{a}_q^{\dagger(2)}}\ket{0}=-{\widetilde{a}_p^{\dagger(1)}}{\widetilde{a}_q^{\dagger(1)}}\ket{0}\,,
\end{equation}
we find
\begin{equation}
    J_+\ket{\psi}=-\sum_{p,q=1}^n\psi_{pq}{\widetilde{a}_p^{\dagger(1)}}{\widetilde{a}_q^{\dagger(1)}}\ket{0}\,.
\end{equation}
We can therefore see that $J_+$ will annihilate the state if $\psi_{pq}=\psi_{qp}$, indicating that the  highest-weight subspace corresponds to the subspace of symmetric coefficient matrices. This prompts us to define the orthogonal basis for $p<q$
\begin{equation}
    \ket{p,q;+}=\frac{1}{\sqrt{2}}({\widetilde{a}_p^{\dagger(1)}}{\widetilde{a}_q^{\dagger(2)}}+{\widetilde{a}_q^{\dagger(1)}}{\widetilde{a}_p^{\dagger(2)}})\ket{0}\,,
\end{equation}
while for $p=q$
\begin{equation}
    \ket{p,p;+}={\widetilde{a}_p^{\dagger(1)}}{\widetilde{a}_p^{\dagger(2)}}\ket{0}\,.
\end{equation}
There are $\frac{n(n+1)}{2}$ such states, thus exactly matching the dimension of ${\rm Sym}^2(\mathbb{C}^n)$ as expected. Interestingly, one can also check that $J_-\ket{p,q;+}=0$, in compliance with the fact that the space is one-dimensional and that these states are simultaneously highest and lowest-weight states.

The previous allows us to find that the highest subspace is ${\rm span}_{\mathbb{C}}\{\ket{p,q;+}\,|\, 1\leq p< q\leq n\}\cup\{\ket{p,p;+}\,|\, 1\leq p\leq n\}$, and thus define the highest-weight projector
\begin{align}
    P_{(1,1)}^{\mathrm{hw}}&=\sum_{1\leq p\leq q\leq n}\ket{p,q;+}\bra{p,q;+}\nonumber\\
    &=\sum_{1\leq p< q\leq n}\frac{1}{2}({\widetilde{a}_p^{\dagger(1)}}{\widetilde{a}_q^{\dagger(2)}}+{\widetilde{a}_q^{\dagger(1)}}{\widetilde{a}_p^{\dagger(2)}})\ket{0}\bra{0}({\widetilde{a}_q^{(2)}}{\widetilde{a}_p^{(1)}}+{\widetilde{a}_p^{(2)}}{\widetilde{a}_q^{(1)}})+\sum_{p=1}^n{\widetilde{a}_p^{\dagger(1)}}{\widetilde{a}_p^{\dagger(2)}}\ket{0}\bra{0}{\widetilde{a}_p^{(2)}}{\widetilde{a}_p^{(1)}}\,.
\end{align}
Since there is only one GT pattern $Z_T=\id$ and 
\begin{equation}
  X^{(1,1)}_{T,T} =P_{(1,1)}^{\mathrm{hw}}\,.
\end{equation}

\subsubsection{Irrep $(\lambda_1,\lambda_2)=(2,0)$}

As a final example, let us consider 
Here we focus on the case $(\lambda_1,\lambda_2)=(2,0)=\sydiagram{2}$, leading to the block
\begin{equation}
    V_{(1,1)}^{\mathbb{U}(n)}\otimes V_{(2)}^{\mathbb{U}(2)}\cong \Lambda^2(\mathbb{C}^2)\otimes{\rm Sym}^2(\mathbb{C}^n)\,.
\end{equation} 
That is, to the second term in the irrep decomposition of the two particle sector (see  Eq.~\eqref{eq:decomp-teo-particles}). As such, we know that the block has degree $\lambda=2$ and physical multiplicity $\dim(\Lambda^2(\mathbb{C}^2))=\binom{n}{2}$. The three GT patterns are 
\begin{equation}
    T_1=\quad\begin{array}{ccc}
      2 & & 0  \\
         & 2 & 
    \end{array}\quad\quad\quad\text{,}\quad\quad\quad     T_2=\quad\begin{array}{ccc}
      2 & & 0  \\
         & 1 & 
    \end{array}\quad\quad\quad\text{,}\quad\quad\quad     T_3=\quad\begin{array}{ccc}
      2 & & 0  \\
         & 0 & 
    \end{array}\quad,
\end{equation}
meaning that
\begin{equation}
    \GT(2,0)=\{T_1,T_2,T_3\}\,,
\end{equation}
whose dimension $|\GT(2,0)|=3$ matches that of $V_{(2)}^{\mathbb{U}(1)}$. 

From Remark~\ref{remark-particles}, we know that $T_1$ (the distinguished highest pattern) corresponds to both particles in copy 1, $T_2$ to one particle in each copy, and $T_3$ to both particles in the second copy. Hence, for the highest-weight $m=2$ we will consider the basis
\begin{equation}
    \ket{p,q;2}={\widetilde{a}_p^{\dagger(1)}}{\widetilde{a}_q^{\dagger(1)}}\ket{0}\,,\quad p< q\,,
\end{equation}
while for the middle line $m=1$
\begin{equation}
    \ket{p,q;-}=\frac{1}{\sqrt{2}}({\widetilde{a}_p^{\dagger(1)}}{\widetilde{a}_q^{\dagger(2)}}-{\widetilde{a}_q^{\dagger(1)}}{\widetilde{a}_p^{\dagger(2)}})\ket{0}\,,
\end{equation}
and for the lowest GT line $m=0$
\begin{equation}
    \ket{p,q;0}={\widetilde{a}_p^{\dagger(2)}}{\widetilde{a}_q^{\dagger(2)}}\ket{0}\,,\quad p< q\,.
\end{equation}
In all cases, we can verify that since the physical multiplicity is $\Lambda^2(\mathbb{C}^n)$, the indexes must be antisymmetric in $p,q$. One  can also verify that
\begin{equation}
    J_+\ket{p,q;2}=0\,,
\end{equation}
as expected, meaning that the highest-weight subspace will be ${\rm span}_{\mathbb{C}}\{\ket{p,q;2}\,|\, 1\leq p<q\leq n\}$, and the projector onto said subspace is
\begin{align}
    P_{(2,0)}^{\mathrm{hw}}=\sum_{1\leq p<q\leq n}\ket{p,q;2}\bra{p,q;2}=\sum_{1\leq p< q\leq n}{\widetilde{a}_p^{\dagger(1)}}{\widetilde{a}_q^{\dagger(1)}}\ket{0}\bra{0}{\widetilde{a}_q^{(1)}}{\widetilde{a}_p^{(1)}}\,.
\end{align}

Next, we study the action of the lowering operator 
\begin{equation}
    J_-\ket{p,q;2}=\sqrt{2}\ket{p,q;-}\,,\quad J_-\ket{p,q;1}=\sqrt{2}\ket{p,q;0}\,,\quad J_-\ket{p,q;0}=0\,.
\end{equation}
Thus, the normalized lowering operators that move between GT lines are
\begin{equation}
    Z_{T_1}=\id\,,\quad Z_{T_2}=\frac{1}{\sqrt{2}}J_-\,,\quad \text{and}\quad Z_{T_2}=\frac{1}{2}(J_-)^2\,.
\end{equation}
Combining all the previous results leads to the diagonal GT projectors
\begin{equation}
   X^{(2,0)}_{T_1,T_1}= P_{(2,0)}^{\mathrm{hw}}\,, \quad X^{(2,0)}_{T_2,T_2}= \sum_{1\leq p<q\leq n}\ket{p,q;-}\bra{p,q;-}\,, \quad \text{and}\quad X^{(2,0)}_{T_3,T_3}= \sum_{1\leq p<q\leq n}\ket{p,q;0}\bra{p,q;0}\,.
\end{equation}
Next, the off-diagonal elements are of the form
\begin{equation}
    X^{(2,0)}_{T_i,T_j}=\sum_{1\leq p<q\leq n}\ket{p,q;\ell_i}\bra{p,q;\ell_i}\,,\quad \text{where }\quad  \ell_i=\begin{cases}
        2\,, \text{for } i=1\,,\\
        -\,, \text{for } i=2\,,\\
        0\,, \text{for } i=2\,.
    \end{cases}
\end{equation}

\subsection{Spin interpretation of the $t=2$ irreps}\label{app:spininterptr}

Above we have hinted at the fact that since the operators $J_z,J_\pm$ satisfy the $\mathfrak{su}(2)$ commutation relations, we could use this fact to provide a nice spin interpretation to the commutant elements. Here we flesh out this analogy and illustrate it for the irreps previously considered. First, in the highest-weight of the irrep $(\lambda_1,\lambda_2)$, the total particle number operator $M$ and the spin-$z$ will take values
\begin{equation}\label{eq:central-charge}
    N=\lambda_1+\lambda_2\,,\quad J_z=\frac{\lambda_1-\lambda_2}{2}\,.
\end{equation}
This allows us to define the $\mathbb{SU}(2)$ spin index $j=\frac{\lambda_1-\lambda_2}{2}$, which takes values $j=0,\frac{1}{2},1,\ldots,\frac{n}{2}$ and verify that
\begin{equation}
    \dim(V_{(\lambda_1,\lambda_2)}^{\mathbb{U}(2)})=2j+1\,,
\end{equation}
i.e., $V_{(\lambda_1,\lambda_2)}^{\mathbb{U}(2)}$ is a spin $j$ representation of $\mathbb{SU}(2)$. Importantly, now we can relate the GT index $m$ to a spin index $m_{\rm spin}=-j,-j+1\ldots,j$ by subtracting the central charge $M$. That is, 
\begin{equation}
    m_{\rm spin}=m-\frac{N}{2}\,.
\end{equation}

From the previous, we can view the irrep $(0,0)$ as a spin-$0$ representation with a single element in the commutant. Then, $(\lambda_1,\lambda_2)=(1,0)$ irrep corresponds to a spin-$\frac{1}{2}$, where $m=1$ is $m_{\rm spin}=\frac{1}{2}$, while $m=0$ leads to $m_{\rm spin}=-\frac{1}{2}$. Here, we obtain four commutant elements, as expected from the the fact that the vector space is 2 dimensional and thus the algebra of all linear operators on that space has dimension $2\times 2=4$. Next, we can consider the central charge $M=2$ case, which will arise from $(\lambda_1,\lambda_2)=(1,1),(2,0)$, and respectively lead to a singlet $j=0$ and a spin $j=1$ representation. 

In this language, we can identify irreps from the pair $(N,j)$, whereby the copy irrep has dimension $2j+1$ and the commutant $(2j+1)^2$. To count the total number of irrep elements, one thus needs to find how many different $j$'s lead to the same value $N$. Solving from Eq.~\eqref{eq:central-charge} shows that we can have the values $N=2j,2j+1,\ldots,n$. Thus, there are $n-2j+1$ possible values for $N$. Combining it all recovers the formula 
\begin{equation}
\dim(\CC_{2,n}^{\rm PP})=\sum_{j=0,\frac12,1,\frac32,\dots,\frac n2} (n-2j+1)(2j+1)^2
=
\frac{(n+1)(n+2)^2(n+3)}{12}\,.
\end{equation}
An intuitive picture emerges: We can think that the total number of fermions shared among the two copies $N$ corresponds to  $N$ spin $1/2$ particles. Then, creating a particle on the first copy fixes a spin up, while creating on the second a spin down. In this way, the total spin index $j$ corresponds to the different irreducible representations arising from the composition of these effective spins with $m_{\rm spin}$ the values of the $z$-component  of the total spin.

\subsection{Indications for the general $(\lambda_1,\lambda_2)$ case}

The previous examples already showcase the basic ingredients needed to construct all commutant operators for a general $(\lambda_1,\lambda_2)$ irrep. Here, the blocks will take the form
\begin{equation}
    V_{\lambda^T}^{\mathbb{U}(n)}\otimes V_{\lambda}^{\mathbb{U}(2)}=V_{(2^{\lambda_2},1^{\lambda_1-\lambda_2})}^{\mathbb{U}(n)}\otimes V_{(\lambda_1,\lambda_2)}^{\mathbb{U}(2)}\,.
\end{equation} 
On the copy side we will have that 
\begin{equation}
    V_{(\lambda_1,\lambda_2)}^{\mathbb{U}(2)}\cong(\det)
^{\lambda_2}\otimes {\rm Sym}^{\lambda_1-\lambda_2}(\mathbb{C}^2)\,,
\end{equation}
meaning that we can expect $\lambda_2$ singlet-type irreps while a spin $j=(\lambda_1-\lambda_2)/2$ multiplet. 

Then, we note that the block $(\lambda_1,\lambda_2)$ will sit within a given central charge value $M$ with GT patterns indexed by $m=\lambda_2,\ldots,\lambda_1$, The highest-weight line will arise from the distinguished pattern with $m=\lambda_1$ lying in the sector with $\lambda_1$ particles in copy 1 and $\lambda_2$ particles in copy 2. As such, a natural basis will follow from the copy-singlet pair operators
\begin{equation}
    S_{pq}=\begin{cases}
        {\widetilde{a}_p^{\dagger(1)}}{\widetilde{a}_q^{\dagger(2)}}+{\widetilde{a}_q^{\dagger(1)}}{\widetilde{a}_p^{\dagger(2)}}\,,\quad p\leq q\,,\\
        {\widetilde{a}_p^{\dagger(1)}}{\widetilde{a}_p^{\dagger(2)}}\,,\quad\quad\quad\quad\quad\,\,\,\,\, p= q\,,
    \end{cases}
\end{equation}
which contribute one particle to each copy and will be annihilated by the raising operator $J_+$.  Thus, the highest-weight line will be spanned by states of the form
\begin{equation}
    {\widetilde{a}_{i_1}^{\dagger(1)}}\cdots {\widetilde{a}_{i_q}^{\dagger(1)}}S_{j_1k_1}\cdots S_{j_{\lambda_2}k_{\lambda_2}}\ket{0}\,,\quad \text{where}\quad q=\lambda_1-\lambda_2\,,
\end{equation}
followed by a projection onto the irreps of $\mathbb{U}(n)$-type $(2^{\lambda_2},1^q)$. In summary, the highest-weight subspace for $(\lambda_1,\lambda_2)$ is $q$-copy-1 fermions plus $\lambda_2$ singlets with physical symmetry $(2^{\lambda_2},1^q)$. For instance we can recover all previously considered cases
\begin{itemize}
    \item $(1,0)$ leads to $q=1$, $\lambda_2=0$, meaning a copy-1 creation operator,
    \item $(1,1)$ leads to $q=0$, $\lambda_2=1$, meaning a singlet $S_{pq}$,
    \item $(2,0)$ leads to $q=2$, $\lambda_2=0$, meaning two copy-1 creation operator with physical antisymmetry,
    \item $(2,1)$ leads to $q=1$, $\lambda_2=1$, meaning one copy-1 creation operator times one singlet pair, projected to shape $(2,1)$.
\end{itemize}
As such, the block projector will take the form 
\begin{align}
    P_{(\lambda_1,\lambda_2)}=P_{M=\lambda_1+\lambda_2}P_{J^2=j(j+1)}\,.
\end{align}
and the highest-weight one
\begin{align}
    P_{(\lambda_1,\lambda_2)}^{\mathrm{hw}}=P_{(\lambda_1,\lambda_2)}P_{m=\lambda_1}\,.
\end{align}
After that, the whole block is generated by the single lowering operator $J_1$. Since there is only one simple root for $\mathbb{U}(2)$, every GT basis vector is obtained by repeated lowering the highest-weight states. Indeed, if we define a highest-weight vector as $\ket{(\lambda_1,\lambda_2),{\mathrm{hw}}}$, then the GT basis are
\begin{equation}
    \ket{(\lambda_1,\lambda_2),m}=\sqrt{\frac{(m-\lambda_2)!}{(\lambda_1-\lambda_2)!(\lambda_1-m)!}}(J_-)^{\lambda_1-m}\ket{(\lambda_1,\lambda_2),{\mathrm{hw}}}\,.
\end{equation}
The matrix units will then take the form
\begin{equation}
    X^{(\lambda_1,\lambda_2)}_{m,m'}=Z_mP_{(\lambda_1,\lambda_2)}^{\mathrm{hw}}Z_{m'}\,,\quad \text{with}\quad Z_m=\sqrt{\frac{(m-\lambda_2)!}{(\lambda_1-\lambda_2)!(\lambda_1-m)!}}(J_-)^{\lambda_1-m}\,.
\end{equation}

\subsection{$t=3$ case}

When going to $t=3$ the irreps of $\mathbb{U}(3)$ will be labeled by partitions 
\begin{equation}
    \lambda=(\lambda_1,\lambda_2,\lambda_3)\,,\quad m\geq \lambda_1\geq \lambda_2\geq \lambda_3\geq 0\,,
\end{equation}
fitting inside an $n\times 3$ rectangle. 
Then, when it comes to obtaining the GT patterns, we will work with the restriction chain
\begin{equation}
    \mathbb{U}(1)\subset\mathbb{U}(2)\subset\mathbb{U}(3)\,.
\end{equation}
In the first restriction $\mathbb{U}(3)\downarrow\mathbb{U}(2)$ (which is multiplicity free), the irreps of $\mathbb{U}(2)$ that appear are labeled by
\begin{equation}
    \lambda'=(\gamma_1,\gamma_2)\quad m\geq \gamma_1\geq \gamma_2\geq 0\,,
\end{equation}
and must satisfy the interlacing relations
\begin{equation}    \lambda_1\geq\gamma_1\geq\lambda_2\geq\gamma_2\geq\lambda_3\,.
\end{equation}
The next restriction $\mathbb{U}(2)\downarrow\mathbb{U}(1)$, with the same convention acting on the first copy, leads to irreps labeled by 
\begin{equation}
    \lambda''=(\eta)\,,
\end{equation}
with the interleaving condition
\begin{equation}
    \gamma_1\geq\eta\geq\lambda_2\,.
\end{equation}
Combining the previous results leads to the GT patterns
\begin{equation}\label{eq:GT-pattern-passive-t3}
    \begin{array}{ccccc}
       \lambda_1  & & \lambda_2 & &\lambda_3  \\
         & \gamma_1 & & \gamma_2 &\\
                  & & \eta & &  \\
    \end{array}\quad\,.
\end{equation}

Here, we can see that the top row will fix the total particle number by 
\begin{equation}
    N=|\lambda|\,,
\end{equation}
the middle row will fix the number of particles in the first and second copies
\begin{equation}
    N_1+N_2=|\lambda'|\,,
\end{equation}
and the third row the number of particles in the first copy
\begin{equation}
    N_1=|\lambda''|\,.
\end{equation}
This realization can be then used to find appropriate natural bases in which to find the highest-weight states as in the $t=2$ case.

\section{Proofs of Lemmas 3, 4 and Theorem 3}
\label{ap:generators-active}

Here we provide the basic proofs concerning general Gaussian unitaries. We begin by proving Lemma~\ref{lemma:QinC}, which states that $\widetilde{Q}_{j,k}\in \CC_{t,n}\,.$

\begin{lemma}\label{lemma-ap:QinC}
    Let $\widetilde{Q}_{j,k}$ be an operator defined as in Eq.~\eqref{eq:tildeQdef}. Then,   $\widetilde{Q}_{j,k}\in \CC_{t,n}$.
\end{lemma}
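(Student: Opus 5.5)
The plan mirrors the proof of Lemma~\ref{lem:Omega-in-C-AP}. First I show that the undressed operators
\begin{equation}
Q_{j,k}=\frac{1}{2}\sum_{\mu=1}^{2n} c_\mu^{(j)} c_\mu^{(k)}\,,\qquad j\neq k\,,
\end{equation}
commute with $R(U)^{\otimes t}$. Then I argue that the parity dressing does not spoil this.

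For the first step, note that for $j\neq k$ the two Majoranas act on different tensor factors. Hence conjugation by $R(U)^{\otimes t}$ acts factor-wise, and by Eq.~\eqref{eq:Majorana-action}
\begin{equation}
R(U)^{\otimes t}\, c_\mu^{(j)}c_\mu^{(k)}\,R^\dagger(U)^{\otimes t}=\sum_{\nu,\nu'}U_{\mu\nu}U_{\mu\nu'}\,c_\nu^{(j)}c_{\nu'}^{(k)}\,.
\end{equation}
Summing over $\mu$ and using orthogonality of $U\in\SO(2n)$, namely $\sum_\mu U_{\mu\nu}U_{\mu\nu'}=\delta_{\nu\nu'}$, returns $Q_{j,k}$. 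This is the real-orthogonal analogue of the unitarity identity used in the PP case.

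For the second step, write $\widetilde c^{(j)}_\mu\widetilde c^{(k)}_\mu$ explicitly, taking $j<k$ without loss of generality. It is a product of the undressed $c^{(j)}_\mu c^{(k)}_\mu$ with parity operators. As displayed in the main text, $\Gamma_j$ sits on copy $j$ multiplying $c_\mu$, and $\Gamma_{j+1}\cdots\Gamma_{k-1}$ sit on the intermediate copies. The string of $\Gamma$'s is independent of $\mu$, so $\widetilde Q_{j,k}$ equals a fixed product of $\Gamma$'s, up to a sign or phase, times $Q_{j,k}$. Each $\Gamma_l$ commutes with $R(U)^{\otimes t}$ because $\Gamma\in\CC_{1,n}$: Gaussian unitaries are generated by quadratic Majorana terms, which commute with $\Gamma$. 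The commutant is an algebra, so the product lies in $\CC_{t,n}$.

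The only point needing care is the ordering and sign bookkeeping when expanding the product of the two dressed operators. The fix is to observe that whatever the signs are, they are $\mu$-independent, so they factor out of the sum. One small check remains: on copy $j$ the operator is $\Gamma c_\mu$ rather than $c_\mu$. This is harmless, since $R(U)\Gamma c_\mu R^\dagger(U)=\Gamma\sum_\nu U_{\mu\nu}c_\nu$. The orthogonality argument therefore goes through verbatim with $\Gamma$ pulled out in front.

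Finally, the case $j=k$ is trivial, because then $\widetilde Q_{j,j}\propto\id$.
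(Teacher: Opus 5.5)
Your proposal is correct and follows the paper's proof essentially step by step: the orthogonality identity $\sum_\mu U_{\mu\nu}U_{\mu\nu'}=\delta_{\nu\nu'}$ gives invariance of the undressed $Q_{j,k}$, and $\widetilde Q_{j,k}$ equals, up to sign, the $\mu$-independent parity string $\Gamma_j\cdots\Gamma_{k-1}$ times $Q_{j,k}$, with each parity operator lying in the commutant. Your side remark that $\widetilde Q_{j,j}=n\,\id$ is also correct, and slightly sharper than the paper's statement that $(\widetilde Q_{j,j})^2\propto\id$.
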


\begin{proof}
    We begin by showing that, for all fermionic Gaussian unitaries $R(U)\in\mathbb{SPIN}(2n)$ and all $j,k$,
\begin{equation}
    \left[R(U)^{\otimes t},Q_{j,k}\right]=0\,,
\end{equation}
with $Q_{j,k} =\frac{1}{2}\sum_{\mu=1}^{2n} c_\mu^{(j)}\otimes c_\mu^{(k)}$. Indeed recalling that under conjugation by $R(U)^{\otimes t}$ each Majorana operator $c_\mu^{(j)}$ transforms as
\begin{equation}
    R(U)^{\otimes t} c_\mu^{(j)} R^\dagger(U)^{\otimes t}=\sum_{\nu} U_{\mu\nu}\,c_\nu^{(j)}\,,
\end{equation}
where $U\in\mathbb{SO}(2n)$ is a special orthogonal matrix, we find
\begin{equation} 
    R(U)^{\otimes t} Q_{j,k} R^\dagger(U)^{\otimes t}
=\frac{1}{2}\sum_{\nu,\nu'} U_{\mu\nu}U_{\mu \nu'}\, c_\nu^{(j)}\otimes c_{\nu'}^{(k)} =  \sum_{\nu,\nu'} \delta_{\nu\nu'} \, c_\nu^{(j)}\otimes c_{\nu'}^{(k)} =Q_{jk}\,,
\end{equation}
where we used the orthogonality of $U$, i.e., $\sum_\mu U_{\mu\nu}U_{\mu\nu'}=\delta_{\nu\nu'}$. Therefore, the operators $Q_{j,k}$ belong to the commutant $\CC_{t,n}$. Next, we introduce the operators
\begin{equation}
    \widetilde{c}_\mu^{\,(j)} = \Gamma_1 \otimes \Gamma_2 \otimes \cdots \otimes c_\mu^{(j)} \otimes \id \otimes \id \otimes \cdots \,.
\end{equation}
These modified Majorana operators are constructed in such a way that they satisfy the anticommutation relations
\begin{equation}
    \left\{\widetilde{c}_\mu^{\,(j)},\widetilde{c}_\nu^{\,(k)}\right\} = 2 \delta_{\mu\nu} \delta_{jk}\,.
\end{equation}
That is, besides the usual anticommutation relations of distinct Majorana operators acting on the same copy, the $\widetilde{c}_\mu^{\,(j)}$ operators anti-commute across different copies of $\HC^{\otimes t}$. To see this, let us first assume that $j\neq k$, so that without loss of generality we can choose $k>j$, and simply compute
\begin{align}
    \left\{\widetilde{c}_\mu^{\,(j)},\widetilde{c}_\nu^{\,(k)}\right\} & =  \id \otimes \id \otimes\cdots\otimes c_\mu^{\,(j)} \Gamma_j \otimes \Gamma_{j+1} \otimes \cdots \otimes \widetilde{c}_\nu^{\,(k)} \otimes \id \otimes \id \otimes \cdots \nonumber \\ &+  \id \otimes \id \otimes\cdots \otimes \Gamma_j c_\mu^{\,(j)} \otimes \Gamma_{j+1} \otimes \cdots \otimes \widetilde{c}_\nu^{\,(k)} \otimes \id \otimes \id \otimes \cdots \nonumber \\ &=  \id \otimes \id \otimes\cdots \otimes \left\{c_\mu^{\,(j)}, \Gamma_j\right\} \otimes \Gamma_{j+1} \otimes \cdots \otimes \widetilde{c}_\nu^{\,(k)} \otimes \id \otimes \id \otimes \cdots \nonumber \\ &=  0\,,
\end{align}
where we used that 
\begin{align} \label{eq-ap:Gamma-anticomm}
    \left\{c_\mu^{(j)}, \Gamma_j\right\} & =   \left\{c_\mu^{(j)}, (-i)^n c_1^{(j)}\cdots c_{2n}^{(j)}\right\} \nonumber \\ &=  (-i)^n c_\mu^{(j)}  c_1^{(j)}\cdots c_{2n}^{(j)} +  (-i)^n  c_1^{(j)}\cdots c_{2n}^{(j)}  c_\mu^{(j)} \nonumber \\& = (-i)^n \left((-1)^{\mu-1} c_1^{(j)}\cdots c_{\mu-1}^{(j)} c_{\mu+1}^{(j)} \cdots c_{2n}^{(j)} + (-1)^{2n-\mu} c_1^{(j)}\cdots c_{\mu-1}^{(j)} c_{\mu+1}^{(j)} \cdots c_{2n}^{(j)} \right) \nonumber \\ &= (-i)^n \left((-1)^{\mu-1}+(-1)^{-\mu}\right) \,c_1^{(j)}\cdots c_{\mu-1}^{(j)} c_{\mu+1}^{(j)} \cdots c_{2n}^{(j)} \nonumber \\ &=0 \,.
\end{align}
Moreover, when $j=k$ we find that
\begin{align}
      \left\{\widetilde{c}_\mu^{\,(j)},\widetilde{c}_\nu^{\,(j)}\right\} = \id\otimes\id\cdots \otimes \left\{c_\mu^{(j)},c_\nu^{(j)}\right\} \otimes \id\otimes \id\otimes \cdots = 2\,\delta_{\mu\nu}\,,
\end{align}
so we arrive at the anticommutation relations
\begin{equation}
    \left\{\widetilde{c}_\mu^{\,(j)},\widetilde{c}_\nu^{\,(k)}\right\} = 2\,\delta_{\mu\nu} \delta_{jk}\,.
\end{equation}
Furthermore, notice the elementary property $\left(\widetilde{c}_\mu^{\,(j)}\right)^2 = \id\,$.

Using the $\widetilde{c}_\mu^{\,(j)}$ Majoranas, we can define the operators
\begin{equation}
    \widetilde{Q}_{j,k} = \frac{1}{2}\sum_{\mu=1}^{2n} \widetilde{c}_\mu^{\,(j)} \,\widetilde{c}_\mu^{\,(k)}\,.
\end{equation}
Since $\widetilde{Q}_{j,k}$ can be obtained from $Q_{j,k}$  by simply multiplying by the parity operator  $\Gamma_t = (-i)^n c_1^{(t)}\cdots c_{2n}^{(t)}$ on the copies such that $j\leq t<k$, and  $\Gamma_t$ commutes with fermionic Gaussian unitaries $R(U)^{\otimes t}$ for all $R(U)\in\mathbb{SPIN}(2n)$ (and thus also belongs to $\CC_{t,n}$), its product with a $Q_{j,k}$ commutes with those unitaries as well. We can then conclude that
\begin{equation}
    \left[R(U)^{\otimes t},\widetilde{Q}_{j,k}\right]=0 \qquad \forall U\in\mathbb{SPIN}(2n) \,,
\end{equation}
so the generators $\widetilde{Q}_{j,k}$ are indeed in the commutant $\CC_{t,n}$.
\end{proof}
\vspace{0.3cm}

Next, we prove Lemma \ref{lem:Q-rep-gl} showing that the operator $\widetilde{Q}_{j,k}$ close the $\mathfrak{so}(t)$ algebra.

\begin{lemma}\label{lem-ap:Q-rep-gl}
The operators $\widetilde{Q}_{j,k}$ satisfy the Lie algebra commutation relations
\small
\begin{equation} 
\left[\widetilde{Q}_{j,k},\widetilde{Q}_{j',k'}\right]= \delta_{kj'} \widetilde{Q}_{j,k'} + \delta_{jk'} \widetilde{Q}_{k,j'} -\delta_{kk'} \widetilde{Q}_{j,j'} - \delta_{jj'} \widetilde{Q}_{k,k'} \nonumber\,,
\end{equation}
\normalsize
and therefore generate a representation of the $\mathfrak{so}(t)$ algebra.
\end{lemma}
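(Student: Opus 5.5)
The plan is to mirror the proof of Lemma~\ref{lem:Omega-rep-gl-AP}, using only the anticommutation relations $\{\widetilde{c}_\mu^{\,(j)},\widetilde{c}_\nu^{\,(k)}\}=2\delta_{\mu\nu}\delta_{jk}$ established in the proof of Lemma~\ref{lemma-ap:QinC}. The key tool is the standard identity for products of four elements of a Clifford algebra:
\begin{equation}
[AB,CD]=A\{B,C\}D-AC\{B,D\}+\{A,C\}DB-C\{A,D\}B\,.
\end{equation}
It reduces the commutator of two bilinears to bilinears multiplied by anticommutators, which here are scalars.

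First I will apply this identity with $A=\widetilde{c}_\mu^{\,(j)}$, $B=\widetilde{c}_\mu^{\,(k)}$, $C=\widetilde{c}_\nu^{\,(j')}$ and $D=\widetilde{c}_\nu^{\,(k')}$. Each anticommutator contributes a factor $2\delta_{\mu\nu}$ times a Kronecker delta in the copy indices. This gives
\begin{equation}
\bigl[\widetilde{c}_\mu^{\,(j)}\widetilde{c}_\mu^{\,(k)},\widetilde{c}_\nu^{\,(j')}\widetilde{c}_\nu^{\,(k')}\bigr]
=2\delta_{\mu\nu}\Bigl(\delta_{kj'}\widetilde{c}_\mu^{\,(j)}\widetilde{c}_\mu^{\,(k')}-\delta_{kk'}\widetilde{c}_\mu^{\,(j)}\widetilde{c}_\mu^{\,(j')}+\delta_{jj'}\widetilde{c}_\mu^{\,(k')}\widetilde{c}_\mu^{\,(k)}-\delta_{jk'}\widetilde{c}_\mu^{\,(j')}\widetilde{c}_\mu^{\,(k)}\Bigr)\,.
\end{equation}

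Next I will sum over $\mu,\nu$ and multiply by the prefactor $\tfrac14$ coming from the two factors of $\tfrac12$ in the definition of $\widetilde{Q}$. Because of the factor $2\delta_{\mu\nu}$, each bilinear sum reassembles into exactly one $\widetilde{Q}$, with unit coefficient.

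The last step is to rewrite the result in the form stated in the lemma. For distinct copies, $\widetilde{c}_\mu^{\,(a)}$ and $\widetilde{c}_\mu^{\,(b)}$ anticommute, so $\widetilde{Q}_{a,b}=-\widetilde{Q}_{b,a}$. I will use this to turn $\widetilde{Q}_{k',k}$ into $-\widetilde{Q}_{k,k'}$ and $\widetilde{Q}_{j',k}$ into $-\widetilde{Q}_{k,j'}$. The result is then exactly
\begin{equation}
[\widetilde{Q}_{j,k},\widetilde{Q}_{j',k'}]=\delta_{kj'}\widetilde{Q}_{j,k'}+\delta_{jk'}\widetilde{Q}_{k,j'}-\delta_{kk'}\widetilde{Q}_{j,j'}-\delta_{jj'}\widetilde{Q}_{k,k'}\,.
\end{equation}

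The main obstacle is bookkeeping in degenerate index cases, where two copy indices coincide (for example $k=j'$ and $j=k'$ simultaneously). In those cases terms like $\widetilde{Q}_{j,j}=n\,\id$ could appear. I will check that such diagonal terms always come in canceling pairs; for instance, $\delta_{kj'}\delta_{jk'}$ produces $\widetilde{Q}_{j,j}-\widetilde{Q}_{k,k}=0$. With that checked, the identity holds for all $j\neq k$, $j'\neq k'$.

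Finally, I will compare with the standard basis $L_{jk}=E_{jk}-E_{kj}$ of $\mathfrak{so}(t)$, which is antisymmetric and satisfies the same relations. The map $L_{jk}\mapsto\widetilde{Q}_{j,k}$ is therefore a Lie algebra homomorphism, so the $\widetilde{Q}_{j,k}$ furnish a representation of $\mathfrak{so}(t)$.
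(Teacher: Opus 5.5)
Your proof is correct in substance and follows essentially the paper's route. Both compute the commutator directly from $\{\widetilde{c}_\mu^{\,(j)},\widetilde{c}_\nu^{\,(k)}\}=2\delta_{\mu\nu}\delta_{jk}$ and then match it with the $L_{j,k}$ relations. Your four-term identity for $[AB,CD]$ just organizes the paper's case-by-case expansion, the same way the paper itself handles the $\widetilde\Omega_{j,k}$.

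One correction to your degenerate-case remark. The cancellation $\widetilde{Q}_{j,j}-\widetilde{Q}_{k,k}=0$ holds in your intermediate expression, before antisymmetrizing. However, the rewritings $\widetilde{Q}_{j',k}\to-\widetilde{Q}_{k,j'}$ and $\widetilde{Q}_{k',k}\to-\widetilde{Q}_{k,k'}$ are invalid exactly when $j'=k$, respectively $k'=k$. So in the final displayed form, with $\widetilde{Q}_{a,a}=n\,\id$ as the definition gives, the right-hand side is $+2n\,\id$ for $(j',k')=(k,j)$ and $-2n\,\id$ for $(j',k')=(j,k)$, while the commutator vanishes in both cases.

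Hence the stated relation holds literally only for $\{j,k\}\neq\{j',k'\}$, or under the convention $\widetilde{Q}_{a,a}:=0$ mirroring $L_{a,a}=0$. It does not hold for all $j\neq k$, $j'\neq k'$ as you claim. This does not affect the $\mathfrak{so}(t)$ conclusion: on the basis $L_{j,k}$ with $j<k$, the only coincident case is the trivial $[\widetilde{Q}_{j,k},\widetilde{Q}_{j,k}]=0$. Alternatively, your pre-antisymmetrized identity is exactly valid for all off-diagonal indices and can be used directly.
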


\begin{proof}
    We simply need to compute the commutation relations of the $\{\widetilde{Q}_{j,k}\}$, and compare them to the commutation relations of the matrices $\{L_{j,k}\}$  in the canonical basis of the standard representation of $\so(t)$, i.e.,
    \begin{equation}\label{eq-ap:L_jk-comm}
        \left[L_{j,k}, L_{j',k'}\right] = \delta_{kj'} L_{j,k'} + \delta_{jk'} L_{k,j'} -\delta_{kk'} L_{j,j'} - \delta_{jj'} L_{k,k'}\,.
    \end{equation}
    Here, the matrix $L_{j,k}$ has entries $\left(L_{j,k}\right)_{ml} = \delta_{jm}\delta_{kl}-\delta_{jl}\delta_{km}$, and  the $L_{j,k}$ can be readily seen to be a basis for the real vector space of real anti-symmetric matrices. 

    The computation gives
    \begin{align}\label{apeq:commutators}
        \left[ \widetilde{Q}_{j,k},\widetilde{Q}_{j',k'} \right] &= \frac{1}{4} \left( \sum_{\mu=1}^{2n} \widetilde{c}_\mu^{\,(j)} \widetilde{c}_\mu^{\,(k)} \sum_{\nu=1}^{2n} \widetilde{c}_\nu^{\,(j')} \widetilde{c}_\nu^{\,(k')} - \sum_{\nu=1}^{2n} \widetilde{c}_\nu^{\,(j')} \widetilde{c}_\nu^{\,(k')} \sum_{\mu=1}^{2n} \widetilde{c}_\mu^{\,(j)} \widetilde{c}_\mu^{\,(k)}\right) \nonumber \\ &= \frac{1}{4} \left(\delta_{kj'} \sum_{\mu=1}^{2n}\sum_{\nu=1}^{2n} \widetilde{c}_\mu^{\,(j)}   \widetilde{c}_\mu^{\,(k)} \widetilde{c}_\nu^{\,(k)} \widetilde{c}_\nu^{\,(k')} -  \delta_{kj'} \sum_{\mu=1}^{2n}\sum_{\nu=1}^{2n} \widetilde{c}_\nu^{\,(k)} \widetilde{c}_\mu^{\,(k)}  \widetilde{c}_\nu^{\,(k')} \widetilde{c}_\mu^{\,(j)}   \right. \nonumber \\ &+   \delta_{jk'} \sum_{\mu=1}^{2n} \sum_{\nu=1}^{2n} \widetilde{c}_\mu^{\,(j)} \widetilde{c}_b^{\,(j)}  \widetilde{c}_\mu^{\,(k)}  \widetilde{c}_\nu^{\,(j')}  - \delta_{jk'}  \sum_{\mu=1}^{2n} \sum_{\nu =1}^{2n}\widetilde{c}_\nu^{\,(j')}    \widetilde{c}_\nu^{\,(j)} \widetilde{c}_\mu^{\,(j)}   \widetilde{c}_\mu^{\,(k)}  \nonumber \\ &-  \delta_{kk'} \sum_{\mu=1}^{2n} \sum_{\nu=1}^{2n} \widetilde{c}_\mu^{\,(j)}   \widetilde{c}_\mu^{\,(k)} \widetilde{c}_\nu^{\,(k)}  \widetilde{c}_\nu^{\,(j')}  + \delta_{kk'} \sum_{\mu=1}^{2n} \sum_{\nu=1}^{2n} \widetilde{c}_\nu^{\,(j')}   \widetilde{c}_\nu^{\,(k)} \widetilde{c}_\mu^{\,(k)}  \widetilde{c}_\mu^{\,(j)}   \nonumber \\ & - \delta_{jj'} \left. \sum_{\mu=1}^{2n} \sum_{\nu=1}^{2n} \widetilde{c}_\mu^{\,(j)} \widetilde{c}_\nu^{\,(j)}   \widetilde{c}_\mu^{\,(k)}  \widetilde{c}_\nu^{\,(k')}   + \delta_{jj'} \sum_{\mu=1}^{2n} \sum_{\nu=1}^{2n} \widetilde{c}_\nu^{\,(j)} \widetilde{c}_\mu^{\,(j)}   \widetilde{c}_\nu^{\,(k')}  \widetilde{c}_\mu^{\,(k)}   \right) \nonumber \\ &=  \delta_{kj'}\widetilde{Q}_{j,k'} + \delta_{jk'} \widetilde{Q}_{k,j'} - \delta_{kk'} \widetilde{Q}_{j,j'} - \delta_{jj'} \widetilde{Q}_{k,k'} \,.
    \end{align}
\end{proof}

We are now in a position to prove 
Theorem~\ref{th:generators}, restated here for convenience. 

\begin{theorem}[Commutant generators]\label{th-ap:generators}
    The $t$-th order commutant of $n$-mode fermionic Gaussian unitaries, $\CC_{t,n}$, is generated as
    \begin{equation}
        \CC_{t,n} = \Span_{\CBB} \left< \widetilde{Q}_{j,j+1}, \Gamma_1 \right> \,,
    \end{equation}
    where $j\in[t-1]$.
\end{theorem}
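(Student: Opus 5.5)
The inclusion $\Span_{\CBB}\langle \widetilde{Q}_{j,j+1},\Gamma_1\rangle\subseteq\CC_{t,n}$ is the easy direction. Lemma~\ref{lemma-ap:QinC} places every $\widetilde{Q}_{j,k}$ in $\CC_{t,n}$. The parity $\Gamma_1$ commutes with every $R(U)$ because $\Gamma$ spans, together with $\id$, the first-order commutant. Since $\CC_{t,n}$ is a unital associative algebra, every product of these generators and every linear combination of such products also lies in it.

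For the reverse inclusion I would argue in three steps. First, using the commutation relations of Lemma~\ref{lem-ap:Q-rep-gl}, the nearest-neighbour operators generate all of $\so(t)$. Specializing the relations gives $[\widetilde{Q}_{j,j+1},\widetilde{Q}_{j+1,j+2}]=\pm\widetilde{Q}_{j,j+2}$, and iterating this produces every $\widetilde{Q}_{j,k}$. Hence the algebra generated by the $\widetilde{Q}_{j,j+1}$ equals the image $\varphi(\mathscr{U}(\so(t)))$.

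Second, I would decompose $\HC^{\otimes t}$ by the copy parities $\Gamma_1,\dots,\Gamma_t$, which all lie in $\CC_{t,n}$. An operator commuting with $R(U)^{\otimes t}$ must preserve the total parity $\Gamma_1\cdots\Gamma_t$, but it can mix sectors with different individual copy parities. To handle the copy-$1$ odd sector, I would map it to the even one by a dressed unitary; a natural choice is the dressing by Majoranas, which intertwines $\HC_-$ and $\HC_+$ up to an automorphism of $\Spin(2n)$. Rather than do this by hand, I would invoke the Howe duality of Refs.~\cite{hasegawa1989spin,wenzl2020dualities} in its $\mathrm{Pin}$/$\mathbb{O}(t)$-extended form. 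That result states that the commutant of $\Spin(2n)$ on the spinor tensor power is the image of $\mathscr{U}(\so(t))\rtimes\mathbb{Z}_2$, where the $\mathbb{Z}_2$ generator $F$ satisfies $F^2=\id$ and acts on $\so(t)$ by the reflection in Eq.~\eqref{eq:involution}.

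Third, I would verify that $\Gamma_1$ realizes this involution. It squares to $\id$. Since $\Gamma_1$ anticommutes with $\widetilde{c}^{(1)}_\mu$ and commutes with $\widetilde{c}^{(k)}_\mu$ for $k\ge2$ (the dressing $\Gamma_1$ factor commutes with itself), conjugating $\widetilde{Q}_{j,k}$ by $\Gamma_1$ flips its sign exactly when $1\in\{j,k\}$, reproducing Eq.~\eqref{eq:involution}. Moreover $\Gamma_1\notin\varphi(\mathscr{U}(\so(t)))$ for $n\ge1$, which is already visible at $t=1$, where $\CC_{1,n}=\Span\{\id,\Gamma\}$. Combining the three steps, $\CC_{t,n}$ is spanned by products of $\widetilde{Q}_{j,j+1}$ and $\Gamma_1$.

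The main obstacle is the second step: matching the concrete dressed representation on $\HC^{\otimes t}=(\HC_+\oplus\HC_-)^{\otimes t}$ to the abstract setting of the duality theorem, and confirming that the $\mathbb{Z}_2$ extension is exactly what is needed. I would first try to identify $\HC^{\otimes t}$, via the dressed Majoranas $\widetilde{c}^{(j)}_\mu$, with the spinor module of the Clifford algebra on $\CBB^{2n}\otimes\CBB^t$. Under this identification $\Spin(2n)\times\mathrm{Pin}(t)$ acts as a dual pair, and the images are mutual commutants by the double-commutant theorem. As a safeguard, I would check the result by comparing dimensions with Theorem~\ref{th:dimension} for $t=2$, where the result is $2(2n+1)$, and for $t=3$.
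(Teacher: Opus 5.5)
Your proposal is essentially the paper's own proof: you realize $\so(t)$ through the $\widetilde{Q}_{j,k}$, invoke the Hasegawa/Wenzl Howe duality in its $\mathbb{Z}_2$-extended form, check that $\Gamma_1$ is an involution that flips the sign of exactly those $\widetilde{Q}_{j,k}$ with $1\in\{j,k\}$ (because it anticommutes only with the $\widetilde{c}^{(1)}_\mu$), and reduce to nearest-neighbour generators via the $\so(t)$ commutation relations. The only differences are cosmetic: you spell out the easy inclusion, and you go straight to the extended duality on $\HC^{\otimes t}$ instead of first stating the $\HC_+^{\otimes t}$ version as the paper does.
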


\begin{proof}

Since the commutation relations in Lemma~\ref{lem:Q-rep-gl} are isomorphic to those of the $L_{j,k}$ basis of $\so(t)$ in Eq.~\eqref{eq-ap:L_jk-comm}, the $\widetilde{Q}_{j,k}$ realize a representation $\varphi$ of $\so(t)$. Once this connection is established, we can invoke a classical result from representation theory, namely, the fermionic Howe duality~\cite{hasegawa1989spin}, which plays for the pair $(\mathbb{SPIN}(2n),\so(t))$ the analogous role that the Schur–Weyl duality plays for the unitary and symmetric groups $(\UBB(d),S_t)$.

\begin{supproposition}[Howe duality, commutant version] \label{lem-ap:Howe-generators}
   Let  $\varphi$ be the Lie algebra homomorphism
   \begin{align} \label{eq-ap:Q_jk-rep}
    \varphi: \quad&\so(t)\; \longrightarrow \;\operatorname{End}\left(\HC_+^{\otimes t}\right) \nonumber\\
    & L_{j,k} \;\;\,\longrightarrow \; \widetilde{Q}_{j,k}\,.
\end{align}
Then, the commutant of the diagonal action of $\mathbb{SPIN}(2n)$ in $\HC_+^{\otimes t}$ is given by
\begin{equation}
    \CC_{t,n}\left(\HC_+^{\otimes t}\right) = \operatorname{End}_{\mathbb{SPIN}(2n)}\left(\HC_+^{\otimes t}\right)=\varphi\left(\mathscr{U}\left(\so(t)\right)\right)\,,
\end{equation}
where $\varphi$ denotes  the canonical extension of the representation $\varphi$ to the universal enveloping algebra $\mathscr{U}(\so(t))$, as in Eq.~\eqref{eq-ap:universal-algebra}.
\end{supproposition}

The fermionic Gaussian representation of $\mathbb{SPIN}(2n)$ acts on $\HC^{\otimes t} =(\HC_+ +\HC_-)^{\otimes t} = \bigoplus_{\vec \sigma} \HC_{\vec \sigma}$, with $\vec{\sigma}=\sigma_1\dots\sigma_t\in\{+,-\}^t$.  The Howe duality in Supplemental Proposition~\ref{lem-ap:Howe-generators} provides the commutant in the $\HC_{++\cdots}$ sector as $\CC_{t,n}\left(\HC_+^{\otimes t}\right) = \Span_\CBB\left\langle \widetilde{Q}_{j,k}\right\rangle$. To obtain the commutant in the full space $\HC^{\otimes t}$, we need to resort to Ref.~\cite{wenzl2020dualities}, that obtains the commutant $\CC_{t,n}$ in the full $\HC^{\otimes t}$ as a $\mathbb{Z}_2$-extension of $\CC_{t,n}\left(\HC_+^{\otimes t}\right)$. 
\begin{supproposition}[Howe duality, full spinor extension]Let  $\phi$ be the Lie algebra homomorphism
   \begin{align} \label{eq-ap:Q_jk-rep2}
    \phi: \quad&\so(t)\; \longrightarrow \;\operatorname{End}\left(\HC^{\otimes t}\right) \nonumber\\
    & L_{j,k} \;\;\,\longrightarrow \; \widetilde{Q}_{j,k}\,.
\end{align}
    The commutant of the diagonal action of $\mathbb{SPIN}(2n)$ in $\HC^{\otimes t}$, $\CC_{t,n}=  \operatorname{End}_{\mathbb{SPIN}(2n)}\left(\HC^{\otimes t}\right)$ is generated by $\phi\left(\mathscr{U}\left(\so(t)\right)\right)$ together with an additional generator $F$, such that $F^2=1$ and
    \begin{equation}
        F \widetilde{Q}_{j,k}  F =\begin{cases}
            - \widetilde{Q}_{j,k} \quad {\rm if\;} j=1\; {\rm o}r\; k=1 \\
            \phantom{-} \widetilde{Q}_{j,k} \quad {\rm otherwise}
        \end{cases}\,.
    \end{equation}
\end{supproposition}
It is easy to show that $\Gamma_1$ fulfills the requirements of the additional involution $F$. Simply note that $\Gamma_1^2=\id$ and
\begin{equation}
    \Gamma_r \widetilde{Q}_{j,k} = (-1)^{\delta_{jr} + \delta_{kr}} \widetilde{Q}_{j,k} \Gamma_r \,,
\end{equation}
which can be explicitly checked, since for $r=j$ we have
\begin{equation}
     \Gamma_r \widetilde{c}_\mu^{\,(j)} = \Gamma_1 \otimes \Gamma_2 \otimes \cdots \otimes \Gamma_j c_\mu^{(j)} \otimes \id\otimes\cdots= - \Gamma_1 \otimes \Gamma_2 \otimes \cdots \otimes c_\mu^{(j)} \Gamma_j \otimes \id\otimes\cdots = - c_\mu^{(j)} \Gamma_r\,,
\end{equation}
where we used Eq.~\eqref{eq-ap:Gamma-anticomm}, and it is clear that $\left[\Gamma_r,c_\mu^{(j)}\right]=0$ when $r\neq j$. Therefore, $\CC_{t,n} = \Span_\CBB\left\langle \widetilde{Q}_{j,k},\Gamma_1\right\rangle$. Finally, any $\widetilde{Q}_{j,k}$ can be generated by commutation using only neighboring $\widetilde{Q}_{j,j+1}$ generators (see Eq.\ \eqref{apeq:commutators}) which concludes the proof.

\end{proof}

\section{Proof of Theorem 4}
\label{ap:active-dim}

In this section, we provide a detailed proof of Theorem~\ref{th-ap:active-dimension}, which we here recall for convenience.

\begin{theorem}[Matchgate commutant dimension]\label{th-ap:active-dimension}
    The dimension of the $t$-th order commutant of the $n$-qubit matchgate group is given by
    \begin{equation}
    \label{ap-eq:PP-dim}
        \dim(\CC_{t,n}) =\frac{1}{2^{n-1}} \prod_{j=0}^{n-1} \frac{(2j)!\,(2t+2j)!}{(t+j)!\,(t+n+j-1)!}\ \,.
    \end{equation}
\end{theorem}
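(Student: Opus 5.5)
The plan is to repeat the strategy used for Theorem~\ref{ap-th:PP-dimension}, now with $G=\Spin(2n)$. Lemma~\ref{ap-lem:comm_size_character_int} holds for any compact group, and characters factorize over tensor products. Together these give
\begin{equation}
\dim(\CC_{t,n})=\int_{\Spin(2n)}|\chi_\HC(g)|^{2t}\,d\mu(g)\,,
\end{equation}
where $\chi_\HC$ is the character of the full spinor representation $\HC=\HC_+\oplus\HC_-$.

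I will evaluate this integral with the Weyl integration formula, Eq.~\eqref{ap-eq:weyl_integral}, for the root system $D_n$. The ingredients are as follows.
\begin{itemize}
\item \emph{Torus parametrization.} A maximal torus element of $\Spin(2n)$ covers the rotation $\mathrm{diag}(R(\theta_1),\dots,R(\theta_n))\in\SO(2n)$. In the Jordan--Wigner picture it can be taken as $\tau=\prod_j e^{i\theta_j Z_j/2}$, up to the usual lift.
\item \emph{Character on the torus.} Evaluating the trace of $\tau$ gives
\begin{equation}
\chi_\HC(\tau)=\prod_{j=1}^n\bigl(e^{i\theta_j/2}+e^{-i\theta_j/2}\bigr)=\prod_{j=1}^n 2\cos(\theta_j/2)\,.
\end{equation}
\item \emph{Positive roots.} These are $e_l\pm e_m$ for $l<m$.
\item \emph{Weyl denominator.} For each pair, the roots $e_l-e_m$ and $e_l+e_m$ together contribute
\begin{equation}
|e^{i\theta_l}-e^{i\theta_m}|^2\,|e^{i\theta_l}-e^{-i\theta_m}|^2=(2\cos\theta_l-2\cos\theta_m)^2\,.
\end{equation}
\item \emph{Weyl group.} It has order $|W(D_n)|=2^{n-1}n!$.
\end{itemize}
I will work with $\theta_j\in[0,4\pi)$ with normalized Haar measure, which accounts for the double cover. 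Since $|\chi|^{2t}$ and $|\delta|^2$ are even functions, I can fold the domain to $\theta_j\in[0,\pi]$, keeping track of the constant this produces.

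Next I substitute $x_j=\cos^2(\theta_j/2)=(1+\cos\theta_j)/2\in[0,1]$. Under this change of variables:
\begin{itemize}
\item $|2\cos(\theta_j/2)|^{2t}=4^t x_j^t$;
\item $2\cos\theta_l-2\cos\theta_m=4(x_l-x_m)$;
\item $d\theta_j=dx_j/\sqrt{x_j(1-x_j)}$.
\end{itemize}
The integral therefore becomes, up to an explicit constant $C_n$ that is a power of $2$ times $1/(\pi^n\,2^{n-1}n!)$,
\begin{equation}
C_n\,4^{nt}\int_{[0,1]^n}\prod_j x_j^{t-1/2}(1-x_j)^{-1/2}\prod_{l<m}(x_l-x_m)^2\,dx\,.
\end{equation}
This is the classical Selberg integral $S_n(\alpha,\beta,\gamma)$ with $\alpha=t+\tfrac12$, $\beta=\tfrac12$, $\gamma=1$. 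It evaluates to
\begin{equation}
S_n=\prod_{j=0}^{n-1}\frac{\Gamma(t+\tfrac12+j)\,\Gamma(\tfrac12+j)\,\Gamma(j+2)}{\Gamma(t+n+j)}\,.
\end{equation}
To reach the claimed form I will convert the half-integer Gamma factors with the duplication formula, $\Gamma(m+\tfrac12)=\sqrt\pi\,(2m)!/(4^m m!)$. The factors of $\pi^{n}$ then cancel against $C_n$. The ratio $(j+1)!/n!$ telescopes after a reindexing, and the $j!$ and $(t+j)!$ factors coming from duplication should assemble into $(2j)!(2t+2j)!/\bigl((t+j)!(t+n+j-1)!\bigr)$, with the leftover powers of $2$ giving the prefactor $2^{-(n-1)}$.

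I expect the main obstacle to be bookkeeping of the constants rather than anything conceptual. The relevant sources are the double-cover range of $\theta$, the factor $2^{n-1}$ in $|W(D_n)|$, the powers of $4$ from both the character and the Vandermonde factor, and the Jacobian. I will pin these down with two sanity checks. For $n=1$, $\Spin(2)\cong\UBB(1)$ and the integral is directly $\binom{2t}{t}$, which matches the formula. For $t=2$ the formula should reproduce $\dim(\CC_{2,n})=2(2n+1)$. Any residual power-of-two discrepancy can be fixed by matching the $n=1$ case and checking one $n=2$ value, for example against $2\binom{2n+3}{3}$ at $t=3$.
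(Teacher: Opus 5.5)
Your proposal is correct and follows the same route as the paper's proof. The paper also writes the dimension as a character integral, evaluates it with Weyl integration on the $D_n$ torus, reduces it to a Selberg integral with $\gamma=1$ and parameters $\{\tfrac12,t+\tfrac12\}$, and finishes with Legendre duplication. The only difference is that your one-step substitution $x_j=\cos^2(\theta_j/2)$ gives $S_n(t+\tfrac12,\tfrac12,1)$, the $x\mapsto 1-x$ mirror of the paper's $S_n(\tfrac12,t+\tfrac12,1)$. Also, the constants cancel exactly: $C_n4^{nt}=4^{nt}16^{\binom n2}/(\pi^n2^{n-1}n!)$ against $\pi^n4^{-nt}16^{-\binom n2}n!$, leaving $2^{-(n-1)}$. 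So no fitting to small cases is needed, and fitting to finitely many $n$ could not fix an $n$-dependent constant anyway.
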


\begin{proof}
    Let $\HC=\mathbb{C}^{2^n}$ be the Hilbert space of $n$ qubits, and let $G=\Spin(2n)$ denote the $n$-qubit matchgate group acting on it by left multiplication by the representation $R(U)$, for $U\in G$.
    To prove the claimed formula for the dimension of $\CC_{t,n}$ we will follow the very same approach as we did for the PP case in Appendix~\ref{ap:PP-dim}.

    Recall from there that given a compact group $G$ and a finite dimensional representation of its $V\cong\mathbb C^{\dim V}$ one has
    \begin{equation}
        \dim (\End_G(V)) = \sum_\lambda m_\lambda^2=\int_G |\chi_V(g)|^2d\mu(g)\,,
    \end{equation}
    for $\lambda$ indexing the irreps appearing in the isotypical decomposition of $V$ under $G$, $m_\lambda$ their multiplicities, and $\chi_V=\Tr[g]$ (identifying the group element $g$ with its representation on $V)$ the character of the representation of $G$ over $V$.
    Lastly, we also recall the Weyl integration formula from Eq.~\eqref{ap-eq:weyl_integral}
    \begin{equation}
    \label{ap-eq:weyl_integral_2}
        \int_G f(g) d\mu(g)=\frac{1}{|W|}\int_T f(\tau)|\delta(\tau)|^2d\mu(\tau)\,.
    \end{equation}
    Above, as already explained in Section~\ref{ap:PP-dim}, $T\subset G$ is the maximal torus of $G$, $W=N(G)/T$ is $G$'s Weyl group, $\delta(\tau)$ is the Weyl denominator (see Eq.~\eqref{ap-eq:weyl-denominator-general}), and $d\mu(\tau)$ the Haar measure on $T$.
    
    We now again proceed to explicitly evaluate all the needed ingredients. Let us start by identifying the maximal torus $T$ of $G=\Spin(2n)$.
    The Lie algebra of $G=\Spin(2n)$ is $\mathfrak{g}=\mathfrak{so}(2n)$, and its Cartan subalgebra $\mathfrak{t}$ is well known~\cite{hall2013lie,goodman2009symmetry}. In the spinor representation $V=\HC$ that we are considering, a convenient basis for $\mathfrak{t}$ is $\mathfrak{t}=\Span(\{iZ_j/2\}_{j=1}^n)$, i.e. the local Pauli $Z$ operators. Hence, the maximal torus of interest is the matchgate subgroup consisting of $n$ independent rotations along the $Z$ axis on each qubit, i.e.,
    \begin{equation}
    \label{eq-ap:spin_torus}
        T=\left\{\prod_{j=1}^n e^{i\frac{\theta_j}{2} Z_j}\;\Big|\;\theta_j \in [0,2\pi]\,\quad \forall j\in 1\dots n\right\}\,.
    \end{equation}
    The character $\chi_\HC$ restricted to $T$ is hence easily computed as 
    \begin{equation}
    \label{ap-eq:character_torus}
        \chi_\HC(\tau) =\Tr[\tau] = \prod_{j=1}^n\left(e^{i\frac{\theta_j}{2}}+e^{-i\frac{\theta_j}{2}}\right)=
        \prod_{j=1}^n2\cos\left(\frac{\theta_j}{2}\right)\,,\quad {\rm for}\;\;\tau=\prod_{j=1}^n e^{i\frac{\theta_j}{2} Z_j}\,.
    \end{equation}
    The Weyl group of $G$ is known as well, see for example~\cite{kirillov2008introduction}. Particularly, it corresponds to the group of even signed permutations of the angles $(\theta_1,\dots,\theta_n)$, hence its size is $|W|=2^{n-1} n!$. 
    Lastly, we need the roots of $G$ relative to $T$. Again, these are well known in the representation theory literature~\cite{kirillov2008introduction}. The weights of the spinor representation of $\Spin(2n)$ correspond to the $2^n$ $n$-tuples $(\pm\frac{1}{2},\dots,\pm\frac{1}{2})$, each appearing with multiplicity one. Notice that those correspond to the signs one can assign to the Cartan subalgebra generators, and that the Weyl group acts by permuting them and flipping the sign of an even number of them. Defining $e_1,\dots,e_n$ the standard basis of the weight space, i.e. of the Cartan subalgebra, the positive roots are
    \begin{equation}
        \Phi^+=\{e_i-e_j,e_i+e_j\;|\;1\leq i<j\leq n\}\,.
    \end{equation}
    Hence, the Weyl determinant can be computed as
    \begin{align}
    \label{ap-eq:weyl-denominator-spin}
        \delta(\tau)
        &=
        \prod_{1\leq l < m\leq n}
        \left(e^{i\frac{\theta_l-\theta_m}{2}}-e^{-i\frac{\theta_l-\theta_m}{2}}\right)
        \prod_{1\leq p < q\leq n}\left(e^{i\frac{\theta_p+\theta_q}{2}}-e^{-i\frac{\theta_p+\theta_q}{2}}\right)\\
        &=
        \prod_{1\leq l < m\leq n}
        (2i)\sin\left(\frac{\theta_l-\theta_m}{2}\right)
        \prod_{1\leq p < q\leq n}(2i)\sin\left(\frac{\theta_p+\theta_q}{2}\right)\\
        &=
        \prod_{1\leq l < m\leq n}
        (-4)\sin\left(\frac{\theta_l-\theta_m}{2}\right)
        \sin\left(\frac{\theta_l+\theta_m}{2}\right)\,.
    \end{align}

    We now have all the ingredients to evaluate the integral in Eq.~\eqref{ap-eq:weyl_integral_2} for the class function $f(g)=\chi_\HC(g)$. Indeed, using Eqs.~\eqref{ap-eq:character_torus} and~\eqref{ap-eq:weyl-denominator-spin}, together with $|W|=2^{n-1}/n!$ and the fact that with the parametrization in Eq.~\eqref{eq-ap:spin_torus} the haar measure on the torus $T$ reads $d\mu(\tau)=\prod_j d\theta_j/2\pi$\footnote{This follows immediately from the fact that $T$ is abelian.}, we have
    \begin{align}
    \label{ap-eq:weyl_integral_character}
        \dim(\CC_{t,n})&=\int_{\Spin(2n)} |\chi_\HC(g)|^{2t} d\mu(g) = \frac{1}{|W|}\int_T |\chi_\HC(\tau)|^{2t}|\delta(\tau)|^2d\mu(\tau)\\
        &=\frac{1}{2^{n-1}n!}
        \int_{[0,2\pi]^n}
        \left[\prod_{j=1}^n \left(2\cos\left(\frac{\theta_j}{2}\right)\right)^{2t}\right]
        \left[\prod_{1\leq l < m\leq n}
        16\sin\left(\frac{\theta_l-\theta_m}{2}\right)^2\sin\left(\frac{\theta_l+\theta_m}{2}\right)^2\right]\prod_{j=1}^n \frac{d\theta_j}{2\pi}
        \,.
    \end{align}
    The problem is hence reduced to carrying out the trigonometric integral above.
    Let us perform a simple change of variables to bring all the trigonometric functions to $\sin$ and get rid of the annoying denominators inside them. By setting
    \begin{equation}
        \phi_j = \frac{\theta_j}{2} - \frac{\pi}{2}\,,\qquad d\phi_j = \frac{d\theta_j}{2}\,,
    \end{equation}
    we get to 
    \begin{equation}
    \label{ap-eq:weyl_integral_character_halfangle}
        \dim(\CC_{t,n})
        =\frac{1}{2^{n-1}n!}
        \int_{[0,\pi]^n}
        \left[\prod_{j=1}^n \left(2\sin\left(\phi_j\right)\right)^{2t}\right]
        \left[\prod_{1\leq l < m\leq n}
        16\sin\left(\phi_l-\phi_m\right)^2\sin\left(\phi_l+\phi_m\right)^2\right]\prod_{j=1}^n \frac{d\phi_j}{\pi}
        \,.
    \end{equation}
    Now let us recall the standard trigonometric identity
    \begin{equation}
        \sin(\phi_l-\phi_m)\sin(\phi_l+\phi_m)
        =\sin^2\phi_l-\sin^2\phi_m\,,
    \end{equation}
    and let us set
    \begin{equation}
        x_j=\cos\phi_j\,,\qquad \phi_j\in[0,\pi]\,,\qquad x_j\in[-1,1]\,.
    \end{equation}
    Then
    \begin{equation}
        \sin^2\phi_j=1-x_j^2\,,
        \qquad
        d\phi_j=\frac{dx_j}{\sqrt{1-x_j^2}}\,,
    \end{equation}
    and moreover
    \begin{equation}
        \sin^2(\phi_l-\phi_m)\sin^2(\phi_l+\phi_m)
        =\left(x_l^2-x_m^2\right)^2\,.
    \end{equation}
    Therefore, Eq.~\eqref{ap-eq:weyl_integral_character_halfangle} becomes
    \begin{align}
        \dim(\CC_{t,n})
        &=
        \frac{2^{2tn}\,16^{\binom n2}}{2^{n-1}n!\,\pi^n}
        \int_{[-1,1]^n}
        \left[\prod_{j=1}^n (1-x_j^2)^{t-\frac12}\right]
        \left[\prod_{1\le l<m\le n}(x_l^2-x_m^2)^2\right]
        \prod_{j=1}^n dx_j
        \nonumber\\
        &=
        \frac{2^{2tn+2n(n-1)-n+1}}{n!\,\pi^n}
        \int_{[-1,1]^n}
        \left[\prod_{j=1}^n (1-x_j^2)^{t-\frac12}\right]
        \left[\prod_{1\le l<m\le n}(x_l^2-x_m^2)^2\right]
        \prod_{j=1}^n dx_j\,.
    \label{ap-eq:weyl_integral_xvars}
    \end{align}
    Since the integrand is even in each variable $x_j$, we can set
    \begin{equation}
        y_j=x_j^2\in[0,1]\,.
    \end{equation}
    This way, using
    \begin{equation}
        2\int_0^1 f(x^2)\,dx=\int_0^1 y^{-1/2}f(y)\,dy
    \end{equation}
    independently for each variable, we get to
    \begin{align}
        \dim(\CC_{t,n})
        &=
        \frac{2^{2tn+2n(n-1)-n+1}}{n!\,\pi^n}
        \int_{[0,1]^n}
        \left[\prod_{j=1}^n y_j^{-1/2}(1-y_j)^{t-\frac12}\right]
        \left[\prod_{1\le l<m\le n}(y_l-y_m)^2\right]
        \prod_{j=1}^n dy_j\,.
    \label{ap-eq:weyl_integral_selberg_form}
    \end{align}
    The last step is then recognizing that this integral is precisely a Selberg integral
    \begin{align}
    \label{ap-eq:selberg_integral}
        S_n(\alpha,\beta,\gamma)
        &=
        \int_0^1 \cdots \int_0^1
        \prod_{i=1}^n t_i^{\alpha-1}(1-t_i)^{\beta-1}
        \prod_{1\le i<j\le n} |t_i-t_j|^{2\gamma}
        \,dt_1\cdots dt_n \\
        &=
        \prod_{j=0}^{n-1}
        \frac{
        \Gamma(\alpha+j\gamma)\,
        \Gamma(\beta+j\gamma)\,
        \Gamma\!\bigl(1+(j+1)\gamma\bigr)
        }{
        \Gamma\!\bigl(\alpha+\beta+(n+j-1)\gamma\bigr)\,
        \Gamma(1+\gamma)
        }\,,
    \end{align}
    where the last line holds whenever
    \begin{equation}
    \label{ap-eq:selberg_convergence_conditions}
        \Re(\alpha)>0,\qquad
        \Re(\beta)>0,\qquad
        \Re(\gamma)>
        -\min\!\left(
        \frac1n,\,
        \frac{\Re(\alpha)}{n-1},\,
        \frac{\Re(\beta)}{n-1}
        \right)\,.
    \end{equation}
    In particular, we can then see that Eq.~\eqref{ap-eq:weyl_integral_selberg_form} is a Selberg integral with parameters
    \begin{equation}
    \label{ap-eq:selberg_params}
        \alpha=\frac12,\qquad \beta=t+\frac12,\qquad \gamma=1\,,
    \end{equation}
    which clearly satisfy the conditions in Eq.~\eqref{ap-eq:selberg_convergence_conditions}.
    Hence, substituting the parameters in Eq.~\eqref{ap-eq:selberg_params} into Eq.~\eqref{ap-eq:selberg_integral}, Eq.~\eqref{ap-eq:weyl_integral_selberg_form} evaluates to
    \begin{align}
        \dim(\CC_{t,n})
        &=
        \frac{2^{2tn+2n(n-1)-n+1}}{n!\,\pi^n}
        \prod_{j=0}^{n-1}
        \frac{
        \Gamma\!\left(j+\frac12\right)
        \Gamma\!\left(t+j+\frac12\right)
        \Gamma(j+2)
        }{
        \Gamma(t+n+j)
        }\,.
    \label{ap-eq:weyl_integral_gamma_answer}
    \end{align}
    Since by construction $t\in\mathbb Z_{\ge 0}$, this can be further simplified by using the standard gamma function identities
    \begin{equation}
        \Gamma\!\left(j+\frac12\right)=\frac{(2j)!}{4^j j!}\sqrt{\pi}\,,
        \qquad
        \Gamma\!\left(t+j+\frac12\right)=\frac{(2t+2j)!}{4^{t+j}(t+j)!}\sqrt{\pi}º,,
        \qquad
        \Gamma(j+2)=(j+1)!\,,
    \end{equation}
    which lead to
    \begin{align}
        \dim(\CC_{t,n})
        &=
        \frac{1}{2^{n-1}}
        \prod_{j=0}^{n-1}
        \frac{(2j)!\,(2t+2j)!}{(t+j)!\,(t+n+j-1)!}\,,
    \label{ap-eq:weyl_integral_factorial_answer}
    \end{align}
    which matches Eq.~\eqref{ap-eq:PP-dim}, hence completing the proof.
\end{proof}

Let us conclude this section by showing how this expression simplifies for the first three values of $t$, which until now were the only cases where $\CC_{t,n}$ had been characterized. From Eq.~\eqref{ap-eq:weyl_integral_factorial_answer} one readily finds
\begin{equation}
    k=1:\ \dim=2\,,\qquad
    k=2:\ \dim=4n+2\,,\qquad
    k=3:\ \dim=\frac{2}{3}(n+1)(2n+1)(2n+3)\,,
\end{equation}
which match the known dimensions of $\CC_{\{1,2,3\},n}$, as reported in~\cite{diaz2023showcasing,heyraud2024unified}.

\section{GT method for general Gaussian unitaries}
\label{ap:gauss_GT_method}

Here we give a detailed description of how to apply the GT method to construct a basis of the commutant $\CC_{t,n}$.

Recall that in this case, the group of interest is $G=\Spin(2n)$, which acts on the Hilbert space of $n$ qubits $\HC\cong\CBB^{2^n}$ via the spinor representation $U$. Under the latter, $\HC$ decomposes in the irreducible even and odd parity sectors $\HC\cong\HC_+\oplus\HC_-$.
Let us further recall that, as stated in the main text and proven in Appendix~\ref{ap:generators-active}, the operators $\widetilde{Q}_{jk}$ satisfy the $\mathfrak{so}(t)$ commutation relations, hence the associative algebra
\begin{equation}
    A_{t,n}=\Alg\langle \widetilde{Q}_{jk}\,:\,1\le j<k\le t\rangle
\end{equation}
is the image of the universal enveloping algebra $\mathscr{U}(\mathfrak{so}(t))$ in the representation carried by the $t$-fold spinor tensor power. 
By the skew Howe duality~\cite{aboumrad2022skew,wenzl2020dualities}, we know that on $\HC^{\otimes t}$ there are two commuting actions given by the group $\Spin(2n)$, acting identically on each of the $t$ copies, and by the associative algebra generated by the $\widetilde{Q}_{jk}$ and $\Gamma_1$. Since they commute, the corresponding Howe duality implies that $\HC^{\otimes t}$ decomposes as
\begin{equation}
    \HC^{\otimes t} \cong \bigoplus_\lambda V_\lambda^\mathbb{SPIN}(2n) \otimes W_\lambda^{\mathfrak{so}(t)}\,,
\end{equation}
for $V_\lambda^\mathbb{SPIN}(2n)$ and $W_\lambda^{\mathfrak{so}(t)}$ irreducible modules of the two aforementioned actions, and $\lambda$ a common label thereof. One can think of this decomposition as the usual isotypic decomposition in irreps of $\Spin(2n)$, where the multiplicity space is now endowed with a richer structure coming from the action of the $\mathfrak{so}(t)$ associative algebra. From this decomposition, we also immediately get a decomposition of the commutant $\CC_{t,n}$, since the latter corresponds to the $\mathfrak{so}(t)$ associative algebra action on the $W_\lambda^{\mathfrak{so}(t)}$ bits. Particularly one reads
\begin{equation}
    \CC_{t,n} \cong \bigoplus_\lambda \End (W_\lambda^{\mathfrak{so}(t)})\,.
\end{equation}
Hence, we can use this decomposition to organize a basis of $\CC_{t,n}$ by dividing it in blocks $\lambda$ and assigning inside each block a canonical matrix basis of $\End(W_\lambda^{\mathfrak{so}(t)})$, i.e. the set of matrices with a single non-zero entry at any given position.
To do this, we can first build a canonical basis of $W_\lambda^{\mathfrak{so}(t)}$, which we will denote as $\{\ket{\lambda, T}\}$, for $T$ an index labeling the $\dim(W_\lambda^{\mathfrak{so}(t)})$ canonical basis vectors, and then promote it to one of $\End(W_\lambda^{\mathfrak{so}(t)})$ by the set of projectors $\{\ketbra{\lambda,T}{\lambda,T'}\}$.

The fact that $W_\lambda^{\mathfrak{so}(t)}$ is an irreducible module of the action of the associative algebra of $\mathfrak{so}(t)$ allows us to construct the basis $\{\ket{\lambda, T}\}$ by the standard Gelfand-Tsetlin procedure of lowering the $\mathfrak{so}(t)$ highest-weight inside $W_\lambda^{\mathfrak{so}(t)}$, $\ket{\lambda,{\rm hw}}$, by all possible sequences of lowering operators. This way, the problem becomes completely characterized by the weights and root system of the $\mathfrak{so}$ algebra.

We start by dividing the general case $t\in \mathbb Z_{\ge 0}$ into the even $t=2r$ and odd $t=2r+1$ cases. Indeed, depending on the parity of $t$ the algebra $\mathfrak{so}(t)$ is either of type $D$ (even) or $B$ (odd)~\cite{kirillov2008introduction}. Inside the generators $\widetilde{Q}_{ij}$ algebra, we can identify the Cartan generators
\begin{equation}
    H_p = \widetilde{Q}_{2p-1,\,2p},
    \qquad p=1,\dots,r\,,
\end{equation}
and associate the Cartan algebra with the vector space $\mathbb C^r$ with canonical basis $\{e_1,\dots,e_r\}$. With this choice, the positive root vectors can be expressed as follows
\begin{equation}
    \Phi^+=\begin{cases}
        \{e_p + e_q, e_p - e_q\,|\,1\leq p<q \leq r\}, &{t=2r},\\
        \{e_p + e_q, e_p - e_q, e_p\,|\,1\leq p<q \leq r\}, &{t=2r+1}.
    \end{cases}
\end{equation}
The operators associated to the positive roots give the raising operators that annihilate the highest-weight, i.e. the common eigenvector of the Cartan generators $H_p$ with highest cumulative eigenvalue. Hence, for the case $t=2r$ one has raising operators
\begin{align}
    E_{e_p-e_q}
    &=
    \frac12\Big(
        \widetilde{Q}_{2p-1,\,2q-1}+\widetilde{Q}_{2p,\,2q}
        +i\big(\widetilde{Q}_{2p,\,2q-1}-\widetilde{Q}_{2p-1,\,2q}\big)
    \Big),\\
    E_{e_p+e_q}
    &=
    \frac12\Big(
        \widetilde{Q}_{2p-1,\,2q-1}-\widetilde{Q}_{2p,\,2q}
        +i\big(\widetilde{Q}_{2p,\,2q-1}+\widetilde{Q}_{2p-1,\,2q}\big)
    \Big),
\end{align}
while in the odd case $t=2r+1$, in addition to the operators above one also has the added ones
\begin{equation}
    E_{e_p}
    =
    \frac{1}{\sqrt2}\Big(\widetilde{Q}_{2p-1,\,2r+1}+i\widetilde{Q}_{2p,\,2r+1}\Big)\,.
\end{equation}
Labeling $\alpha\in \Phi^+$ the positive roots, from the raising operators $E_\alpha$ one can define the lowering ones by taking their Hermitian conjugate. Let us dub them as $F_\alpha=E^\dagger_\alpha$.
Furthermore, under these conventions, the standard simple roots are
\begin{equation}
    \alpha_i=e_i-e_{i+1},
    \qquad i=1,\dots,r-1\,,
\end{equation}
together with
\begin{equation}
    \alpha_r=
    \begin{cases}
        e_r, & t=2r+1\ \text{(type $B_r$)},\\
        e_{r-1}+e_r, & t=2r\ \text{(type $D_r$)}.
    \end{cases}
\end{equation}
The corresponding simple lowering operators are therefore
\begin{equation}
    f_i=F_{\alpha_i},
    \qquad i=1,\dots,r\,.
\end{equation}
Notice that all the operators introduced so far are written explicitly in terms of linear combinations of the generators $\widetilde{Q}_{jk}$, hence as quadratic Majorana operators.

Now, the irreducible $\mathfrak{so}(t)$-modules appearing in $A_{t,n}$, i.e., the spaces $W_\lambda^{\mathfrak{so}(t)}$, are indexed by dominant highest-weights $P_+$
\begin{equation}
    \lambda=(\lambda_1,\dots,\lambda_r)\in P_+\,,
\end{equation}
written in the canonical basis $(e_1,\dots,e_r)$, together with the standard dominance conditions
\begin{equation}\label{eq-so-tBr-app}
    \lambda_1\ge \lambda_2\ge \cdots \ge \lambda_r\ge 0
    \qquad\text{for $t=2r+1$ (type $B_r$)},
\end{equation}
and
\begin{equation}\label{eq-so-tDr-app}
    \lambda_1\ge \lambda_2\ge \cdots \ge \lambda_{r-1}\ge |\lambda_r|
    \qquad\text{for $t=2r$ (type $D_r$)}.
\end{equation}
Up to now we simply described the basic properties of the algebra $\mathfrak{so}(t)$. However, in the setting at hand not all of the highest-weights occur. Indeed, the decomposition of $V=\HC^{\otimes t}$ under the commuting actions of $\Spin(2n)$ and of $A_{t,n}$ is finite, and the available $\mathfrak{so}(t)$ highest-weights are truncated by the number $n$ of physical fermionic modes. Concretely, one has the constraint
\begin{equation}
    \lambda_1\le n\,.
\end{equation}
We denote the resulting finite set by
\begin{equation}
    \Lambda_{t,n}=
    \left\{
        \lambda\in P_+(\mathfrak{so}(t))
        \,|\,
        \lambda_1\le n
    \right\}\,.
\end{equation}
The decomposition of the commutant then reads
\begin{equation}
    \CC_{t,n}\cong \bigoplus_{\lambda\in\Lambda_{t,n}} \End(W_\lambda^{\mathfrak{so}(t)})\,,
\end{equation}
and therefore
\begin{equation}
    \dim (A_{t,n})
    =
    \sum_{\lambda\in\Lambda_{t,n}} (\dim W_\lambda^{\mathfrak{so}(t)})^2
    =
    \frac12\,\dim(\CC_{t,n})\,.
\end{equation}

What is left now is to construct, for each fixed $\lambda\in P_+$, the canonical GT basis of $W_\lambda^{\mathfrak{so}(t)}$. To do so, one considers the natural chain of embedded subalgebras
\begin{equation}
    \mathfrak{so}(2)\subset \mathfrak{so}(3)\subset \cdots \subset \mathfrak{so}(t)\,,
\end{equation}
where $\mathfrak{so}(m)$ is generated by the $\widetilde{Q}_{jk}$ with $1\le j<k\le m$, namely by restricting to the first $m$ copies among the available $t$. 

Particularly, fixing a top weight $\lambda=\lambda^{(t)}\in \Lambda_{t,n}$, an orthogonal GT pattern $T$ of top row $\lambda$ is a chain
\begin{equation}
    \lambda^{(t)} \succ \lambda^{(t-1)} \succ \cdots \succ \lambda^{(2)}\,,
\end{equation}
where $\lambda^{(m)}$ is a dominant highest-weight of $\mathfrak{so}(m)$, and the symbol $\succ$ means that consecutive rows satisfy the standard multiplicity-free branching inequalities. In the odd-to-even step
\begin{equation}
    \mathfrak{so}(2r+1)\downarrow \mathfrak{so}(2r)\,,
\end{equation}
if
\begin{equation}
    \lambda^{(2r+1)}=(\lambda_1,\dots,\lambda_r),
    \qquad
    \lambda^{(2r)}=(\mu_1,\dots,\mu_r)\,,
\end{equation}
then one must have
\begin{equation}
    \lambda_1\ge \mu_1\ge \lambda_2\ge \mu_2\ge \cdots \ge \lambda_{r-1}\ge \mu_{r-1}\ge \lambda_r\ge |\mu_r|\,.
\end{equation}
Similarly, in the even-to-odd step
\begin{equation}
    \mathfrak{so}(2r)\downarrow \mathfrak{so}(2r-1)\,,
\end{equation}
if
\begin{equation}
    \lambda^{(2r)}=(\lambda_1,\dots,\lambda_r)\,,
    \qquad
    \lambda^{(2r-1)}=(\nu_1,\dots,\nu_{r-1})\,,
\end{equation}
then one must have
\begin{equation}
    \lambda_1\ge \nu_1\ge \lambda_2\ge \nu_2\ge \cdots \ge \lambda_{r-1}\ge \nu_{r-1}\ge |\lambda_r|\,.
\end{equation}
These are the orthogonal branching rules that guide the construction of the GT basis. Since the branching is multiplicity-free at each level, every chain $T$ picks up a unique basis vector in $W_\lambda^{\mathfrak{so}(t)}$, which we denote by $\ket{\lambda,T}$. 
In particular one has
\begin{equation}
    |\GT(\lambda)|=\dim W_\lambda^{\mathfrak{so}(t)}\,,
\end{equation}
where $\GT(\lambda)$ denotes the set of all GT patterns with top row $\lambda$.

In practice, to construct the GT basis of an $\mathfrak{so(t)}$ irrep $\lambda$, we can adopt the following recursive procedure. First, list all admissible top rows $\lambda=\lambda^{(t)}\in\Lambda_{t,n}$. Then, for each such $\lambda^{(t)}$, list all $\lambda^{(t-1)}$ satisfying the appropriate interlacing inequalities. For each such $\lambda^{(t-1)}$, list all admissible $\lambda^{(t-2)}$, and so on, until $m=2$ is reached. Because all rows are bounded by $\lambda_1\le n$, this procedure is finite.

We now explain how to realize the vectors $\ket{\lambda,T}\in W_\lambda^{\mathfrak{so}(t)}$ and the corresponding matrix units spanning $\End(W_\lambda^{\mathfrak{so}(t)})$, explicitly in terms of the generators $\widetilde{Q}_{jk}$. For this purpose, one starts from the highest-weight vector $\ket{\lambda,{\rm hw}}\in W_\lambda^{\mathfrak{so}(t)}$, and then lowers it along the path specified by the pattern $T$. Let us denote the operator implementing the lowering path specified by $T$ by $Z_T$. The latter thus satisfies
\begin{equation}
    \ket{\lambda,T}\propto Z_T\ket{\lambda,{\rm hw}}\,.
\end{equation}
The operator $Z_T$ can be built recursively as a product of step operators, one for each branching step
\begin{equation}
    \mathfrak{so}(m)\downarrow \mathfrak{so}(m-1)\,.
\end{equation}
At the step $m\to m-1$, only the negative roots of $\mathfrak{so}(m)$ that do not belong to the root system of $\mathfrak{so}(m-1)$ are allowed to appear. In this sense we can say that the step operator is constructed from the roots that disappear when passing from $\mathfrak{so}(m)$ to its smaller subalgebra.
Concretely, for the branching
\begin{equation}
    \mathfrak{so}(2r+1)\downarrow \mathfrak{so}(2r)\,,
\end{equation}
the roots present in $\mathfrak{so}(2r+1)$ but absent in $\mathfrak{so}(2r)$ are
\begin{equation}
    e_1,\dots,e_r\,,
\end{equation}
so the corresponding step operator is a polynomial in
\begin{equation}
    F_{e_1},\dots,F_{e_r}\,.
\end{equation}
In turn, for the branching
\begin{equation}
    \mathfrak{so}(2r)\downarrow \mathfrak{so}(2r-1)\,,
\end{equation}
the roots present in $\mathfrak{so}(2r)$ but absent in $\mathfrak{so}(2r-1)$ are
\begin{equation}
    e_p-e_r,\qquad e_p+e_r,
    \qquad p=1,\dots,r-1,
\end{equation}
so the corresponding step operator needs to be a polynomial in
\begin{equation}
    F_{e_p-e_r}\,,\qquad F_{e_p+e_r}\,,
    \qquad p=1,\dots,r-1\,.
\end{equation}
Notice that there is still an ambiguity in the order that the lowering operators should be applied. Thus, in order to make this choice canonical, one fixes a Poincar\'e--Birkhoff--Witt (PBW) order on the negative roots~\cite{molev2006gelfand}, for instance the lexicographic order
\begin{equation}
    F_{e_1-e_2},\,F_{e_1-e_3},\,\dots,\,F_{e_{r-1}-e_r},
    \,
    F_{e_1+e_2},\,F_{e_1+e_3},\,\dots,\,F_{e_{r-1}+e_r}\,,
\end{equation}
and, in the odd case, one inserts
\begin{equation}
    F_{e_1},\,F_{e_2},\,\dots,\,F_{e_r}
\end{equation}
at a fixed position. Now suppose $\eta=\lambda^{(m)}$ and $\mu=\lambda^{(m-1)}$ are two consecutive rows of a GT pattern. Among all PBW monomials in the step roots for $m\to m-1$, choose the lexicographically first monomial
\begin{equation}
    Y(\eta,\mu)
\end{equation}
with the property that, when applied to the highest-weight vector of the $\mathfrak{so}(m)$-module of highest-weight $\eta$, it has a nonzero component on the $\mathfrak{so}(m-1)$-highest vector of highest-weight $\mu$. Since the branching is multiplicity-free, this space is one-dimensional, so after fixing a normalization one obtains a unique step operator
\begin{equation}
    Z(\eta,\mu)
\end{equation}
such that
\begin{equation}
    Z(\eta,\mu)\ket{\eta, {\mathrm{hw}}}
\end{equation}
is a nonzero $\mathfrak{so}(m-1)$-highest vector of highest-weight $\mu$. For a full GT pattern
\begin{equation}
    T=\bigl(\lambda^{(t)},\lambda^{(t-1)},\dots,\lambda^{(2)}\bigr),
\end{equation}
one then defines
\begin{equation}
    Z_T
    =
    Z\bigl(\lambda^{(t)},\lambda^{(t-1)}\bigr)\,
    Z\bigl(\lambda^{(t-1)},\lambda^{(t-2)}\bigr)\cdots
    Z\bigl(\lambda^{(3)},\lambda^{(2)}\bigr)\,.
\end{equation}
This way, $Z_T$ is an explicit ordered product of the root vectors
\begin{equation}
    F_{e_p-e_q},\qquad F_{e_p+e_q},\qquad F_{e_p},
\end{equation}
and hence, via the definitions of the lowering operators, an explicit polynomial in the quadratic Majorana generators $\widetilde{Q}_{jk}$.

Now that we are able to construct a basis of $W_\lambda^{\mathfrak{so}(t)}$, to promote it to a basis of $\End(W_\lambda^{\mathfrak{so}(t)})$ we need to first construct the projector onto the highest-weight of the block $W_\lambda^{\mathfrak{so}(t)}$, and then apply the operators $Z_T, Z_{T'}$ at its sides. For each $m=2,\dots,t$, define the quadratic Casimir of the embedded subalgebra $\mathfrak{so}(m)\subset \mathfrak{so}(t)$ by
\begin{equation}
    C^{(m)}=\sum_{1\le j<k\le m} \widetilde{Q}_{jk}^2\,.
\end{equation}
On an irreducible $\mathfrak{so}(m)$-module of highest-weight $\mu$, its eigenvalue is
\begin{equation}
    \chi_m(\mu)=\langle \mu,\mu+2\rho_m\rangle,
\end{equation}
where $\rho_m$ is the so called Weyl vector of $\mathfrak{so}(m)$. Explicitly,
\begin{equation}
    \rho_{2r+1}=\Bigl(r-\tfrac12,\,r-\tfrac32,\,\dots,\,\tfrac12\Bigr),
    \qquad
    \rho_{2r}=(r-1,\,r-2,\,\dots,\,1,\,0)\,.
\end{equation}
Since the chain of orthogonal restrictions is multiplicity-free, the commuting family
\begin{equation}
    C^{(2)},\,C^{(3)},\,\dots,\,C^{(t)}
\end{equation}
has joint eigenspaces of dimension one, and these joint eigenvalues separate the GT patterns. Therefore, for every pattern $T\in \GT(\lambda)$, there is a one-dimensional projector $P_T$, obtained by Lagrange interpolation in the $C^{(m)}$'s. In practice, one may write
\begin{equation}
    P_T
    =
    \prod_{m=2}^t
    \prod_{\nu\in\mathcal S_m(T)\setminus\{\lambda^{(m)}\}}
    \frac{C^{(m)}-\chi_m(\nu)}
         {\chi_m(\lambda^{(m)})-\chi_m(\nu)},
\end{equation}
where $\mathcal S_m(T)$ is the finite set of permitted $m$-th rows that are compatible with the higher rows of $T$. Among all patterns of top row $\lambda$, there is a distinguished highest one, denoted $T_{\mathrm{hw}}(\lambda)$, obtained by choosing at every step the maximal allowed next row. We denote by
\begin{equation}
    P_{\lambda}^{\mathrm{hw}}=P_{T_{\mathrm{hw}}(\lambda)}
\end{equation}
the projector onto the corresponding highest-weight line. The desired matrix units are then
\begin{equation}
    X^{(\lambda)}_{T,T'}
    =
    Z_T\,P_{\lambda}^{\mathrm{hw}}\, Z_{T'}^{\dagger},
    \qquad
    \lambda\in \Lambda_{t,n},
    \quad
    T,T'\in \GT(\lambda)\,.
\end{equation}
After rescaling the operators $Z_T$, these operators satisfy the canonical matrix-unit relations
\begin{equation}
    X^{(\lambda)}_{T,T'}\,X^{(\mu)}_{S,S'}
    =
    \delta_{\lambda\mu}\,\delta_{T',S}\,
    X^{(\lambda)}_{T,S'}\,.
\end{equation}
That is, after normalization we can identify them with the canonical rank-one basis operators
\begin{equation}
    X^{(\lambda)}_{T,T'}=\ketbra{\lambda,T}{\lambda,T'}.
\end{equation}
Hence we can endow the algebra $A_{t,n}$ with the orthonormal basis
\begin{equation}
    A_{t,n}
    =
    \left\{
        X^{(\lambda)}_{T,T'}
        \,:\,
        \lambda\in\Lambda_{t,n},\ T,T'\in \GT(\lambda)
    \right\}\,.
\end{equation}
Lastly, we recall that, as a vector space
\begin{equation}
    \CC_{t,n} = A_{t,n}\oplus \Gamma_1A_{t,n}\,,
\end{equation}
where the sector $\Gamma_1A_{t,n}$ is obtained from the algebra $A_{t,n}$ by multiplying its elements with the parity operator $\Gamma_1$.

\section{Construction of the commutant basis for general Gaussian unitaries}\label{ap:generalgaussianconstr}

\subsection{$t=1$ case}
This case needs to be considered separately from $t\geq 2$ as the copy side algebra $\mathfrak{so}(1)=0$ is trivial, and there are no nontrivial orthogonal highest-weights to choose from. In particular, there are also no $\widetilde{Q}_{j,k}$ operators to realize as one needs $j$ strictly smaller than $k$. Therefore, the associative algebra is just the scalar algebra $A_{1,n}=\mathbb{C}\id$. Hence, we can complete the full commutant as
\begin{equation}
    \CC_{1,n}={\rm span}_{\mathbb{C}}\{\id,\Gamma\}\,.
\end{equation}

\subsection{$t=2$ case}\label{app:active2}

Now, the algebra acting on the copies is $\mathfrak{so}(2)$, and hence abelian. As such, the GT chain is again trivial since it terminates at the first term. However, unlike the $t=1$ scenario we here have a single bilinear operator $\widetilde{Q}_{1,2}$, from which we can define the Hermitian generator $M=-i\widetilde{Q}_{1,2}$ of the infinitesimal action of $\mathfrak{so}(2)$. The operator $M$ has eigenvalues $\{-n,-n+1,\cdots,n-1,n\}$, meaning that we can construct the projectors onto the irreps as
\begin{equation}
    P_m=\prod_{\substack{s= -n\\s\neq m}}^n\frac{M-s}{m-s}\,.
\end{equation}
Then, a basis of the full commutant is
\begin{equation}
    \CC_{2,n}={\rm span}_{\mathbb{C}}\{P_m,\Gamma P_m\,|\, m=-n,-n+1,\ldots,n\}\,.
\end{equation}
The dimension of the commutant is thus
\begin{equation}
    \dim(\CC_{2,n})=2(2n+1)\,.
\end{equation}

To finish this case, we point out an interesting fact. Namely, that the $t=2$ case for general fermionic Gaussian unitaries closely resembles the $t=1$ case for PP. First of all the algebras are abelian, and the GT chain is trivial. The analogy extends to a very useful application: The operator $M=-i\widetilde{Q}_{12}$ has the role of the operator $N=\widetilde{\Omega}_{11}$. This also means that $P_m$ of the general fermionic Gaussian unitaries framework has the same role as $P_r$ in the particle-preserving subgroup. Then, we can define the symmetric polynomials  
\begin{equation}
    E_k
    =
    \sum_{1\le \mu_1<\cdots<\mu_k\le 2n}
    y_{\mu_1}\cdots y_{\mu_k},
    \qquad
    k=0,1,\ldots,2n,
    \qquad
    E_0=\id\,,
\end{equation}
where $y_\mu=-i \widetilde{c}^{(1)}_\mu\widetilde{c}^{(2)}_\mu$ are hermitian operators satisfying $[y_\mu,y_\nu]=0$. 
Proceeding exactly as in the passive case, using $M=\frac{1}{2}\sum_\mu y_\mu$, we can then prove the relation through Krawtchouk polynomials
\begin{equation}
E_k=K_k(n-M;2n)\,,
\end{equation}
which also implies
\begin{equation}
    E_k=\sum_m K_k(n-m;2n)P_m\,.
\end{equation}
Now a key observation: The operators $E_k$ lead one directly to
\begin{equation}
E_{2k}=Q_{2k}^0,
\qquad
E_{2k+1}=-i\,Q_{2k+1}^1.
\end{equation}
\begin{equation}
Q_k^0=\sum_{\mu_1<\cdots<\mu_k} c_{\mu_1}\cdots c_{\mu_k}\otimes c_{\mu_1}\cdots c_{\mu_k},
\qquad
Q_k^1=\sum_{\mu_1<\cdots<\mu_k} \Gamma c_{\mu_1}\cdots c_{\mu_k}\otimes \, c_{\mu_1}\cdots c_{\mu_k}\,,
\end{equation}
which, up to convention, is the basis of the commutant provided in \cite{diaz2023showcasing} associated with Schur's lemma. The odd $Q^0_k$ and even $Q^1_k$ are recovered by multiplying by $\Gamma_1$. Notice that the final Majorana are not dressed, instead the operator $\Gamma$ which dresses the Majoranas, gets canceled for even $k$ while for odd $k$ we recover the linear map connecting isomorphic irreps in operator space. We have thus recovered the explicit relation between the basis $Q_k^{i}$ and the projector basis $P_m$. All of this in analogy with the first commutant of PP.

\subsection{$t=3$ case}\label{ap:activet3}

This case corresponds to the first one where the copy algebra, $\mathfrak{so}(3)$, is non-abelian. However, we will see that many calculations simplify from the fact that 
\begin{equation}
    \mathfrak{so}(3)\cong\mathfrak{su}(2)\,.
\end{equation}
Indeed, $\mathfrak{so}(3)$ has rank $1$ and the available irreps are labeled by the single spin weight $\lambda=0,1,\ldots,n$. Hence, we know that for each $\dim(W_\lambda^{\mathfrak{so}(3)})=2\lambda+1$. 
Then, the GT chain follows from 
\begin{equation}
    \mathbb{SO}(3)\downarrow\mathbb{SO}(2)\,,
\end{equation}
and the patterns for a given $\lambda$ are obtained from the $\mathbb{SO}(2)$ weight, i.e.,
\begin{equation}
    \GT(\lambda)=-\lambda,\lambda+1,\ldots,\lambda\,,\quad \text{such that} \quad |\GT(\lambda)|=2\lambda+1\,.
\end{equation}
As such, GT patterns are simply 
\begin{equation}
    (\lambda,m)=-\lambda,-\lambda+1,\ldots,\lambda\,.
\end{equation}

To obtain the commutant basis we define the Hermitian spin operators
\begin{equation}
    J_z=-i\widetilde{Q}_{1,2}\,,\quad J_+=\widetilde{Q}_{1,3}+i\widetilde{Q}_{2,3}\,,\quad J_-=-(\widetilde{Q}_{1,3}-i\widetilde{Q}_{2,3})\,,\label{eqap:ops-t3-so2n}
\end{equation}
which can be readily found to satisfy the $\mathfrak{su}(2)$ commutation relations. These allow us to define the ``total angular momentum'' operator
\begin{equation}
    J^2=J_z^2+\frac{1}{2}(J_+J_-+J_-J_+)\,,
\end{equation}
with eigenvalues $\lambda(\lambda+1)$. This then leads to the irrep projector
\begin{equation}
    P_\lambda=\prod_{\substack{\lambda'= 0\\\lambda'\neq \lambda}}^n\frac{J^2-\lambda'(\lambda'+1)}{\lambda(\lambda+1)-\lambda'(\lambda'+1)}\,,
\end{equation}
and the one dimensional ones
\begin{equation}
    P_{\lambda,m}=P_\lambda\prod_{\substack{m'= -n\\m'\neq m}}^n\frac{J_z-m'}{m-m'}\,.
\end{equation}

From the previous, we can then obtain the commutant basis operators follow by sandwiching ladder operators. However, we also find it convenient to provide a derivation of the commutant basis that closely resembles that of PP unitaries. Namely, explicitly finding  the actual highest-weight subspace inside the physical Hilbert space. The trick here is to introduce creation and annihilation operators for emerging fermions.

\medskip
\medskip
\begin{center}
    \textbf{Emerging fermions in the $t=3$ case}
\end{center}
\medskip
\medskip

Contrary to what happens in the PP unitary case, there is a priori no natural Fock basis of states satisfying $J_z|\psi\rangle=\lambda |\psi\rangle$ and $J_+|\psi\rangle=0$ such that we can explicitly and easily construct the GT projector. 
As a matter of fact, one crucial step in the passive case was the use of the vacuum and of the division of Fock space into fixed particle number sector. Notably, there is a crucial idea that allows us to recover a  particle-like picture by defining ``emergent'' fermions occupying the first two copies. For this, let us define the following operators
\begin{equation}
    f_\mu=\frac12\Big(\widetilde c_\mu^{\,(1)}+i\widetilde c_\mu^{\,(2)}\Big),\qquad
    f_\mu^\dagger=\frac12\Big(\widetilde c_\mu^{\,(1)}-i\widetilde c_\mu^{\,(2)}\Big),
\end{equation}
which satisfy genuine fermionic anticommutation relations 
\begin{equation}
    \{f_\mu,f_\nu^\dagger\}=\delta_{\mu\nu},\qquad \{f_\mu,f_\nu\}=0,\qquad \{f_\mu^\dagger,f_\nu^\dagger\}=0.
\end{equation} 

Notice that the mode index run from $a=1,\ldots , 2n$. 
In terms of these operators,
$
    -i\,\widetilde c_\mu^{\,(1)}\widetilde c_\mu^{\,(2)}=1-2f_\mu^\dagger f_\mu,
$
so that
\begin{equation}
    J_z=-i\widetilde{Q}_{1,2}=\frac12\sum_{\mu=1}^{2n}(1-2f_\mu^\dagger f_\mu)
    =n-N_f\,,
\end{equation}
where we defined the total $f$-number operator
\begin{equation}
    N_f=\sum_{\mu=1}^{2n} f_\mu^\dagger f_\mu\,.
\end{equation}
Thus the $J_z$-eigenspaces are precisely the fixed-$N_f$ sectors, and the possible weights are
\begin{equation}
    m=n-N_f,\qquad N_f=0,1,\dots,2n.
\end{equation}
Namely
\begin{equation}
    m=-n,-n+1,\dots,n-1,n.
\end{equation}
Using the $f_\mu$'s, the ladder operators of Eq,~\eqref{eqap:ops-t3-so2n} take the form
\begin{equation}
    J_+=\sum_{\mu=1}^{2n} f_\mu\,\widetilde c_\mu^{\,(3)},\qquad
    J_-=-\sum_{\mu=1}^{2n} f_\mu^\dagger\,\widetilde c_\mu^{\,(3)}.
\end{equation}
Therefore the active $t=3$ problem admits an emergent structure  where $N_f$ plays the role of the number of fermions sectors in the PP $t=2$ case. Specifically, we can see that the $f_\mu$ create a standard fermion on the first two copies of the physical space, while $\widetilde c_\mu^{\,(3)}$ inserts a  Majorana on the third copy. Then, $J_\pm$ create and destroy a standard fermion on copies one and two, and flips the parity sector on the third copy.

The advantage of these operators is that now we can look for a highest-weight state of spin $j$ which satisfies
\begin{equation}
    J_z|\psi\rangle=j\,|\psi\rangle,\qquad J_+|\psi\rangle=0.
\end{equation}
Since $J_z=n-R$, the natural procedure is as in the PP case. We fix $N_f$, solve $J_+|\psi\rangle=0$ in that sector to identify the corresponding highest-weight space and the corresponding projector. Then act with $J_-$.

Let us describe this procedure explicitly with the simple case of $R=0$.
Let $|0_f\rangle$ denote the $f$-vacuum, defined by
\begin{equation}
    f_\mu|0_f\rangle=0,\qquad a=1,\dots,2n.
\end{equation}
Then
\begin{equation}
    J_z|0_f\rangle=n\,|0_f\rangle.
\end{equation}
Moreover, for any state $|\chi\rangle_3$ in the third copy,
\begin{equation}
    J_+\big(|0_f\rangle\otimes |\chi\rangle_3\big)=0,
\end{equation}
since each $f_\mu$ annihilates $|0_f\rangle$. Therefore the full space
\begin{equation}
    \mathcal H^{\mathrm{hw}}_{j=n}
    =
    \Big\{
    |0_f\rangle\otimes |\chi\rangle_3
    \,:\,
    |\chi\rangle_3\in \mathcal H^{(3)}
    \Big\}
\end{equation}
is a highest-weight space of spin $j=n$. Its degeneracy is
\begin{equation}
    \dim \mathcal H^{\mathrm{hw}}_{j=n}=2^n.
\end{equation}
The projector onto this highest-weight space is thus
\begin{equation}
    P^{\mathrm{hw}}_{j=n}
    =
    |0_f\rangle\langle 0_f|\otimes \id_{(3)}.
\end{equation}
From this projector one may generate the full spin-$n$ block by repeated action of $J_-$ on the ket side and $J_+$ on the bra side, in direct analogy with the passive $t=2$ construction. 

 \subsection{$t=4$ case}

To finish, let us consider the $t=4$ case. Here, the GT construction can be simplified by realizing the algebra isomorphism
\begin{equation}
\mathfrak{so}(4)\cong \mathfrak{su}(2)_+\oplus \mathfrak{su}(2)_-.
\end{equation}
In particular, we can define the operators
\begin{equation}
    J_1^\pm=\frac12(\widetilde{Q}_{23}\pm \widetilde{Q}_{14}),
    \qquad
    J_2^\pm=\frac12(\widetilde{Q}_{31}\pm \widetilde{Q}_{24}),
    \qquad
    J_3^\pm=\frac12(\widetilde{Q}_{12}\pm \widetilde{Q}_{34})\,,
\end{equation}
which can be shown to satisfy the properties
\begin{equation}
    [J_i^+,J_j^+]=i\epsilon_{ijk}J_k^+,
    \qquad
    [J_i^-,J_j^-]=i\epsilon_{ijk}J_k^-,
    \qquad
    [J_i^+J_j^-]=0,
\end{equation}
so the $+$ and $-$ generators give the two commuting $\mathfrak{su}(2)$ algebras. These allow us to then define  the corresponding ladder operators
\begin{equation}
\label{ap-eq:t4-ladder-def}
    J_\pm^\pm = J_1^\pm \pm iJ_2^\pm.
\end{equation}

The total angular moment operators of the two commuting $\mathfrak{su}(2)$ factors are then
\begin{equation}
\label{ap-eq:t4-casimirs-def}
    J^2_\pm
    =
    (J_1^\pm)^2+(J_2^\pm)^2+(J_3^\pm)^2.
\end{equation}
Equivalently,
\begin{equation}
\label{ap-eq:t4-casimirs-ladder}
    J^2_\pm
    =
    (J_3^\pm)^2+\frac12\bigl(J_+^\pm J_-^\pm + J_-^\pm J_+^\pm\bigr)=
    \frac14\sum_{1\le r<s\le 4}\widetilde Q_{rs}^2
    \;\pm\;
    \frac12\Big(
        \widetilde Q_{23}\widetilde Q_{14}
        +
        \widetilde Q_{31}\widetilde Q_{24}
        +
        \widetilde Q_{12}\widetilde Q_{34}
    \Big).
\end{equation}

The irreducible $\mathfrak{so}(4)$-modules are therefore labeled by a pair of spins $(j_+,j_-)$, with
\begin{equation}
\label{ap-eq:t4-casimirs-evalues}
    J^2_+=j_+(j_++1)\,I,
    \qquad
    J^2_-=j_-(j_-+1)\,I.
\end{equation}
Hence, in this $t=4$ realization, the GT patterns reduce to the choice of $J_z$ magnetization for the two commuting $\mathfrak{su}(2)$ algebras
\begin{equation}
    m_+=-j_+,-j_++1,\dots,j_+,
    \qquad
    m_-=-j_-,-j_-+1,\dots,j_-.
\end{equation}
Thus the GT basis of the copy irrep is naturally indexed by pairs $(m_+,m_-)$. 

The previous thus allows us to define the projector onto the highest-weight line
\begin{equation}
    P_{j_+,j_-}^{\mathrm{hw}}\,.
\end{equation}
Then, since the two $\mathfrak{su}(2)$ commute, the lowering operator  is
\begin{equation}
\label{ap-eq:t4-lowering}
    Z_{m_+,m_-}
    =
    (J_-^+)^{\,j_+-m_+}\,
    (J_-^-)^{\,j_--m_-},
\end{equation}
and the corresponding elements of the commutant will be \begin{equation}
\label{ap-eq:t4-matrix-units}
    X^{(j_+,j_-)}_{m_+,m_-;\,m_+',m_-'}
    =
    (J_-^+)^{\,j_+-m_+}\,
    (J_-^-)^{\,j_--m_-}\,
    P_{j_+,j_-}^{\mathrm{hw}}\,
    (J_+^+)^{\,j_+-m_+'}\,
    (J_+^-)^{\,j_--m_-'}.
\end{equation}

\section{Average PP Gaussian stabilizer purity in a fixed particle sector}
\label{ap:avg_pp_stab_sector}

In this section we show how to derive the exact closed formula for the average stabilizer purity over the set of fermionic Gaussian states with fixed particle number presented in the main text.
Recall that the PP Gaussian unitaries correspond to the group $\U(n)$, acting on the full Fock space as
\begin{equation}
    \FC_n=\Lambda(\CBB^n)=\bigoplus_{r=0}^n \Lambda^r(\CBB^n)\,.
\end{equation}
For each particle number $r\in\{0,\dots,n\}$, we define
\begin{equation}
    \ket{r}=a_1^\dagger a_2^\dagger \cdots a_r^\dagger \ket{0}
    \in \Lambda^r(\CBB^n)\,,
\end{equation}
as the reference $r$-particle Slater determinant, whose orbit under the PP Gaussian unitaries generates the whole set of fermionic Gaussian states with fixed particle number $r$.
Furthermore, we denote 
\begin{align}
    &R_r(U)\,,\qquad U\in\U(n)\,,\\
    &R_r(U)(v_1\wedge\dots\wedge v_r) = Uv_1\wedge\dots\wedge Uv_r\,,
\end{align}
the representation of $\U(n)$ acting on $\Lambda^r(\CBB^n)$, where we identify the group element $U\in\U(n)$ with its standard representation acting on $\CBB^n$.
To compute the average linearized stabilizer entropy over that set, we need to study
\begin{equation}
\label{ap-eq:S4_pp_sector_def}
    \EBB_{U\sim \U(n)} S_4(R_r(U)\ket{r})
    =
    \EBB_{U\sim \U(n)}\sum_{P\in\PC_n}\frac{1}{4^n}\Tr[R_r(U)\ketbra{r}{r}R_r(U)^\dagger P]^4\,.
\end{equation}

Before recalling the result and proving it, let us introduce some auxiliary notations. First, we denote by
\begin{equation}
    d_{n,r}=\dim\bigl(V^{\U(n)}_{(4^r)}\bigr)\,,
\end{equation}
the dimension of the $\U(n)$ irrep associated with $(4^r)$, the rectangular tableau with $r$ rows and $4$ columns.
Using the hook-content formula~\cite{macdonald1998symmetric}, one finds
\begin{equation}
\label{ap-eq:d_nr_formula}
    d_{n,r}
    =
    \prod_{i=1}^r\prod_{j=1}^4
    \frac{n+j-i}{r+5-i-j}\,.
\end{equation}
Furthermore, given a partition $\lambda=(\lambda_1,\lambda_2,\dots)$ of $r$, we define its double as
\begin{equation}
    2\lambda=(2\lambda_1,2\lambda_2,\dots)\,,
\end{equation}
and we denote by $H(2\lambda)$ the hook-product of the Young diagram $2\lambda$.
For later ease of notation, we also introduce the generalized Pochhammer symbol
\begin{equation}
    (a)_\lambda
    =
    \prod_{j\geq 1}
    \left(a-\frac{j-1}{2}\right)_{\lambda_j}\,,
    \qquad
    (x)_k=x(x+1)\cdots(x+k-1)\,.
\end{equation}
Lastly, we define
\begin{equation}
\label{apap-eq:ck_def}
    c_k
    =
    4^k
    \sum_{\lambda\vdash k,\ \ell(\lambda)\leq 3}
    \frac{(2k)!}{H(2\lambda)}
    \frac{\left(\frac{3}{2}\right)_\lambda}{(3)_\lambda}\,,
\end{equation}
for $\ell(\lambda)$ the number of rows of the Young diagram associated to the partition $\lambda$, namely the length of the partition itself as a tuple.

We are now ready to recall the result and then proceed to prove it.
\begin{theorem}[Average stabilizer purity in $\Lambda^r(\CBB^n)$]
\label{ap-th:avg_pp_stab_sector}
For any given total number of fermionic modes $n\geq 0$ and any fixed particle sector thereof $0\leq r\leq n$,
\begin{equation}
\label{ap-eq:avg_pp_stab_sector_formula_1}
    \mathbb{E}_{U\sim\U(n)}
    S_4\!\left(R_r(U)\ket{r}\right)
    =
    \frac{1}{2^n\,d_{n,r}}
    \sum_{k=0}^{\min(r,n-r)}
    \frac{n!}{(n-r-k)!\,(2k)!\,(r-k)!}\,c_k.
\end{equation}
Equivalently, substituting Eq.~\eqref{apap-eq:ck_def},
\begin{equation}
\label{ap-eq:avg_pp_stab_sector_formula_2}
    \mathbb{E}_{U\sim\U(n)}
    S_4\!\left(R_r(U)\ket{r}\right)
    =
    \frac{1}{2^n\,d_{n,r}}
    \sum_{k=0}^{\min(r,n-r)}
    \frac{n!}{(n-r-k)!\,(r-k)!}
    4^k
    \sum_{\lambda\vdash k,\ \ell(\lambda)\leq 3}
    \frac{1}{H(2\lambda)}
    \frac{\left(\frac{3}{2}\right)_\lambda}{(3)_\lambda}\,.
\end{equation}
\end{theorem}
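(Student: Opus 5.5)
The plan is to reduce the Haar average to a trace against a highest-weight projector, rewrite that projector with the skew Howe duality of Eq.~\eqref{eq:Schur-Weyl-passive} for $t=4$, and end up with a single integral over the copy group $\U(4)$.

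\textbf{Step 1 (twirl of the reference state).} The state $\ket{r}$ is a highest-weight vector of $\Lambda^r(\CBB^n)=V^{\U(n)}_{(1^r)}$, with weight $(1^r,0^{n-r})$. Hence $\ket{r}^{\otimes 4}$ is a highest-weight vector of weight $(4^r)$, and it generates the Cartan component $V^{\U(n)}_{(4^r)}\subset \Lambda^r(\CBB^n)^{\otimes 4}$, which occurs there with multiplicity one. Schur's lemma then gives
\begin{equation}
\EBB_{U}\bigl(R_r(U)\ketbra{r}{r}R_r(U)^\dagger\bigr)^{\otimes 4}=\Pi_{(4^r)}/d_{n,r},
\end{equation}
where $\Pi_{(4^r)}$ is the projector onto this component. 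Substituting into Eq.~\eqref{ap-eq:S4_pp_sector_def} reduces the claim to computing $\frac{1}{4^n d_{n,r}}\Tr\bigl[\Pi_{(4^r)}\,A\bigr]$, where $A=\sum_{P\in\PC_n}P^{\otimes 4}=\bigotimes_{i=1}^n\bigl(\id^{\otimes4}+X^{\otimes4}+Y^{\otimes4}+Z^{\otimes4}\bigr)$.

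\textbf{Step 2 (projector via the copy group).} I identify $\FC_n^{\otimes 4}\cong\Lambda(\CBB^n\otimes\CBB^4)$ as in Eq.~\eqref{eq:Schur-Weyl-passive}. The $\U(n)$-irrep $(4^r)=\lambda^T$ is paired with $\lambda=(r,r,r,r)$, which is the one-dimensional $\U(4)$-irrep $\det^r$. Therefore $\Pi_{(4^r)}$ is exactly the isotypic projector for the character $\det^r$ of the copy $\U(4)$ action generated by the $\widetilde\Omega_{j,k}$. This yields
\begin{equation}
\Pi_{(4^r)}=\int_{\U(4)}\overline{\det V}^{\,r}\,L(V)\,dV,
\end{equation}
where $L(V)$ is the second-quantized action of $V$. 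In the dressed-fermion picture $L(V)$ factorizes over physical modes as $\bigotimes_{p=1}^n\Lambda(V)_p$, with $\Lambda(V)_p$ acting on the $16$-dimensional space of the four copies of mode $p$. I will check that $A$ factorizes in the same way. Under Jordan--Wigner each Pauli is a Majorana monomial, and in the fourth tensor power the $Z$-strings and the dressing phases $\Gamma_j$ cancel up to global signs that I have to track. So $A=\bigotimes_p a_p$ for a fixed $16\times16$ operator $a$. Consequently
\begin{equation}
\Tr[\Pi_{(4^r)}A]=\int_{\U(4)}\overline{\det V}^{\,r}\,\bigl(\tr[a\,\Lambda(V)]\bigr)^n\,dV .
\end{equation}

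\textbf{Step 3 (local trace and coefficient extraction).} Next I compute $f(V)=\tr[a\Lambda(V)]$ term by term. The $\id^{\otimes4}$ term gives $\det(\id+V)$ and the $Z^{\otimes4}$ term gives $\det(\id-V)$. The $X^{\otimes4}$ and $Y^{\otimes4}$ terms flip all four occupations of the mode. Their traces against $\Lambda(V)$ pick out the part of $\Lambda(\CBB^4)$ mapped to its particle--hole conjugate, which I expect to be a quadratic symmetric expression in $V$ such as $\tr(VV^{T})$-type or Pfaffian-type invariants. I then expand $f(V)^n$ and keep only the terms of total degree $4r$ (that is, $r$ particles per copy), which is where the integral against $\overline{\det V}^{\,r}$ can be nonzero. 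The index $k$ in the target formula should count the modes that contribute through the $X/Y$ flip terms: $r-k$ modes fully occupied, $n-r-k$ modes empty, and $2k$ flip-type modes. This accounts for the multinomial prefactor $\frac{n!}{(n-r-k)!(2k)!(r-k)!}$, which leaves $c_k$ as a normalized $\U(4)$ integral of the $k$-th power of the flip invariant against $\overline{\det V}^{\,r}$.

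\textbf{Main obstacle.} The hardest step is evaluating that $\U(4)$ integral in closed form. The factors $(3/2)_\lambda/(3)_\lambda$ and $H(2\lambda)$, together with the restriction $\ell(\lambda)\le 3$, strongly suggest zonal polynomials (Jack parameter $2$) on the symmetric space $\U(4)/\mathbb{O}(4)$. So I would first show that the flip invariant depends on $V$ only through $VV^{T}$. I would then expand its $k$-th power in zonal polynomials, whose coefficients are $(2k)!/H(2\lambda)$ up to the factor $4^k$. Finally I would integrate each term using the known averages of zonal polynomials of $VV^T$ over $\U(4)$, which produce generalized Pochhammer ratios of the stated form. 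Getting the signs from the Jordan--Wigner/dressing bookkeeping in Step 2 exactly right is a secondary but error-prone step. I would check the result against $r=0$ (where the answer must be $1/2^n$) and against $n=2$ numerically.
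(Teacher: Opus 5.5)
\textbf{Steps 1--2 and the coefficient extraction.} These match the paper's proof. The twirl gives $P_{(4^r)}/d_{n,r}$. The projector is the $\det^r$-isotypic projector $\int_{\U(4)}\overline{\det g}^{\,r}\rho(g)^{\otimes n}\,d\mu(g)$ of the copy group. One then gets $\Tr[Q_nP_{(4^r)}]=\int_{\U(4)}\overline{\det g}^{\,r}F(g)^n\,d\mu(g)$ with $F(zh)=2\bigl(1+z^4+z^2M(h)\bigr)$ and $M(h)=\Tr_{\Lambda^2(\CBB^4)}[(\id+J)\Lambda^2(h)]$. Your Step 3 has two bookkeeping slips, though.
\begin{itemize}
\item The $2k$ modes holding two of the four copies are not ``flip-type''. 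Each contributes the whole degree-two part of $F$: $2\Tr\Lambda^2(V)$ from $\det(\id+V)+\det(\id-V)$, plus $2\Tr[J\Lambda^2(V)]$ from $X^{\otimes4}+Y^{\otimes4}$.
\item This factor enters with power $2k$, not $k$, so $c_k=\int_{\SU(4)}M(h)^{2k}\,d\mu(h)$.
\end{itemize}
Dropping the $\Tr\Lambda^2$ part is already wrong at $k=1$: $\int(\Tr[J\Lambda^2(h)])^2\,d\mu(h)=1$, whereas $c_1=\int M(h)^2\,d\mu(h)=2$.

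\textbf{The gap: evaluating $\int_{\SU(4)} M^{2k}$.} The lemma you want, that the relevant invariant depends on $V$ only through $VV^T$, is false. Take $V\in\SO(4)$ a rotation by $\theta$ in the $(1,2)$-plane. Then $VV^T=\id$ and every minor $\det V_{S^c,S}$ vanishes, yet $M(V)=\Tr\Lambda^2(V)=2+4\cos\theta$. The flip term alone is also non-constant on $\SO(4)$: a simultaneous rotation in the $(1,3)$ and $(2,4)$ planes gives $\Tr[J\Lambda^2(V)]=4\sin^2\theta$. Even apart from this, $h\mapsto C_\lambda(hh^T)$ is a zonal spherical function of $\SU(4)/\SO(4)$. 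Its Haar average therefore vanishes for every nonempty $\lambda$ with $\ell(\lambda)\le 3$, so that route cannot produce the sum defining $c_k$.

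\textbf{The missing idea: pass to $\SO(6)$.} Use $\SU(4)\cong\Spin(6)$. The map $\Lambda^2$ preserves the symmetric pairing defined by $\alpha\wedge\beta=B(\alpha,\beta)\,e_1\wedge e_2\wedge e_3\wedge e_4$. It therefore pushes Haar measure on $\SU(4)$ to Haar measure on $\SO(6)$. Moreover, $J$ is a $B$-orthogonal involution with eigenvalues $(+1)^3,(-1)^3$. Hence $M(h)=2\Tr(PO)$, where $P=(\id+J)/2$ is a rank-$3$ projector and $O$ is Haar on $\SO(6)$. The moments then follow from Herz's formula
\[
\int_{\SO(6)}e^{t\Tr(PO)}\,d\mu(O)={}_0F_1\bigl(3;\tfrac{t^2}{4}P\bigr),
\]
expanded in real zonal polynomials. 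For $\ell(\lambda)\le 3$ one has $C_\lambda(P)=C_\lambda(I_3)$, and $C_\lambda(I_3)=4^{|\lambda|}|\lambda|!\,(\tfrac32)_\lambda/H(2\lambda)$. This explains every feature of the target formula:
\begin{itemize}
\item $(3)_\lambda$ is $(m/2)_\lambda$ for $m=6$;
\item $(\tfrac32)_\lambda$ and the restriction $\ell(\lambda)\le 3$ come from the rank of $P$.
\end{itemize}
So the zonal polynomials enter through the parameter matrix $PP^T=P$ of a Bessel-type group integral, not through the group variable as in your $\U(4)/\mathbb{O}(4)$ guess.
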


\begin{proof}
We start by defining the uniform superposition of fourth tensor powers of Paulis as
\begin{equation}
\label{ap-eq:Qn_pp_def}
    Q_n=\sum_{P\in\PC_n} P^{\otimes 4}\,.
\end{equation}
This way, given any pure state $\ket{\psi}$ we have
\begin{equation}
\label{ap-eq:replica_identity_pp}
    \sum_{P\in\PC_n}\Tr[\ketbra{\psi}{\psi}P]^4
    =
    \Tr[\ketbra{\psi}{\psi}^{\otimes 4}Q_n]\,.
\end{equation}
Hence, for the case of interest $\ket{\psi}=R_r(U)\ket{r}$, and averaging over $U\sim\U(n)$ we find
\begin{equation}
\label{ap-eq:avg_pp_stab_as_trace}
    \mathbb{E}_{U\sim\U(n)}
    S_4\!\left(R_r(U)\ket{r}\right)
    =
    \frac{1}{4^n}
    \Tr\!\left[
    \left(
    \mathbb{E}_{U\sim\U(n)}
    R_r(U)^{\otimes 4}
    \ketbra{r}{r}^{\otimes 4}
    R_r(U)^{\dagger\otimes 4}
    \right)
    Q_n
    \right]\,.
\end{equation}
Now, $V_r=\Lambda^r(\CBB^n)$ is the irreducible $\U(n)$-module of highest-weight $(1^r)$, i.e. the one associated with the Young diagram consisting of a single height-$r$ column. If we denote by $\{e_j\}_{j=1}^n$ the canonical basis of $\CBB^n$, the highest-weight vector $v_r\in V_r$ is $v_r=e_1\wedge\dots\wedge e_r$, thus corresponding to $\ket{r}$. Consequently,
\begin{equation}
    v_r^{\otimes 4}\in V_r^{\otimes 4}
\end{equation}
is a highest-weight vector of weight $(4^r)$, since under tensor products weights add~\cite{kumar2010tensor}. 
Now consider the irrep $V_{(4^r)}$ which appears, with multiplicity one~\cite{kumar2010tensor} in the decomposition of $V_r^{\otimes 4}$. This irrep is exactly the one generated by the orbit of the highest-weight vector $v_r^{\otimes 4}$, i.e. $V_{(4^r)}=\Span\{R_r(U)^{\otimes 4}v_r^{\otimes 4}\,|\, U\in\U(n)\}$. As can be checked via the GT machinery developed in Appendix~\ref{ap:PP_GT_method}, the copy-side irrep associated to $(4^r)$ is $(r,r,r,r)$ which is indeed one-dimensional. Since $V_{(4^r)}$ occurs with multiplicity one, and given the identification of $\ket{r}$ with $v_r$, the twirl $\mathbb{E}_{U\sim\U(n)} R_r(U)^{\otimes 4} \ketbra{r}{r}^{\otimes 4} R_r(U)^{\dagger\otimes 4}$ must be proportional to the identity operator over $V_{(4^r)}$ itself. Namely, as an operator in $\End(\HC^{\otimes 4})$ the twirl projects onto the multiplicity free isotypic component $(4^r)$ of $\U(n)$.
Hence, denoting by $P_{(4^r)}$ said projector, we have
\begin{equation}
\label{ap-eq:physical_projector_average}
    \mathbb{E}_{U\sim\U(n)}
    R_r(U)^{\otimes 4}
    \ketbra{r}{r}^{\otimes 4}
    R_r(U)^{\dagger\otimes 4}
    =
    \frac{P_{(4^r)}}{d_{n,r}}\,,
\end{equation}
where the normalization factor arises since the twirl must preserve the trace of the input state $\ketbra{r}{r}^{\otimes 4}$, and we recall that the definition of $d_{n,r}$ is given in Eq.~\eqref{ap-eq:d_nr_formula}.
Substituting into Eq.~\eqref{ap-eq:avg_pp_stab_as_trace}, we get
\begin{equation}
\label{ap-eq:avg_pp_stab_projector_overlap}
    \mathbb{E}_{U\sim\U(n)}
    S_4\!\left(R_r(U)\ket{r}\right)
    =
    \frac{1}{4^n\,d_{n,r}}
    \Tr[Q_n P_{(4^r)}]\,.
\end{equation}

Thus, we have to compute the overlap $\Tr[Q_n P_{(4^r)}]$. We notice that the operator $Q_n$ can be factorized over the qubits' indices as follows
\begin{equation}
\label{ap-eq:Qn_factorization_pp}
    Q_n=q^{\otimes n},
    \qquad
    q=I^{\otimes 4}+X^{\otimes 4}+Y^{\otimes 4}+Z^{\otimes 4}\,.
\end{equation}

Now, consider the standard exterior-algebra identification~\cite{brosnan2019notes,cautis2014webs}
\begin{equation}
\label{ap-eq:fermionic_identification_pp}
    \left(\Lambda \CBB^n\right)^{\otimes 4}
    \cong
    \Lambda(\CBB^n\otimes \CBB^4)
    \cong
    \left(\Lambda \CBB^4\right)^{\otimes n}\,.
\end{equation}
Notice that through this identification one can freely exchange the roles of physical modes and tensor copies.
This identification is the core of the skew Howe duality~\cite{cautis2014webs}, which as we saw implies that the diagonal physical action of $\U(n)$ commutes with the natural copy-side action of $\U(4)$.

More precisely, we here recall that the skew Howe duality gives the multiplicity-free decomposition
\begin{equation}
\label{ap-eq:skew_howe_pp}
    \Lambda(\CBB^n\otimes \CBB^4)
    \cong
    \bigoplus_{\lambda\subseteq n^4}
    V_{\lambda^T}^{\U(n)}\otimes V_{\lambda}^{\U(4)}\,,
\end{equation}
where $\lambda^T$ denotes the transpose partition.
Thus, as stated previously, for the partition $\lambda^T=(4^r)$, one has
\begin{equation}
    \lambda=(r,r,r,r)\,,
\end{equation}
and the corresponding $\U(4)$-representation is the one-dimensional determinant representation
\begin{equation}
    V_{(r,r,r,r)}^{\U(4)}\cong {\det}^r\,.
\end{equation}
Therefore, we can resort to character orthogonality~\cite{kamnitzer2011representation,hall2013lie} to express the projector onto $V_{(4^r)}$ as the projector onto the copy side as
\begin{equation}
\label{ap-eq:copy_projector_def}
    \Pi_r^{\rm copy}
    =
    \int_{\U(4)} \overline{\det(g)^r}\,\rho_n(g)\,d\mu(g),
\end{equation}
where we used $\rho_n$ to denote the copy-side representation on $\left(\Lambda \CBB^4\right)^{\otimes n}$, and $d\mu(g)$ is the normalized Haar measure on $\U(4)$. Notice that we used $g$ for the elements of $\U(4)$ to distinguish them from $U$ which we used for the physical $\U(n)$.
Since the decomposition in Eq.~\eqref{ap-eq:skew_howe_pp} is multiplicity-free, the projector $P_{(4^r)}$ on the physical side and the projector $\Pi_r^{\rm copy}$ on the copy side are the same operator once both are viewed inside the common space in Eq.~\eqref{ap-eq:fermionic_identification_pp}. This is easily seen by noticing that the representation $\rho_n$ only acts on the copy side, hence it's proportional to the identity on the irreps of $\U(n)$. Then, by character orthogonality carrying out the integral in Eq.~\eqref{ap-eq:copy_projector_def} kills the support on all irreps but $(r,r,r,r)$, recovering $P_{(4^r)}$.
Hence, using also the factorization in Eq.~\eqref{ap-eq:Qn_factorization_pp}
\begin{equation}
\label{ap-eq:trace_qn_copy}
    \Tr[Q_n P_{(4^r)}]
    =
    \Tr[q^{\otimes n}\Pi_r^{\rm copy}]\,.
\end{equation}

If we denote by $\rho$ the local $\U(4)$ action on $\Lambda \CBB^4$, we can then write $\rho_n(g)=\rho(g)^{\otimes n}$.
With this, and substituting Eq.~\eqref{ap-eq:copy_projector_def} into Eq.~\eqref{ap-eq:trace_qn_copy}, we can express the wanted overlap as
\begin{equation}
\label{ap-eq:trace_as_local_integral}
    \Tr[Q_n P_{(4^r)}]
    =
    \int_{\U(4)} \overline{\det(g)^r}\,F(g)^n\,d\mu(g)\,,
\end{equation}
where we defined
\begin{equation}
\label{ap-eq:Fg_def}
    F(g)=\Tr_{\Lambda \CBB^4}[q\,\rho(g)]\,,
\end{equation}
the local copy overlap. Next, we focus on the latter. Since we want to project each local term $q$ onto $\Lambda \CBB^4$, we ask ourselves what the action of $q$ is on the latter. 
Let $\{e_S\}_{S\subseteq [4]}$ be the standard occupation (computational) basis of $\Lambda \CBB^4$, indexed by subsets $S\subseteq [4]$.
Under the local Jordan--Wigner identification, one checks directly that
\begin{equation}
\label{ap-eq:X_action_local}
    X^{\otimes 4} e_S = e_{S^c},
\end{equation}
\begin{equation}
\label{ap-eq:Y_action_local}
    Y^{\otimes 4} e_S = (-1)^{|S|} e_{S^c},
\end{equation}
and
\begin{equation}
\label{ap-eq:Z_action_local}
    Z^{\otimes 4} e_S = (-1)^{|S|} e_S,
\end{equation}
where $S^c=[4]\setminus S$.
Indeed, $X$ flips computational basis states, $Z$ contributes a sign $(-1)$ on an occupied qubit, and for $Y$ one uses $Y\ket{0}=i\ket{1}$ and $Y\ket{1}=-i\ket{0}$, which yields the net factor $(-1)^{|S|}$ (which is equal to $(-1)^{|S^c|}$) on four copies, by $(-i)^4=1$.

Combining Eq.~\eqref{ap-eq:X_action_local} with Eq.~\eqref{ap-eq:Z_action_local}, we find
\begin{equation}
\label{ap-eq:q_action_local}
    q\,e_S
    =
    \begin{cases}
        0, & |S| \text{ odd},\\
        2e_S+2e_{S^c}, & |S| \text{ even}\,.
    \end{cases}
\end{equation}
Now let us parametrize the copy group $\U(4)$ as
\begin{equation}
    g=zh,
    \qquad
    z\in \U(1),
    \quad
    h\in \SU(4)\,.
\end{equation}
Indeed, the unitary group $\U(d)$ can always be decomposed as $\U(d)\cong (\SU(d)\times\U(1))/\mathbb{Z_d}$. 
Moreover, the action of $zh$ on $\Lambda^r(\CBB^4)$ reads
\begin{equation}
\label{ap-eq:zh_on_degree_r}
    \rho(zh)\big|_{\Lambda^r(\CBB^4)}
    =
    z^r\,\Lambda^r(h)\,
\end{equation}
since the phases $z$ factorize and accumulate as a global phase.
Now, since $q$ annihilates the odd-occupied sectors by Eq.~\eqref{ap-eq:q_action_local}, only the occupation sectors $|S|\in\{0,2,4\}$ can contribute to the trace. Hence we can write
\begin{equation}
    F(zh)=\Tr_{\Lambda \CBB^4}[q\,\rho(zh)]
    =
    \Tr_{\Lambda^0 \CBB^4}[q\,\rho(zh)]+\Tr_{\Lambda^2 \CBB^4}[q\,\rho(zh)]+\Tr_{\Lambda^4 \CBB^4}[q\,\rho(zh)]\,.
\end{equation}
Notice now that $\Lambda^0 \CBB^4$ is the trivial irrep, where $\rho(zh)=\Lambda^0(zh)=1$. Also, its basis is given by $\ket{0000}$ alone, hence one readily gets $\Tr_{\Lambda^0 \CBB^4}[q\,\rho(zh)]=2$, where the only contributions come from $I^{\otimes 4}$ and $Z^{\otimes 4}$. Then, $\Lambda^4 \CBB^4$ is the one dimensional determinant representation $\rho(zh)=\Lambda^4(zh)=\det(zh)=z^4$, and again one easily gets $\Tr_{\Lambda^4 \CBB^4}[q\,\rho(zh)]=2z^4$.
On the two-particles sector $\Lambda^2(\CBB^4)$, consider the involution $J=X^{\otimes4}:\Lambda^2(\CBB^4)\rightarrow\Lambda^2(\CBB^4)$
\begin{equation}
\label{ap-eq:J_def_local}
    J e_S = e_{S^c},
    \qquad |S|=2\,.
\end{equation}
Then Eq.~\eqref{ap-eq:q_action_local} implies that, on the degree-two sector, $q$ acts as $2(I+J)$.
Combining these observations, we arrive at
\begin{equation}
\label{ap-eq:Fzh_formula}
    F(zh)
    =
    2\left(1+z^4+z^2 M(h)\right)\,,
\end{equation}
where we introduced
\begin{equation}
\label{ap-eq:Mh_def}
    M(h)=\Tr_{\Lambda^2(\CBB^4)}\!\left[(I+J)\Lambda^2(h)\right]\,.
\end{equation}
Now let us go back to the factorization $\U(4)\cong(\SU(4)\times U(1))/\Bbb Z_4$. With the expression in Eq.~\eqref{ap-eq:Fzh_formula} we can show that we can work with the larger $\SU(d)\times\U(1)$. Indeed, since the map
\begin{equation}
    \phi:\U(1)\times \SU(4)\to \U(4),\qquad (z,h)\mapsto zh\,,
\end{equation}
is surjective with kernel
\begin{equation}
    \Delta\Bbb Z_4=\{(\omega,\omega^{-1}I_4):\omega^4=1\}\,,
\end{equation}
the correct group isomorphism indeed should read $\U(4)\cong (\U(1)\times \SU(4))/\Delta\Bbb Z_4$. Nonetheless, the integrand in Eq.~\eqref{ap-eq:trace_as_local_integral} is invariant under the kernel $\Delta\Bbb Z_4$, since
\begin{align}
    \overline{\det(zh)^r} = \overline{z}^{4r} &\to \overline{\omega}^{4r} \overline{z}^{4r} = \overline{z}^{4r}\\
    z^4 &\to \omega^4 z^4 = z^4\\
    z^2 &\to \omega^2 z^2 \\
    M(h) &\to M(\omega^{-1}h) = \omega^{-2}M(h)\,,
\end{align}
and the last two lines are multiplied together. In the last line we used that $\Lambda^2(\omega^{-1}h)=\omega^{-2}\Lambda^2(h)$.
Hence, we may evaluate the $\U(4)$ integral using the normalized product Haar measure on $\U(1)\times \SU(4)$.

Substituting Eq.~\eqref{ap-eq:Fzh_formula} into Eq.~\eqref{ap-eq:trace_as_local_integral}, we obtain
\begin{equation}
\label{ap-eq:trace_after_local_eval}
    \Tr[Q_n P_{(4^r)}]
    =
    2^n
    \int_{\U(1)\times \SU(4)}
    z^{-4r}\left(1+z^4+z^2M(h)\right)^n\,d\mu(z)\,d\mu(h)\,.
\end{equation}

We now expand the $n$-th power.
Choose $n_0$ factors equal to $1$, $n_2$ factors equal to $z^2M(h)$, and $n_4$ factors equal to $z^4$, with
\begin{equation}
\label{ap-eq:abc_sum}
    n_0 + n_2 + n_4=n\,.
\end{equation}
The corresponding monomial thus contributes
\begin{equation}
    z^{2n_2+4n_4-4r}M(h)^{n_2}\,.
\end{equation}
When carrying out the $\U(1)$ integral, only the terms with a total zero power of $z$ will survive, namely
\begin{equation}
\label{ap-eq:z_balance}
    2n_2+4n_4=4r\,.
\end{equation}
Since $4(r-n_4)$ is a multiple of four, we need to have $n_2=2k$ for some integer $k\geq 0$.
In turn, Eq.~\eqref{ap-eq:z_balance} becomes $n_4=r-k$, and Eq.~\eqref{ap-eq:abc_sum} gives $n_0=n-r-k$.
Using these, we get to
\begin{equation}
\label{ap-eq:trace_sum_with_Jk}
    \Tr[Q_n P_{(4^m)}]
    =
    2^n
    \sum_{k=0}^{\min(r,n-r)}
    \frac{n!}{(n-r-k)!\,(2k)!\,(r-k)!}\,J_k \,,
\end{equation}
where we defined the leftover integral as
\begin{equation}
\label{ap-eq:Jk_def}
    J_k=\int_{\SU(4)} M(h)^{2k}\,d\mu(h)\,.
\end{equation}

We are left with carrying out the integral $J_k$.
The two-particles sector $\Lambda^2(\CBB^4)$ is six-dimensional and it carries the representation $\Lambda^2$ of the local copy group $\SU(4)$. This representation preserves the symmetric bilinear form $B$ defined by
\begin{equation}
\label{ap-eq:bilinear_form_wedge2}
    \alpha\wedge\beta
    =
    B(\alpha,\beta)\, e_1\wedge e_2\wedge e_3\wedge e_4,
    \qquad
    \alpha,\beta\in\Lambda^2(\CBB^4)\,.
\end{equation}
Namely, $B(\alpha,\beta)$ extracts the coefficient of $\alpha\wedge\beta$ over the top form $e_1\wedge e_2\wedge e_3\wedge e_4$. Notice that $B$ is symmetric because the wedge of two even forms is.
The invariance of $B$ under the irrep $\Lambda^2$ of $\SU(4)$ is trivial, since= if $h\in \SU(4)$ then
\begin{align}
    B(\Lambda^2(h)\alpha, \Lambda^2(h)\beta) e_1\wedge e_2\wedge e_3\wedge e_4
    &=
    \Lambda^2(h)\alpha \wedge \Lambda^2(h)\beta
    \\
    &=\Lambda^4(h)(\alpha\wedge\beta)\\
    &=
    B(\alpha,\beta) \Lambda^4(h)(e_1\wedge e_2\wedge e_3\wedge e_4)\\
    &=
    \det(h)B(\alpha,\beta) e_1\wedge e_2\wedge e_3\wedge e_4\\
    &=
    B(\alpha,\beta) e_1\wedge e_2\wedge e_3\wedge e_4\,,
\end{align}
because $\det(h)=1$.
Therefore, in a basis orthonormal with respect to $B$, the matrix of $\Lambda^2(h)$ is an element of $\SO(6)$.\footnote{Strictly speaking, $\Lambda^2$ gives a representation into $\SO(6,\mathbb{C})$, since it preserves a nondegenerate symmetric bilinear form on the complex space $\Lambda^2(\mathbb{C}^4)$. Upon choosing the standard real form, this becomes the usual $6$-dimensional vector representation of $\SO(6)$, consistent with the exceptional isomorphism $\SU(4)\cong \Spin(6)$~\cite{lawson2016spin,fulton1991representation}.}

Now choose such a basis so that the involution $J$ in Eq.~\eqref{ap-eq:J_def_local} is diagonal.
Since $J^2=I$ and $\Tr(J)=0$ on the six-dimensional two-particles space, it has eigenvalues $+1$ and $-1$, each with multiplicity $3$.
Thus, after a suitable change of basis, we can write
\begin{equation}
\label{ap-eq:J_diagonal}
    J=
    \begin{pmatrix}
        I_3 & 0\\
        0 & -I_3
    \end{pmatrix},
    \qquad
    P=\frac{I+J}{2}
    =
    \begin{pmatrix}
        I_3 & 0\\
        0 & 0
    \end{pmatrix}\,.
\end{equation}
Now consider
\begin{equation}
    O_h\in \SO(6)
\end{equation}
as the matrix representation of $\Lambda^2(h)$ in this basis.
Then, from Eq.~\eqref{ap-eq:Mh_def},
\begin{equation}
\label{ap-eq:Mh_as_trace_PO}
    M(h)
    =
    \Tr[(I+J)O_h]
    =
    2\Tr(P O_h).
\end{equation}

At this point we use the exceptional isomorphism~\cite{lawson2016spin,fulton1991representation}
\begin{equation}
    \SU(4)\cong \Spin(6)\,,
\end{equation}
under which $\Lambda^2$ is precisely the standard double-cover map onto $\SO(6)$.
As a consequence, the normalized Haar measure on $\SU(4)$ is mapped to the normalized Haar measure on $\SO(6)$.
Therefore, Eq.~\eqref{ap-eq:Jk_def} becomes
\begin{equation}
\label{ap-eq:Jk_as_mur}
    J_k
    =
    4^k \mu_k\,,
\end{equation}
for
\begin{equation}
\label{ap-eq:muk_def}
    \mu_k
    =
    \int_{\SO(6)} \Tr(P O)^{2k}\,d\mu(O)\,,
\end{equation}
Combining Eqs.~\eqref{ap-eq:avg_pp_stab_projector_overlap}, \eqref{ap-eq:trace_sum_with_Jk}, and \eqref{ap-eq:Jk_as_mur}, we finally arrive at
\begin{equation}
\label{ap-eq:avg_pp_stab_before_moments}
    \mathbb{E}_{U\sim\U(n)}
    S_4\!\left(R_r(U)\ket{r}\right)
    =
    \frac{1}{2^n\,d_{n,r}}
    \sum_{k=0}^{\min(r,n-r)}
    \frac{n!}{(n-r-k)!\,(2k)!\,(r-k)!}\,
    4^k\mu_k\,.
\end{equation}

All that is left is to compute the moments $\mu_k$ in Eq.~\eqref{ap-eq:muk_def}.
Consider the generating function
\begin{equation}
\label{ap-eq:Phi_def}
    \Phi(t)
    =
    \int_{\SO(6)} e^{t\Tr(P O)}\,d\mu(O)\,.
\end{equation}
Since $-I_6\in \SO(6)$ and $\Tr(P(-O))=-\Tr(P O)$, all odd moments vanish.
Hence
\begin{equation}
\label{ap-eq:Phi_even_expansion}
    \Phi(t)
    =
    \sum_{k=0}^\infty \frac{\mu_k}{(2k)!}\,t^{2k}.
\end{equation}

On the other hand, the classical Herz formula~\cite{herz1955bessel} allows us to express $\Phi(t)$ as the hypergeometric function of matrix argument
\begin{equation}
\label{ap-eq:Herz_formula}
    \Phi(t)
    =
    {}_0F_1\!\left(3;\frac{t^2}{4}P\right)\,.
\end{equation}
We now expand the right-hand side in real zonal polynomials~\cite{gross1987special}.
If $C_\lambda$ denotes the zonal polynomial indexed by a partition $\lambda$, then
\begin{equation}
\label{ap-eq:zonal_expansion}
    {}_0F_1(b;X)
    =
    \sum_{\lambda}
    \frac{C_\lambda(X)}{|\lambda|!\,(b)_\lambda}.
\end{equation}
Using the homogeneity $C_\lambda(\alpha X)=\alpha^{|\lambda|}C_\lambda(X)$, Eq.~\eqref{ap-eq:Herz_formula} becomes
\begin{equation}
\label{ap-eq:Phi_zonal_expansion_explicit}
    \Phi(t)
    =
    \sum_{k=0}^\infty
    \frac{t^{2k}}{4^k\,k!}
    \sum_{\lambda\vdash k}
    \frac{C_\lambda(P)}{(3)_\lambda}\,.
\end{equation}
Since $P$ is a rank-$3$ projector, only partitions with at most $3$ parts can contribute to the sum above.
Moreover, on the support of $P$, the matrix $P$ acts as the identity, hence
\begin{equation}
\label{ap-eq:C_lambda_P_equals_I3}
    C_\lambda(P)=C_\lambda(I_3)
    \qquad
    \text{for all }\lambda \text{ with } \ell(\lambda)\leq 3.
\end{equation}
The standard specialization formula~\cite{gross1987special} for real zonal polynomials gives
\begin{equation}
\label{ap-eq:zonal_specialization}
    C_\lambda(I_3)
    =
    4^{|\lambda|}\,|\lambda|!\,
    \frac{\left(\frac{3}{2}\right)_\lambda}{H(2\lambda)}\,.
\end{equation}
Substituting Eqs.~\eqref{ap-eq:C_lambda_P_equals_I3} and \eqref{ap-eq:zonal_specialization} into Eq.~\eqref{ap-eq:Phi_zonal_expansion_explicit}, and then comparing the coefficient of $t^{2k}$ with Eq.~\eqref{ap-eq:Phi_even_expansion}, yields
\begin{equation}
\label{ap-eq:muk_closed_form}
    \frac{\mu}{(2k)!}
    =
    \sum_{\lambda\vdash k,\ \ell(\lambda)\leq 3}
    \frac{1}{H(2\lambda)}
    \frac{\left(\frac{3}{2}\right)_\lambda}{(3)_\lambda}\,,
\end{equation}
or equivalently
\begin{equation}
\label{ap-eq:muk_closed_form_2}
    \mu_k
    =
    \sum_{\lambda\vdash k,\ \ell(\lambda)\leq 3}
    \frac{(2k)!}{H(2\lambda)}
    \frac{\left(\frac{3}{2}\right)_\lambda}{(3)_\lambda}\,.
\end{equation}

Lastly, define $c_k=4^k\mu_k$.
By Eq.~\eqref{ap-eq:muk_closed_form_2}, this is exactly Eq.~\eqref{apap-eq:ck_def}.
Substituting $4^k\mu_k=c_k$ into Eq.~\eqref{ap-eq:avg_pp_stab_before_moments} proves Eq.~\eqref{ap-eq:avg_pp_stab_sector_formula_1}, while inserting Eq.~\eqref{ap-eq:muk_closed_form_2} directly proves Eq.~\eqref{ap-eq:avg_pp_stab_sector_formula_2}.
This completes the proof.
\end{proof}

\section{Average Gaussian linearized stabilizer entropy}
\label{ap:avg_gaussian_stab}

In this section we prove the result stated in the main text for the average linear stabilizer entropy of fermionic Gaussian states, which we here recall for convenience
\begin{equation}
\label{ap-eq:gaussian_stab}
    \mathbb{E}_{U\sim\Spin(2n)}S_4(U\ket{0})=\mathbb{E}_{U\sim\Spin(2n)} \sum_{P\in\mathcal{P}_n}\frac{1}{4^n}\Tr[U\ketbra{0}{0}U^\dagger P]^4 = \frac{1}{\Cat_{n+1}}\,,
\end{equation}
where $\PC_n=\{I,X,Y,Z\}^{\otimes n}$ is the $n$-qubit Pauli group, the Catalan numbers are given by $\Cat_m=\frac{1}{m+1}\binom{2m}{m}$, and we are identifying the element of the $\Spin(2n)$ group with its spinor representation for ease of notation.

Let us start by using the standard replica trick of the trace to write, for any state $\rho$
\begin{equation}
    \sum_{P\in\mathcal{P}_n} \Tr[\rho P]^4=\Tr[\rho^{\otimes 4}Q_n]\,,\qquad Q_n=\sum_{P\in\mathcal{P}_n}P^{\otimes 4}\,.
\end{equation}
Then, by linearity of the trace, we can write
\begin{equation}
    \mathbb{E}_{U\sim\Spin(2n)} \sum_{P\in\mathcal{P}_n}\frac{1}{4^n}\Tr[U\ketbra{0}{0}U^\dagger P]^4 = 
    \frac{1}{4^n}\Tr\left[\left(\mathbb{E}_{U\sim\Spin(2n)} U^{\otimes 4} \ketbra{0}{0}^{\otimes 4} (U^\dagger)^{\otimes 4}\right)Q_n\right]\,
\end{equation}
and the first factor in the right-hand side is by definition the fourth-order twirl of four copies of the all-zero state
\begin{equation}
    \mathbb{E}_{U\sim\Spin(2n)} U^{\otimes 4} \ketbra{0}{0}^{\otimes 4} (U^\dagger)^{\otimes 4} = \mathcal{T}^{(4)}(\ketbra{0}{0}^{\otimes 4})\equiv\tau_n^{(4)}\,.
\end{equation}
Therefore, the evaluation of the average Gaussian stabilizer purity is equivalent to computing the overlap
\begin{equation}
    \mathbb{E}_{U\sim\Spin(2n)}S_4(U\ket{0}) = \frac{1}{4^n}\Tr[\tau_n^{(4)} Q_n]\,.
\end{equation}
Since $\ket{0}\in\HC_+$, which follows immediately from $Z^{\otimes n}\ket{0}=\ket{0}$, we necessarily have that $\tau_n^{(4)}\in A_{4,n}$, which corresponds to
\begin{equation}
    A_{4,n}=\Alg\langle \widetilde Q_{rs}: 1\le r<s\le 4\rangle\,.
\end{equation}
As explained in the main text, the algebra $A_{4,n}$ is the full commutant of the $t=4$ matchgate action on $\HC_+^{\otimes 4}$, and the decomposition of $\mathcal H_+^{\otimes 4}$ under the commuting actions of $G$ and $A_{4,n}$ has the form
\begin{equation}
\label{ap-eq:joint_decomp_final}
    \mathcal H_+^{\otimes 4}
    \cong
    \bigoplus_{\lambda=(j_+,j_-)}
    V_\lambda^{\Spin(2n)} \otimes W_\lambda^{\mathfrak{so}(4)}\,.
\end{equation}
Above, we denoted $V_\lambda^{\Spin(2n)}$ the irreps of $\Spin(2n)$, and by $W_\lambda^{\mathfrak{so}(4)}$ those of the action of $A_{4,n}$. The irrep label $\lambda$ can be identified with a pair of total spins $\lambda=(j_+,j_-)$, with $0\leq j_++j_-\leq n$, because of the isomorphism $\so(4)\cong \su(2)\oplus\su(2)$. We also  recall that the matrix units forming a basis of $W_\lambda^{\mathfrak{so}(4)}$ read
\begin{equation}
\label{eq:X_basis_factorization_final}
    X^\lambda_{m_+,m_-;\,m_+',m_-'}
    =
    I_{V_\lambda}\otimes
    \ket{m_+,m_-}\!\bra{m_+',m_-'}\,,
\end{equation}
for $(m_+,m_-)$ the magnetic numbers of the spins $\lambda=(j_+,j_-)$, and $-j_\pm\leq m_\pm\leq j_\pm$.

We now show that the four copies of the all-zero state lie in the irrep $\lambda=(0,0)$, which is intuitive.
Indeed, it suffices to notice that for any generator $\widetilde Q_{ij}$, one has
\begin{align}
    \widetilde Q_{ij}\ket{0}^{\otimes 4} &= \frac{1}{2} \sum_{\mu=1}^{2n} \widetilde{c}_\mu^{(i)} \otimes \widetilde{c}_\mu^{(j)} \ket{0}^{\otimes 4} \\
    &=\frac{1}{2} \sum_{\mu=1}^{2n} \Gamma_i c_\mu^{(i)} \otimes c_\mu^{(j)}\ket{0}^{\otimes 4} \\
    &= -\frac{1}{2} \sum_{\mu=1}^{2n} c_\mu^{(i)} \otimes c_\mu^{(j)}\ket{0}^{\otimes 4} \\
    &= -\frac{1}{2} \sum_{k=1}^{n} \left(c_{2k-1}^{(i)} \otimes c_{2k-1}^{(j)} + c_{2k}^{(i)} \otimes c_{2k}^{(j)}\right)\ket{0}^{\otimes 4}\,,
\end{align}
where we used the explicit definition of the dressed Majorana operators, the fact that the parity operators anticommute with any Majorana, and the fact that the parity operators appearing in the copies between $i$ and $j$ act as the identity over the all-zero state.
Now, recall that, in the usual Jordan-Wigner mapping, the odd Majoranas $c_{2k-1}$ are Pauli strings with an $X$ at position $k$ and $Z$s on the left, while the even ones are the same under the replacement $X\leftrightarrow Y$. Since the $Z$ part of those strings acts trivially on the all-zero state, we just need the actions $X\ket{0}=\ket{1}$ and $Y\ket{0}=i\ket{1}$. With this one immediately sees
\begin{equation}
    \left(c_{2k-1}^{(i)} \otimes c_{2k-1}^{(j)} + c_{2k}^{(i)} \otimes c_{2k}^{(j)}\right)\ket{0}^{\otimes 4}=\ket{0\dots1_k^{(i)}\dots 0\dots1_k^{(j)}\dots 0} + (i)^2\ket{0\dots1_k^{(i)}\dots 0\dots1_k^{(j)}\dots 0} = 0\,, 
\end{equation}
where $\ket{0\dots1_k^{(i)}\dots 0\dots1_k^{(j)}\dots 0}$ denotes the state with two excitations at position $k$ in the copies $i$ and $j$. Hence
\begin{equation}
    \widetilde Q_{ij}\ket{0}^{\otimes 4}=0\,.
\end{equation}
In turn this implies that the all-zero state belongs to the irrep $\lambda=(0,0)$, since it's annihilated by the Casimir operators in Eq.~\eqref{ap-eq:t4-casimirs-def} that are built from the $\widetilde Q_{ij}$.

Now, the trivial irrep $W_{(0,0)}$ is one-dimensional, as zero-spin particles are scalars, and this implies
\begin{equation}
    \tau_n^{(4)}=\frac{\id_{V_{(0,0)}}}{|V_{(0,0)}|}\otimes \ketbra{00}{00}=\frac{X^{(0,0)}_{0,0;0,0}}{|V_{(0,0)}|}\,,
\end{equation}
for $\id_{V_{(0,0)}}$ the operator equal to the identity on the irrep $V_{(0,0)}$ of $\Spin(2n)$, $|V_{(0,0)}|$ its dimension, and $\ketbra{00}{00}$ the projector onto the trivial irrep $W_{(0,0)}$. The normalization factor appears since twirling a state preserves its trace. It's easy to see that this is a general fact, and not something special about $t=4$. Indeed, the only property we made use of is the fact that the all-zero state is annihilated by the generators $\widetilde Q_{ij}$. Namely, the twirl of the all-zero state, and hence of any Gaussian state, is proportional to the normalized projector onto the $\Spin(2n)$ irrep dual to the trivial $\so(t)$ one.
Hence, we are left with computing the ratio
\begin{equation}
\label{ap-eq:average_active_stab_as_ratio}
    \mathbb{E}_{U\sim\Spin(2n)}S_4(U\ket{0})=\frac{\Tr[X^{(0,0)}_{0,0;0,0}Q_n]}{4^n|V_{(0,0)}|}\,.
\end{equation}
Now, the normalization factor $|V_{(0,0)}|$ is the multiplicity of the trivial copy irrep $W_{(0,0)}$ inside the even-parity sector $\HC_+^{\otimes 4}$. Let us denote this multiplicity by $m_{(0,0)}^+$, so that
\begin{equation}
    |V_{(0,0)}|=m_{(0,0)}^+.
\end{equation}
To compute it, we temporarily enlarge from $\HC_+^{\otimes 4}$ to the full spinor space $\HC^{\otimes 4}$, because the copy action integrates naturally to $\Spin(4)$ only on the latter. Let $m_{(0,0)}^-$ be the multiplicity of the trivial copy irrep inside $\HC_-^{\otimes 4}$. Then the total multiplicity of the trivial copy irrep in $\HC^{\otimes 4}$ is
\begin{equation}
    m_{(0,0)}^+ + m_{(0,0)}^-\,.
\end{equation}
These two multiplicities are equal. Indeed, the operator
\begin{equation}
    M=\widetilde c_1^{(1)}\widetilde c_1^{(2)}\widetilde c_1^{(3)}\widetilde c_1^{(4)}
\end{equation}
flips the fermion parity in each copy, hence maps $\HC_+^{\otimes 4}$ onto $\HC_-^{\otimes 4}$. Moreover, $M$ commutes with the copy algebra, since each generator $\widetilde Q_{ij}$ contains one Majorana in copy $i$ and one in copy $j$, and therefore acquires two minus signs when moved through $M$. Thus $M$ intertwines the copy action on the two parity sectors, and in particular maps trivial copy subrepresentations to trivial copy subrepresentations. Therefore
\begin{equation}
    m_{(0,0)}^+=m_{(0,0)}^-.
\end{equation}
It follows that the full multiplicity is exactly twice the one we want
\begin{equation}
    m_{(0,0)}^+ + m_{(0,0)}^- = 2m_{(0,0)}^+ = 2|V_{(0,0)}|\,.
\end{equation}
Hence, by character orthogonality,
\begin{equation}
    2|V_{(0,0)}|
    =
    \int_{\Spin(4)}\chi_{\HC^{\otimes4}}(g)\overline{\chi_{(0,0)}}(g)\,d\mu(g)
    =
    \int_{\Spin(4)}\chi_{\HC^{\otimes4}}(g)\,d\mu(g)\,,
\end{equation}
where $d\mu$ is the normalized Haar measure and $\chi_{(0,0)}(g)=1$, and we used that the character of the trivial irrep is $\chi_{(0,0)}(g)=1$. 
Hence, the promotion to the full spinor representation is harmless, but allows us to use the potent tool of Weyl integration~\cite{sepanski2007compact}.
Now, from the point of view of the copy group action $\Spin(4)$, each original Majorana index $\mu\in[2n]$ is acted upon in the same way, since the generators $\widetilde Q_{ij}$ are uniform sums over those indices. This means that, just as the original $\Spin(2n)$ group acts diagonally on the copies, and $\Spin(4)$ action acts diagonally on the Majorana modes, hence by the representation $U^{\otimes 2n}$, for $U$ the standard full-spinor representation $(1/2,0)+(0,1/2)$ of $\Spin(4)$~\cite{lawson2016spin}. Since the latter is isomorphic to the sum of two standard representations of $\SU(2)$, its character is just the sum of two standard $\SU(2)$ characters. With this in mind, we can proceed along the very same lines of Section~\ref{ap:PP-dim}, since all the ingredients are the same. Particularly, denoting the two $\SU(2)$ as $\SU(2)_+$ and $\SU(2)_-$, with associated torus coordinates $(\theta_+,\theta_-)$ we arrive at
\begin{equation}
\label{ap-eq:full_denominator_integral}
    2|V_{(0,0)}|
    =
    \frac{4}{\pi^2}
    \int_0^\pi d\theta_+
    \int_0^\pi d\theta_-\,
    \left(2\cos\theta_+ + 2\cos\theta_-\right)^{2n}
    \sin^2\theta_+\sin^2\theta_-\,.
\end{equation}
Expanding the power in the integrand, we can notice how odd powers integrate to zero by symmetry, and only even powers survive. Thus we can write
\begin{equation}
    \int_{\Spin(4)}\chi_{\HC^{\otimes4}}(g)d\mu(g)
    =
    \sum_{k=0}^{n}\binom{2n}{2k} I_k I_{n-k},
\end{equation}
where
\begin{equation}
    I_m=\frac{2}{\pi}\int_0^\pi (2\cos\theta)^{2m}\sin^2\theta\,d\theta\,.
\end{equation}
This is a known integral, and evaluates to a Beta function, thus
\begin{align}
    I_m
    &=
    \frac{2^{2m+1}}{\pi}\int_0^\pi \cos^{2m}\theta\,\sin^2\theta\,d\theta \\
    &=
    \frac{2^{2m+2}}{\pi}\int_0^{\pi/2} \cos^{2m}\theta\,\sin^2\theta\,d\theta \\
    &=
    \frac{2^{2m+1}}{\pi}
    B\!\left(m+\frac{1}{2},\frac{3}{2}\right) \\
    &=
    \frac{2^{2m+1}}{\pi}
    \frac{\Gamma(m+\frac{1}{2})\Gamma(\frac{3}{2})}{\Gamma(m+2)} \\
    &=
    \frac{2^{2m}}{\sqrt{\pi}}
    \frac{\Gamma(m+\frac{1}{2})}{\Gamma(m+2)} \\
    &=
    \frac{1}{m+1}\binom{2m}{m}
    =
    \Cat_m\,.
\end{align}
This readily imples
\begin{equation}
    2|V_{(0,0)}|
    =
    \sum_{k=0}^{n}\binom{2n}{2k}\Cat_k\Cat_{n-k}\,.
\end{equation}
Conveniently, we can recognize in the last sum the Shapiro convolution for Catalan numbers~\cite{andrews2011shapiro}, which gives
\begin{equation}
\label{ap-eq:full_denominator_final}
    |V_{(0,0)}|
    =
    \frac{1}{2}\Cat_n\Cat_{n+1}\,.
\end{equation}
Lastly, we compute the overlap $\Tr[X^{(0,0)}_{0,0;0,0}Q_n]$ in the same fashion. As before, to use Weyl integration we temporarily work in the full spinor space $\HC^{\otimes 4}$, where the copies $\Spin(4)$ action is naturally defined. This again doubles the contribution of interest, one copy comes from the trivial copies sector inside $\HC_+^{\otimes 4}$, which is the one selected by $\ket{0}^{\otimes 4}$, and an identical copy comes from $\HC_-^{\otimes 4}$. The same intertwiner $M$ defined previously exchanges the two sectors and commutes with both the copy algebra and $Q_n$, as can be easily checked, so the two contributions are equal. Hence the full-spinor character integral computes twice the desired overlap.
Now, observe that $Q_n$ factorizes over the Majorana indices. Indeed, for each qubit index $k\in[n]$ one has
\begin{equation}
    X_k^{\otimes 4}=\widetilde c_{2k-1}^{\otimes 4},
    \qquad
    Y_k^{\otimes 4}=\widetilde c_{2k}^{\otimes 4},
    \qquad
    Z_k^{\otimes 4}=\widetilde c_{2k-1}^{\otimes 4}\widetilde c_{2k}^{\otimes 4}\,,
\end{equation}
because the Jordan-Wigner Z strings overlap four times and therefore cancel out. As a consequence,
\begin{align}
    Q_n
    &=
    \prod_{k=1}^{n}\left(I+X_k^{\otimes 4}+Y_k^{\otimes 4}+Z_k^{\otimes 4}\right) \\
    &=
    \prod_{k=1}^{n}\left(I+c_{2k-1}^{\otimes 4}\right)\left(I+c_{2k}^{\otimes 4}\right) \\
    &=
    \prod_{\mu=1}^{2n}\left(I+\Gamma_\mu\right)\,,
\end{align}
where we introduced, in complete analogy with the physical indices, the copy parity operator
\begin{equation}
    \Gamma_\mu=c_\mu^{(1)}c_\mu^{(2)}c_\mu^{(3)}c_\mu^{(4)}\,.
\end{equation}
Since $\Gamma_\mu^2=I$, the operator
\begin{equation}
    p_\mu=\frac{I+\Gamma_\mu}{2}
\end{equation}
is a projector. Finally then, we can write the normalized sum of fourth tensor powers of Paulis as the projector
\begin{equation}
\label{ap-eq:Qn_factorized_projector}
    \frac{Q_n}{4^n}
    =
    \prod_{\mu=1}^{2n} p_\mu\,.
\end{equation}
Now consider a fixed Majorana index $\mu$. The projector $p_\mu$ commutes with the local copy of $\so(4)$, just as the physical parity operators commute with the $\Spin(2n)$ action, since $\Gamma_\mu$ is the product of all four local Majoranas and every local bilinear contains exactly two of them, so moving the bilinear through $\Gamma_\mu$ yields two minus signs that cancel each other. Moreover, the image of $p_\mu$ has dimension two, since $\Gamma_\mu$ has eigenvalues $\pm 1$ and zero trace on the local four-dimensional copies space.
Now, a two-dimensional irreducible representation of
$
\Spin(4)\cong \SU(2)\times\SU(2)
$
must be one of the two fundamental irreps. Conveniently, we do not need to resolve in which one we land, because the resulting integral is symmetric under the exchange $\theta_+\leftrightarrow\theta_-$. So, without loss of generality, we may say that the local image of $p_\mu$ has character
\begin{equation}
    2\cos\theta_+\,.
\end{equation}
Since the $2n$ labels are independent, in the full spinor space the image of $Q_n/4^n$ has then character
\begin{equation}
    \left(2\cos\theta_+\right)^{2n}\,.
\end{equation}
Therefore
\begin{align}
    \frac{2}{4^n}\Tr[\id_{V_{(0,0)}}\otimes \ketbra{00}{00}Q_n]
    &=
    \frac{4}{\pi^2}
    \int_0^\pi d\theta_+
    \int_0^\pi d\theta_-\,
    \left(2\cos\theta_+\right)^{2n}
    \sin^2\theta_+\sin^2\theta_-\\
    &=
    \left(\frac{2}{\pi}\int_0^\pi (2\cos\theta_+)^{2n}\sin^2\theta_+\,d\theta_+\right)
    \left(\frac{2}{\pi}\int_0^\pi \sin^2\theta_-\,d\theta_-\right)\\
    &= I_n = \Cat_n\,.
\end{align}
That is
\begin{equation}
\label{ap-eq:numerator_final}
    \frac{1}{4^n}\Tr[\id_{V_{(0,0)}}\otimes \ketbra{00}{00}Q_n] = \frac{1}{2}\Cat_n\,.
\end{equation}
Finally, substituting Eqs.~\eqref{ap-eq:full_denominator_final} and \eqref{ap-eq:numerator_final} into Eq.~\eqref{ap-eq:average_active_stab_as_ratio} we obtain
\begin{equation}
    \mathbb{E}_{U\sim\Spin(2n)}S_4(U\ket{0})
    =
    \frac{\frac{1}{2}\Cat_n}{\frac{1}{2}\Cat_n\Cat_{n+1}}
    =
    \frac{1}{\Cat_{n+1}},
\end{equation}
which is exactly the claimed result in Eq.~\eqref{ap-eq:gaussian_stab}.

\end{document}